\title{Randomized Communication and Implicit Graph Representations}
\definecolor{BrickReplacement}{rgb}{0.8, 0.25, 0.33}
\tikzset{
	complete bipartite/.style={very thick,double},
	any bipartite/.style={black!50,very thick,densely dashed},
	edges/.style={font=\scriptsize,fill=white,inner sep=0pt},
	vertex set/.style={shape=ellipse,draw,minimum height=8ex,minimum width=2.5em,inner sep=1pt},
	vertex sets/.style={vertex set,inner sep=0pt},
}
\tikzset{
	named vertex/.style={circle,draw,inner sep=1pt,minimum size=16pt,font=\scriptsize},
	vertex/.style={circle,draw,fill=black,inner sep=0pt,minimum size=4pt},
	every label/.style={font=\small,label distance=-1pt},
}
\newcommand{\ignore}[1]{}
\DeclareMathOperator{\poly}{poly}
\newcommand{\dist}{\mathsf{dist}}
\newcommand{\Ex}[1]{\bE \left[ #1 \right]}
\renewcommand{\Pr}[1]{\bP \left[ #1 \right]} 
\newcommand{\Pru}[2]{\underset{ #1 }\bP \left[ #2 \right]}
\newcommand{\define}{\vcentcolon=}
\newcommand{\ceil}[1]{\ensuremath{\lceil #1 \rceil}}
\newcommand{\ind}[1]{\mathds{1} \left[ #1 \right] }
\newcommand\eqLabelSep{\mathbin|}
\newcommand{\zo}{\{0,1\}}
\newcommand{\note}[1]{{\color{Brown} \;\ifmmode \text{#1} \else #1 \fi\;}}
\newcommand{\cA}{\ensuremath{\mathcal{A}}}
\newcommand{\cB}{\ensuremath{\mathcal{B}}}
\newcommand{\cC}{\ensuremath{\mathcal{C}}}
\newcommand{\cD}{\ensuremath{\mathcal{D}}}
\newcommand{\cE}{\ensuremath{\mathcal{E}}}
\newcommand{\cF}{\ensuremath{\mathcal{F}}}
\newcommand{\cG}{\ensuremath{\mathcal{G}}}
\newcommand{\cH}{\ensuremath{\mathcal{H}}}
\newcommand{\cI}{\ensuremath{\mathcal{I}}}
\newcommand{\cL}{\ensuremath{\mathcal{L}}}
\newcommand{\cM}{\ensuremath{\mathcal{M}}}
\newcommand{\cP}{\ensuremath{\mathcal{P}}}
\newcommand{\cQ}{\ensuremath{\mathcal{Q}}}
\newcommand{\cS}{\ensuremath{\mathcal{S}}}
\newcommand{\cT}{\ensuremath{\mathcal{T}}}
\newcommand{\cU}{\ensuremath{\mathcal{U}}}
\newcommand{\bE}{\ensuremath{\mathbb{E}}}
\newcommand{\bF}{\ensuremath{\mathbb{F}}}
\newcommand{\bN}{\ensuremath{\mathbb{N}}}
\newcommand{\bP}{\ensuremath{\mathbb{P}}}
\newcommand{\bR}{\ensuremath{\mathbb{R}}}
\newcommand{\bZ}{\ensuremath{\mathbb{Z}}}
\newcommand\splitaftercomma[1]{%
  \begingroup
  \begingroup\lccode`~=`, \lowercase{\endgroup
    \edef~{\mathchar\the\mathcode`, \penalty0 \noexpand\hspace{0pt plus .25em}}%
  }\mathcode`,="8000 #1%
  \endgroup
}
\newlength{\origtopset}\newlength{\origitemsep}
\newcommand{\Eq}{\textsc{Eq}}
\newcommand{\Adj}{\textsc{Adj}}
\newcommand{\bip}{\mathsf{bip}}
\newcommand{\eqc}{\mathsf{eqc}}
\newcommand{\beqc}{\mathsf{beqc}}
\newcommand{\tw}{\mathsf{tww}}
\definecolor{NotionColor}{rgb}{0.1,0.1,1}
\definecolor{VioletThemeColor}{rgb}{0.5,0,0.5}
\newcommand\nodetextbeforebg[3][.08em]{%
	{%
	\contourlength{#1}%
	\contour{#2}{#3}%
	}%
}
\newcommand{\bb}[1]{{\color{NotionColor} #1}}
\newcommand{\rr}[1]{{\color{BrickReplacement} #1}}
\crefname{claim}{Claim}{Claims}
\newcommand*\ie{i.\kern.1em e.\@\xspace} 
\newcommand*\eg{e.\kern.1em g.\@\xspace} 
\newcommand\doverline[1]{
  \setbox0=\hbox{$\!\overline{#1}\!$}%
  \ht0=\dimexpr\ht0-.2ex\relax
  \,\overline{\copy0}\,%
}
\newcommand{\decomp}{chain\xspace}
\newcommand{\chainNum}{chain number\xspace}
\newcommand*\chainyGraphs{chain-like graphs\xspace}
\newcommand*\EQUALITY{\textsc{Equality}\xspace}
\newcommand{\bc}[1]{\doverline{#1}} 
\newcommand{\sk}{\mathsf{sk}}
\newcommand{\RL}{\mathsf{SK}}
\newcommand{\cl}{\mathsf{cl}}
\DeclareMathOperator\ch{\mathsf{ch}}
\newcommand{\CC}{\mathsf{R}} 
\newcommand{\dcirc}{{\circ\circ}}
\newcommand{\bcirc}{{\bullet\circ}}
\newcommand{\dbullet}{{\bullet\bullet}}
\newcommand{\free}{\mathsf{Free}}
\colorlet{proven}{green!50!black}
\colorlet{conjectures}{orange!80!black}
\colorlet{constPUG}{green!50!black}
\colorlet{constPUGconjecture}{constPUG!50!white!80!gray!60}
\colorlet{igcPositive}{cyan}
\colorlet{igcNegative}{red!80!black!80}
\colorlet{igcPositiveConjecture}{cyan!50!white!50!gray!60}
\colorlet{polyPUG}{blue!80!black}
\colorlet{bell}{red!90!black}
\colorlet{vsep}{black}
\tikzset{
	mybrace/.style={thick,decoration={brace,amplitude=5pt}},
	comm-problems/.style={fill=black!5},
	const-det/.style={fill=black!30},
	const-rand/.style={fill=constPUG!50},
	igc-positive/.style={fill=igcPositive!50},
	mapping/.style={thick,densely dashed,-Stealth[],shorten >=1pt,shorten <=1pt},
	igc-positive-conjecture/.style={fill=igcPositiveConjecture},
	const-pug-conjecture/.style={fill=constPUGconjecture},
	igc-negative/.style={fill=igcNegative!50},
	pug-negative/.style={fill=igcNegative},
}
\newcommand{\resultsfigure}{%
\begin{figure}[tbp]
	\begin{tikzpicture}[
			xscale=-.65,
			yscale=.75,
			every node/.style={font=\footnotesize},
			mybrace/.style={thick,decoration={brace,amplitude=5pt,mirror}},
	]
		\draw[{ultra thick,black!50,-Stealth[]}] 
			(14,-1) -- node[above,sloped,pos=.5,black] {speed} ++(0,12) ;
		
		\foreach \n/\x/\y in {ll/-4/-1,lr/4/-1,ul/-7/10,ur/7/10} 
			{ \coordinate (\n) at (\x,\y) ; }
		\foreach \s in {l,r} { \coordinate (bell-\s) at ($ (l\s)!.45!(u\s) $) ; }
		\foreach \s in {l,r} { \coordinate (abovebell-\s) at ($ (l\s)!.53!(u\s) $) ; }
		\def\height{11}
		\begin{pgfonlayer}{background}
		\fill[black!5] (ur) -- (lr) -- (ll) -- (ul) ;
		\end{pgfonlayer}
		\draw[thick] (ur) -- (lr) -- (ll) -- (ul) ;
		
		\foreach \c/\y in {constant/0,polynomial/1,exponential/2,/8.5} {
			\node at (0,\y-.5) {\c};
			\foreach \s in {l,r} { \coordinate (\c-\s) at ($ (l\s)!{(\y+1)/\height}!(u\s) $) ; }
			\draw[ultra thick] (\c-l) -- (\c-r) ;
		}
		\begin{pgfonlayer}{background}
		\fill[const-det] (exponential-r) -- (lr) -- (ll) -- (exponential-l) ;
		\end{pgfonlayer}

		\node at (0,9.25) {superfactorial};
		\def\cpug{0.33}
		
		\coordinate (cPUG-b) at ($ (exponential-l)!\cpug!(exponential-r) $) ;
		\coordinate (cPUG-u) at ($ (-l)!\cpug!(-r) $) ;
		\coordinate (bell-m) at ($ (bell-l)!\cpug!(bell-r) $);
		\coordinate (abovebell-m) at ($ (abovebell-l)!\cpug!(abovebell-r) $);
		\coordinate (um) at ($ (ul)!\cpug!(ur) $);

		\begin{pgfonlayer}{background}
		\draw[constPUG,fill=constPUG!50] 
			(exponential-r) -- (cPUG-b) -- (abovebell-m) -- (abovebell-r) -- cycle ;
		\fill[fill=igcPositive!50] 
			(bell-l) -- (bell-m) -- (um) -- 
				node[above,polyPUG,font=\scriptsize,align=center] {not stable} 
			(ul) -- cycle ;
		\path (um) -- node[above,constPUG,font=\scriptsize,align=center] {stable}  (ur);
		\end{pgfonlayer}

		\node[scale=2] at ($ (cPUG-b)!.5! (bell-l) $) {$\emptyset$} ;
		
		\draw[mybrace,decorate] (ll -| -11.5,0) -- 
				node[right=5pt,align=left,font=\scriptsize] {$O(1)$ determ.\\labeling scheme} 
				(exponential-l -| -11.5,0) ;
		\draw[mybrace,decorate] (exponential-l -| -12.5,0) -- 
				node[right=5pt,align=left,font=\scriptsize] {$\Omega(\log n)$\\determ.\\labeling\\scheme} 
				(ul -| -12.5,0) ;
			\draw[thin,dotted] (ul) -- ++(-8.5,0);
				\draw[thin,dotted] (exponential-l) -- ++(-11,0);
				\draw[thin,dotted] (ll) -- ++(-12,0);
	\foreach [count=\x] \c in {%
				{\cL^{\bullet\bullet}},{\cL^{\bullet\circ}},{\cL^{\circ\circ}},
				{\cM^{\bullet\bullet}},{\cM^{\bullet\circ}},{\cM^{\circ\circ}}%
		} {
			\node[scale=.5,circle,draw,thick,inner sep=0pt,minimum size=22pt,purple,fill=purple!5] 
				at ($ (cPUG-b)!{(\x)/7}!(exponential-r) + (0,.5) $) {$\c$} ;
		}
		\foreach [count=\x] \c in {%
				{\cC^{\bullet\bullet}},{\cC^{\bullet\circ}},{\cC^{\circ\circ}}%
		} {
			\node[scale=.5,circle,thick,draw,inner sep=0pt,minimum size=22pt,violet,fill=purple!5] 
				at ($ (bell-l)!{(\x-.3)/3.5}!(bell-m) + (0,.5) $) {$\c$} ;
		}
		
		\foreach \x in {1,...,7} {
			\node[rectangle,draw,thick,inner sep=0pt,minimum size=7pt,violet,fill=purple!5] at 
				($ (bell-m)!{(\x+.5)/10}!(bell-r) + (0,.5) $) {};
		}

		\draw[mybrace,decorate,proven!50!black] (abovebell-r -| 6,0) -- 
				node[left=5pt,align=right] (constPUG-label) {$O(1)$ adjacency\\ sketch} 
				(exponential-r -| 6,0) ;
		\node[proven!50!black,left=5pt,align=right] at (7,5.75)
			(constPUG-label-2) {$O(1)$ adjacency\\ sketch} ;
		\foreach \c/\x/\y in {%
					\cI/3/5.25,%
					\cP/4.5/5.25,%
					{S_{123}}/-1/8,%
					{F_{pq}^*}/0.5/8,%
					{P_7}/2/8,%
					{\alpha}/0/7,%
					{G^\star}/5.5/7.5%
		} {
			\draw[very thin,constPUG,Stealth-] (constPUG-label-2.0) .. controls ++(180:1) .. (\x,\y);
			\begin{pgfonlayer}{background}
			\begin{scope}
				\clip (abovebell-m) -- (abovebell-r) -- (-r) -- (cPUG-u) -- cycle ;
				\filldraw[very thin,constPUG,fill=constPUG!50] 
					(\x-0.2,\y) -- ++(-.75,-10) -- ++(1.5,0) -- (\x+0.2,\y) -- cycle;
			\end{scope}
			\end{pgfonlayer}
			\node[scale=.7,ellipse,draw,thick,inner sep=0pt,
					minimum width=18pt,minimum height=13pt,proven,fill=proven!10
				] 
					at (\x,\y) {} ;
		}

		\draw[mybrace,decorate,polyPUG] (bell-l -| -7.5,0) -- 
				node[right=5pt,align=left,polyPUG] {$\Omega(\log n)$\\ adjacency\\ sketch} 
				(ur -| -7.5,0) ;

		\draw[bell,densely dashed,ultra thick] (bell-r) -- (bell-l);
		\draw[bell,very thin] (bell-l) -- ++(220:1) node[right] {\scriptsize Bell numbers};
		\draw[vsep,ultra thick] (cPUG-b) -- (cPUG-u);
		\draw[vsep,ultra thick,dashed] (um) -- (cPUG-u);

	\end{tikzpicture}
	\caption{%
    Overview of our results (in green) in relation to the lattice of hereditary graph classes.
    Circles are minimal factorial classes; purple shapes are minimal classes
    above the Bell numbers.
	}
\label{fig:graph theory}
\end{figure}}
\newcommand\IfRestateTF{%
  \ifx\label\thmt@gobble@label 
    \expandafter\@firstoftwo
  \else
    \expandafter\@secondoftwo
  \fi
}
\newcommand{\RestateRemark}{\IfRestateTF{{\normalfont\bfseries (Restated) }}{}}
\begin{document}

\maketitle

\begin{abstract}
We initiate the focused study of constant-cost randomized communication, with emphasis on its
connection to graph representations. We observe that constant-cost randomized communication
problems are equivalent to hereditary (\ie closed under taking induced subgraphs) graph classes
which admit constant-size \emph{adjacency sketches} and \emph{probabilistic universal graphs}
(PUGs), which are randomized versions of the well-studied adjacency labeling schemes and
induced-universal graphs.  This gives a new perspective on long-standing questions about the
existence of these objects, including new methods of constructing adjacency labeling schemes.

We ask three main questions about constant-cost communication, or equivalently, constant-size
PUGs: (1) Are there any natural, non-trivial problems aside from Equality
and $k$-Hamming Distance which have constant-cost communication? We provide a number of new
examples, including deciding whether two vertices have path-distance at most $k$ in a planar graph,
and showing that constant-size PUGs are preserved by the Cartesian product operation.  (2) What
structures of a problem explain the existence or non-existence of a constant-cost protocol?  
We show that in many cases a Greater-Than subproblem is such a structure.
(3) Is the Equality problem \emph{complete} for constant-cost randomized communication? We
show that it is not: there are constant-cost problems which do not reduce to Equality.
\end{abstract}


\section{Introduction}

Randomized communication is a central topic in communication complexity
\cite{GPW18,CLV19,PSW20,HHH23dimfree}, and yet the power of randomness in communication remains
poorly understood. The textbook example of the power of randomness in communication is the
\textsc{Equality} problem (where two players wish to decide if they have received identical inputs),
which has a \emph{constant-cost} randomized protocol (see \eg~\cite{NK96}): the required number of
bits of communication remains constant, regardless of input size, whereas a deterministic
communication protocol cannot do better than simply sending the entire input.

To better understand the power of randomness in communication, we wish to understand the most
extreme examples. Therefore the goal of this paper is to initiate the focused study of the
communication problems that, like \textsc{Equality}, have constant-cost protocols. In related models
of computation like randomized decision trees or even \emph{deterministic} communication, the class
of constant-cost problems is trivial\footnote{Constant-cost randomized decision trees are equivalent
to constant-cost deterministic decision trees, which compute functions $\zo^n \to \zo$ depending on
only $O(1)$ variables \cite{BdW02}, and constant-cost deterministic communication protocols compute
$n \times n$ matrices which have only $O(1)$ distinct rows and columns.}. \emph{A priori}, one may
also expect the class of constant-cost randomized communication problems to be simple or nearly
trivial, but we will see that this is not the case: the class is surprisingly rich, containing a
variety of natural problems and having many connections to other areas.

One of the central observations of this paper is that constant-cost randomized communication
problems are equivalent to hereditary graph classes which admit a constant-size \emph{adjacency sketch}, a
probabilistic version of an \emph{adjacency labeling scheme} (also known as an \emph{implicit
representation}), the subject of a large body of research in graph theory and distributed computing
(\eg \cite{KNR92,FK09,KM12,ACLZ15,Har20,BGK+21a,DEG+21,HH22}). Therefore the study of constant-cost
randomized communication can be viewed from two perspectives, the communication complexity perspective and the
graph representation perspective. In \cref{section:intro-communication} we describe our motivations
and results from the perspective of communication complexity, and in \cref{section:intro-implicit}
from the perspective of graph representations. Each of these perspectives suggest different
questions, but, to begin understanding constant-cost randomized communication, we believe it is
important to answer the following 3 questions, which will be the focus of this paper:

\vspace{0.75em}\noindent\textbf{Question I.}
Are there natural examples of constant-cost randomized communication problems, that are not
trivially obtained from the few known examples (\textsc{Equality} and $k$-\textsc{Hamming
Distance})?  Equivalently, what hereditary classes of graphs admit constant-size adjacency sketches?
We will give a number of new examples, which also lead to new 
techniques for obtaining implicit representations of graphs.

\vspace{0.75em}\noindent\textbf{Question II.}
What structures of problems determine the existence or non-existence of constant-cost randomized
communication? One of the main applications of communication complexity is to prove lower bounds,
and constant vs.~non-constant is the most basic lower-bound question. One might therefore hope for a
simple criterion to determine whether a problem has constant cost or not.

\vspace{0.75em}\noindent\textbf{Question III.}
What ``types'' of constant-cost randomized protocols are there? In particular,
is there a ``complete'' problem for this class, which all constant-cost problems reduce to? The most
natural candidate is the \textsc{Equality} problem, and we show that \textsc{Equality} is \emph{not}
complete.\\

We refer the reader also to \cite{HHH23dimfree} which has independently and concurrently initiated
the study of constant-cost randomized communication, with an emphasis on the algebraic properties of
these problems, including connections to operator theory and other areas. Since the preprint and
conference version of the current paper \cite{HWZ22}, there have been many subsequent works
extending and improving our results. We discuss these works in \cref{section:discussion}.

\subsection{Communication Complexity}
\label{section:intro-communication}

Let us now discuss our results from the perspective of communication complexity. In
\cref{section:intro-implicit} we will discuss the results from the perspective of structural graph
theory and graph representations.

\newcommand{\R}{\mathsf{R}}
We formally define the relevant notions of communication complexity in \cref{def:communication
protocol}; here we write $\R(P)$ for the minimum cost (on worst-case inputs) of a \emph{public-coin}
randomized communication protocol computing the problem $P$, and note that $\R(P)$ is a function of
the input domain size.  We wish to understand the structure of problems which admit
\emph{constant-cost} randomized protocols, \ie, problems $P$ with $\R(P) = O(1)$. (Note that this
goal would not be sensible for \emph{private-coin} randomized protocols\footnote{With bounded error.
Constant-cost for private-coin \emph{unbounded-error} protocols (as in \cite{PS86}) is studied in a
number of subsequent works; see \cref{section:discussion}.}, since the constant-cost problems for this
class of protocols are the same as the constant-cost problems for deterministic communication.) We
organize the paper in order of studying Questions I-III.

\subsubsection*{Question I. What are natural examples of problems with constant cost?}

\textsc{Equality} is the
standard example of a constant-cost problem, and more generally it has been known since at least
\cite{Yao03} that for any constant $k \in \bN$, the \textsc{$k$-Hamming Distance} problem (where two players
decide if their inputs $x,y \in \zo^n$ differ on at most $k$ bits) also has constant cost. But, to
the best of our knowledge, almost no further non-trivial examples of constant-cost communication problems
have appeared in the past 20 years. This makes it difficult to speculate on the general structure of
constant-cost problems and the techniques required to design these ``hyperefficient'' protocols.

We present a number of new communication problems with non-trivial constant-cost communication
protocols. Let us break these down into two sub-questions:\\

\noindent
\emph{Question I (a).} For which graphs $G$ can two players decide if their vertices $x,y$ are
adjacent, or at distance at most $k$? Every communication problem can be rephrased as: Alice and Bob
each have a vertex in a shared graph $G$, and they want to decide if their vertices are
adjacent\footnote{The communication matrix may be interpreted as the adjacency matrix of a bipartite
graph.}. One may also think of the \textsc{$k$-Hamming Distance} problem as: Alice and Bob have
vertices of the hypercube graph, and wish to decide if their vertices have path-distance $k$. So,
which other classes of graphs have constant-cost communication protocols for deciding
adjacency or path-distance $k$?

Our first example is that Alice and Bob can decide if their vertices have path-distance $k$ in any
\emph{planar graph}, with communication cost depending on $k$ but \emph{not} on the number of
vertices in the graph:

\begin{theorem}[Informal; see \cref{section:twin-width}]
\label{thm:intro-planar}
Let $k$ be any constant, and let Alice and Bob have vertices $x,y$ in a shared planar graph $G$.
Then there is a constant-cost randomized communication protocol for deciding $\dist(x,y) \leq k$.
\end{theorem}

This is a corollary of a more general result, using structural graph theory techniques from the
literature on model checking. We generalize the problem in two ways:
\begin{enumerate}
\item The statement $\dist(x,y) \leq k$ can be written as a
first-order formula $\phi(x,y)$ using the edge relation of the graph $G$, and with variables being
vertices of the graph, \ie:
\[
  \dist(x,y) \leq k \equiv \phi(x,y) \define \left( \exists v_1, \dotsc, v_k : x=v_1 \wedge y = v_k
\wedge E(v_1, v_2) \wedge \dotsm \wedge E(v_{k-1}, v_k) \right) \,.
\]
We generalize the result to give communication protocols for deciding any
such first-order formula.
\item We generalize the result to hold not only for planar graphs, but for any \emph{stable} graph
class of \emph{bounded twin-width}. Including, for example, any proper minor-closed class or any
class with bounded treewidth or bounded clique-width.
\end{enumerate}

\noindent
\emph{Question I (b).}
How can we generate new constant-cost problems? In communication complexity, a common way to
generate new problems is by function composition, while in graph theory a common way to generate new
graphs is by graph product operations. We show that a certain composition operation, corresponding
to the Cartesian product operation for graphs, allows to construct new constant-cost problems 
which generalize the \textsc{$1$-Hamming Distance} problem.

Consider $n$ independent instances $P_1, \dotsc, P_n$ of any constant-cost ``base problem''
$\cP$, so that Alice and Bob have inputs $x_1, \dotsc, x_n$ and $y_1, \dotsc, y_n$, respectively, to
these problems. We may define a new problem $P_1 \square P_2 \square \dotsm \square P_n$ where Alice
and Bob should output $P_i(x_i, y_i)$ if $i \in [n]$ is a unique index where they have received
unequal inputs $x_i \neq y_i$ (otherwise they simply output 0).  With only a budget of $O(1)$
communication, Alice and Bob cannot identify the index $i$ where $x_i \neq y_i$.  Nevertheless, we
show that they can compute the output of this unknown instance. This generalizes the
\textsc{$1$-Hamming Distance} problem, which is obtained in this way when the base problem $\cP$
is \textsc{Equality} on a single input bit.

In graph theory terms, if $\cF$ is any graph class such that, for any graph $G \in \cF$, two players
can decide whether their vertices $x,y$ are adjacent in $G$ using a constant-cost protocol, then
there exists also a constant-cost protocol for deciding adjacency of vertices in any Cartesian
product graph $G^\square = G_1 \mathbin\square G_2 \mathbin\square \dotsm \mathbin\square G_n$,
where $\square$ denotes the Cartesian product and each $G_i \in \cF$. More generally, if deciding whether
two vertices have path-distance $k$ is a constant-cost problem in the class $\cF$ (as is the case
for planar graphs), then deciding path-distance $k$ in $G^\square$ is also constant cost.

\begin{theorem}[\cref{section:cartesian product}]
\label{thm:intro-products}
Suppose that for every constant $k$, there is a constant-cost randomized protocol for deciding
$\dist(x,y) \leq k$ in graphs $G \in \cF$. Then for every constant $k$ there is a constant-cost
randomized protocol deciding $\dist(x,y) \leq k$ in the class $\cF^\square$ of Cartesian products of graphs in~$\cF$.
\end{theorem}

Our proof that \textsc{Equality} is not a complete problem (see Question III below) also shows that
this composition takes any problem $\cP$ and turns it into a problem that does not reduce to
\textsc{Equality}. Subsequent work \cite{FGHH24} extends this composition method to find problems
which do not reduce to any \textsc{$k$-Hamming Distance}.

\cref{thm:intro-products} also allows for a new method of constructing implicit representations of Cartesian
product graphs, which were not known; we will discuss these results in
\cref{section:intro-implicit}.

\subsubsection*{Question II. What structures of a problem explain the existence or non-existence of
a constant-cost protocol?}

Constant-cost communication seems quite restrictive, and constant vs.~non-constant is the
simplest possible lower-bound question that one can ask. So, we may hope for a simple criterion to
determine whether a communication problem has a constant-cost protocol or not. We observe in
\cref{section:correspondence} that constant-cost communication problems may always be expressed as a
hereditary  class $\cF$ of graphs (\ie closed under taking induced subgraphs) where two players can
decide adjacency of vertices $x,y$ in a shared graph $G \in \cF$, using a randomized communication
protocol whose cost is independent of the number of vertices of $G$. Then the question becomes,
which hereditary graph classes $\cF$ admit constant-cost randomized protocols for deciding adjacency
in graphs $G \in \cF$? \\

\noindent
\textbf{Necessary criterion 1:} The number of $n$-vertex graphs in $\cF$ (a quantity sometimes
called the \emph{speed} of $\cF$) must be at most $2^{O(n \log n)}$ (see
\cref{section:correspondence}).  This is interesting because hereditary graph classes with
$2^{\Theta(n \log n)}$ $n$-vertex graphs are already well-studied in structural graph theory and are
sometimes called \emph{factorial classes}. Here are two of the most natural ways to obtain
hereditary factorial classes, which we will study in this paper:
\begin{enumerate}
\item Geometric intersection graphs. A class of graphs is obtained by associating each vertex with a
geometric object (\eg a line in $\bR^2$) and making two vertices adjacent if and only if their
associated objects intersect. These classes of graphs are hereditary by definition, and
constant-dimensional geometric intersection graphs generally have factorial speed (see \eg
\cite{Spin03}), so answering Question II requires understanding which subclasses of geometric
intersection graphs admit constant-cost randomized protocols for deciding adjacency. Geometric
intersection graphs are the subject of one of the main open problems about implicit graph
representations \cite{Spin03}, and they are closely related to 
\emph{unbounded error} randomized communication \cite{PS86}, making them vital for understanding
randomized communication (see discussion in \cref{section:discussion}).
\item Classes defined via forbidden induced subgraphs. Every hereditary graph class may be obtained by choosing a set
$\cH$ of graphs, and taking the class $\free(\cH)$ of graphs which do not contain any $H \in \cH$ as
an induced subgraph (see e.g. \cite[Chapter 2]{KL15}). For hereditary classes of \emph{bipartite} graphs (such as the graph classes
obtained from any constant-cost communication problem), it is known exactly which graph classes
defined by forbidding exactly one graph $H$ have factorial speed \cite{All09,LZ17}.
\end{enumerate}

\noindent\textbf{Necessary criterion 2:} \emph{Stability.}
Although the above-mentioned graph classes have factorial speed, they still may not have constant-cost
communication protocols for deciding adjacency, because they are \emph{unstable}, meaning that the
\textsc{Greater-Than} problem can be reduced to the problem of deciding adjacency in these graphs.
\textsc{Greater-Than} is a standard communication problem, where the two players receive numbers
$i,j \in [N]$ and must decide whether $i > j$. This problem has randomized communication cost
$\Theta(\log\log N)$ \cite{Nis93,Viol15}, which is exponentially smaller than the deterministic
cost $\Theta(\log N)$, but still non-constant. A graph class $\cF$ is \emph{stable} if there is a
fixed constant bound on the largest instance of \textsc{Greater-Than} which appears in the adjacency
relation in $\cF$.

\begin{example}
\label{example:interval-graph}
An interval graph is a geometric intersection graph where each vertex is represented by an interval
$[a,b] \subseteq \bZ$, and two vertices are adjacent when their intervals intersect. Two players
holding vertices of an $N$-vertex interval graph cannot, in general, determine whether their
vertices are adjacent using a constant-cost communication protocol, because one may easily construct
arbitrarily large instances of \textsc{Greater-Than} inside the adjacency relation of interval
graphs.
But this does not yet characterize the non-existence of a constant-cost protocol for this
problem, because \emph{a priori} it is not clear whether there is another structure of the problem
which would \emph{also} prevent a constant-cost protocol from existing, even if we ``remove'' the
\textsc{Greater-Than} subproblems.
\end{example}

This raises the question: For which factorial classes is \emph{stability} both necessary and
sufficient to guarantee constant cost?  For a surprising variety of examples (including interval
graphs), we show that stability precisely characterizes the existence of constant-cost protocols.
We prove:
\begin{theorem}[Informal; \cref{section:intersection graphs}]
\label{thm:intro-intersection}
For interval graphs and permutation graphs (another example of geometric intersection
graphs), deciding adjacency between two vertices is a
constant-cost problem if and only if the graphs are stable. 
\end{theorem}

\begin{theorem}[Informal; \cref{section:bipartite graphs}]
\label{thm:intro-monogenic}
For bipartite graph classes defined as above by forbidding a single bipartite graph
$H$ such that $\free(H)$ has factorial speed, deciding adjacency between vertices is again a
constant-cost problem if and only if the graphs are stable. 
\end{theorem}
These results are also motivated by the connection to implicit graph representations, specifically
\cref{question:igq-to-pug} discussed below. The proofs require new structural decompositions
of interval graphs, permutation graphs, and $H$-free bipartite graphs, and provide further new
examples of problems with constant-cost communication protocols. They also suggest (to us) a deeper
relation between communication complexity and the stability condition.  Stability plays an important
role in structural graph theory and model theory (see \cref{section:stability}). Its role in
communication complexity remains unclear (see \cref{section:discussion}). Subsequent work has shown
that stable factorial classes may not have constant-cost protocols \cite{HHH22counter,EHK22},
and stability is crucial in recent lower bounds \cite{FHHH24}.

\subsubsection*{Question III. Is \textsc{Equality} \emph{complete} for constant-cost randomized communication?}

A useful way to try to characterize the constant-cost communication problems is to find a
\emph{complete} problem: a single constant-cost communication problem that all other constant-cost
problems reduce to (see \cref{section:reductions} for the definition of reductions in the
constant-cost setting). Depending on the properties of the complete problem, this could immediately
answer many of the interesting questions about constant-cost randomized communication.
\textsc{Equality} is the first natural candidate that one might choose.

So, can every constant-cost communication protocol be rewritten in such a way that the only use of
randomness in the protocol is the computation of \textsc{Equality} instances, and the remainder of
the protocol is completely deterministic?  If this were so, we would know that the most extreme
examples of the power of randomized communication are all explained by the power of the
\textsc{Equality} protocol, and we would know that to prove non-constant lower bounds for any
problem, it suffices to prove lower bounds against \emph{deterministic} protocols with access to the
\textsc{Equality} oracle. 

However, we show that the answer is no: the \textsc{1-Hamming Distance} problem is another problem
that has a constant-cost protocol, but it cannot be reduced to \textsc{Equality}. 
\begin{theorem}[\cref{section:equality}]
\textsc{Equality} is not complete for constant-cost randomized communication.
\end{theorem}
A very different proof of this result appeared independently and concurrently in
\cite{HHH23dimfree}, who use Fourier analysis to get optimal bounds on the number of
\textsc{Equality} oracle queries required to compute \textsc{$1$-Hamming Distance}.  Our proof is
combinatorial and relies on a Ramsey-type theorem of \cite{ARSV06}, but it does not give
quantitative
bounds.

\subsection{Implicit Graph Representations}
\label{section:intro-implicit}

We now discuss our results from the perspective of efficient graph representations.  A central
observation of this paper is that constant-cost randomized communication is the natural
probabilistic version of \emph{implicit graph representations} (see \cref{section:correspondence}),
which can be introduced with the following combinatorial question. Consider a hereditary graph class
$\cF$. How tightly can we ``pack'' all the $n$-vertex graphs together? Or in other words, what is
the smallest graph into which we can put all the $n$-vertex graphs in~$\cF$?  More formally, write
$\cF_n$ for the set of graphs in $\cF$ with vertex set $[n]$.  A \emph{universal graph} (or
\emph{induced-universal graph}) for $\cF$ is a sequence $U = (U_n)_{n \in \bN}$ of graphs, such that
all $n$-vertex graphs $G \in \cF_n$ appear as induced subgraphs of $U_n$. How big must $U_n$ be? 

This combinatorial question may also be expressed as an equivalent question in
distributed computing and data structures, well-studied since \cite{Mul89,KNR92}: For a graph $G \in
\cF$, we wish to represent $G$ as efficiently as possible in an \emph{implicit} or distributed
manner: we would like to assign short binary labels $\ell(v)$ to each vertex $v$ of $G$ such that,
given two labels $\ell(x), \ell(y)$, adjacency between $x$ and $y$ in $G$ can be determined by a
\emph{decoder} which knows the class $\cF$ but not the specific graph $G$. This is an
\emph{adjacency labeling scheme}, and the existence of an adjacency labeling scheme with labels of
size $s(n)$ for the $n$-vertex graphs is equivalent to the existence of a universal graph of size
$2^{s(n)}$ \cite{KNR92}. The main open problem is:

\begin{question}[Implicit Graph Question (IGQ)]
Which hereditary graph classes admit a universal graph of size $\poly(n)$ (\ie an $O(\log n)$-bit
adjacency labeling scheme)? 
\end{question}

Adjacency labeling schemes with $O(\log n)$ bits are also known as \emph{implicit representations}.
This question has seen significant attention but little progress, although many positive examples
are known.  We introduce a probabilistic version of this question, which replaces adjacency labeling
schemes with \emph{randomized} adjacency labeling schemes, which we call \emph{adjacency sketches},
and the equivalent notion of \emph{probabilistic universal graphs (PUGs)}. Randomized adjacency
labels for trees were introduced in \cite{FK09}, and have been studied more generally in connection
to communication complexity by \cite{Har20}; see \cref{def:pug,def:sketch}. We are interested in the
classes that admit \emph{constant-size} adjacency sketches (equivalently, constant-size PUGs), so we
ask:

\begin{question}[PUG Question]
\label{question:main} Which hereditary graph classes admit a PUG of constant size (\ie, a
constant-size adjacency sketch)?
\end{question}

This question is interesting for three main reasons. First, we think it is both a natural question
on its own, as well as being a natural probabilistic variant of the IGQ, and it is surprising that,
as we will see, a rich variety of graph classes can be randomly projected into a
\emph{constant-size} universal graph.

Second, answering the PUG question is in fact \emph{equivalent} to characterizing all communication
problems with constant-cost randomized protocols. To see this, we make two observations (see
\cref{section:correspondence}):
\begin{itemize}
\item As mentioned above, any communication problem with a constant-cost randomized protocol is
equivalent to a hereditary graph class with a constant-cost protocol for deciding adjacency.
\item This constant-cost protocol for adjacency may be transformed into a constant-size adjacency
\emph{sketch}, or constant-size PUG. In the other direction, a constant-size
adjacency sketch may be used as a constant-cost communication protocol for deciding adjacency.
\end{itemize}

Third, a simple derandomization argument shows that any class with a constant-size adjacency sketch
also admits a size $O(\log n)$ adjacency labeling scheme, so the PUG question is an interesting
special case of the IGQ, and it also leads to new techniques for constructing adjacency labeling
schemes; for example, in Question I (b) above, we showed that Cartesian products preserve
constant-size adjacency sketches, which imply new adjacency labeling schemes for Cartesian product
graphs that were previously unknown (see \eg \cite{CLR20}). Since little progress has been made
towards answering the IGQ, the PUG question may serve as an interesting step towards the IGQ, which
makes stronger demands on the graph class.  Since PUGs and adjacency sketches are stronger than
universal graphs and adjacency labeling schemes, one way to start understanding the
relationship between the two is to ask when we can strengthen known adjacency labeling
schemes to adjacency sketches; in other words:
\begin{question}
\label{question:igq-to-pug}
Which hereditary graph classes with a universal graph of size $\poly(n)$ also admit a
constant-size PUG?
\end{question}
This question motivates many of the examples we study in this paper (particularly
interval and permutation graphs, and graphs of bounded twin-width). It is also related to
Question III, since several standard examples of adjacency labeling schemes (\eg for
bounded-arboricity graphs \cite{KNR92}) can in fact be rephrased as reductions to the
\textsc{Equality} problem. More generally, the most natural notion of a \emph{reduction}
between constant-cost communication problems (see \cref{section:reductions}) is equivalent to a natural notion of
reduction between \emph{graph classes}, for the purpose of studying adjacency labeling (which were
also proposed independently in the literature on the IGQ in \cite{Chan23}).

\subsection{Subsequent Work} 
\label{section:discussion}

The concurrent work of Hambardzumyan, Hatami, \& Hatami \cite{HHH23dimfree} initiates the study of
constant-cost communication from a different perspective and proposes a number of interesting
conjectures about the structure of constant-cost problems. Following that paper and the conference
version of the present work \cite{HWZ22}, there have been several follow-up works on communication
complexity and implicit representations \cite{HHH22counter, EHK22, EHZ23, HZ23}, and numerous others
related to constant-cost communication \cite{HHPTZ22,ACHS23,CHHS23,HHM23,FHHH24,FGHH24} and
adjacency sketching \cite{NP24}; we briefly describe a few of these developments here, and we refer
the reader to the subsequent works for discussions of open problems.

\paragraph*{No complete problem.} This paper and \cite{HHH23dimfree} both show that the \EQUALITY
problem is not ``complete'' for the class of constant-cost randomized communication problems.
Subsequent work \cite{FHHH24} shows that in fact there is \emph{no} complete problem.

\paragraph*{New implicit representations, and generalized adjacency sketches.} Extending the
techniques in this paper, \cite{EHZ23} prove optimal bounds on the size of universal graphs for any
class of subgraphs of Cartesian product graphs, which improved upon the best-known bounds of earlier
work \cite{CLR20} and showed that they satisfy the IGQ. Recent work \cite{NP24} has generalized the
notion of adjacency sketching to \emph{adversarial environments}.

\paragraph*{Sign-rank and geometric intersection graphs.}
Informally, a family of matrices has \emph{bounded sign-rank} if each matrix can be represented as a
point-halfspace incidence matrix in a constant-dimensional space. Examples include the interval and
permutation graphs studied in this article. A central open problem about implicit graph
representations is whether all geometric intersection graphs (formally, the \emph{semi-algebraic}
graph classes) have universal graphs of size $\poly(n)$. It is sufficient to consider graph classes
whose adjacency matrices have bounded sign-rank (see \eg \cite{Spin03,Fit19,HZ23}). Understanding
the matrices which have both bounded sign-rank and constant randomized communication cost has also
emerged as one of the main open problems in constant-cost communication, one reason being that
sign-rank is equivalent to \emph{unbounded-error} randomized communication \cite{PS86}; see
\cite{HHPTZ22,ACHS23,CHHS23,HHM23,HZ23}. So, bounded sign-rank is central to both implicit graph
representations and constant-cost communication.

\paragraph*{The implicit graph conjecture.}
As mentioned above, any hereditary graph class with a $\poly(n)$-size universal graph must
contain at most $2^{O(n \log n)}$ distinct $n$-vertex graphs. A long-standing
conjecture was that this condition is also \emph{sufficient}:\\

\noindent\emph{Implicit graph conjecture} \cite{KNR92,Spin03}:  A hereditary graph class
has a universal graph of size $\poly(n)$ (\ie, an $O(\log n)$-size adjacency labeling scheme) if
and only if it has at most factorial speed.\\

In the preprint of this paper, motivated by the idea that constant-size PUGs are the probabilistic
version of $\poly(n)$-size universal graphs, we formulated a probabilistic version of the implicit graph conjecture. Based on the fact that stability plays a fundamental role in the structure of hereditary
graph classes (\eg it plays a role in certain ``speed thresholds'', see \cref{section:graph theory},
and is important in model theory and model checking), and the observation that, in each example
considered in this paper, stability was the condition that determined whether cosntant-size PUGs
exist, we proposed the following as a natural probabilistic variant of the implicit graph
conjecture:\\

\noindent\emph{PUG Conjecture}: A hereditary graph class has a constant-size PUG if and only
if it is both stable and has at most factorial speed.\\

Shortly after the preprint became public, Hambardzumyan, Hatami, \& Hatami refuted our PUG
Conjecture \cite{HHH22counter} using a construction from \cite{HHH23dimfree}, and Hatami \& Hatami
extended this refutation to refute the implicit graph conjecture itself \cite{HH22}.  This approach
was further extended to refute variants of the conjecture for ``small'' graph classes (an important
subfamily of factorial classes \cite{BDSZZ24}) and monotone graph classes \cite{BDSZZ23}. More
examples refuting the PUG conjecture are now known, including (among others) the subgraphs of
hypercubes \cite{EHK22}. \cite{EHK22} has resolved the PUG question (\cref{question:main}) for
\emph{monotone} graph classes.

\subsection{Organization}
\noindent
The paper is organized in order of Questions I -- III asked in \cref{section:intro-communication}.
\begin{description}
\item[\cref{section:preliminaries}: Preliminaries.] Notation and standard definitions of
communication complexity, adjacency labeling, and universal graphs.
\item[\cref{section:definitions}: Definitions and connections.] Adjacency sketching, probabilistic universal
graphs, the correspondence between communication complexity and hereditary graph classes, and
stability.
\item[\cref{sec:new_examples}, Question I -- New examples.] Two new examples of constant-cost communication (or
constant-size adjacency sketches): (1) first-order formulae (including small distances) for planar
graphs and their generalization, and (2) adjacency in Cartesian products of graphs.
\item[\cref{section:stability}, Question II -- Structure and stability.] Some examples where stability is both necessary and
sufficient to guarantee constant-cost communication or constant-size PUGs: (1) subclasses of
permutation graphs, (2) subclasses of interval graphs, and (3) subclasses of monogenic bipartite
graph classes.
\item[\cref{section:equality}, Question III -- \textsc{Equality} is not complete.] A proof that the
\textsc{Equality} problem is not complete for constant-cost communication.
\end{description}

\section{Preliminaries}
\label{section:preliminaries}

We introduce in this section some notation and standard notions of communication complexity,
adjacency labeling schemes, and universal graphs. Since this paper is intended for readers who may
be unfamiliar with communication complexity or adjacency labeling schemes, we attempt to include all
relevant definitions here.

\subsection{Notation and Terminology}
\label{section:notation}

We write $\ind{A}$ for the indicator of event $A$; \ie, the function which is $1$ if and only if
statement $A$ is true. For a finite set $X$, we write $x \sim X$ when $x$ is a random variable drawn
uniformly at random from $X$.

\paragraph{Graphs.}
All graphs in this work are simple, \ie, undirected, without loops and multiple edges.  Let
$G=(V,E)$ be a graph. We write $\Adj_G \in \zo^{|V| \times |V|}$ for the adjacency matrix of $G$.
For a subset $W \subseteq V$, we write $G[W]$ to denote the subgraph of $G$ \emph{induced by}~$W$,
\ie, a graph obtained from $G$ by removing all vertices in $V \setminus W$.  A graph $H$ is an
\emph{induced subgraph} of $G$ if $H = G[W]$ for some $W \subseteq V$.  We write $H \sqsubset G$ to
denote the fact that $H$ is an induced subgraph of $G$.  We also write $\overline G$ for the
\emph{complement} of $G$, \ie, the graph $(V,\overline E)$ where $(x,y)\in \overline E$ if and only
if $(x,y)\notin E$.  A set of pairwise non-adjacent vertices in a graph is an independent set, and a
set of pairwise adjacent vertices is a clique.
For two graphs $G = (V, E)$ and $H = (V', E')$ on disjoint vertex sets, the $G + H := (V \cup V', E
\cup E')$ is called the \emph{disjoint union} of $G$ and $H$.
We denote by $K_n$, $P_n$, and $C_n$, respectively a complete graph, a path, and a cycle, each on
$n$ vertices.

\paragraph{Bipartite graphs.}
A \emph{colored} bipartite graph is a bipartite graph with a given bipartition of its vertex set.
We denote a colored bipartite graph by a triple $(X,Y,E)$, where $X,Y$ is the partition of its
vertex set into two parts, and the function $E : X \times Y \to \zo$ defines the edge relation. If a
bipartite graph $G$ is connected, it has a unique partition of its vertices into two parts and
therefore there is only one colored bipartite graph corresponding to $G$; (note that $(X,Y,E)$ and
$(Y,X,E)$ are considered the same colored bipartite graph).  If $G$ is disconnected, however, there
is more than one corresponding colored bipartite graph.  The \emph{bipartite adjacency matrix} of a
colored bipartite graph $G = (X,Y,E)$ is the matrix $A \in \zo^{|X| \times |Y|}$ with $A(i,j) = 1$
if and only if $(i,j) \in E$. Abusing notation, we also write $\Adj_G$ for the bipartite adjacency
matrix of $G$.

For colored bipartite graphs $G = (X,Y,E)$ and $H = (X',Y',E')$, we say that $H$ is an induced
subgraph of $G$, and write $H \sqsubset G$, when there is an injective map $\phi : X' \cup Y' \to X
\cup Y$ that preserves adjacency and preserves parts. The latter means that the images $\phi(X')$
and $\phi(Y')$ satisfy either $\phi(X') \subseteq X, \phi(Y') \subseteq Y$ or $\phi(X') \subseteq Y,
\phi(Y') \subseteq X$.  A colored bipartite graph $G = (X,Y,E)$ is called \emph{biclique} if every
vertex in $X$ is adjacent to every vertex in $Y$, and $G$ is called \emph{co-biclique} if $E =
\emptyset$.

For a graph $G=(V,E)$ and two disjoint sets $X,Y \subseteq V$, we write $G[X,Y]$ for the colored bipartite graph $(X,Y,E')$ where for $(x,y) \in X \times Y$, $(x,y) \in E'$ if and only if $(x,y) \in E$. A bipartite graph $H$ is a \emph{semi-induced subgraph} of $G$ if there exist disjoint sets $X,Y \subseteq V$ such that $H = G[X,Y]$.

The \emph{bipartite} complement, $\bc{G}$,
of a colored bipartite graph $G=(X,Y,E)$ is the graph $\bc{G} := (X,Y,\bc{E})$ with $(x,y)\in \bc{E}$ if and only if $(x,y)\notin E$ for $x\in X$, $y\in Y$.

\paragraph{Classes of graphs and bipartite graphs.} We define the following for graphs; the same
discussion also applies to bipartite graphs. A \emph{class} of graphs is a set of graphs that is
closed under isomorphism. A class of graphs is \emph{hereditary} if it is closed under taking
induced subgraphs.  It is well-known that any hereditary graph class can be defined by its set of
\emph{minimal forbidden induced subgraphs} (see e.g. \cite[Theorem 2.1.3]{KL15}). That is, for any hereditary class $\cF$, there is a
\emph{unique minimal} set of graphs $\cH$ such that $\cF$ is the class \emph{$\cH$-free} graphs,
\ie, $\cF = \free(\cH)$, where
\[
\free(\cH) \define \{ G : \forall H \in \cH, H \not\sqsubset G \} \,.
\]
For a class of  graphs $\cG$, its \emph{hereditary closure} $\cl(\cG)$ is the set of all induced
subgraphs of graphs in $\cG$, \ie, $\cl(\cG) = \{ H : \exists G \in \cG, H \sqsubset G \}$. Note, by
definition, $\cl(\cG)$ is a (minimal) hereditary graph class that includes $\cG$.

\subsection{Communication Complexity}
\label{section:prelim cc}

We define some basic concepts in communication complexity and refer the reader to \cite{NK96,RY20}
for an introduction to communication complexity. Typically, a communication problem is a sequence $f =
(f_n)_{n \in \bN}$ of functions\footnote{In the literature, the domain is usually $\zo^n \times
\zo^n$. We use $[n] \times [n]$ to highlight the graph interpretation.} $f_n : [n] \times [n] \to
\zo$. In this paper it will be more convenient to define a \emph{communication problem} as a set of
(not necessarily square) matrices $\cM$. For a fixed \emph{communication matrix} $M \in \cM$, we
write $\CC(M)$ for the cost of the optimal two-way, randomized communication protocol computing $M$,
defined as follows.

Informally, two players Alice and Bob share a source of randomness. Alice receives input $x$ which
is a row index of $M$, Bob receives input $y$ which is a column index of $M$, and they communicate by sending
messages back and forth using their shared randomness. After communication, Bob must output a
(random) value $b \in \zo$ such that $b = M(x,y)$ with probability at least $2/3$. The cost of
such a protocol is the maximum, over all inputs $x, y$, of the number of bits communicated between the
players. Formally, the definition is as follows.

\begin{definition}
\label{def:communication protocol}
A two-way public-coin communication protocol is a probability distribution $\cD$ over
\emph{communication trees}.  For communication matrix $M \in \zo^{m \times n}$, a communication tree
$T$ is a binary tree with each inner node being a tuple $(p,\mu)$ where $p \in \{A,B\}$ and either
$\mu : [n] \to \zo$ or $\mu : [m] \to \zo$ depending on whether $p = A$ or $p = B$. Each edge of $T$
is labeled either $0$ or $1$. Each leaf node is labeled either 0 or 1. For any fixed tree $T$ and
inputs $x \in [m]$, $y \in [n]$, communication proceeds by setting the current node $c$ to the root
node.  At each step of the protocol, if $c$ is an inner node $(A,\mu)$ then Alice sends $\mu(x)$ to
Bob and both players set $c$ to the child along the edge labeled $\mu(x)$. If $c$ is an inner node
$(B,\mu)$ then Bob sends $\mu(y)$ to Alice, and both players set $c$ to the child along the edge
labeled $\mu(y)$.  The protocol terminates when $c$ becomes a leaf node, and the output is the value
of the leaf node; we write $T(x,y)$ for the output of communication tree $T$ on inputs $x,y$.

The distribution $\cD$ is required to satisfy the condition that, for all inputs $x,y$, $T(x,y) = M(x,y)$ with probability
at least $2/3$ over the random choice of $T \sim \cD$, and the \emph{cost} of the protocol is the
largest depth $d$ of a tree $T$ in the support of $\cD$. We write $\CC(M)$ for the minimum cost of a
protocol computing $M$.

For a communication problem $\cM$, we denote by $\cM_n$ the subset of matrices $M \in
\cM$ with $r$ rows and $c$ columns such that $r + c = n$. Then we define the complexity
$\CC(\cM)$ of the problem $\cM$ as the function
\[
  n \mapsto \max_{M \in \cM_n} \CC(M) \,.
\]
\end{definition}

\subsection{Adjacency Labeling Schemes, Universal Graphs, and Factorial Classes}

A \emph{graph class} $\cF$ is a set of labeled graphs, closed under isomorphism. It is
\emph{hereditary} if it is also closed under vertex deletion (\ie it is closed under taking induced
subgraphs). For any graph class $\cF$ and any $n \in \bN$, we write $\cF_n$ for the set of graphs $G
\in \cF$ with vertex set $[n]$.

Adjacency labeling schemes were introduced in \cite{Mul89, KNR92} and are defined as follows.

\begin{definition}[Adjacency labeling scheme]
Let $\cF$ be a hereditary graph class. An \emph{adjacency labeling scheme} for $\cF$ of size $s(n)$
consists of a \emph{decoder} $D : \zo^* \times \zo^* \to \zo$ such that, for all $n \in \bN$ and all
$n$-vertex graphs $G \in \cF_n$, there exists a \emph{labeling} $\ell_G : V(G) \to \zo^{s(n)}$ of the
vertices of $G$, where
\[
  \forall x,y \in V(G) : \qquad \{x,y\} \in E(G) \iff D(\ell_G(x), \ell_G(y)) = 1 \,.
\]
\end{definition}
A hereditary graph class which admits an adjacency labeling scheme of size $s(n) = O(\log n)$ is
sometimes said to have an \emph{implicit representation}, and characterizing the graph classes which
have an implicit representation is the main open problem about adjacency labeling schemes, which we
call the IGQ.

Universal graphs were introduced in \cite{Rado64}. A universal graph (or induced-universal graph) is
defined as

\begin{definition}
Let $\cF$ be a hereditary graph class. A \emph{universal graph} for $\cF$ of size $s(n)$ is a
sequence $U = (U_n)_{n \in \bN}$ of graphs with $|U_n| = s(n)$, such that for all $n \in \bN$ and all $G \in
\cF_n$, $G$ is an induced subgraph of $U_n$.
\end{definition}

As observed in \cite{KNR92}, these concepts are equivalent:
\begin{proposition}[\cite{KNR92}]
A hereditary graph class $\cF$ has an adjacency labeling scheme of size $\lceil s(n) \rceil$ if and
only if it has a universal graph of size $2^{s(n)}$.
\end{proposition}

Implicit representations, or adjacency labeling schemes of size $O(\log n)$, are therefore
equivalent to universal graphs of size $\poly(n)$. These graph classes satisfy a certain bound on
the number of $n$-vertex graphs $|\cF_n|$:

\begin{proposition}[\cite{KNR92}]
\label{prop:knr}
If a hereditary graph class $\cF$ has a universal graph of size $2^{O(\log n)}=\poly(n)$
(\ie, an adjacency labeling scheme of size $O(\log n)$), then $|\cF_n|
\leq \binom{\poly(n)}{n} = 2^{O(n \log n)}$.
\end{proposition}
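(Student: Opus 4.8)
The plan is a direct counting argument: if $\cF$ has a polynomial-size universal graph, then every graph in $\cF_n$ is pinned down by a small amount of data — namely which $n$ vertices of the universal graph it sits on — and there are only $2^{O(n\log n)}$ choices of such data.

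First I would fix, for each $n$, a universal graph $U_n$ for $\cF$ with $|V(U_n)| = m(n) = 2^{O(\log n)}$, as supplied by the hypothesis (using the equivalence of $\poly(n)$-size universal graphs with $O(\log n)$-size adjacency labeling schemes recorded above and in \cite{KNR92}). By the definition of a universal graph, each $G \in \cF_n$ admits at least one induced embedding, i.e.\ an injective map $\phi_G \colon V(G) \to V(U_n)$ with $(u,v)\in E(G) \iff (\phi_G(u),\phi_G(v)) \in E(U_n)$ for all $u,v$. Choose one such $\phi_G$ for each graph $G \in \cF_n$.

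The key observation is that $G \mapsto \phi_G$ is injective on $\cF_n$: from $\phi_G$, viewed as a function into $V(U_n)$, one recovers $E(G)$ as the pullback of $E(U_n)$, so $\phi_G = \phi_{G'}$ forces $G = G'$. Hence $|\cF_n|$ is at most the number of injective maps $[n] \to V(U_n)$, which is $m(n)!/(m(n)-n)! \le \binom{m(n)}{n}\, n! \le m(n)^n$. Substituting $m(n) = 2^{O(\log n)}$ yields $|\cF_n| \le 2^{O(n\log n)}$; and if one reads $\cF_n$ as a set of isomorphism classes rather than of labeled graphs, the count drops to the number of $n$-subsets of $V(U_n)$, namely $\binom{m(n)}{n} = \binom{\poly(n)}{n}$, matching the displayed bound exactly.

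There is no real obstacle here — this is the standard argument from \cite{KNR92} — and the only point requiring a little care is the labeled-versus-unlabeled bookkeeping; in either reading the final bound $2^{O(n\log n)}$ (``factorial speed'') is the same, since the two counts differ by at most a factor $n! = 2^{O(n\log n)}$.
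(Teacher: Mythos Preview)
Your argument is correct and is exactly the standard counting argument from \cite{KNR92}; the paper does not supply its own proof of this proposition but simply cites it, so there is nothing further to compare.
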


The function $n \mapsto |\cF_n|$ is called the \emph{speed} of the graph class. Graph classes with
speed $2^{\Theta(n \log n)}$ are said to have \emph{factorial speed}, and are called
\emph{factorial classes}.  In \cref{section:graph theory} we summarize some known results about the family of factorial hereditary graph classes.

\section{Connections: Randomized Communication and Implicit Representations}
\label{section:definitions}

In this section we define the central concepts of the paper and prove some basic properties,
including the connection between randomized communication and implicit representations.
\begin{description}[itemsep=0pt]
\item[\cref{section:adjacency-labeling}:] We define adjacency sketching and probabilistic
universal graphs.
\item[\cref{section:correspondence}:] We explain the correspondence between communication
complexity, adjacency sketching, and implicit graph representations.
\item[\cref{section:reductions}:] We define notions of \emph{reductions} between families of
matrices, communication problems, and graph classes.
\item[\cref{section:equality-based-labeling}:] We introduce some notation for
\emph{equality-based} communication and labeling, which will be used often throughout the paper.
\item[\cref{sec:bounded-arboricity}:] We state and prove some basic results about adjacency
sketching, which will be used throughout the paper.
\end{description}

\subsection{Adjacency Sketching and Probabilistic Universal Graphs}
\label{section:adjacency-labeling}

Probabilistic versions of adjacency labeling schemes and universal graphs were introduced in
\cite{Har20}, which we call \emph{adjacency sketches} and \emph{probabilistic universal graphs
(PUGs)}, respectively. A similar definition with one-sided error is given in \cite{FK09}; see that
paper for some results on the limitations of one-sided error. In the field of sublinear algorithms,
a \emph{sketch} reduces a large or complicated object to a smaller, simpler one that supports
(approximate) queries.  An adjacency sketch randomly reduces a hereditary graph class to a PUG that
supports adjacency queries.

\begin{definition}[Probabilistic Universal Graph]
\label{def:pug}
A \emph{probabilistic universal graph sequence} for a graph class $\cF$ (or \emph{probabilistic
universal graph} for short) with error $\delta$ and size $m(n)$ is a sequence $U = (U_n)_{n \in
\bN}$ of graphs with $|U_n|=m(n)$ such that, for all $n \in \bN$ and all $G \in \cF_n$, the
following holds: there exists a probability distribution over maps $\phi : V(G) \to V(U_n)$ such
that
\[
  \forall u,v \in V(G) : \qquad
  \Pru{\phi}{\vphantom{\Big|} \ind{(\phi(u),\phi(v)) \in E(U_n)} = \ind{(u,v) \in E(G)} } \geq 1-\delta \,.
\]
We say that a graph class $\cF$ admits a \emph{constant-size PUG} if there exists a probabilistic
universal graph sequence for $\cF$ with error $\delta = 1/3$ and size $m(n) = O(1)$.
\end{definition}

\begin{definition}[Adjacency Sketch]
\label{def:sketch}
For a graph class $\cF$, an \emph{adjacency sketch} with size $c(n)$ and error $\delta$ is a pair
of algorithms: a randomized \emph{encoder} and a deterministic \emph{decoder}. On input $G \in
\cF_n$, the encoder outputs a (random) function $\sk : V(G) \to \zo^{c(n)}$. The encoder and
(deterministic) decoder $D : \zo^* \times \zo^* \to \zo$ satisfy the condition that for all $G \in
\cF$,
\[
    \forall u,v \in V(G) : \qquad \Pru{\sk}{\vphantom{\Big|} D(\sk(u),\sk(v)) = \ind{(u,v) \in E(G)} } \geq
1-\delta \,.
\]
\end{definition}
In both definitions, we assume $\delta=1/3$ unless otherwise specified.  Setting $\delta=0$ we
obtain the (deterministic) labeling schemes of \cite{KNR92}.  We will write $\RL(\cF)$ for the
smallest function $c(n)$ such that there is an adjacency sketch for $\cF$ with size $c(n)$ and error
$\delta=1/3$. 

PUGs are equivalent to adjacency sketches, by the same argument which shows that universal graphs
are equivalent to adjacency labeling schemes \cite{KNR92}: identify the
vertices of $U_n$ with the binary strings $\zo^{c(n)}$ for $c(n) = \ceil{\log |U_n|}$, identify the
random sketch $\sk : V(G) \to \zo^{c(n)}$ with the map $\phi : V(G) \to V(U_n)$, and identify the
decoder $D$ with the edge relation on $U_n$. We get:

\begin{proposition}
\label{prop:pug labeling equivalence}
A hereditary class $\cF$ has a constant-size PUG if and only if $\RL(\cF) = O(1)$.
\end{proposition}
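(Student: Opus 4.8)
The plan is to make precise the translation sketched in the paragraph preceding the statement: a PUG $(U_n)$ and an adjacency sketch are ``the same object'' once we encode the vertices of $U_n$ by binary strings and read the decoder off the edge relation of $U_n$. I will prove the two implications of \cref{prop:pug labeling equivalence} separately, and the only non-bookkeeping point (flagged below) is that the decoder must be made symmetric.

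\emph{From a constant-size PUG to a constant-size sketch.} Suppose $\cF$ has a PUG $U = (U_n)$ with $|U_n| \le m = O(1)$ and error $1/3$. There are only finitely many graphs on at most $m$ vertices, so I would fix once and for all a single finite graph $W$, of size $M = M(m) = O(1)$, containing each of them as an induced subgraph, together with an induced embedding $\eta_n : V(U_n) \hookrightarrow V(W)$ for every $n$. Given $G \in \cF_n$, compose the PUG distribution over maps $\phi : V(G) \to V(U_n)$ with $\eta_n$ to obtain a distribution over maps $V(G) \to V(W)$; since $\eta_n$ preserves adjacency and non-adjacency, the agreement probability is unchanged. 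Identifying $V(W)$ with a fixed subset of $\zo^{\lceil \log M \rceil}$, let the randomized encoder output this composed map as $\sk : V(G) \to \zo^{\lceil \log M \rceil}$, and let the deterministic decoder $D$ be the (symmetric) edge relation of $W$. This is an adjacency sketch for $\cF$ of size $\lceil \log M\rceil = O(1)$ and error $1/3$, so $\RL(\cF) = O(1)$. The point of passing through the single graph $W$ is that a single decoder $D : \zo^* \times \zo^* \to \zo$ must work for all $n$ simultaneously, whereas the $U_n$ could a priori be different graphs of bounded size.

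\emph{From a constant-size sketch to a constant-size PUG.} Suppose $(\sk, D)$ is an adjacency sketch for $\cF$ with $\sk : V(G) \to \zo^c$, $c = O(1)$, and error $1/3$. First boost the error: running $t = O(1)$ independent copies of the encoder and taking the majority of the $t$ decoder outputs gives, by a Chernoff bound, a sketch $(\sk', D')$ of size $tc = O(1)$ whose error is below any prescribed constant, say below $1/7$. Now symmetrize: set $\tilde D(a,b) = \tilde D(b,a) := D'(a,b) \wedge D'(b,a)$. For any $u,v \in V(G)$, both $D'(\sk'(u),\sk'(v))$ and $D'(\sk'(v),\sk'(u))$ individually equal $\ind{(u,v)\in E(G)}$ with probability $\ge 1 - 1/7$, so by a union bound they both do—and hence $\tilde D$ agrees as well—with probability $\ge 1 - 2/7 > 2/3$; thus $(\sk', \tilde D)$ has error below $1/3$ and a symmetric decoder. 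Define $U_n := U$ to be the graph on vertex set $\zo^{tc}$ with $\{a,b\} \in E(U) \iff \tilde D(a,b) = 1$ (loops allowed; if one insists $U$ be loopless, append $O(1)$ further fresh independent random bits to each vertex's label, ignored by $\tilde D$, to make collisions $\sk'(u)=\sk'(v)$ for $u\ne v$ rare enough, then delete loops). Then $|U| = 2^{tc} = O(1)$, and viewing $\sk' : V(G) \to \zo^{tc} = V(U)$ as the map $\phi$ of \cref{def:pug}, the sketch guarantee is exactly the PUG agreement condition. Hence $\cF$ has a constant-size PUG.

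\emph{Main obstacle.} The only step requiring care beyond a change of notation is that an adjacency sketch's decoder need not be symmetric (and, if one requires the universal graph to be simple, that $\sk$ may map distinct adjacent vertices to the same string); this is precisely why the error must first be amplified to a small enough constant before symmetrizing. Everything else—identifying $\zo^{c}$ with vertex sets, composing with embeddings, and reading a finite graph off a finite decoder—is routine.
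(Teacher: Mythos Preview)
Your proof is correct and follows the same identification that the paper sketches in the paragraph immediately preceding the proposition (encode vertices of $U_n$ as binary strings, read the decoder off the edge relation); the paper treats this as essentially self-evident and gives no further argument. You go beyond the paper's one-line sketch in two genuine ways: you observe that a single decoder $D$ must serve all $n$ and resolve this by embedding the finitely many possible $U_n$ into one finite $W$, and you note that an arbitrary decoder need not be symmetric (and the resulting graph need not be loopless), fixing this by first boosting the error and then taking $D'(a,b)\wedge D'(b,a)$ and padding with fresh random bits. These are real subtleties the paper elides, and your handling of them is sound.
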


The choice of $\delta=1/3$ in the definition of constant-size PUGs and adjacency sketches
is arbitrary: by standard probability boosting arguments, one may achieve any constant error
$\delta < 1/2$ while incurring only a constant-factor cost; we provide a proof for the sake of
completeness in \cref{section:missing-proofs}.

\begin{restatable}{proposition}{propboosting}
\label{prop:boosting}\RestateRemark
Let $\cF$ be a class of graphs. For any $\delta \in (0,1/2)$, there is an adjacency sketch with
error $\delta$ and size at most $O(\RL(\cF_n) \cdot \log\frac{1}{\delta})$.  Equivalently, if there
is a PUG $U = (U_n)_{n \in \bN}$ for $\cF$ with size $|U_n|$, then there is a PUG $U' = (U'_n)_{n \in \bN}$ for $\cF$ with
error $\delta$ and $|U'_n| \leq |U_n|^{O(\log(1/\delta))}$.
\end{restatable}

Adjacency sketches can be derandomized to obtain adjacency labeling schemes. This was observed in
\cite{Har20} as a consequence of a bound on the number of bits of randomness that are required, and
can also be proved by simple derandomization; we give a proof for the sake of completeness in
\cref{section:missing-proofs}. It is sometimes required that adjacency labeling schemes be produced by
an efficient algorithm, and we observe that efficiency is preserved by this derandomization:

\begin{restatable}{proposition}{propderandomization}
\label{lemma:derandomization}\RestateRemark
For any hereditary graph class $\cF$, there is an adjacency labeling scheme of size
$O(\RL(\cF) \cdot \log n)$. If there is a randomized algorithm which produces the adjacency sketch
in time $\poly(n)$, then there is a randomized algorithm which produces the adjacency labels in
expected time $\poly(n)$.
\end{restatable}

It follows that hereditary graph classes which admit constant-size PUGs have the $\poly(n)$-vertex
universal graphs, asked for by the IGQ, as illustrated in \cref{fig:prequel}. 

\begin{proposition}
\label{prop:correspondence inclusion}
If a hereditary class $\cF$ has a constant-size PUG (\ie, $\RL(\cF)=O(1)$) then it is a positive example to the IGQ
(\ie, it admits a universal graph of size $\poly(n)$).
\end{proposition}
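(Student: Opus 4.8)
The plan is simply to chain the equivalences and the derandomization statement already assembled in the excerpt, so that the proof reduces to a short deduction. First I would invoke \cref{prop:pug labeling equivalence} to rephrase the hypothesis: $\cF$ has a constant-size PUG if and only if $\RL(\cF) = O(1)$, i.e.\ there is an adjacency sketch (randomized encoder $\sk$, deterministic decoder $D$) of size $c(n) = O(1)$ and error $1/3$. The goal is to produce a \emph{deterministic} adjacency labeling scheme of size $O(\log n)$, since by the equivalence of \cite{KNR92} (cf.\ \cref{prop:knr}) such a scheme is the same object as a universal graph of size $2^{O(\log n)} = \poly(n)$: let the vertex set of $U_n$ be $\zo^{O(\log n)}$ with edge relation given by the decoder, and embed each $G \in \cF_n$ via its labeling. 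Thus the entire content is the derandomization, which is exactly \cref{lemma:derandomization intro} applied with $\RL(\cF) = O(1)$.

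If I were to reprove that derandomization here rather than cite it, the argument is the standard ``amplify, then fix the coins''. Given the size-$O(1)$, error-$1/3$ sketch, run $t = \Theta(\log n)$ independent copies of the encoder, concatenate their outputs, and let the new decoder $D'$ take a majority vote over the $t$ coordinate-pairs; by a Chernoff bound this has error at most $n^{-3}$ at size $O(c \cdot t) = O(\log n)$. Now fix any $G \in \cF_n$: the amplified encoder is a distribution over maps $V(G) \to \zo^{O(\log n)}$, and for each of the fewer than $\binom{n}{2}$ vertex pairs $D'$ errs with probability $< n^{-3}$, so by a union bound a single draw of the encoder's randomness is simultaneously correct on \emph{all} pairs of $G$ with probability $> 0$. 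Fixing such an outcome gives a deterministic labeling $\ell_G : V(G) \to \zo^{O(\log n)}$ with $D'(\ell_G(u), \ell_G(v)) = \ind{(u,v) \in E(G)}$ for all $u,v$, where $D'$ does not depend on $G$. Ranging over all $G \in \cF$ yields a deterministic adjacency labeling scheme of size $O(\log n)$ for $\cF$, and combining with the previous paragraph finishes the proof.

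The main obstacle is essentially nonexistent at the level of this proposition: it is a pure reduction, plugging the hypothesis into \cref{lemma:derandomization intro} and then \cref{prop:knr}, with no structural or lower-bound input required. If one wants a self-contained argument, the only points needing care in the derandomization are (i) that the error must be amplified \emph{before} the union bound, so that the failure probability beats $\binom{n}{2}$, and (ii) that the decoder $D'$ obtained this way is a single, graph-independent function used for every graph in $\cF_n$ --- which is precisely what makes the resulting labeling a valid labeling scheme rather than a separate encoding per graph.
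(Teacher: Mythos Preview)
Your proposal is correct and matches the paper's approach exactly: the paper presents \cref{prop:correspondence inclusion} as an immediate consequence of the derandomization lemma (\cref{lemma:derandomization intro}) together with the labeling-scheme/universal-graph equivalence, and your optional unpacking of the derandomization (boost to error $n^{-3}$, union bound over pairs, fix the randomness) is precisely the argument the paper gives later as \cref{lemma:derandomization}.
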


\begin{remark}
A further motivation for these objects from the perspective of communication complexity, given in
\cite{Har20}, is that adjacency sketching (respectively, labeling) is a generalization of the
randomized (resp. deterministic) \emph{simultaneous message passing (SMP)} model of communication.
In this generalization, the referee who must compute $f(x,y)$ from the messages of the players does
not know $f$ in advance; they only know a certain class $\cF$ which contains $f$, and the players
must include sufficient information in their messages to compensate for the ignorance of the
referee. 
\end{remark}

\subsection{Communication-to-Graph Correspondence}
\label{section:correspondence}

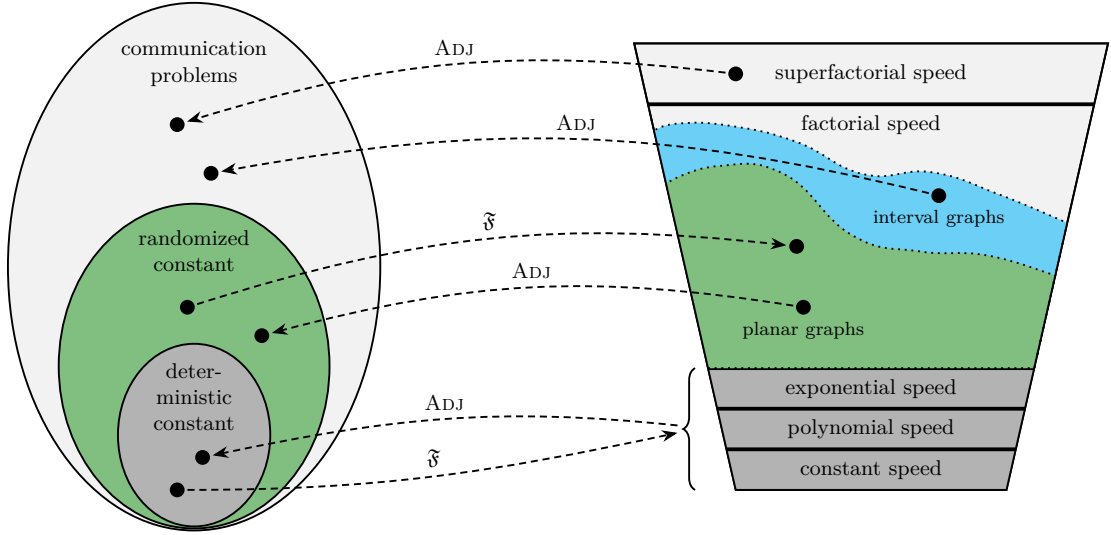
\begin{figure}[tbp]
	\centering 
	\scalebox{1}{\begin{tikzpicture}[
			xscale=.6,
			yscale=.75,
			every node/.style={font=\footnotesize},
	]
		
		\foreach \n/\x/\y in {ll/-4/-1,lr/4/-1,ul/-7/10,ur/7/10} 
			{ \coordinate (\n) at (\x,\y) ; }
		\foreach \s in {l,r} { \coordinate (bell-\s) at ($ (l\s)!.45!(u\s) $) ; }
		\foreach \s in {l,r} { \coordinate (abovebell-\s) at ($ (l\s)!.53!(u\s) $) ; }
		\def\height{11}
		\draw[fill=black!5,thick] (ur) -- (lr) -- (ll) -- (ul) -- cycle ;
		
		\foreach \c/\y in {constant/0,polynomial/1,exponential/2,factorial/8.5} {
			\foreach \s in {l,r} { \coordinate (\c-\s) at ($ (l\s)!{(\y+1)/\height}!(u\s) $) ; }
		}
		\fill[const-det] (exponential-r) -- (lr) -- (ll) -- (exponential-l) -- cycle ;
		\foreach \c/\y in {constant/0,polynomial/1,exponential/2,factorial/8.5} {
			\node at (0,\y-.5) {\c\ speed};
			\draw[ultra thick] (\c-l) -- (\c-r) ;
		}
		\node at (0,9.25) {superfactorial speed};
		\def\cpug{0.33}
		
		\coordinate (cPUG-b) at ($ (exponential-l)!\cpug!(exponential-r) $) ;
		\coordinate (bell-m) at ($ (bell-l)!\cpug!(bell-r) $);
		\coordinate (abovebell-m) at ($ (abovebell-l)!\cpug!(abovebell-r) $);
		\coordinate (um) at ($ (ul)!\cpug!(ur) $);
		
		\draw[thick,dotted,igc-positive] (exponential-l) -- (exponential-r) 
			-- ($ (exponential-r)!.55!(factorial-r)$) 
			--plot[smooth,tension=.8] coordinates 
				{++(0,0) ++(-2,.7) ++(-4,1.25) ++(-6,1.2) ++ (-8,2) ++(-10,2.4) ++(-11.5,2.45) 
						($ (exponential-l)!.9!(factorial-l)$)}
			-- ($ (exponential-l)!.96!(factorial-l)$) 
		;
		\draw[thick,dotted,const-rand] (exponential-l) -- (exponential-r) 
			-- ($ (exponential-r)!.35!(factorial-r)$) 
			--plot[smooth,tension=.8] coordinates 
				{++(0,0) ++(-2,.5) ++(-4,.6) ++(-6,1) ++ (-8,2.5) ++ (-10,2.7) ($ (exponential-l)!.7!(factorial-l)$)}
			-- ($ (exponential-l)!.7!(factorial-l)$) 
		;
		\draw[thick] (ur) -- (lr) -- (ll) -- (ul) -- cycle ;

		\foreach \n/\x/\y in {
					const-PUG/-2.2/5,
					superfac/-4/9.25
			} {
			\node[circle,fill=black,inner sep=2.2pt] (\n) at (\x,\y) {};
		}
		\foreach \n/\x/\y in {planar/-2/3.5} {
			\node[circle,fill=black,inner sep=2.2pt,label=-90:{\scriptsize planar graphs}] (\n) at (\x,\y) {};
		}
		\foreach \n/\x/\y in {interval/2/6.25} {
			\node[circle,fill=black,inner sep=2.2pt,label=-90:{\scriptsize interval graphs}] (\n) at (\x,\y) {};
		}
		
		\draw[mybrace,decorate] (ll -| -5.2,0) -- 
				node[left=5pt,align=right,font=\scriptsize,inner sep=1pt] (subfactorial) {} 
				(exponential-l -| -5.2,0) ;

		\begin{scope}[shift={(-20,4)}]
			\draw[thick,comm-problems] (0,0.5) circle [x radius=5.5,y radius=6.5];
			\draw[thick,const-rand] (0,-1.95) circle [x radius=4,y radius=4];
			\draw[thick,const-det] (0,-3.9+.25) circle [x radius=2.25,y radius=2.25];
			\foreach \l/\y in {communication\\ problems/5.5,randomized\\constant/.9,deter-\\ministic\\constant/-2.7} {
				\node[align=center] at (0,\y) {\l} ;
			}
			
			\foreach \n/\x/\y in {
						det-const-1/-.5/-5,%
						det-const-2/.25/-4.2,%
						mapped-planar/2/-1.2,%
						rand-const/-.2/-.5,%
						mapped-interval/.5/2.8,%
						comm-prob/-.5/4%
				} {
				\node[circle,fill=black,inner sep=2.2pt] (\n) at (\x,\y) {};
			}
			
		\end{scope}
		
		\draw[mapping] (det-const-1) to[out=0,in=190] node[above] {$\mathfrak F$} (subfactorial.south) ;
		\draw[mapping] (subfactorial.north) to[out=175,in=10] node[above] {$\textsc{Adj}$} (det-const-2);
		\draw[mapping] (planar) to[bend right=10] node[above] {$\textsc{Adj}$} (mapped-planar) ;
		\draw[mapping] (rand-const) to[bend left=10] node[above] {$\mathfrak F$} (const-PUG) ;
		\draw[mapping] (interval) to[bend right=10] node[above] {$\textsc{Adj}$} (mapped-interval) ;
		\draw[mapping] (superfac) to[bend right=10] node[above] {$\textsc{Adj}$} (comm-prob) ;

	\end{tikzpicture}}
\caption{The correspondence that motivates this paper (\cref{prop:correspondence 1}).
\cref{section:graph theory} describes the lattice on the right. Communication
problems with constant-cost randomized protocols are mapped to the set of hereditary graph
classes with constant-size PUGs (and therefore $\poly(n)$ universal graphs by
\cref{prop:correspondence inclusion}) by $\mathfrak{F}$.
Families with constant-size PUGs are mapped to constant-cost communication
problems by $\textsc{Adj}$.}
\label{fig:prequel}
\end{figure}

We now explain that communication problems with constant-cost randomized communication protocols
are equivalent to hereditary graph classes with constant-size PUGs. This correspondence is
illustrated in \cref{fig:prequel} 

Let $\cM$ be any communication problem, which we recall is a set of matrices. For any $M \in \cM$
with $M \in \zo^{n_1 \times n_2}$, we may consider the bipartite graph $G_M = ([n_1], [n_2], E)$
with edge $(i,j) \in E$ if and only if $M(i,j) = 1$.  We define the hereditary class
$\mathfrak{F}(\cM)$ associated with $\cM$ as the smallest hereditary class that contains each $G_M$. Formally,
\[
  \mathfrak{F}(\cM) \define \cl\left( \{ G_M \;:\; M \in \cM \} \right) \,.
\]
One may equivalently think of $\,\mathfrak{F}(\cM)$ as being the communication problem containing
every matrix which appears as a permutation of a submatrix of $\cM$.  It is important to keep in
mind that, due to the fact that it is hereditary, $\mathfrak{F}(\cM)$ can have unintuitive
consequences for the communication complexity: for example, some problems $\cM$ with small (but
non-constant) randomized communication complexity can suddenly have maximum communication complexity
when replaced with $\mathfrak{F}(\cM)$; see \cref{example:hamming distance}.

In the other direction, for any set of graphs $\cF$, we define the natural \textsc{Adjacency}
communication problem, which captures the complexity of the two-player game of deciding
whether the players are adjacent in a given graph $G \in \cF$. 
\[
  \Adj_\cF \define \{ M \;|\; M \text{ is the adjacency matrix of some } G\in \cF \} \,.
\]

We may now state the formal equivalence between constant-cost communication and constant-size PUGs:

\begin{restatable}{proposition}{cctolabeling}
\label{prop:correspondence 1}\RestateRemark
For any communication problem $\cM$ and hereditary graph class $\cF$:
\begin{enumerate}
\item $\cF$ has a constant-size PUG if and only if $\CC(\Adj_\cF) = O(1)$.
\label{prop:cctolabeling-item-1}
\item $\CC(\cM) = O(1)$ if and only if $\,\mathfrak{F}(\cM)$ has a constant-size PUG
(\ie, $\RL(\mathfrak{F}(\cM))=O(1)$).
\label{prop:cctolabeling-item-2}
\end{enumerate}
\end{restatable}
\begin{proof}
We start by proving \cref{prop:cctolabeling-item-1}.
First, suppose that $\cF$ is a hereditary graph class with $\RL(\cF)=O(1)$. Let $D$ be the decoder of
the constant-cost adjacency sketch, and for any graph $G \in \cF$ write $\Phi_G$ for the
distribution over sketches for $G$. We obtain a constant-cost communication protocol for $\Adj_\cF$
as follows. Let $M \in \Adj_\cF$ so that $M$ is the adjacency matrix of some graph $G \in \cF$ and
we may think of the rows and columns of $M$ as being indexed by vertices of $G$. On
inputs $x,y \in V(G)$, Alice and Bob sample $\sk \sim \Phi_G$ and Alice sends $\sk(x)$ to Bob,
which requires at most $\RL(\cF)$ bits of communication. Then Bob simulates the decoder on
$D(\sk(x),\sk(y))$ and sends the result to Alice. By definition
\[
  \Pru{\sk \sim \Phi_G}{ \vphantom{\big|} D(\sk(x),\sk(y)) = M(x,y) } \geq 2/3 \,.
\]
Now suppose that $\CC(\Adj_\cF) = O(1)$. Then there is a constant $d$ such that for any $G \in
\cF_n$ it holds that the adjacency matrix $M \in \zo^{n \times n}$ of $G$ satisfies $\CC(M) \leq d$.
For each $G \in \cF$, let $\cP(G)$ be the probability distribution over communication trees defined
by an optimal communication protocol for the edge relation of $G$. Then it holds that every
communication tree in the support of $\cP(G)$ has depth at most $d$. We define the adjacency sketch
for $\cF$ as follows. For every $G=(V,E) \in \cF$, construct the random sketch $\sk$ by sampling $T
\sim \cP(G)$, and then for every~$v \in V$:

\begin{itemize}
    \item[]\begin{quote}
For every node $c$ of $T$, append to the label $\sk(v)$ the following:
\begin{enumerate}
  \item If $c$ is an inner node $(p,m)$ (with $p \in \{A,B\}$ and $m : [n] \to \zo$), append the
      symbol $p$ and the value $m(v)$.
  \item If $c$ is a leaf with value $b$, append the symbol $L$ and the value $b$.
\end{enumerate}
\end{quote}
\end{itemize}
We define the decoder $D$ as follows. On input $(\sk(u),\sk(v))$, the decoder simulates the
communication tree $T$ on $(u,v)$ using the values $m(u),m(v)$ for each inner node. We therefore
obtain
\[
  \Pru{\sk}{ D(\sk(u),\sk(v)) = E(u,v) }
  = \Pru{T \sim \cP(G)}{ T(u,v) = E(u,v) } \geq 2/3 \,.
\]
We now prove \cref{prop:cctolabeling-item-2}.
From the first argument above, it is clear that for any communication problem $\cM$, if
$\RL(\mathfrak{F}(\cM)) = O(1)$ then $\CC(\cM) = O(1)$. In the other direction, assume that
$\CC(\cM) \leq d$ for some constant $d$, and consider $\RL(\mathfrak{F}(\cM))$. Then it holds for
any $G \in \mathfrak{F}(\cM)$ that there exists $M \in \cM$ such that the adjacency matrix $A$ of
$G$ is a submatrix of $M$. Then $\CC(A) \leq \CC(M) \leq d$. We may then construct adjacency
sketches by the scheme above, so we conclude $\RL(\mathfrak{F}(\cM)) = O(1)$.
\end{proof}

Note that this equivalence is special to constant-cost communication. The transformation of a
communication problem to a hereditary graph class may have unintuitive consequences for problems
with non-constant complexity, as in the next example.

\begin{example}
\label{example:hamming distance}
For a function $k(d)$, the \textsc{$k(d)$-Hamming Distance} problem $\mathrm{HD}^{k(d)}$ requires
Alice and Bob to decide whether the Hamming distance between their inputs $x,y \in \zo^d$ is at most
$k(d)$. Note that this is a family of $n \times n$ matrices with $n = 2^d$. The randomized
communication complexity of $\mathrm{HD}^{k(d)}$ is $\Theta(k(d) \log k(d))$ when $k(d) = o(\sqrt d)$ \cite{HSZZ06,Sag18}.
For example, if $k = \log \log d$ then the communication complexity is $\Theta( (\log\log\log n)
\cdot(\log\log\log\log n))$.

For any non-constant $k(d)$, no matter how small, the hereditary graph class $\cH =
\mathfrak{F}(\mathrm{HD}^{k(d)})$ has a lower bound of $\CC(\Adj_\cH) = \Theta(\log n)$ for
$n$-vertex graphs, meaning that no protocol does better than simply sending the entire input to the
other player. This is because, when we take the hereditary closure (essentially, we include every
submatrix of a \textsc{$k(d)$-Hamming Distance} problem for arbitrarily large $d$), we get the
family of \emph{all} bipartite graphs, as follows.

Suppose that $k(d)$ is increasing in $d$. Fix any $n$ and let $d$ be large enough that $d \geq 2n$
and $k(d) \geq n-1$.  Let $G = (X,Y,E)$ be any bipartite graph with $n = |X|$, and identify $X$ with
$[n]$. To each $x \in X$, assign the string $e_x \in \zo^{2n}$ which is 0 everywhere except on
coordinate $x \in [n]$. Now to each $y \in Y$, let $S \subseteq [n]$ be the set of its neighbors in
$G$. Assign the string $a_y \in \zo^{2n}$ which is 0 everywhere except on the coordinates in $S$, as
well as on the last $n-|S|$ coordinates. Observe that the Hamming distance between $e_x$ and $a_y$
is $(n-|S|)+(|S|-1) = n-1$ if $x$ is adjacent to $y$, and $(n-|S|) + (|S|+1) = n+1$ if $x$ is not
adjacent to $y$. We may now perform a simple padding operation to extend $e_x$ to $e'_x \in \zo^d$
and $a_y$ to $a'_y \in \zo^d$, such that the Hamming distance between $e'_x, a'_y$ is $k(d)$ if $x$
is adjacent to $y$ and $k(d) + 2$ if $x$ is not adjacent to $y$. So we have constructed a submatrix
of the \textsc{$k(d)$-Hamming Distance} matrix which is equal to the adjacency matrix of $G$.
\end{example}

\subsubsection{Chain Number \& Stability}

We define the notion of \emph{chain number}, which will be used in several of our proofs, and use it
to formally define \emph{stability}.

\begin{definition}[Chain Number \& Stability]
\label{def:chain number}
For a graph $G$, the \emph{chain number} $\ch(G)$ is the maximum number $k$ for which there exist
disjoint sets of vertices $\{a_1, \dotsc, a_k\}, \{b_1, \dotsc, b_k\} \subseteq V(G)$ such that
$(a_i,b_j) \in E(G)$ if and only if $i \leq j$. For a graph class $\cF$, we write $\ch(\cF) =
\max_{G \in \cF} \ch(G)$.  If $\ch(\cF) = \infty$, then $\cF$ has \emph{unbounded chain
number}, otherwise it has \emph{bounded chain number}. If $\cF$ has bounded chain number, we also
call it \emph{stable}.
\end{definition}

The name \emph{stable} for these classes is as in \cite{CS18,NMP+21} (they are also called
\emph{graph-theoretically stable} in \cite{GPT21}). These classes have many interesting properties,
including stronger versions of Szemer{\'e}di's Regularity Lemma \cite{MS14} and the Erd{\H
o}s-Hajnal property \cite{CS18} (it is also conjectured in \cite{HHH23dimfree} that graphs which
admit constant-cost protocols for adjacency satisfy the \emph{strong} Erd{\H o}s-Hajnal property),
and they play a central role in algorithmic graph theory \cite{GPT21}.  We observe that stability is
also essential for understanding the IGQ and randomized communication complexity. More specifically,
for a hereditary graph class to have a constant-size PUG, it is necessary for it to be stable. This
follows simply from known lower bounds on the \textsc{Greater-Than} communication problem (see
\cref{section:greater-than-lb}).

\begin{restatable}{proposition}{lowerbound}
\label{prop:lower bound}\RestateRemark
If a hereditary graph class $\cF$ is not stable, then $\RL(\cF) = \Omega(\log n)$.
\end{restatable}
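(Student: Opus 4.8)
The plan is to reduce the \textsc{Greater-Than} problem on $[k]$ to deciding adjacency in a suitable graph from $\cF$ on $2k$ vertices, and then invoke the known lower bound for \textsc{Greater-Than} in the public-coin simultaneous message passing model (recalled in \cref{section:preliminaries}).

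First I would extract ``half-graphs'' from $\cF$. Since $\cF$ is not stable we have $\ch(\cF) = \infty$, so for every $k$ there is a graph $G \in \cF$ together with disjoint sets $\{a_1,\dots,a_k\},\{b_1,\dots,b_k\} \subseteq V(G)$ such that $(a_i,b_j)\in E(G) \iff i\le j$. Because $\cF$ is hereditary, the subgraph of $G$ induced by these $2k$ vertices, relabeled onto $[2k]$, is a graph $H_k \in \cF_{2k}$ that still satisfies $(a_i,b_j)\in E(H_k) \iff i \le j$. Thus for every even $N=2k$, the class $\cF_N$ contains a graph with two distinguished $k$-sets whose bipartite adjacency pattern encodes the order relation on $[k]$.

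Next I would turn an adjacency sketch into an SMP protocol. Suppose $\cF$ admits an adjacency sketch of size $c(n)$ and error $1/3$, and apply it to the fixed graph $H_k$, obtaining a distribution over maps $\sk : V(H_k) \to \zo^{c(2k)}$ and a decoder $D$ with $\Pru{\sk}{D(\sk(u),\sk(v)) = \ind{(u,v)\in E(H_k)}} \ge 2/3$ for all $u,v$. This gives a public-coin SMP protocol for $\mathrm{GT}_k$: on inputs $x,y\in[k]$, the two players use the shared random string to sample the same $\sk$ (legitimate, since both know the fixed graph $H_k$), Alice sends $\sk(a_x)$, Bob sends $\sk(b_y)$, and the referee outputs $D(\sk(a_x),\sk(b_y))$. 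By the sketch guarantee this equals $\ind{(a_x,b_y)\in E(H_k)} = \ind{x\le y} = \mathrm{GT}_k(x,y)$ with probability $\ge 2/3$, so $\mathrm{GT}_k$ has a public-coin SMP protocol of cost $O(c(2k))$.

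Finally, combining with the known $\Omega(\log k)$ lower bound on the public-coin SMP complexity of \textsc{Greater-Than} yields $c(2k) = \Omega(\log k) = \Omega(\log(2k))$; since this holds for every $k$, any adjacency sketch for $\cF$ has size $\Omega(\log n)$, i.e.\ $\RL(\cF) = \Omega(\log n)$. Every step here is routine: the genuine content is the imported communication lower bound, and the only point that deserves a sentence of care is the observation that an adjacency sketch is precisely a public-coin SMP protocol for adjacency in a graph that both players already know, so that the shared randomness lets them reconstruct a common labeling and the reduction goes through without any loss.
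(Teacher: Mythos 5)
Your proof is correct and follows essentially the same route as the paper's: extract a half-graph witness of size $2k$ using heredity, simulate the sketch as a public-coin SMP protocol for $\mathrm{GT}_k$, and invoke the $\Omega(\log k)$ SMP lower bound for \textsc{Greater-Than}. No meaningful differences.
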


We conclude this section by noting a useful characterization of stable graph classes via forbidden induced subgraphs 
(\cref{prop:ch-finite-iff-no-chainlike}): a graph class $\cF$ has a bounded chain number
(\ie, $\cF$ is stable) if and only if 
$$
	\cF \subseteq \free(H^{\circ\circ}_p, H^{\bullet\circ}_q, H^{\bullet\bullet}_r) \text{, for some choice of $p,q,r$,}
$$
where $H^{\circ\circ}_p$ is a \emph{half-graph}, $H^{\bullet\circ}_q$ is a \emph{co-half-graph}, and $H^{\bullet\bullet}_r$ is a \emph{threshold graph} (depicted in \cref{fig:half graphs}).

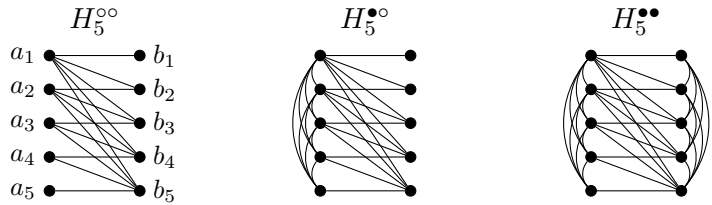
\begin{figure}[thb]
	\centering
	\begin{tikzpicture}[scale=.85]
		\def\n{5}
		\begin{scope}
			\foreach \i/\ab/\p in {1/a/left,2/b/right} {
				\foreach \x in {1,...,\n} {
					\node[vertex,label={\p:$\ab_\x$}] (\ab\x) at (2*\i-2,-0.75*\x){} ;
				}
			}
			\foreach \i in {1,...,\n} {
				\foreach \j in {1,...,\n} {
					\ifthenelse{\i < \j \OR \i = \j}{\draw (a\i) to (b\j) ;}{}
				}
			}
			\node at (1,0) {$H_5^{\circ\circ}$} ;
		\end{scope}
		\begin{scope}[shift={(6,0)}]
			\foreach \i/\ab/\p in {1/a/left,2/b/right} {
				\foreach \x in {1,...,\n} {
					\node[vertex] (\ab\x) at (2*\i-2,-0.75*\x){} ;
				}
			}
			\foreach \i in {1,...,\n} {
				\foreach \j in {1,...,\n} {
					\ifthenelse{\i < \j \OR \i = \j}{\draw (a\i) to (b\j) ;}{}
				}
			}
			\foreach \i in {1,...,\n} {
				\foreach \j in {1,...,\n} {
					\ifthenelse{\i<\j}{
						\draw (a\j) to[bend left=40] (a\i) ;
					}{}
				}
			}
			\node at (1,0) {$H_5^{\bullet\circ}$} ;
		\end{scope}
		\begin{scope}[shift={(12,0)}]
			\foreach \i/\ab/\p in {1/a/left,2/b/right} {
				\foreach \x in {1,...,\n} {
					\node[vertex] (\ab\x) at (2*\i-2,-0.75*\x){} ;
				}
			}
			\foreach \i in {1,...,\n} {
				\foreach \j in {1,...,\n} {
					\ifthenelse{\i < \j \OR \i = \j}{\draw (a\i) to (b\j) ;}{}
				}
			}
			\foreach \ab/\p in {a/left,b/right} {
				\foreach \i in {1,...,\n} {
					\foreach \j in {1,...,\n} {
						\ifthenelse{\i<\j}{
							\draw (\ab\j) to[bend \p=40] (\ab\i) ;
						}{}
					}
				}
			}
			\node at (1,0) {$H_5^{\bullet\bullet}$} ;
		\end{scope}
	\end{tikzpicture}
	\caption{%
		Examples of the half-graph, co-half-graph, and threshold graphs.
	}
	\label{fig:half graphs}
\end{figure}

\subsection{Reductions}
\label{section:reductions}

We define a general notion of reductions useful for both communication complexity and implicit graph
representations\footnote{This general notion of reductions did not appear in the conference version
of this paper. It is an adaptation of the definitions presented in subsequent work
\cite{HZ23,FHHH24}, which we include here for the sake of clarity, and consistency with subsequent
work.}.
First, we define a notion of reduction between families of matrices. 
Given $k$ matrices $B_1, B_2, \dotsc, B_k \in \zo^{n \times n}$ and a Boolean function $h : \zo^k \to \zo$, we define
\[
  h(B_1, B_2, \dotsc, B_k) \define A \,,
\]
where $A \in \zo^{n \times n}$ is the matrix with $A(i,j) = h(B_1(i,j), B_2(i,j), \dotsc, B_k(i,j))$
for all $i,j \in [n]$.

\begin{definition}[Matrix Reductions]
\label{def:matrix-reductions}
Let $\cA, \cB$ be families of matrices. We say that $\cA$ \emph{matrix-reduces to} $\cB$ if there
exists a constant $k$ such that, for every $A \in \cA$, there
exist $h : \zo^k \to \zo$ and $B_1, B_2, \dotsc, B_k \in \cB$ so that
\[
  A = h(B_1, B_2, \dotsc, B_k) \,.
\]
\end{definition}

\subsubsection{Between communication problems}
\newcommand{\QS}{\mathsf{QS}}
\newcommand{\D}{\mathsf{D}}

Informally, for communication problems $\cA, \cB$, we will say that $\cA$ has a \emph{constant-cost
reduction} to $\cB$ if $\cA$ can be computed by a constant-cost deterministic protocol with oracle
access to $\cB$.  Formally, we define this as follows.

\begin{definition}[Query Set]
\label{def:query-set}
For any set of matrices $\cQ$ we define the \emph{query set}
\[
\QS(\cQ) \define \left\{ M \;\mid\; M \text{ is a blowup of a permutation of a submatrix of some } Q \in \cQ
\right\} \,,
\]
where we say that a matrix $A$ is a \emph{blowup} of $B$ if it is obtained from $B$ by a sequence of row and
column duplications.
\end{definition}

\begin{definition}[Communication with Oracles]
\label{def:oracle-communication}
For any set $\cQ$ of matrices, we define the \emph{deterministic $\cQ$-oracle communication
complexity} of the communication problem $\cA$ as follows. For any $M \in \cA$ with row and columns
indexed by $[n_1], [n_2]$ respectively, a $\cQ$-oracle protocol for $M$ is a tree $T$ whose leaves
$v$ are associated with output values $\ell(v) \in \zo$, and whose inner nodes $u$ are associated
with $n_1 \times n_2$ matrices $Q_u \in \QS(\cQ)$ (note that matrices of every size appear in
$\QS(\cQ)$ since it is closed under taking submatrices and blowups). On input pair $i \in [n_1], j \in [n_2]$,
the protocol initializes a pointer $p$ to the root of $T$, and in every round the protocol proceeds
to
\[
  p \gets \begin{cases}
    \text{left child of $p$} &\text{ if } Q_p(i, j) = 0 \\
    \text{right child of $p$} &\text{ if } Q_p(i, j) = 1 \,,
  \end{cases}
\]
until $p$ is a leaf, at which point it outputs $\ell(p)$, which is required to be equal to $M(i,j)$.
We write $\D^\cQ(M)$ for the minimum cost (\ie, depth) of such a protocol, and we say $n \mapsto
\D^\cQ(\cA_n)$ is the deterministic $\cQ$-oracle communication complexity of $\cA$.
\end{definition}

We may now define \emph{constant-cost reductions} between communication problems.

\begin{definition}[Constant-Cost Reductions]
\label{def:constant-cost-reductions}
A communication problem $\cA$ \emph{reduces} to (or has a \emph{constant-cost reduction} to) $\cB$
if $\D^\cB(\cA) = O(1)$.
\end{definition}

\begin{remark}
This notion of reduction differs from the standard notion of an oracle reduction in communication
complexity (\eg \cite{BFS86}) because it allows the oracle matrices to be submatrices of
\emph{arbitrarily large} instances of $\cB$. Standard oracle reductions require the oracle matrix to
be a submatrix of some $B \in \cB$ with at most $N$ rows and columns, where $\log N = \poly\log(\log
n)$ to preserve communication complexities of the form $\poly\log(\log n)$ (where $\log n$ is the
number of bits required to encode the inputs $i,j \in [n]$). Since we aim to preserve
\emph{constant-cost} communication, this requirement is unnecessary.
\end{remark}

\begin{proposition}
Suppose communication problem $\cA$ has a constant-cost reduction to problem $\cB$, and $\CC(\cB) = O(1)$.
Then $\CC(\cA) = O(1)$.
\end{proposition}
\begin{proof}
The randomized communication cost of a matrix $M$ is not changed by permutations or
by row and column duplications, so $\CC(\QS(\cB)) = O(1)$ also. Let $A \in \cA$; we must show $\CC(A) =
O(1)$. Since $\cA$ reduces to $\cB$, there is a
$\cB$-oracle protocol for $A$, \ie,
a communication tree $T$ of constant depth $d = O(1)$, with
each vertex $v$ assigned a matrix $Q_v \in \QS(\cB)$. We obtain a randomized communication protocol
with cost $O(d \log d)$ by simulating the tree $T$ on inputs $x,y$; namely, for each node $v$ of $T$ we simulate
the randomized protocol for $Q_v$ with error $1/3d$, which can be done with communication cost
$O(\log d)$ using standard error boosting. Since at most $d$ oracle queries are simulated,
the probability that the protocol has an error on any of them is at most $d \cdot 1/3d = 1/3$.
\end{proof}

Constant-cost communication reductions are equivalent to matrix reductions as follows.

\begin{proposition}\label{prop:cc-mat-red}
A communication problem $\cA$ has a constant-cost communication reduction to $\cB$ if and only if
$\cA$ matrix-reduces to $\QS(\cB)$.
\end{proposition}
\begin{proof}
First, assume that $\cA$ matrix-reduces to $\QS(\cB)$, \ie, 
there exists a constant $k$ such that, for each $A \in \cA$, 
here exist a function $h : \zo^k \to \zo$ and $Q_1, \dotsc, Q_k \in
\QS(\cB)$ such that
\[
    A = h(Q_1, Q_2, \dotsc, Q_k) \,.
\]
Then for any $A \in \cA$, we obtain $\D^\cB(A) \leq k$ by using the following communication
protocol: on
inputs $x,y$, the players query $Q_1(x,y), Q_2(x,y), \dotsc, Q_k(x,y)$ and then output the value $h(Q_1(x,y), \dotsc, Q_k(x,y))$.

Now, assume that $\D^\cB(\cA) = O(1)$. Then for any $A \in \cA$, there is an oracle communication
tree $T$ with constant depth $d = O(1)$ and each inner node $v$ assigned a matrix $Q_v \in
\QS(\cB)$. Let $v_1, \dotsc, v_t$ be inner nodes of $T$, where $t \leq 2^d$. Then we may write
\[
  A = h(Q_{v_1}, Q_{v_2}, \dotsc, Q_{v_t}) \,,
\]
where $h : \zo^t \to \zo$ is the function which simulates $T$ given the answers to every query.
\end{proof}

\subsubsection{Between graph classes}
\label{section:graph-class-reductions}

Let $\cF, \cG$ be hereditary classes of graphs. We define a reduction between $\cF, \cG$ with the
purpose of preserving the existence of implicit representations.

Let $\Adj_\cF$ denote the set of adjacency matrices of graphs in $\cF$.  We say $\cF$ reduces to
$\cG$ if $\Adj_\cF$ matrix-reduces to $\QS(\Adj_\cG)$, where $\QS(\cdot)$ denotes the query set
defined above. In graph-theoretic terms, $\QS(\Adj_\cG)$ is the set of bipartite adjacency matrices
obtained as follows: for any graph $G \in \cG$, let $X, Y \subset V(G)$ be any subsets of vertices
and $H = (X,Y,E_H)$ be the bipartite graph where $(x,y) \in X \times Y$ is an edge of $H$ if and only if 
$\{x,y\}$ is an edge of $G$. Then take any $H' = (X',Y',E'_H)$ obtained from $H$ by duplicating vertices. $\QS(\Adj_\cG)$
is the set of bipartite adjacency matrices of any graph obtained in this way.

\begin{proposition}
Suppose that $\cF$ reduces to $\cG$ and $\cG$ admits an adjacency labeling scheme of size $s(n)$.
Then $\cF$ admits an adjacency labeling scheme of size $O(s(n))$.
\end{proposition}
\begin{proof}
First, observe that, if $\cG$ admits an adjacency labeling scheme of size $s(n)$, then the family
$\cH$ of bipartite graphs whose adjacency matrices are in $\QS(\Adj_\cG)$ admits an adjacency labeling
scheme of size $s(n) + 1$: to every vertex $v \in X \cup Y$ of a graph $H = (X, Y, E_H)$ constructed
from $G \in \cG$ as in the paragraph above (without vertex duplication), simply take the original label
for the scheme of $G$ and append one bit to indicate whether $v \in X$ or $v \in Y$. For the graph
$H' = (X',Y',E'_H)$ obtained by duplicating vertices of $H$, simply copy the corresponding labels. It is
straightforward to verify that such labels contain sufficient information to correctly deduce the adjacency 
relation between vertices in graphs from $\cH$.

Now, there exists a constant $k$ such that, for any graph $G \in \cF$, there exist bipartite graphs
$B_1, \dotsc, B_k \in \cH$ with bipartite adjacency matrices $\Adj_{B_i} \in \QS(\Adj_\cG)$ and a
function $h : \zo^k \to \zo$, such that
\[
  \Adj_G = h(\Adj_{B_1}, \Adj_{B_2}, \dotsc, \Adj_{B_k}) \,.
\]
If $\cF$ is a class of bipartite graphs, then for $G = (X,Y,E)$, $\Adj_G$ has rows indexed by $X$ and columns indexed by $Y$, and we may assume that each $\Adj_{B_i}$ also
has rows indexed by $X$ and columns indexed by $Y$. We label each $v \in X \cup Y$, with the
concatenation of the encoding of $h$ and the labels of $v$ for each $B_i$ in the labeling scheme for
$\cH$.  Then the decoder can decide if $u,w \in X \cup Y$ are in different parts of the graph, and
if so, decide adjacency of $u$ and $w$ in each $B_i$ and compute $h$ on these adjacency values. The
size of the labels is at most $2^k + k(s(n)+1) = O(s(n))$ since $k$ is constant.

Now, if $\cF$ is not a class of bipartite graphs, then for $G = (V,E)$, $\Adj_G$ is symmetric and has
rows and columns indexed by $V$. Each $B_i$ is a bipartite graph with the parts $V$ and
$V'$, a copy of $V$. For each $v \in V$, we assign a label by concatenating the encoding of $h$ with two labels $a(v), b(v)$, where $a(v)$ is the concatenation of the labels of $v \in V$ in each $B_i$, and $b(v)$ is the concatenation of the labels of $v \in V'$ in each $B_i$. 
Given two labels for $u,v \in V$, the decoder can determine adjacency in $G$
by means of the function $h$ and the decoder for the labeling scheme of $\cH$ on each $B_i$, using the relevant parts of $a(v)$ and $b(u)$. The size of the labels is at most $2^k + 2k(s(n)+1)= O(s(n))$.
\end{proof}

\begin{remark}
It may seem more natural to define reductions by saying $\cF$ reduces to $\cG$ if $\Adj_\cF$
matrix-reduces to $\Adj_\cG$ instead of $\QS(\Adj_\cG)$. This would preserve more of the graph
structure through the reduction, since $\QS(\Adj_\cG)$ allows taking semi-induced bipartite subgraphs
(instead of induced subgraphs) and allows row and column duplication (\ie, duplicating vertices). But
here we are interested only in preserving existence of adjacency labeling, and the reduction defined
here is more permissive.
\end{remark}

\subsection{Equality-Based Labeling Schemes}
\label{section:equality-based-labeling}

We define a certain type of adjacency labeling scheme called an \emph{equality-based labeling
scheme}. A number of our results will be proved by constructing equality-based labeling schemes, so we
will introduce some special notation for these.  For simplicity of notation, we write
\[
  \Eq(a,b) = \ind{a=b} \,.
\]

\begin{restatable}{definition}{eqlabeling}(Equality-based Labeling Scheme).
\label{def:equality labeling}\RestateRemark
Let $\cF$ be a class of graphs. An \emph{$(s,k)$-equality-based} labeling scheme for $\cF$ is a
labeling scheme defined as follows. For every $G \in \cF$ with vertex set $[n]$ and every $x \in
[n]$, the label $\ell(x)$ consists of the following:
\begin{enumerate}
\item A \emph{prefix} $p(x) \in \zo^s$. If $s = 0$ we write $p(x) = \bot$.
\item A sequence of $k$ \emph{equality codes} $q_1(x), \dotsc, q_k(x) \in \bN$.
\end{enumerate}
The decoder must be of the following form. There is a set of functions $D_{p_1,p_2} : \zo^{k \times
k} \to \zo$ defined for each $p_1,p_2 \in \zo^s$ such that, for every $x,y \in [n]$, it holds that
$(x,y) \in E(G)$ if and only if $D_{p(x),p(y)}(Q_{x,y}) = 1$, where $Q_{x,y} \in \zo^{k \times k}$ 
is the matrix with entries $Q_{x,y}(i,j) = \ind{q_i(x)=q_j(y)}$. If $s = 0$ we simply write
$D(Q_{x,y})$.

We say that $\cF$ admits a \emph{constant-size equality-based} labeling scheme if there exist
constants $s,k$ such that $\cF$ admits an $(s,k)$-equality-based labeling scheme.
\end{restatable}

\begin{remark}\label{rm:conv-form}
We will often use the following notation. A label for $x$ will be written as a constant-size tree of
tuples of the form
\[
  (p_1(x), \dotsc, p_r(x) \eqLabelSep q_1(x), \dotsc, q_t(x) ) \,,
\]
where the symbols $p_i(x)$ belong to the prefix, while the symbols $q_i(x)$ are equality codes. When
$r,t$ are constants, 
it is straightforward to
put such a label into the form required by \cref{def:equality labeling}.
\end{remark}

\begin{proposition}
\label{prop:eq-label-to-sk}
Suppose a class of graphs $\cF$ admits an $(s,k)$-equality based labeling scheme. Then
$\RL(\cF) = O(s + k \log k)$.
\end{proposition}
\begin{proof}
Let $G \in \cF$ and let $n$ be the number of vertices. Let $T \subseteq \bN$ be the set of equality
codes $T = \{ q_i(x) : i \in [k], x \in V(G) \}$. We assign the sketches to $V(G)$ as
follows. For each $t \in T$ we assign a uniformly random $b(t) \in [3k^2]$ and observe that $b(t)$
may be encoded in $O(\log k)$ bits. To each vertex $x$ we assign the sketch
containing the $s$ bits of the prefix $p(x)$, followed by $b(q_1(x)), \dotsc, b(q_q(x))$, which
requires $O(s + k \log k)$ bits.

Consider any two vertices $x,y$. The decoder, upon receiving the sketches for $x$ and $y$,
constructs the matrix $Q'_{x,y}$ defined as $Q'_{x,y}(i,j) = \Eq(b(q_i(x)), b(q_j(y)))$, and then
applies the decoder $D_{p(x),p(y)}(Q'_{x,y})$ from the equality-based labeling scheme.  Observe that with probability at least $1 - \frac{1}{3k^2}$ it holds that
$\Eq(b(q_i(x)), b(q_j(y)))  = \Eq(q_i(x), q_j(y))$.
Then by the union bound, we have $Q'_{x,y} = Q_{x,y}$ with probability at least $2/3$, in which
case the decoder outputs the correct value.
\end{proof}

\begin{example}
\label{example:forests}
Many standard examples of adjacency labeling schemes are in fact equality-based labeling schemes.
For example, we may write the original adjacency labeling scheme of \cite{KNR92} for forests as an
equality-based labeling scheme: the label for any vertex $x \in [n]$ is
\[
  ( - \mid x, p(x) ),
\]
where the first equality code is the vertex itself, and the second equality code is its parent $p(x)$. Given labels $( - \mid x, p(x))$ and $( - \mid y, p(y))$,
the decoder outputs $(x = p(y)) \vee (y = p(x))$.
\end{example}

Let us now define equivalence graphs.
\begin{definition}
	A graph $G$ is an \emph{equivalence graph} if it is a disjoint union of complete graphs.  A colored bipartite graph $G =
	(X,Y,E)$ is a \emph{bipartite equivalence graph} if it is a colored disjoint union of bicliques, \ie, 
	if there are partitions $X = X_1 \cup \dotsm \cup X_m$, $Y = Y_1 \cup \dotsm \cup Y_m$ such that each
	$G[X_i,Y_i]$ is a biclique and each $G[X_i,Y_j]$ is a co-biclique when $i \neq j$.

  Alternatively, the equivalence graphs are exactly the $P_3$-free graphs, and the bipartite
  equivalence graphs are exactly the $P_4$-free bipartite graphs. 
\end{definition}

Equality-based labeling schemes are related to reductions by the following proposition. For the sake
of completeness, we state the simple proof in \cref{section:eq-labeling-proof}.

\begin{restatable}{proposition}{lemmaeqlabeling}
\label{lemma:eq labeling}\RestateRemark
The following are equivalent for a hereditary graph class $\cF$:
\begin{enumerate}
\item $\cF$ admits a constant-size equality-based labeling scheme;
\item The graph class $\cF$ reduces to the class of equivalence graphs;
\item The communication problem $\Adj_\cF$ reduces to \textsc{Equality}.
\end{enumerate}
Therefore, if $\cF$ admits a constant-size equality-based labeling scheme, it admits a constant-size
adjacency sketch (and hence a constant-size PUG).
\end{restatable}

\subsection{Basic Adjacency Sketches}
\label{sec:bounded-arboricity}

Adjacency sketches for graph classes of bounded arboricity will be basic building blocks for some
of our results.
\begin{definition}
A graph $G = (V,E)$ has \emph{arboricity} $\alpha$ if its edges can be partitioned into at most
$\alpha$ forests.
\end{definition}
The classic adjacency labeling scheme of \cite{KNR92} for graph classes of bounded arboricity may be
interpreted as an $(0,\alpha)$-equality-based labeling scheme (as in \cref{example:forests}).
Using \cref{prop:eq-label-to-sk}, this gives an adjacency sketch of size $O(\alpha \log \alpha)$,
which was also stated in \cite{Har20}. We show that this can be improved as follows:

\begin{lemma}
\label{lemma:arboricity}
For any $\alpha \in \bN$, let $\cA$ be the class of graphs with arboricity at most $\alpha$. Then
$\cA$ admits a constant-size equality-based adjacency labeling scheme. $\cA$ also admits an
adjacency sketch of size $O(\alpha)$.
\end{lemma}
\begin{proof}
For any graph $G \in \cA_n$ with vertex set $[n]$, partition the edges of $G$ into forests $F_1,
\dotsc, F_\alpha$ and to each tree in each forest, identify some arbitrary vertex as the root. For
every vertex $x$, assign equality codes $q_1(x) = x$ and for $i \in [\alpha]$ set $q_{i+1}(x)$ to be
the parent of $x$ in forest $F_i$; if $x$ is the root assign $q_{i+1}(x) = 0$.  For vertices $x,y$,
the decoder outputs
\[
  \left(\bigvee_{j = 2}^\alpha \ind{ q_1(x), q_j(y) } \right) \vee
  \left(\bigvee_{j = 2}^\alpha \ind{ q_1(y), q_j(x) } \right) \,.
\]
This is 1 if and only if $y$ is the parent of $x$ or $x$ is the parent of $y$ in some forest $F_i$.

One can apply \cref{prop:eq-label-to-sk} to obtain an $O(\alpha \log \alpha)$ adjacency sketch. We can
improve this using a Bloom filter, since the output is simply a disjunction of equality checks. To
each $i \in [n]$, assign a uniformly random number $r(i) \sim [6\alpha]$, and to each vertex $x$
assign the sketch $(r(x), b(x))$ where $b(x) \in \zo^{6\alpha}$ satisfies $b(x)_i = 1$ if and
only if $r(q_j(x)) = i$ for some $j \in \{2, \dotsc, \alpha+1\}$. On input $(r(x), b(x))$ and
$(r(y), b(y))$, the decoder outputs 1 if and only if $b(x)_{r(y)} = 1$ or $b(y)_{r(x)} = 1$. If $y$
is a parent of $x$ in any of the $\alpha$ forests, then $y=q_j(x)$ for some $j$, so $b(x)_{r(y)} =
b(x)_{r(q_j(x))} = 1$ and the decoder will output 1 with probability 1. Similarly, if $x$ is a
parent of $y$ in any of the $\alpha$ forests, the decoder will output 1 with probability 1. The
decoder fails only when $x,y$ are not adjacent and $r(x) = r(q_j(y))$ or $r(y) = r(q_j(x))$ for some
$j$. By the union bound, this occurs with probability at most $2\alpha \cdot \frac{1}{6\alpha} =
1/3$, as desired. The size of the sketches is $O(\alpha + \log \alpha) = O(\alpha)$.
\end{proof}

\begin{remark}
\label{remark:treewidth}
Graph classes of bounded arboricity include many commonly-studied hereditary graph classes, \eg, classes of graphs of bounded degree, classes of bounded treewidth, and proper minor-closed classes \cite{Mad67}.
\end{remark}

\section{Question I. New Examples of Constant-Cost Communication}
\label{sec:new_examples}

This first main section of results is dedicated to finding new examples of problems with
constant-cost randomized communication, to better understand the variety of problems in this class;
this also leads to some new methods for constructing adjacency labeling schemes.\\

\noindent
\textbf{\cref{section:twin-width}.} We study the problem of computing path distance $\dist(x,y) \leq k$
in graphs, where the players are given vertices $x,y$ of a shared graph.  It has long been known
\cite{Yao03,HSZZ06} that the \textsc{$k$-Hamming Distance} problem has a constant-cost randomized
communication protocol whenever $k$ is constant, which we may think of as computing path distance
$\dist(x,y) \leq k$ for vertices $x,y$ in the hypercube graph, with cost independent of the number
of vertices. One of the questions motivating this paper is, for which other graphs can we compute
path distance $\dist(x,y) \leq k$ in randomized communication cost independent of the number of
vertices? Equivalently, for which hereditary classes of graphs can we construct constant-size
\emph{sketches} for deciding small distances? We generalize this problem to constructing
constant-size sketches for deciding certain first-order formulae $\phi(x,y)$ over vertices $x,y$.\\

\noindent
\textbf{\cref{section:cartesian product}.} We show that constant-size adjacency sketches are
preserved by the Cartesian product operation on graphs. This also gives a new method of constructing
adjacency labeling schemes for Cartesian products.

\subsection{Computing Small Distances and First-Order Formulae}
\label{section:twin-width}

Graph width parameters like treewidth \cite{robertson1986graph} and clique-width
\cite{courcelle1993handle} are a central tool in structural graph theory. Often, graph classes where
a certain width parameter is bounded possess favorable structural, algorithmic, or combinatorial
properties.  The \emph{twin-width} parameter, introduced recently in \cite{BKTW20}, generalizes
treewidth and clique-width and has attracted a lot of recent attention
\cite{GPT21,schidler2021sat,ahn2021bounds,balaban2021twin}.  Graph classes of
bounded twin-width admit size $O(\log n)$ adjacency labeling schemes \cite{BGK+21a}, making them a
natural choice for studying \cref{question:main}. Building upon recent structural results on stable
classes of bounded twin-width \cite{GPT21}, we prove:

\begin{restatable}{theorem}{thmtwinwidth}
\label{thm:twin width}\RestateRemark
Let $\cF$ be a hereditary graph class of bounded twin-width. Then $\cF$ admits a constant-size PUG if and only if $\cF$ is stable.
\end{restatable}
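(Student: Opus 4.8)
The plan is to establish both directions of the equivalence for a hereditary family $\cF$ of bounded twin-width. The forward direction is immediate: if $\cF$ admits a constant-size PUG then, by \cref{prop:lower bound} (or \cref{prop:correspondence stable inclusion}), $\cF$ must be stable, since a non-stable family encodes \textsc{Greater-Than} and hence has $\RL(\cF) = \Omega(\log n)$. This requires nothing about twin-width and is valid for every hereditary family. So the entire content of the theorem is the reverse direction: every stable hereditary family $\cF$ of bounded twin-width admits a constant-size adjacency sketch.

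For the reverse direction, I would invoke the structural decomposition for stable bounded-twin-width families proved by Gajarsk\'y, Pilipczuk \& Toru\'nczyk \cite{GPT21}. Informally, such families admit a kind of bounded-depth ``tree-like'' or flip/contraction decomposition witnessing that a constant number of ``flips'' (edge-complementations between parts) reduces the graph to something of bounded component size, or more precisely a decomposition tree of bounded height over a bounded number of partition classes. The strategy is then to build the sketch recursively along this decomposition: at each level, a constant number of flips can be recorded with $O(1)$ bits per vertex (each vertex remembers, for each of the constantly many partition classes and each of the constantly many flip-operations, which side it is on), and adjacency in the original graph is recovered from adjacency in the ``reduced'' graph together with these $O(1)$-bit annotations by applying the (constantly many) flips. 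Because the decomposition has bounded height and bounded ``width'' at each level, the total number of bits accumulated is $O(1)$, independent of $n$. Concretely, I would prove by induction on the decomposition height that there is a constant-size adjacency sketch, where the inductive step combines: (i) the sketch for the reduced/base instance, which has bounded component size and hence a trivial $O(1)$ adjacency labeling (or is handled by \cref{lemma:arboricity}-type arguments if the base case is bounded arboricity rather than bounded component size), and (ii) the $O(1)$ bits recording the partition-class membership and flip-data at the current level. The error can be kept at $1/3$ (or even $0$ for the parts not requiring randomness) since the only randomness, if any, enters through sub-sketches that themselves have bounded error, composed a bounded number of times.

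I expect the main obstacle to be extracting exactly the right form of the structural statement from \cite{GPT21} and matching it to the sketching framework — in particular verifying that the decomposition's "width" and "height" are both $O(1)$ (a true constant, not $O(\log n)$ or $O(\log\log n)$), and that adjacency in the whole graph really is computable from $O(1)$-bit local data plus adjacency in the base instances. A secondary concern is the base case: one must confirm that the ``bottom'' pieces of the decomposition have a trivial constant-size adjacency sketch — e.g., if they have bounded component size this is immediate, but if the decomposition only guarantees bounded shrub-depth or bounded arboricity, one needs \cref{lemma:arboricity} or a short separate argument. Once the structure theorem is in hand, the recursive composition of sketches is routine and parallels the composition arguments used elsewhere in the paper (e.g.\ for interval graphs and Cartesian products).
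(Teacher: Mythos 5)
Your overall plan follows the same route as the paper: the forward direction via \cref{prop:lower bound}, and the reverse direction by recursing along the flip/partition decomposition of Gajarsk\'y, Pilipczuk \& Toru\'nczyk, recording $O(1)$ bits of flip data and partition membership per level. However, the parts you flag as ``obstacles'' are exactly where the real work lies, and two of your guesses for how to resolve them are wrong, so I would count these as genuine gaps rather than routine details.

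First, the base case. The bottom pieces of the decomposition are \emph{not} of bounded component size or bounded arboricity: the recursion in the paper terminates when the quasi-chain number drops to $1$, which by \cref{lemma:quasi chain number} means the piece is a $P_4$-free bipartite graph, i.e.\ a bipartite equivalence graph (a disjoint union of bicliques). Such graphs have unbounded arboricity, so neither of your proposed base-case arguments applies; what saves the day is that bipartite equivalence graphs admit a trivial constant-size equality-based labeling scheme (\cref{fact:equivalence-graphs-labeling}). Second, the mechanism bounding the depth. Stability is not merely a hypothesis needed to invoke the structure theorem: the Main Lemma (\cref{lemma:gpt21 main}) guarantees that each star of the decomposition induces a subgraph whose \emph{quasi-chain number strictly decreases}, and it is this decrement that bounds the tree depth by $\qch(G) \le 4\ch(\cF)+4$. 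Without identifying this measure, you have no argument that the height is $O(1)$ rather than, say, $O(\log n)$. Third, the structure theorem is stated for \emph{ordered bipartite} graphs of bounded \emph{convex} twin-width, so one must first reduce general graphs to bipartite ones; the paper does this by observing that $\cl(\bip(\cF))$ is an FO transduction of $\cF$ and hence remains stable of bounded twin-width (\cref{th:stable-tw-FO}), then transfers the labeling scheme back via \cref{prop:bip properties}. Finally, a small but necessary point in the decoder: when $x$ and $y$ lie in no common star, one must argue they are non-adjacent in the flipped graph $F$ (because every edge of the quotient is covered by some star), so that their adjacency in $G$ is exactly the parity of the flips affecting both --- this is the step that makes the ``apply the flips'' part of your sketch actually decode correctly.
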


We also consider a type of sketches that generalizes adjacency and distance-$k$ sketches.  For any
\emph{first-order formula} $\phi(x,y)$ on vertex pairs (see \cref{section:twin-width}), we consider
sketches that allow to decide $\phi(x,y)$ instead of adjacency. An example is the $\dist(x,y) \leq
k$ formula:
\[
  \delta_k(x,y) \define (\exists v_1, \dotsc, v_{k-1} : (E(x,v_1) \vee x=v_1) \wedge
(E(v_1,v_2) \vee v_1=v_2) \wedge
\dotsm \wedge (E(v_{k-1},y) \vee v_{k-1}=y)) \,.
\]
Using results on \emph{first-order transductions} (a graph transformation that arises in model
theory) and their relation to stability and twin-width \cite{BKTW20,NMP+21}, we
obtain the following corollary of \cref{thm:twin width}.

\begin{corollary}
\label{cor:twin-width-fo}
Let $\cF$ be a stable class of bounded twin-width and let $\phi(x,y)$ be a first-order formula.
Then $\cF$ admits a constant-size sketch for deciding $\phi$.
\end{corollary}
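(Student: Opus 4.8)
The plan is to derive this corollary from \cref{thm:twin width} by reducing the problem of sketching a first-order formula $\phi(x,y)$ over a stable, bounded-twin-width family $\cF$ to the problem of sketching adjacency over a \emph{related} family $\cF'$, and then checking that $\cF'$ is again stable and of bounded twin-width so that \cref{thm:twin width} applies to it. The mechanism for this reduction is the theory of first-order transductions: given $\phi(x,y)$, there is a first-order transduction $\tau$ (essentially: take a bounded number of copies of $G$, guess a bounded number of unary predicates, and redefine the edge relation by $\phi$) such that adjacency in the transduced graph $\tau(G)$ encodes $\phi$-satisfaction in $G$. Concretely, for each $G \in \cF$ I would let $G_\phi$ be the graph on vertex set $V(G)$ with $uv \in E(G_\phi)$ iff $\phi(u,v)$ holds in $G$ (symmetrizing $\phi$ if necessary, or handling the two directions with $O(1)$ extra bits), and set $\cF' = \cl(\{G_\phi : G \in \cF\})$. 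A constant-size adjacency sketch for $\cF'$ immediately yields a constant-size sketch for deciding $\phi$ over $\cF$: the encoder runs the $\cF'$-encoder on $G_\phi$, and the decoder is unchanged.

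The two facts I would then invoke are: (1) bounded twin-width is preserved under first-order transductions — this is one of the main structural results of \cite{BKTW20}, so $\cF'$ has bounded twin-width; and (2) monadic stability (equivalently, here, stability in the sense of bounded chain number for hereditary families — this is where I would lean on the fact that for bounded-twin-width families graph-theoretic stability and monadic stability coincide, via \cite{NMP+21,GPT21}) is also preserved under first-order transductions. Hence $\cF'$ is a stable family of bounded twin-width, \cref{thm:twin width} gives $\RL(\cF') = O(1)$, and we are done. The subtlety in step (2) is that transductions do not literally preserve the \emph{chain number} bound as a fixed integer — what is preserved is the qualitative dichotomy (stable vs.\ not stable), so I must make sure the chain of equivalences ``$\cF$ stable $\Rightarrow$ $\cF$ monadically stable $\Rightarrow$ $\tau(\cF)$ monadically stable $\Rightarrow$ $\cF'$ graph-theoretically stable'' is valid; the first and last implications use that we are inside bounded-twin-width families (a non-monadically-stable bounded-twin-width family would contain arbitrarily large half-graphs, contradicting stability, and conversely).

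The main obstacle I anticipate is precisely the bookkeeping around transductions with \emph{colors/unary predicates}: a first-order transduction is allowed to first color the vertices with a bounded number of unary predicates before applying $\phi$, so the ``graph'' $\tau(G)$ is really a colored graph, and a priori the sketch must also encode which color class a vertex lands in. Since the number of colors and copies is a constant depending only on $\phi$, this only costs $O(1)$ additional bits per vertex in the sketch, so it does not affect the conclusion — but I would want to state this cleanly, perhaps by first proving a small lemma that ``if $\cF$ admits a constant-size PUG then so does the family obtained by a fixed first-order transduction of $\cF$, provided the transduced family is stable and of bounded twin-width,'' and then applying it. An alternative, cleaner route that avoids colors entirely is to absorb the existential quantifiers of $\phi$ (as in the $\delta_k$ example) directly: reduce $\phi$ to adjacency in a bounded-radius power-like construction. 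But since $\phi$ is an arbitrary first-order formula rather than just a distance formula, the transduction route is the one that generalizes, so I would commit to it and handle the color overhead as described.

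Finally, I would double-check the edge cases: $\phi$ need not be symmetric, so $\cF'$ is a priori a family of \emph{directed} or \emph{$2$-colored-edge} graphs; this is handled by sketching both $G_\phi$ and its ``transpose'' and outputting $O(1)$ bits for each, all of which stay within the scope of \cref{thm:twin width} after the usual reduction to undirected graphs (\cref{section:bipartite graphs}). And $\phi$ may also involve one free variable or zero free variables degenerate cases, but the statement fixes $\phi(x,y)$ with exactly the two free variables $x,y$, so no special handling is needed there.
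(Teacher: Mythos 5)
Your proposal is correct and follows essentially the same route as the paper: the paper's proof of \cref{thm:FO-RL} defines $G^{\phi}$ with edge set $\{(u,v) : G \models \phi[u/x,v/y]\}$, observes that $\cF^{\phi} = \cl(\{G^{\phi} : G \in \cF^\tau\})$ is an FO transduction of $\cF$, and invokes \cref{th:stable-tw-FO} (preservation of stability and bounded twin-width under FO transductions) before applying \cref{thm:twin width}. The asymmetry worry is moot since the paper's definition of a $\phi$-sketch already requires $\phi$ to be symmetric, and the color/unary-predicate bookkeeping is absorbed into the black-box transduction result.
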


This implies
\cref{thm:intro-planar} and its generalization stated in the introduction; it gives us constant-size
distance-$k$ sketches for stable classes of bounded twin-width, and, in particular, answers a
question of \cite{Har20}, who asked about distance-$k$ sketches for planar graphs (which are stable,
because they have bounded arboricity, and are of bounded twin-width \cite{BKTW20}).  Our results
have subsequently been extended and improved in \cite{EHK22}.

In \cref{sec:tw-prelim} we provide necessary definitions and notations.
To prove the theorem, we will first reduce the problem to bipartite graphs in
\cref{sec:to-bip-graphs}, and then show in \cref{sec:bip-graphs} how to construct a constant-size
equality-based labeling scheme for any stable class of bipartite graphs of bounded twin-width.
In \cref{section:distance sketching} we will generalize the above theorem to first-order labeling
schemes to show \cref{cor:twin-width-fo}.

\subsubsection{Preliminaries}\label{sec:tw-prelim}

Let $G=(V,E)$ be a graph. A pair of disjoint vertex sets $X,Y \subseteq V$ is \emph{pure} if
either $\forall x \in X, y \in Y$ it holds that $(x,y) \in E$,
or $\forall x \in X, y \in Y$ it holds that $(x,y) \not\in E$.

\begin{definition}[Twin-Width]
\label{def:twin width}
An \emph{uncontraction sequence of width $d$} of a graph $G=(V,E)$ is a sequence
$\cP_1, \dotsc, \cP_m$ of partitions of $V$ such that:
\begin{itemize}
\item $\cP_1 = \{ V \}$;
\item $\cP_m$ is a partition into singletons;
\item For $i = 1, \dotsc, m-1$, $\cP_{i+1}$ is obtained from $\cP_i$ by splitting exactly one of the
parts into two;
\item For every part $U \in \cP_i$ there are at most $d$ parts $W \in \cP_i$ with $W \neq U$ such
that $(U,W)$ is not pure.
\end{itemize}
The \emph{twin-width} $\tw(G)$ of $G$ is the minimum $d$ such that there is an uncontraction sequence
of width $d$ of $G$.
\end{definition}

\noindent
The following fact, that we need for some of our proofs, uses the notion of first-order (FO) transduction.
We omit the formal definition of FO transductions and refer the interested reader to \eg \cite{GPT21}.
Informally, given a first-order formula $\phi(x,y)$ and a graph $G$, a \emph{first-order (FO) $\phi$-transduction} of $G$ is
a transformation of $G$ to 
a graph that is obtained from $G$ by first taking a constant number of vertex-disjoint copies of $G$, 
then coloring the vertices of the new graph by a constant number of colors, 
then using $\phi$ as the new adjacency relation, 
and finally taking an induced subgraph.
An FO $\phi$-transduction of a graph class $\cF$ is a class of $\phi$-transductions of graphs in $\cF$.
A graph class $\cG$ is an FO-transduction of $\cF$ if there exists an FO formula $\phi$ so that
$\cG$ is an $\phi$-transduction of $\cF$.

\begin{theorem}[\cite{BKTW20,NMP+21}]\label{th:stable-tw-FO}
	Let $\cF$ be a stable class of bounded twin-width. Then any FO transduction of $\cF$ is
	also a stable class of bounded twin-width.
\end{theorem}

\subsubsection{From General Graphs to Bipartite Graphs}\label{sec:to-bip-graphs}

We begin by defining a natural mapping $\bip(G)$ that transforms a graph $G$ into a bipartite graph.

\newcommand{\cpy}{\triangleleft}
\begin{definition}
For any graph $G = (V,E)$ we define the colored bipartite graph $\bip(G) = (V,V^\cpy,E^\cpy)$, where
$V^\cpy$ is a copy of $V$, as follows. For each $v \in V$ let $\cpy v \in V^\cpy$ denote its copy in
$V^\cpy$.  Then for each $x,y \in V$ we have $E(x,y) = E(y,x) = E^\cpy(x, \cpy y) = E^\cpy(y, \cpy x)$. For
any set $\cF$ of graphs, we define
\[
  \bip(\cF) \define \{ \bip(G) : G \in \cF \} \,.
\]
\end{definition}

We require the following two simple properties of this transformation.
\begin{proposition}\label{prop:Bip-transduction}
	Let $\cF$ be a class of graphs. Then the class $\cl(\bip(\cF))$ is an FO transduction of $\cF$.
\end{proposition}
\begin{proof}[Proof sketch]
	Let $G=(V,E)$ be a graph in $\cF$. To obtain the graph $\bip(G)$, we take a disjoint union $G_1 \cup G_2$, where
	$G_1=(V_1,E_1)$ and $G_2 = (V_2, E_2)$ are two copies of $G$,
	and transform this graph using an FO formula $\phi(x,y)$ that does not hold for any pair of vertices 
	that belong to the same copy of $G$, and holds for any pair of vertices that are in different copies and whose
	preimages are adjacent in $G$:
	$$
		\phi(x,y) \;=\; \bigl[(x \in V_1 \wedge y \in V_2) \vee (y \in V_1 \wedge x \in V_2)\bigr] 
		\wedge \exists y' : M(y,y') \wedge (E_1(x,y') \vee E_2(x,y')),
	$$
	where $M$ is a relation that is true exactly for copies of the same vertex.
	For any induced subgraph of $\bip(G)$, we in addition take an induced subgraph.
\end{proof}

\begin{proposition}
\label{prop:bip-labeling}
Let $\cF$ be a hereditary graph class. 
\begin{enumerate}
	\item[(1)] If $\cl(\bip(\cF))$ has an adjacency labeling scheme of
	size $s(n)$, then $\cF$ has an adjacency labeling scheme of size $O(s(2n))$. 
	\item[(2)] If $\cl(\bip(\cF))$ has a constant-size equality-based labeling schemes, then so does $\cF$.
\end{enumerate}
\end{proposition}
\begin{proof}[Proof sketch]
For $G \in \cF$, we may assign labels to $v \in V(G)$ by concatenating the labels for $v, v^\cpy$
in $\bip(G)$, which are of size at most $s(2n)$. Given the labels for $(u, u^\cpy), (v, v^\cpy)$,
the decoder can determine adjacency from the labels of $u, v^\cpy$. 
Thus, the total label size is at most $2s(2n) = O(s(2n))$.
A similar argument works for the second part of the statement.
\end{proof}

We now see that to prove \cref{thm:twin width}, it is enough to establish its special case for the classes of
bipartite graphs.  Assume \cref{thm:twin width} holds for the bipartite graph classes and let $\cF$ be
an arbitrary hereditary stable class of graphs of bounded twin-width.  Then, by
\cref{prop:Bip-transduction} and \cref{th:stable-tw-FO}, the class $\cl(\bip(\cF))$ is also stable
and has bounded twin-width.  Therefore, by \cref{prop:bip-labeling}, any constant-size equality-based
labeling scheme for $\cl(\bip(\cF))$ can be turned into a constant-size equality-based labeling
scheme for $\cF$.

\subsubsection{Bipartite Graphs of Bounded Twin-width}\label{sec:bip-graphs}

An \emph{ordered graph} is a graph equipped with a total order on its vertices. We will denote
ordered graphs as $(V,E,\leq)$ and bipartite ordered graphs  as $ (X,Y,E,\leq)$, where $\leq$ 
is a total order on $V$ and on $X \cup Y$ respectively.
A \emph{star forest} is a graph whose every component is a star.
Let $(X,\leq)$ be a totally-ordered set. A subset $S \subseteq X$ is
\emph{convex} if for every $x,y,z\in X$ with $x \leq y \leq z$, such that $x,z \in S$, it holds also that $y \in S$.

\begin{definition}[Division]
A \emph{division} of an ordered bipartite graph $G = (X,Y,E,\leq)$ is a partition $\cD$ of $X \cup
Y$ such that each part $P \in \cD$ is convex and either $P \subseteq X$ or $P \subseteq Y$. We will
write $\cD^X = \{ P \in \cD : P \subseteq X \}, \cD^Y = \{ P \in \cD : P \subseteq Y \}$.
\end{definition}

\begin{definition}[Quotient Graph]
For any ordered bipartite graph $G = (X,Y,E, \leq)$ and any division $\cD$, the \emph{quotient graph}
$G/\cD$ is the bipartite graph $(\cD^X, \cD^Y, \cE)$ where $A \in \cD^X, B \in \cD^Y$ are
adjacent if and only if there exist $x \in A, y \in B$ such that $(x,y)$ are adjacent in $G$.
\end{definition}

\newcommand{\ctw}{\mathsf{ctww}}
\begin{definition}[Convex Twin-Width]
A \emph{convex uncontraction sequence of width $d$} of an ordered bipartite graph $G = (X,Y,E,\leq)$
is a sequence $\cP_1, \dotsc, \cP_m$ of divisions of $X \cup Y$ such that:
\begin{itemize}
\item $\cP_1 = \{ X, Y \}$;
\item $\cP_m$ is a division into singletons;
\item For $i = 1, \dotsc, m-1$, the division $\cP_{i+1}$ is obtained from $\cP_i$ by splitting
exactly one of the parts into two;
\item For every part $U \in \cP_i^X$, there are at most $d$ parts $W \in \cP_i^Y$ such that $(U,W)$
is impure. For every part $W \in \cP_i^Y$, there are at most $d$ parts $U \in \cP_i^X$ such that
$(U,W)$ is impure.
\end{itemize}
The \emph{convex twin-width} $\ctw(G)$ is the minimum $d$ such that there is a convex uncontraction
sequence of width $d$ of $G$.
\end{definition}

\begin{lemma}[\cite{GPT21}, Lemmas 3.14 \& 3.15]
\label{lemma:convex twin width}
For any ordered bipartite graph $G, \tw(G) \leq \ctw(G)$. For any bipartite graph $G = (X,Y,E)$,
there is a total order $\leq$ on $X \cup Y$ such that $\ctw((X,Y,E,{\leq})) \leq \tw(G)+1$.
\end{lemma}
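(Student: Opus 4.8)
The first inequality is the easy direction. Given a convex uncontraction sequence $\cP_1,\dots,\cP_m$ of width $d$ for $G=(X,Y,E,\le)$, I would prepend the trivial partition $\{V\}$ of $V:=X\cup Y$, split into $\{X,Y\}=\cP_1$, obtaining a sequence $\{V\},\cP_1,\dots,\cP_m$ of partitions of $V$ in which consecutive partitions differ by a single split and which ends in singletons. The width transfers: since $X$ and $Y$ are independent sets, any two parts lying on the same side are automatically pure, so for a part $U\in\cP_i$ the only parts that can be impure with $U$ lie on the opposite side, and there are at most $d$ of those by the convex width condition; at the partition $\{X,Y\}$ the only relevant pair is $(X,Y)$, whose contribution of at most $1$ is $\le d$ (if $d=0$ then the convex width condition at $\cP_1$ forces $(X,Y)$ pure, so the contribution is $0$; if $d\ge 1$ this is clear). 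Hence this is an uncontraction sequence of width $\le d$ for the underlying unordered graph, so $\tw(G)\le\ctw(G)$.

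For the second statement I would start from an uncontraction sequence $\cP_1=\{V\},\cP_2,\dots,\cP_m$ of width $d=\tw(G)$ for $G$ regarded as an unordered graph and extract from it both an order and a convex uncontraction sequence. The splits organise the parts into a binary partition tree $T$ rooted at $V$ whose leaves are the vertices of $G$. Define the order $\le$ by placing all of $X$ before all of $Y$, ordering $X$ according to the left-to-right leaf order of the ``$X$-shadow'' of $T$ (the tree obtained by replacing each node $P$ with $P\cap X$ and suppressing nodes that coincide with their parent), and ordering $Y$ analogously; then every set $P\cap X$ and $P\cap Y$ with $P$ a node of $T$ is convex and $\{X,Y\}$ is a division. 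The key estimate is that the bipartite refinement $\cP_i':=\{\,P\cap X : P\in\cP_i\,\}\cup\{\,P\cap Y : P\in\cP_i\,\}$ (with empty parts discarded) is a division of width at most $d+1$: an $X$-part $A=P\cap X$ can be impure only with some $Y$-part $Q\cap Y$, and $(P\cap X,\,Q\cap Y)$ impure forces $(P,Q)$ impure unless $Q=P$; since $P$ has at most $d$ impure partners in $\cP_i$, the part $A$ has at most $d+1$ impure partners in $\cP_i'$ (the extra one being $P\cap Y$ itself), and symmetrically for $Y$-parts.

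Since $\cP_1'=\{X,Y\}$ and $\cP_m'$ is the partition into singletons, what remains is to interpolate single-split divisions between each consecutive $\cP_i'$ and $\cP_{i+1}'$ — which differ by at most two splits, one on each side of the part that $\cP_i\to\cP_{i+1}$ splits — without ever exceeding width $d+1$. This bookkeeping is exactly \cite{GPT21} (Lemmas 3.14 and 3.15): performing the two paired splits one at a time can momentarily give a part on the opposite side an extra impure partner, and one absorbs this using the guarantee that the original sequence has width $\le d$ immediately after the corresponding split of $P$. I expect this interleaving step — choosing the order in which to perform the paired splits so that every intermediate division still has width at most $d+1$ — to be the main obstacle; the order definition and the refinement estimate above are routine. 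Since the statement coincides with \cite{GPT21} (Lemmas 3.14--3.15), one may alternatively simply cite it.
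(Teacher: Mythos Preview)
The paper does not prove this lemma at all; it is stated as a direct citation of \cite{GPT21}, Lemmas 3.14 and 3.15, with no accompanying argument. Your final sentence --- that one may simply cite \cite{GPT21} --- is therefore exactly what the paper does, and that suffices here. Your sketch of the two directions is a reasonable outline of the cited proofs (and you correctly flag the interpolation bookkeeping as the nontrivial step in the second direction), but none of it is needed for the present paper.
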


\newcommand{\qch}{\mathsf{qch}}
\begin{definition}[Quasi-Chain Number]
Let $G = (X,Y,E)$ be a bipartite graph. The \emph{quasi-chain number} $\qch(G)$ of $G$ is the largest $k$ for which
there exist two sequences $x_1, \dotsc, x_k \in X$ and $y_1, \dotsc, y_k \in Y$ of not necessarily distinct vertices such that for each $i\in [k]$, one of the following holds:
\begin{enumerate}
\item $x_i$ is adjacent to all of $y_1, \dotsc, y_{i-1}$ and $y_i$ is non-adjacent to all of $x_1,
\dotsc, x_{i-1}$; or,
\item $x_i$ is non-adjacent to all of $y_1, \dotsc, y_{i-1}$ and $y_i$ is adjacent to all of $x_1,
\dotsc, x_{i-1}$.
\end{enumerate}
\end{definition}

\begin{lemma}[\cite{GPT21}, Lemma 3.3]
\label{lemma:quasi chain number}
For every bipartite graph $G$,
\[
  \ch(G) \leq \qch(G) \leq 4 \cdot \ch(G) + 4 \,.
\]
\end{lemma}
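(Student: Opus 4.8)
The plan is to prove the two inequalities separately. The inequality $\ch(G)\le\qch(G)$ is the easy direction: start from an optimal chain $\{a_1,\dots,a_\ell\},\{b_1,\dots,b_\ell\}$ of $G$ with $\ell=\ch(G)$, so that $(a_i,b_j)\in E(G)$ iff $i\le j$. Since $G$ is bipartite, $a_ib_i\in E(G)$ for all $i$, and $a_1b_j,\,a_ib_\ell\in E(G)$ for all $i,j$, all the $a_i$ lie in one colour class and all the $b_j$ in the other; reversing the order and swapping the two sides if necessary, we may assume $a_i\in X$ and $b_i\in Y$. Then setting $x_i=a_i$, $y_i=b_i$ for $i\in[\ell]$, each index $i$ satisfies case (2) of the quasi-chain definition — $x_i$ is non-adjacent to $y_1,\dots,y_{i-1}$ because $i>b$ for $b<i$, and $y_i$ is adjacent to $x_1,\dots,x_{i-1}$ because $a\le i$ for $a<i$ — so $\qch(G)\ge\ell$.

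For the harder inequality, let $x_1,\dots,x_k,y_1,\dots,y_k$ be an optimal quasi-chain with $k=\qch(G)$. First I would apply pigeonhole to the index set $[k]$: at least $\ceil{k/2}$ of the indices fall into the same case, and I would assume this is case (2) (the case-(1) situation is symmetric, obtained by flipping the roles of adjacency and non-adjacency throughout and reading the resulting chain in reverse order). Write these indices in increasing order as $I=\{i_1<i_2<\dots<i_m\}$ with $m\ge\ceil{k/2}$. Reading off case (2) at each $i_t$ for every $s<t$ gives: $x_{i_t}$ is non-adjacent to $y_{i_s}$, while $y_{i_t}$ is adjacent to $x_{i_s}$. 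In other words, on the index set $I$ the bipartite relation is ``$x_{i_a}$ adjacent to $y_{i_b}$ iff $a<b$'', with the diagonal $a=b$ undetermined.

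The crux — and the reason the bound loses a factor $4$ rather than $2$ — is controlling the repetitions among the (not necessarily distinct) quasi-chain vertices. I would show that $x_{i_s}=x_{i_t}$ with $s<t$ forces $t=s+1$: if some $r$ had $s<r<t$, then the displayed relation would give both $x_{i_s}$ adjacent to $y_{i_r}$ (since $s<r$) and $x_{i_t}=x_{i_s}$ non-adjacent to $y_{i_r}$ (since $t>r$), a contradiction; the symmetric argument shows $y_{i_s}=y_{i_t}$ forces $t=s+1$. Hence any two elements of $I$ that are not consecutive in $I$ index distinct $x$-vertices and distinct $y$-vertices. Now set $q=\floor{m/2}$ and define $a_s=x_{i_{2s-1}}$ and $b_s=y_{i_{2s}}$ for $s\in[q]$. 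The $a_s$ sit at the positions $i_1<i_3<\dots<i_{2q-1}$, which are pairwise non-consecutive in $I$, so the $a_s$ are distinct; likewise the $b_s$ are distinct; and $\{a_s\}\subseteq X$ is disjoint from $\{b_s\}\subseteq Y$. By the displayed relation, $a_s$ is adjacent to $b_t$ iff $2s-1<2t$ iff $s\le t$, and the undetermined diagonal is never queried because $2s-1\ne 2t$. Thus $\{a_1,\dots,a_q\},\{b_1,\dots,b_q\}$ witnesses $\ch(G)\ge q=\floor{m/2}\ge\floor{\ceil{k/2}/2}\ge\floor{k/4}$, which rearranges to $k=\qch(G)\le 4\ch(G)+4$.

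I expect the repeated-vertex analysis in the third paragraph to be the only genuinely delicate point; the pigeonhole step and the ``take every second index of $I$, shifted by one to dodge the diagonal'' step are routine, and together they account exactly for the factor $4$ in the statement. The rest is bookkeeping: verifying the two vacuous/strict cases of the quasi-chain condition, and the elementary inequality $\floor{\ceil{k/2}/2}\ge\floor{k/4}>k/4-1$.
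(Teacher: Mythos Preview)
The paper does not prove this lemma at all: it is quoted verbatim as Lemma~3.3 of \cite{GPT21} and used as a black box in the proof of \cref{lemma:tw bipartite}. So there is no in-paper argument to compare your proposal against.

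That said, your argument is correct and is essentially the natural proof. A couple of minor remarks. For the easy direction, the sentence about ``reversing the order and swapping the two sides if necessary'' is unnecessary: once you observe that $a_1$ is adjacent to every $b_j$ and $b_\ell$ is adjacent to every $a_i$, bipartiteness immediately forces all $a_i$ into one colour class and all $b_j$ into the other, and the only freedom left is which class you call $X$. For the harder direction, your repetition analysis is the key step and it is clean; note that it actually yields $\qch(G)\le 4\ch(G)+3$, since $\ch(G)\ge\lfloor k/4\rfloor$ rules out $k=4\ch(G)+4$ already. The stated bound $4\ch(G)+4$ is just a slightly looser version of what your argument proves.
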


\begin{definition}[Flip]
A \emph{flip} of a bipartite graph $G = (X,Y,E)$ is any graph $G' = (X,Y,E')$ obtained by choosing
any $A \subseteq X, B \subseteq Y$ and negating the edge relation for every pair $(a,b) \in A \times
B$.

For $q \in \bN$ and a graph $G$, a graph $G'$ is a \emph{$q$-flip} of $G$ if there is a sequence $G
= G_0, G_1, \dotsc, G_r = G'$ such that $r \leq q$ and for each $i \in [r]$, $G_i$ is a flip of
$G_{i-1}$.
\end{definition}

\begin{lemma}[\cite{GPT21}, Main Lemma]
\label{lemma:gpt21 main}
For all $k,d \in \bN$, $k,d \geq 2$, there are $r,q \in \bN$ satisfying the following. Let $G =
(X,Y,E, \leq)$ be an ordered bipartite graph of convex twin-width at most $d$ and quasi-chain number
at most $k$. Then there is a division $\cD$ of $G$, 
sets $\cU^X_1, \dotsc, \cU^X_r \subseteq \cD^X$,
$\cU^Y_1, \dotsc, \cU^Y_r \subseteq \cD^Y$,
and a $q$-flip $G'$ of $G$ such that the following holds for $H \define G' / \cD$:
\begin{enumerate}
\item For every edge $(A,B)$ of $H$ there exists $i \in [r]$ such that $A\in\cU^X_i$, $B \in \cU^Y_i$;
\item For each $i\in[r]$, the graph $H[\cU^X_i,\cU^Y_i]$ is a star forest. 
	Moreover, for each star in $H[\cU^X_i,\cU^Y_i]$,
	with center $C$ and leaves $K_1, \dotsc, K_m$, 
	the quasi-chain number of $G[C \cup K_1 \cup \dotsm \cup K_m]$ is at most $k-1$.
\end{enumerate}
\end{lemma}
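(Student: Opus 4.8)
This is the Main Lemma of \cite{GPT21}, so strictly speaking the proof in this paper is a pointer to that reference; here I outline the strategy one would follow to prove it. The plan has three stages: (i) use the bounded convex twin-width to produce a division $\cD$ for which the quotient $G/\cD$ is, after a bounded number of flips, a sparse (bounded-degeneracy) bipartite graph; (ii) decompose the edge set of that sparse quotient into boundedly many star forests, which become the bundles $\cU^X_1,\dotsc,\cU^X_r$ and $\cU^Y_1,\dotsc,\cU^Y_r$; (iii) refine the choice of $\cD$ so that the union of the $\cD$-parts forming any single star of any bundle carries a quasi-chain number strictly smaller than that of $G$.

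For stage (i): by \cref{lemma:convex twin width} we may work with $G$ and its convex twin-width $d$ directly. Running a convex uncontraction sequence and stopping at a well-chosen division $\cD$ (with roughly $|X|+|Y|$ parts), bounded width forces the \emph{impure} cells of the quotient $G/\cD$ to form a bipartite graph of maximum degree at most $d$, hence of degeneracy at most $d$. The \emph{clean} cells (completely adjacent or completely non-adjacent pairs of parts) may still be dense, but here the quasi-chain bound enters: $\qch(G/\cD)\le\qch(G)\le k$ since a quasi-chain in the quotient lifts to one in $G$ by choosing representatives (and by \cref{lemma:quasi chain number} this is just bounded chain number, i.e.\ stability). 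The combinatorial heart of \cite{GPT21} is that a stable bipartite graph of bounded twin-width becomes, after $q=q(d,k)$ flips, a graph of degeneracy $O_{d,k}(1)$; this is proved via the Marcus--Tardos theorem together with a Ramsey-type analysis of stable $0/1$ matrices, and I expect it to be the main obstacle. The outcome is a $q$-flip $G'$ of $G$ for which $H:=G'/\cD$ has degeneracy bounded by a function of $d$ and $k$.

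For stage (ii): a $D$-degenerate bipartite graph $H$ admits an acyclic orientation with out-degree at most $D$; the out-edges at each vertex form a star, and the auxiliary "conflict graph" on these stars has bounded degree, so properly colouring it with $r=O(D)$ colours partitions $E(H)$ into $r$ star forests. Colour $i$ then yields the bundle $(\cU^X_i,\cU^Y_i)$, where $\cU^X_i$ (resp.\ $\cU^Y_i$) is the set of parts of $\cD^X$ (resp.\ $\cD^Y$) incident to a colour-$i$ edge. By construction every edge of $H$ lies in some bundle, so property (1) holds, and each $H[\cU^X_i,\cU^Y_i]$ is a star forest, giving the first half of property (2).

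For stage (iii): one must make the quasi-chain number genuinely drop on the vertex set $C\cup K_1\cup\dotsm\cup K_m$ underlying a star with centre $C$ and leaves $K_1,\dotsc,K_m$. The idea is to fix a maximum-length quasi-chain witness in $G$ and to choose the division $\cD$ (and the degeneracy orientation, taken increasing along $\le$) so that the $\le$-span of any single star's parts cannot simultaneously accommodate both endpoints of this witness and all the alternations needed for length $k$; consequently any quasi-chain living inside one star can be extended by at least one more level using vertices of $G$ outside the star, forcing its length to be at most $k-1$. Threading the order and the convexity constraint through stages (i)--(ii) so that this property holds at the \emph{same time} as the sparsity bound is the second delicate point. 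Once all three stages are in place the lemma follows with $r=r(d,k)$ and $q=q(d,k)$.
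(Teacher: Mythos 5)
First, note that the paper does not prove this statement at all: it is imported verbatim as the Main Lemma of \cite{GPT21}, so there is no in-paper argument to compare against. Judged as a reconstruction of the proof in \cite{GPT21}, your outline has a genuine gap at its core. The entire content of the lemma is (a) the second half of property (2) --- that the quasi-chain number strictly \emph{drops} on the vertex set of each star --- together with (b) the fact that $r$ and $q$ depend only on $d$ and $k$. Your stage (i) essentially assumes the hard part as a black box: ``a stable bipartite graph of bounded twin-width becomes, after $q(d,k)$ flips, a graph of bounded degeneracy'' is not an available ingredient here --- it is much closer to a \emph{consequence} of this lemma than to a tool for proving it, and invoking Marcus--Tardos plus ``a Ramsey-type analysis'' does not supply the missing argument. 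Your stage (iii), which is where the quasi-chain drop would have to be established, is not an argument either: ``choose $\cD$ so that the $\leq$-span of a star cannot accommodate the witness'' does not explain why an \emph{arbitrary} quasi-chain inside $G[C\cup K_1\cup\dotsm\cup K_m]$ (not just a fixed maximum one in $G$) extends to a strictly longer one in $G$. Moreover, stages (ii) and (iii) are in tension: a generic star-forest decomposition of a degenerate quotient (via an out-degree orientation and a conflict colouring) carries no information about quasi-chains, so the drop cannot be retrofitted onto it afterwards.

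The actual proof in \cite{GPT21} obtains everything simultaneously by analysing the convex uncontraction sequence itself: the division $\cD$, the flips, the bundles $\cU^X_i,\cU^Y_i$ and the stars are all read off from the moments at which pairs of parts become pure, using the width bound $d$ to control the number of impure neighbours at each step; the centre $C$ of a star comes with an external part that is homogeneous to the leaves but distinguishes them from the other side, and this external witness is what extends any quasi-chain inside the star by one level, forcing $\qch(G[C\cup K_1\cup\dotsm\cup K_m])\le k-1$; the bound on $r$ then follows because too many ``branching events'' for one part would themselves build a quasi-chain of length exceeding $k$. If you want to include a proof rather than a citation, this sequence-analysis is the step you need to carry out; as written, your sketch defers precisely the two claims that constitute the lemma.
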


\begin{lemma}
\label{lemma:tw bipartite}
Let $\cF$ be any class of stable bipartite graphs with bounded twin-width. Then $\cF$ admits a
constant-size equality-based labeling scheme.
\end{lemma}
\begin{proof}
Since $\cF$ is stable, we have $\ch(\cF) = k$ and $\tw(\cF)=d$ for some constants $k,d$.\\

\noindent\textbf{Decomposition tree.}
For any $G \in \cF$, we construct a decomposition tree as follows. Let $k^* = \qch(G)$ which
satisfies $k^* \leq 4 \cdot \qch(G) + 4$ by \cref{lemma:quasi chain number}. The root of the tree is
associated with~$G$. Every node of the tree is associated with an induced subgraph $G'$ of $G$,
defined as follows:

For a node $G' = (X',Y',E')$ with $\qch(G') \leq 1$, we have $\ch(G') \leq \qch(G') \leq 1$ by
\cref{lemma:quasi chain number}, so $G'$ is $P_4$.

For a node $G' = (X',Y',E')$ with $\qch(G') = k' > 1$, let $G'_{\leq} = (X',Y',E',\leq)$ be an
ordered bipartite graph with $\ctw(G'_\leq) \leq d+1$, which exists due to \cref{lemma:convex twin
width}. Let $\cD$ be a division of $G'_\leq$, let $\cU^X_1, \dotsc, \cU^X_r \subseteq \cD^X$, $\cU^Y_1, \dotsc, \cU^Y_r \subseteq \cD^Y$, and let
$F$ be a $q$-flip of $G'_\leq$ such that the following holds for $H = F/\cD$:
\begin{enumerate}
\item For every edge $(A,B)$ of $H$ there exists $i \in [r]$ such that $A\in \cU^X_i, B\in \cU^Y_i$; and
\item For every $i\in[r]$, $H[\cU^X_i,\cU^Y_i]$ is a star forest such that for every star in this forest  with
center $C$ and leaves $K_1, \dotsc, K_m$ we have $\qch(G'[C \cup K_1 \cup \dotsm \cup K_m]) \leq
k'-1$.
\end{enumerate}
The above holds for some $r,q$ determined by \cref{lemma:gpt21 main}. Below we will let $l^*$ be
the maximum value of $r$ over all nodes $G'$.  Note that we may assume that each edge $(A,B)$ of
$H$ appears in at most one set $\cU_i\define \cU^X_i\cup\cU^Y_i$. Otherwise we can remove the leaf incident with $(A,B)$ from all but one set $\cU_i$ that induce a star forest containing $(A,B)$; each star forest remains a star forest.

The child nodes of $G'$ are the bipartite graphs $G'[C \cup K_1 \cup \dotsm \cup K_m]$ for each star
$(C, K_1, \dotsc, K_m)$ in the star forests induced by the sets $\cU_i$. 
Observe that each child $G''$ has $\qch(G'') < \qch(G')$, so the depth of the tree is at most
$\qch(G) = k^* \leq 4k + 4$.\\

\noindent\textbf{Labeling scheme.}
We will construct labels as follows. For a vertex $x$ and a graph $G$ containing~$x$, we will write
$L(x,G)$ for the label of $x$ obtained by the following recursive labeling scheme. For each node
$G'$ of the decomposition tree, we assign labels to the vertices of $G'$ inductively as follows.
\begin{enumerate}[topsep=4pt,parsep=0pt,partopsep=0pt]
\item If $G'$ is a leaf node, so that $\qch(G') \leq 1$, each $x$ is assigned a label $L(x,G') =
(0,p(x) \eqLabelSep q(x))$, where $(p(x) \eqLabelSep q(x))$ is the constant-size equality-based label in the graph
$G'$, which is a $P_4$-free bipartite graph. Recall that $P_4$-free bipartite graphs are bipartite
equivalence graphs, so there is a simple equality-based labeling scheme.
\item If $G'$ is an inner node, perform the following. Let $\cU^X_1,\cU^Y_1, \dotsc, \cU^X_r,\cU^Y_r$ be the sets that
induce star forests in $H$. For each $i$, fix an arbitrary numbering $s_i$ to the stars 
in $H[\cU^X_i,\cU^Y_i]$.
\begin{enumerate}[itemsep=0pt,topsep=4pt,partopsep=0pt]
\item Let $G'_\leq = F_0, F_1, \dotsc, F_{q'} = F$ with $q' \leq q$ be the sequence of flips that
take $G'_\leq$ to $F$. To each vertex $x \in V(G')$ assign a binary vector $f(x) \in \zo^q$ so that
$f(x)_i = 1$ if and only if $x$ is in the set that is flipped to get $F_i$ from $F_{i-1}$.
\item For each $i \in [r]$, any vertex $x \in V(G')$ belongs to at most one star $S_i = (C, K_1,
\dotsc, K_m)$ in $H[\cU^X_i,\cU^Y_i]$. Append the tuple $(1, f(x) \mathbin| s_1(S_1), \dotsc, s_r(S_r))$ where
$s_i(S_i)$ is the number of the star $S_i = (C, K_1, \dotsc, K_m)$ in $H[\cU^X_i,\cU^Y_i]$ containing $x$.
Then for each $i \in [r]$, append $L(x, G'_i)$ for $G'_i = G'[C \cup K_1 \cup \dotsc \cup K_m]$
induced by the star $S_i = (C,K_1, \dotsc, K_m)$.
\label{step:star}
\end{enumerate}
\end{enumerate}

\noindent\textbf{Decoder.}
The decoder for this scheme is defined recursively as follows: Given labels $L(x,G)$, $L(y,G)$ for
vertices $x,y$:
\begin{enumerate}[itemsep=0pt,topsep=4pt,partopsep=0pt]
\item If $L(x,G') = (0, p(x) \eqLabelSep q(x))$, $L(y,G') = (0, p(y) \eqLabelSep q(y))$ then output the adjacency of
$x,y$ in $G'$, which is a bipartite equivalence graph, as determined by the labels $(p(x) \eqLabelSep q(x))$, $(p(y) \eqLabelSep q(y))$.
\item If $L(x,G') = (1, f(x) \eqLabelSep s_1(S_1), \dotsc, s_r(S_r))$, $L(y,G') = (1, f(y) \eqLabelSep s_1(S'_1),
\dotsc, s_r(S'_r))$ then let $i$ be the unique value such that $s_i(S_i)=s_i(S'_i)$, if such a
value exists. In this case, output the adjacency of $x$, $y$ as determined from the labels $L(x,G'_i)$,
$L(y,G'_i)$ where $G'_i$ is the child corresponding to star $S_i = S'_i$.
Otherwise, output the parity of
\[
\bigl|\bigl\{i \in [q] : f(x)_i = f(y)_i = 1 \bigr\}\bigr| \,.
\]
\label{step:flip}
\end{enumerate}

\noindent\textbf{Correctness.}
Let $x \in X, y \in Y$ be vertices of $G$.
\begin{claim}
For any node $G' = (X',Y',E') \sqsubset G$
in the decomposition tree, there is at most one child $G'' \sqsubset G'$ such that $x,y \in V(G'')$.
\end{claim}
\begin{proof}[Proof of claim]
Let $F$ be the $q$-flip of $G'_\leq$, let $\cD$ the division of $G'_\leq$, and write $H = F/\cD$.
Let $A \subseteq X', B \subseteq Y'$ be the unique sets with $A,B \in \cD$ such that $x \in A, y \in
B$. Suppose that for some $i \in [r]$ there is a star $\{A, K_1, \dotsc, K_m\} \subseteq \cU_i$
such that $B \in \{K_1, \dotsc, K_m\}$. Then $(A,B)$ is an edge of $H$ so, by assumption, $(A,B)$
appears in exactly one star forest $H[\cU^X_i,\cU^Y_i]$. $(A,B)$ also appears in exactly one star in
the star forest $H[\cU^X_i,\cU^Y_i]$. So there is exactly one child $G'' = G'[C \cup K_1 \cup \dotsm
\cup K_m]$ that contains both $x$ and $y$.
\end{proof}
It follows from the above claim that there is a unique maximal path $G = G_0, \dotsc, G_t$ in the
decomposition tree, starting from the root, satisfying $x,y \in V(G_i)$ for each $i = 0, \dotsc, t$.

First we prove that the labeling scheme outputs the correct value on $G_t = (X',Y',E')$. If $G_t$ is
a leaf node then this follows from the labeling scheme for $P_4$-free bipartite graphs.  Suppose
$G_t$ is an inner node. Let $A \subseteq X', B \subseteq Y'$ be the unique sets $A, B \in V(H)$ such
that $x \in A, y \in B$. Since $G_0, \dotsc, G_t$ is a maximal path, it must be that there is no $i
\in [r]$ and star $S$ in $H[\cU^X_i,\cU^Y_i]$ such that both $A$ and $B$ are nodes of $S$. Then for every $i \in [r]$, let
$S_i,S'_i$ be the (unique) pair of stars in $H[\cU^X_i,\cU^Y_i]$ such that $A \in S_i, B \in S'_i$; since
$A,B$ are not nodes of the same star, we have $S_i \neq S'_i$ so $s_i(S_i) \neq s_i(S'_i)$ in the
labels. So the decoder outputs the parity of
\[
  \bigl|\bigl\{ i \in [q] : f(x)_i = f(y)_i = 1\bigr\}\bigr|.
\]
We show that $x,y$ are not adjacent in $F$. In this case, $x,y$ are adjacent in $G_t$ if and
only if the pair $(x,y)$ is flipped an odd number of times in the sequence $(G_t)_\leq = F_0, F_1,
\dotsc, F_{q'} = F$; this is equivalent to there being an odd number of indices $i \in [q]$ such
that $f(x)_i = f(y)_i$, so the decoder will be correct.
For contradiction, assume that $x,y$ are adjacent in $F$. Then by definition, $(A,B)$ is an edge of
$H$, so $(A,B)$ must belong to some star $S$ in some forest $H[\cU^X_i,\cU^Y_i]$. But then $s_i(S_i) = s_i(S'_i)
= s_i(S)$, a contradiction. So $x,y$ are not adjacent in $F$.

Now consider node $G_i$ for $i = 0, \dotsc, t-1$. By definition, there is a child $G_{i+1}$ that
contains both $x$ and $y$, so there exists $j \in [r]$ and a star $S = (C, K_1, \dotsc, K_m)$
in $H[\cU^X_j,\cU^Y_j]$ such that $x,y \in C \cup K_1 \cup \dotsm \cup K_m$. Then $s_j(S_j)=s_j(S'_j)=s_j(S)$ so
the decoder will recurse on the child~$G_{i+1}$. 
\end{proof}

\subsubsection{First-Order Labeling Schemes \& Distance Sketching}
\label{section:distance sketching}

In this section we construct sketches for stable graph classes of bounded twin-width that replace
the adjacency relation with a binary relation $\phi(x,y)$ on the vertices, that is defined by a
formula $\phi$ in first-order logic. We will call such a sketch a \emph{$\phi$-sketch}.

\medskip
\noindent
\textbf{First-order logic.}
A \textit{relational vocabulary} $\tau$ is a set of relation symbols. Each relation symbol $R$ 
has an \textit{arity}, denoted $\text{arity}(R) \geq 1$. A \textit{structure} $\mathcal{A}$ 
of vocabulary $\tau$, or $\tau$-structure,  consists of a set $A$, called the \textit{domain}, and an interpretation
$R^{\mathcal{A}} \subseteq A^{\text{arity(R)}}$ of each relation symbol $R \in \tau$.
To briefly recall the syntax and semantics of first-order logic, we fix a countably
infinite set of \textit{variables}, for which we use small letters. 
\textit{Atomic formulas of vocabulary $\tau$} are of the form:
\begin{enumerate} 
	\item $x = y$ or
	\item $R(x_1, \ldots, x_r)$, meaning that $(x_1, \ldots, x_r) \in R$,
\end{enumerate}
where $R \in \tau$ is $r$-ary and $x_1, \ldots , x_r, x, y$ are variables.
\textit{First-order (FO) formulas} of vocabulary $\tau$ are inductively defined as either the atomic formulas,
a Boolean combination $\neg\phi$, $\phi\wedge\psi$, $\phi\vee\psi$, or a quantification $\exists
x.\phi$ or $\forall x.\phi$, where $\phi$ and $\psi$ are FO formulas.
A \textit{free variable} of a formula $\phi$ is a variable $x$ with an occurrence in $\phi$ 
that is not in the scope of a quantifier binding $x$. 
We write $\phi(x_1, x_2, \ldots, x_k)$ to show that the set of free variable of $\phi$
is $\{ x_1, x_2, \ldots, x_k \}$.
By $\phi[t/x]$, we denote the formula that results from substituting $t$ for free variable $x$ in $\phi$.

\medskip
\noindent
\textbf{First-order labeling schemes.}
We fix a relational vocabulary $\tau$ that consists of a unique (symmetric) binary relational symbol
$E$, and $t$ unary relational symbols $R_1, \dotsc, R_t$. A $\tau$-structure $\cG$ with
domain $V$ is a tuple $(V, E^\cG, R_1^\cG, \dotsc, R_t^\cG)$ where each $R_i^\cG \subseteq V$ is an
interpretation of the symbol $R_i$, and $E^\cG \subseteq V \times V$ is an interpretation of the symbol
$E$. 
For a graph class~$\cF$, we will write $\cF^\tau$ for the set of $\tau$-structures $\cG = (V, E^\cG,
R_1^\cG, \dotsc, R_t^\cG)$ where $(V,E^\cG)$ is a graph in $\cF$. Let $\phi(x,y)$ be an FO formula of
vocabulary $\tau$. For a $\tau$-structure $\cG = (V,E^\cG,R_1^\cG,\dotsc,R_t^\cG)$ and $u,v \in V$, we write $\cG \models \phi[u/x,v/y]$, if $\phi[u/x,v/y]$ is a true statement in $\cG$.
When the $\tau$-structure is clear from context, we drop the superscript $\cG$ (and identify the relation symbol and its interpretation in~$\cG$).

\begin{definition}[First-order sketches]
For a symmetric first-order formula $\phi(x,y)$ with vocabulary $\tau$, and a graph class $\cF$, a
\emph{$\phi$-sketch} with cost $c(n)$ and error $\delta$ is a pair of algorithms: a randomized
\emph{encoder} and a deterministic \emph{decoder}. The encoder takes as input any $\tau$-structure
$G = (V,E,R_1,\dotsc,R_t) \in \cF^\tau$ with $|V|=n$ vertices and outputs a (random) function $\sk :
V \to \zo^{c(n)}$. The encoder and (deterministic) decoder $D : \zo^* \times \zo^* \to \zo$ must
satisfy the condition that for all $G \in \cF^\tau$,
\[
\forall u,v \in V(G) : \qquad \Pru{\sk}{ \vphantom{\Big|} D\bigl(\sk(u),\sk(v)\bigr) = \ind{G \models \phi[u/x,v/y]} } \;\geq\;
1-\delta \,.
\]
As usual, if left unspecified, we assume $\delta=1/3$.  We will write $\phi$-$\RL(\cF)$ for the
smallest function $c(n)$ such that there is a randomized  $\phi$-labeling scheme for $\cF$ with cost
$c(n)$ and error $\delta=1/3$. Setting $\delta=0$ we obtain the notion of \emph{(deterministic)
$\phi$-labeling scheme}.
\end{definition}

\begin{theorem}
\label{thm:FO-RL}
Let $\cF$ be a stable, hereditary graph class with bounded twin-width, and let $\phi(x,y)$ be an FO
formula of vocabulary $\tau$. Then $\phi$-$\RL(\cF) = O(1)$.
\end{theorem}
\begin{proof}
Given a $\tau$-structure $G=(V,E,R_1,\dotsc,R_t) \in \cF^\tau$, we denote by $G^{\phi}$ the graph
with vertex set $V$ and the edge set $E^{\phi} = \{ (u,v) : G \models \phi[u/x,v/y] \}$.  We also define
$\cF^{\phi} := \cl( \{ G^{\phi} : G \in \cF^\tau \} )$.  By definition, $\cF^{\phi}$ is hereditary.
Furthermore, we note that $\cF^{\phi}$ is an FO transduction of~$\cF$.  Hence,
\cref{th:stable-tw-FO} implies that $\cF^{\phi}$ is a stable class of bounded twin-width, and
therefore by \cref{thm:twin width}, $\cF^{\phi}$ admits a constant-size adjacency sketch. Any such
sketch can be used as a $\phi$-sketch for $\cF$, since for any two vertices $u,v$ in $G \in
\cF^\tau$ we have $G \models \phi[u/x,v/y]$ if and only if $u$ and $v$ are adjacent in $G^{\phi}$.
\end{proof}

As a corollary we obtain an answer to an open question of
\cite{Har20}, who asked if the planar graphs admit constant-size sketches for $\dist(x,y) \leq k$.

\begin{corollary}
For any $k \in \bN$, there is a constant-size sketch for planar graphs that decides $\dist(x,y)
\leq k$.
\end{corollary}
\begin{proof}
	The class of planar graphs has bounded twin-width \cite{BKTW20} and bounded chain number (because it is of bounded arboricity). Hence, the result follows from \cref{thm:FO-RL} and the fact that 
	the relation $\dist(x,y) \leq k$ is expressible in first-order logic, \eg via the following FO formula
	\[
	\!\!\delta_k(x,y) \define (\exists v_1, \dotsc, v_{k-1} : (E(x,v_1) \vee x=v_1) \wedge
	(E(v_1,v_2) \vee v_1=v_2) \wedge
	\dotsm \wedge (E(v_{k-1},y) \vee v_{k-1}=y)) \,.
	\]
\end{proof}

\subsection{Graph Products}
\label{section:cartesian product}

The Cartesian product of graphs is defined as follows.

\begin{definition}[Cartesian Products]
Let $d \in \bN$ and $G_1,\ldots,G_d$ be any graphs.  The Cartesian product $G_1 \mathbin\square G_2
\mathbin\square \cdots \mathbin\square G_d$ is the graph whose vertices are the tuples $(v_1,
\dotsc, v_d) \in V(G_1)\times\cdots\times V(G_d)$, and two vertices $v$, $w$ are adjacent if and
only if there is exactly one coordinate $i \in [d]$ such that $(v_i,w_i) \in E(G_i)$ and for all $j
\neq i$, $v_j=w_j$.  For any set of graphs $\cF$, we will define the set of graphs $\cF^\square$ as
all graphs obtained by taking a product of graphs in $\cF$:
\[
  \cF^\square \define \left\{ G_1 \mathbin\square G_2 \mathbin\square \dotsm \mathbin\square G_d : d \in \bN,
\forall i \in [d]\,\, G_i \in \cF \right\} \,.
\]
We also write $G^{\square d} \define G \mathbin\square G \mathbin\square \dotsm \mathbin\square G$ for the $d$-wise product
of $G$.  For example, $P_2^{\square d}$ is the $d$-dimensional hypercube.
\end{definition}
Although Cartesian products are extremely well-studied (\eg see \cite{CLR20} for results on
universal graphs for subgraphs of products), it was not previously known whether the number of
unique induced subgraphs of Cartesian products $G^{\square d}$ is at most $2^{O(n \log n)}$, let
alone whether they admit adjacency labeling schemes of size $O(\log n)$. We resolve this question by
proving that Cartesian products preserve constant-size PUGs and distance-$k$ sketches.  We prove the
following theorem. Note that, unlike adjacency, distances are not necessarily preserved by taking
induced subgraphs, so the distance-$k$ result applies to $\cF^\square$ instead of its hereditary
closure. The following theorem implies \cref{thm:intro-products} from the introduction.

\begin{restatable}{theorem}{thmgraphproducts}
\label{thm:cartesian products}\RestateRemark
If $\cF$ is any class that admits a constant-size PUG (including any finite class),
then $\cl(\cF^\square)$ admits a constant-size PUG. For any fixed $k$, if~$\cF$ admits a
constant-size distance-$k$ sketch, then so does $\cF^\square$.
\end{restatable}

We obtain as a corollary the new result that $\cl(\cF^\square)$ is a factorial class with a
$\poly(n)$-size universal graph, when $\cF$ is any class with a constant-size PUG (including any
finite class), which follows by \cref{lemma:derandomization} and \cref{prop:knr}.
\begin{corollary}
For any hereditary class $\cF$ that admits a constant-size PUG (in particular any finite class), the
hereditary class $\cl(\cF^\square)$ has speed $2^{O(n \log n)}$ and admits a $\poly(n)$-size universal
graph.
\end{corollary}

Follow-up work \cite{EHZ23} simplifies and extends our technique from this section to obtain optimal
adjacency labels for \emph{subgraphs} (not only induced subgraphs) of Cartesian products, improving
upon the results of \cite{CLR20}.

\begin{remark}
There are 3 other common types of graph products \cite{HIK11}: the strong product, the direct
product, and the lexicographic product. Early versions of this paper showed that these graph
products admit constant-cost communication protocols only in trivial cases; we omit this discussion
from the current version.
\end{remark}

We now prove \cref{thm:cartesian products}.  The proof is illustrated in \cref{fig:cartesian
products}. 

\begin{proof}[Proof of Theorem~\ref{thm:cartesian products}]
Let $s(k)$ be the size of the randomized distance $k$ labeling scheme for $\cF$, and let $D :
\zo^{s(k)} \times \zo^{s(k)} \to \zo$ be its decoder. We may assume that this scheme has error
probability at most $\frac1{10k}$, at the expense of at most an $O(\log k)$ factor increase by
\cref{prop:boosting}. For any $G = G_1 \mathbin\square \dotsm \mathbin\square G_d$ where each $G_i \in \cF$, we will
construct a distance $k$ labeling for $G$ as follows.

Choose constants $m,t\in\bN$ with $m \geq 9k^2$, $t \geq 9k$ and $mt\ge 27(k+1)^2$. Then:
\begin{enumerate}
\item For each $i \in [d]$, draw a random labeling $\ell_i : V(G_i) \to \zo^{s(k)}$ from the
distance $k$ scheme for $G_i$.
\item  For each $i \in [m]$, $j \in [t]$ and each vertex $x \in
G$, initialize a vector $w^{(i,j)}(x) \in \zo^{s(k)}$ to $\vec 0$ and a bit $W^{(i,j)}(x)$ to 0.
\item For each $i \in [d]$ choose a uniformly random coordinate $b(i) \sim [m]$. For each $i \in
[d]$ and each $v \in V(G_i)$ choose a random coordinate $c(i,v) \sim [t]$.
\item For each $x \in V(G)$, assign the label as follows. For each $i \in [d]$, update
\begin{align*}
  w^{(b(i), c(i,x_i))}(x) &\gets w^{(b(i), c(i,x_i))}(x) \oplus \ell_i(x_i) \\
  W^{(b(i), c(i,x_i))}(x) &\gets W^{(b(i), c(i,x_i))}(x) \oplus 1 \,.
\end{align*}
Then the label for $x$ is $w(x)$, the collection of all vectors $w^{(i,j)}(x)$ and bits $W^{(i,j)}(x)$ 
for $i \in [m]$, $j \in [t]$.  
So the size of the label is $mt \cdot (s(k)+1)$.
\end{enumerate}
The decoder is as follows. On receiving labels $w(x),w(y)$, perform the following:
\begin{enumerate}
\item For each $i \in [m], j \in [t]$, let $z^{(i,j)} = w^{(i,j)}(x) \oplus w^{(i,j)}(y)$ and
$Z^{(i,j)} = W^{(i,j)}(x) \oplus W^{(i,j)}(y)$.
\item If there exists $i \in [m]$ for which the number of coordinates $j \in [t]$ such that
$Z^{(i,j)} = 1$ is not in $\{0,2\}$, output $\bot$.
\item If there are at least $2k+1$ pairs $(i,j) \in [m] \times [t]$ such that
$Z^{(i,j)} = 1$, output $\bot$.
\item Otherwise, there are $2q \leq 2k$ pairs
\[
(i_1, (j_1)_1), (i_1, (j_1)_2),
(i_2, (j_2)_1), (i_2, (j_2)_2), \dotsc,
(i_q, (j_q)_1), (i_q, (j_q)_2) \in [m] \times [t]
\]
such that $Z^{(i^*,j^*_1)}=Z^{(i^*,j^*_2)} = 1$ for each of these pairs $(i^*, j^*) \in \{ (i_1,
j_1), \dotsc, (i_q, j_q)\}$. 
For each $r \in [q]$, define
\[
  k_r = D(z^{(i_r, (j_r)_1)}, z^{(i_r, (j_r)_2)} ) \in [k] \cup \{\bot\} \,.
\]
If any of these values are $\bot$, output $\bot$. Otherwise output $\sum_{r=1}^q k_r$.
\end{enumerate}
\begin{claim}
\label{claim:product good events}
Let $x,y \in G$ such that there are exactly $q \leq k$ coordinates $I \subseteq [d]$, $|I|
= q$, such that $x_i \neq y_i$ for all $i \in I$.  Then, with probability at least $2/3$, all
of the following events occur:
\begin{enumerate}
\item For all $i \in I$, either $\dist(x_i, y_i) \leq k$ and $\dist(x_i, y_i)
= D(\ell_i(x_i), \ell_i(y_i))$, or $\dist(x_i, y_i) > k$ and $D(\ell_i(x_i), \ell_i(y_i)) =
\bot$;
\item The values $b(i)$, $i \in I$, are all distinct;
\item For all $i \in I$, $c(i, x_i) \neq c(i, y_i)$.
\end{enumerate}
\end{claim}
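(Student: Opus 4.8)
The plan is to prove \cref{claim:product good events} by a union bound over the three listed bad events, exploiting the numerical slack built into the parameters ($m \ge 9k^2$ and $t \ge 9k$; the third inequality $mt \ge 27(k+1)^2$ plays no role here and is used only in the analysis that comes after the claim). Throughout I fix the pair $x,y$ and the set $I$ of coordinates on which they differ, with $|I| = q \le k$.

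First I would bound the probability that event~(1) fails. Since we invoked \cref{prop:boosting} to boost each per-coordinate distance-$k$ scheme to error at most $\tfrac{1}{10k}$, for a fixed $i \in I$ the decoder $D(\ell_i(x_i),\ell_i(y_i))$ reports an incorrect verdict (a wrong value, or $\bot$ when $\dist(x_i,y_i)\le k$, or a value when $\dist(x_i,y_i)>k$) with probability at most $\tfrac{1}{10k}$; a union bound over the at most $k$ indices in $I$ bounds the failure of event~(1) by $\tfrac{1}{10}$. Next, for event~(2): the bucket assignments $b(i)$ are independent and uniform on $[m]$, so any fixed pair $i \ne i'$ in $I$ collides with probability $1/m$, and summing over the $\binom{q}{2} < k^2/2$ pairs bounds the failure probability by $\tfrac{k^2/2}{9k^2} = \tfrac{1}{18}$. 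Finally, for event~(3): because $x_i \ne y_i$ for each $i \in I$ and $c(i,\cdot)$ assigns an \emph{independent} uniform value from $[t]$ to each vertex of $G_i$, the values $c(i,x_i)$ and $c(i,y_i)$ are two independent uniform draws from $[t]$, hence equal with probability $1/t \le \tfrac{1}{9k}$, and a union bound over $I$ gives at most $\tfrac{1}{9}$.

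Adding the three bounds, the probability that some good event fails is at most $\tfrac{1}{10}+\tfrac{1}{18}+\tfrac{1}{9} = \tfrac{24}{90} < \tfrac13$, so all three occur with probability at least $2/3$, as claimed. There is no genuine obstacle in this step — it is three elementary union bounds, and the only point requiring a moment's care is the independence of $c(i,x_i)$ and $c(i,y_i)$ noted above. The substantive work of the section is the complementary argument, which I would carry out after this claim: conditioned on all three good events, the decoder outputs exactly $\sum_{i\in I}\dist(x_i,y_i)$, the distance in the product. There one checks that coordinates with $x_i=y_i$ contribute equal summands to $w^{(\cdot,\cdot)}(x)$ and $w^{(\cdot,\cdot)}(y)$ and so cancel in both $z^{(\cdot,\cdot)}$ and $Z^{(\cdot,\cdot)}$; that events~(2) and~(3) force the $2q$ marked slots coming from $I$ to be pairwise distinct, so each changed coordinate $i$ leaves precisely two marked slots in bucket $b(i)$ holding $\ell_i(x_i)$ and $\ell_i(y_i)$; that the decoder therefore does not prematurely output $\bot$ (using that there are at most $2k$ marked slots and each bucket sees $0$ or $2$); and that distances in a Cartesian product add coordinatewise, so the final sum is correct.
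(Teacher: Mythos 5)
Your proof is correct and follows essentially the same route as the paper's: a union bound over each of the three bad events, using the boosted per-coordinate error, the collision probability of the bucket assignments $b(i)$, and the independence of $c(i,x_i)$ and $c(i,y_i)$ for $x_i \neq y_i$. The only cosmetic difference is that your constants are slightly tighter (e.g.\ $\binom{q}{2}/m$ rather than $|I|^2/m$), which changes nothing of substance.
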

\begin{proof}[Proof of claim]
By assumption, for any $i \in I$, the probability that $D(\ell_i(x_i),\ell_i(y_i))$ fails to output
$\dist(x_i,y_i)$ (if this is at most $k$) or $\bot$ (if the distance is greater than $k$) is at most
$1/9k$, so by the union bound, the probability that the first event fails to occur is at most
$1/9$.

The probability that event 2 fails to occur is the probability that there exists a distinct pair
$i,j \in I$ such that $b(i)=b(j)$; by the union bound, this is at most $|I|^2/m \leq k^2/m \leq
1/9$.

The probability that there exists $i \in I$ such that $c(i,x_i) = c(i,y_i)$ is at most $|I|/t \leq
k/t \leq 1/9$. Therefore the probability that any one of these events fails to occur is at most $3/9
= 1/3$.
\end{proof}

\begin{claim}
For any $x,y \in G$ such that there are exactly $q \leq k$ coordinates $I \subseteq [d]$, $|I|
= q$, such that $x_i \neq y_i$ for all $i \in I$. Then, if all 3 events in \cref{claim:product
good events} occur, the decoder outputs $\dist(x,y)$ unless there is a coordinate $j$ such that
$\dist(x_j,y_j) > k$, in which case it outputs $\bot$.
\end{claim}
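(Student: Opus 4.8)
The plan is to unwind the encoder/decoder definitions, identify precisely which slots $z^{(a,c)}, Z^{(a,c)}$ are nonzero under the three good events, and then read off the decoder's output. I would begin by recording the standard additivity of distance in Cartesian products: $\dist_G(x,y)=\sum_{i=1}^d \dist_{G_i}(x_i,y_i)=\sum_{i\in I}\dist_{G_i}(x_i,y_i)$, the last equality since $x_i=y_i$ for $i\notin I$. Next, unwinding the construction, $w^{(a,c)}(x)=\bigoplus_{i:\,b(i)=a,\ c(i,x_i)=c}\ell_i(x_i)$ and $W^{(a,c)}(x)$ is the parity of $|\{i:\,b(i)=a,\ c(i,x_i)=c\}|$, and likewise for $y$. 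Hence in $z^{(a,c)}=w^{(a,c)}(x)\oplus w^{(a,c)}(y)$ and $Z^{(a,c)}=W^{(a,c)}(x)\oplus W^{(a,c)}(y)$, any coordinate $i\notin I$ contributes $\ell_i(x_i)$ to the \emph{same} slot $(b(i),c(i,x_i))=(b(i),c(i,y_i))$ on both the $x$- and the $y$-side, so it cancels, both in $z$ and in $Z$. Thus $z^{(a,c)}$ and $Z^{(a,c)}$ depend only on the coordinates in $I$.

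Now I would invoke events 2 and 3. By event 2 the buckets $b(i)$, $i\in I$, are pairwise distinct, so each bucket is hit by at most one differing coordinate. Fixing $i\in I$ and $a=b(i)$, event 3 gives $c(i,x_i)\neq c(i,y_i)$, so within bucket $a$ exactly two positions carry a contribution: $z^{(a,c(i,x_i))}=\ell_i(x_i)$ with $Z^{(a,c(i,x_i))}=1$, and $z^{(a,c(i,y_i))}=\ell_i(y_i)$ with $Z^{(a,c(i,y_i))}=1$; every other position in bucket $a$ has $Z=0$, and every bucket outside $\{b(i):i\in I\}$ has $Z\equiv 0$. Consequently every bucket has $0$ or $2$ positions with $Z=1$ (so the decoder does not halt with $\bot$ at step 2), the total number of pairs with $Z=1$ is $2q\le 2k<2k+1$ (so it does not halt at step 3), and the decoder reaches step 4 with its $q$ selected buckets being exactly $\{b(i):i\in I\}$, and in the bucket $b(i)$ the two recovered vectors are $\ell_i(x_i)$ and $\ell_i(y_i)$ (in one order or the other — I would remark that we may assume the distance-$k$ decoder $D$ is symmetric in its arguments, as distance is symmetric, so the order is immaterial).

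Finally, by event 1, for each $i\in I$ the value $D(\ell_i(x_i),\ell_i(y_i))$ equals $\dist_{G_i}(x_i,y_i)$ if this is at most $k$, and equals $\bot$ otherwise. If $\dist_{G_j}(x_j,y_j)>k$ for some $j$ — necessarily $j\in I$, since $x_j=y_j$ forces $\dist(x_j,y_j)=0$ — then the corresponding $k_r=\bot$ and the decoder outputs $\bot$. Otherwise each $k_r=\dist_{G_i}(x_i,y_i)$ and the decoder outputs $\sum_{r=1}^q k_r=\sum_{i\in I}\dist_{G_i}(x_i,y_i)=\dist_G(x,y)$, as claimed. The content of the argument is almost entirely bookkeeping; the one place that demands care — and the only place where events 2 and 3 are genuinely used — is the collision analysis: checking that coordinates with $x_i=y_i$ leave no residue in any slot, and that two distinct differing coordinates never collide into the same bucket, so that each differing coordinate is recovered cleanly from its own pair of slots.
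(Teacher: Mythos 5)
Your proof is correct and follows essentially the same route as the paper's: identify the $2q$ distinct slots $(b(i),c(i,x_i)),(b(i),c(i,y_i))$ for $i\in I$, show that coordinates with $x_j=y_j$ cancel out of every slot so that exactly these $2q$ slots have $Z=1$ with $z$ equal to $\ell_i(x_i)$ resp.\ $\ell_i(y_i)$, and then apply event 1 bucket by bucket. Your remark about the symmetry of the distance decoder $D$ is a small point the paper leaves implicit, but otherwise the two arguments coincide.
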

\begin{proof}[Proof of claim]
We first observe that the $2q$ pairs $P = \{(b(i), c(i,x_i)), (b(i), c(i,y_i)) \}_{i \in I}$ are
distinct, because each $b(i)$ is distinct and $c(i,x_i) \neq c(i,y_i)$. First, consider pairs $(b(i), c(i,x_i))$ and $(b(i), c(i,y_i))$ for $i\in I$. Then, since $b(j) \neq b(i)$ for each $j \in I$ with $j \neq i$, and
$c(i,y_i) \neq c(i,x_i)$, we have
\begin{align*}
  Z^{(b(i), c(i,x_i))} &= W^{(b(i), c(i,x_i))}(x) \oplus W^{(b(i), c(i,x_i))}(y) \\
    &= 1 \oplus \ind{c(i,y_i)=c(i,x_i)}
    \oplus \left(\bigoplus_{j \neq i, b(j) = b(i)} \ind{ c(j,x_j) = c(i,x_i) } \right)
\\	&\phantom{{}={}}\qquad
    \oplus \left(\bigoplus_{j \neq i, b(j) = b(i)} \ind{ c(j,y_j) = c(i,x_i) } \right) \\
    &= 1 \oplus \left(\bigoplus_{j \notin I, b(j) = b(i)} \ind{ c(j,x_j) = c(i,x_i) } 
                                                   \oplus \ind{ c(j,y_j) = c(i,x_i) } \right) \\
    &= 1 \oplus \left(\bigoplus_{j \notin I, b(j) = b(i)} \ind{ c(j,x_j) = c(i,x_i) } 
                                                   \oplus \ind{ c(j,x_j) = c(i,x_i) } \right) \\
    &= 1 \,.
\end{align*}
Multiplying each indicator $\ind{(b(j),c(j,x_j)) =
(b(i),c(i,x_i))}$ by the vector $\ell_i(x_i)$, and each indicator $\ind{(b(j),c(j,y_j)) =
(b(i),c(i,x_i))}$ by the vector $\ell_i(y_i)$, we obtain the following:
\[
  z^{(b(i), c(i,x_i))} = w^{(b(i),c(i,x_i))} = \ell_i(x_i) \,.
\]
By similar reasoning, $Z^{(b(i), c(i,y_i))} = 1$, and
\[
  z^{(b(i), c(i,y_i))} = w^{(b(i),c(i,y_i))} = \ell_i(y_i) \,.
\]
Now consider any $i'\in[m]$, $j'\in[t]$ such that $(i',j')\notin P$.
If there exists no $i\in[d]$, such that $i' = b(i)$, then
clearly $W^{(i',j')}(x)$ and $W^{(i',j')}(y)$ remain $0$ and $Z^{(i',j')}=0$, so these entries
do not contribute.
Otherwise $i' = b(i)$ for some $i\in[d]$. 
If $c(i,x_i) \ne j' \ne c(i,y_i)$, again the label entries are not touched and $Z^{(i',j')}=0$.
Finally, assume $(i',j') = (b(i),c(i,x_i))$.
Clearly $(b(j),c(j,x_j)) \neq (b(i),c(i,x_i))$ for any $j \in I$ since $(i',j') \notin P$;
in particular $i \notin I$. Thus $y_i = x_i$ by assumption.
Then
\begin{align*}
  Z^{(b(i), c(i,x_i))}
  &= W^{(b(i), c(i,x_i))}(x) \oplus W^{(b(i), c(i,x_i))}(y) \\
  &= \left( \bigoplus_{j : b(j)=b(i)} \ind{c(j,x_j)=c(i,x_i)} \right) \oplus
     \left( \bigoplus_{j : b(j)=b(i)} \ind{c(j,y_j)=c(i,x_i)} \right) \\
  &= \left( \bigoplus_{j \notin I: b(j)=b(i)} \ind{c(j,x_j)=c(i,x_i)}
      \oplus \ind{c(j,y_j)=c(i,x_i)} \right) \\
  &= \left( \bigoplus_{j \notin I: b(j)=b(i)} \ind{c(j,x_j)=c(i,x_i)}
      \oplus \ind{c(j,x_j)=c(i,x_i)} \right) = 0 \,.
\end{align*}
Since $x_i=y_i$, also $Z^{(b(i), c(i,y_i))}=0$. We may then conclude that the $2q$ distinct
pairs $P$ are the only pairs $(i,j)$ such that $Z^{(i,j)}=1$. Therefore the decoder will output
\[
  \sum_{i \in I} D\left( w^{(b(i),c(i,x_i))}, w^{(b(i),c(i,y_i))} \right)
  = \sum_{i \in I} D\left( \ell_i(x_i), \ell_i(y_i) \right) \,,
\]
which is $\dist(x,y)$ when each $\dist(x_i,y_i) \leq k$ and $\bot$ otherwise.
\end{proof}
Suppose that $x,y \in V(G)$ have distance at most $k$. Then the above two claims suffice to show
that the decoder will output $\dist(x,y)$ with probability at least $2/3$.

Now suppose that $x,y \in V(G)$ have distance greater than $k$. There are three cases:
\begin{enumerate}
\item There are at least $k+1$ coordinates $i \in [d]$ such that $x_i \neq y_i$. In this case, using
\cref{prop:hamming distance} below (with $\delta=\frac13$, $u=mt$, $n=k+1$, $k=k$) yields
that vector $Z$ has at least $k+1$ 1-valued coordinates
$Z^{(i,j)}$ with probability at least $2/3$, and the decoder will correctly output $\bot$.
\item There are at most $k$ coordinates $i \in [d]$ such that $x_i \neq y_i$, and there is some
coordinate $i$ such that $\dist(x_i,y_i) > k$. Then the above two claims suffice to show that the
decoder will output $\bot$ with probability at least $2/3$, as desired.
\item There are at most $k$ coordinates $i \in [d]$ such that $x_i \neq y_i$, and all coordinates
satisfy $\dist(x_i,y_i) \leq k$. Then the above two claims suffices to show that the decoder will
output $\dist(x,y)$ with probability at least $2/3$.
\end{enumerate}
This concludes the proof.
\end{proof}

\begin{proposition}
	\label{prop:hamming distance}
	For any $0 < \delta < 1$ and
	any $u,k,n \in \bN$, where $u \ge \frac{9(k+1)^2}{\delta}$
	and $n > k$ the following holds.  Write $e_i \in \bF_2^u$ for the $i$th standard basis vector. 
	Let $R_1,
	\dotsc, R_n \sim [u]$ be uniformly and independently random.  
	Then $\Pr{|e_{R_1} + \dotsm + e_{R_n}| \leq k} < \delta$, 
	where $|v|$ is the number of 1-valued coordinates in $v$ and addition is in $\bF_2^u$.
\end{proposition}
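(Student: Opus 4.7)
The plan is to reduce $\{|v|\le k\}$ to an event about pairwise collisions among $R_1,\dotsc,R_n$, apply Markov's inequality, and, if $n$ is very large, supplement with a concentration argument.

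The starting point is the identity $|v| = n - 2C$, where $C = \sum_{i=1}^u \lfloor m_i/2 \rfloor$ and $m_i = |\{t\in[n]:R_t=i\}|$: each bin $i$ contributes $m_i\bmod 2 = m_i - 2\lfloor m_i/2\rfloor$ to $|v|$. Thus $|v|\le k$ if and only if $C \ge \lceil(n-k)/2\rceil$. Since $\lfloor m/2\rfloor \le \binom{m}{2}$ for every $m\ge 0$, $C$ is dominated by the collision count $P = \sum_{1\le i<j\le n}\ind{R_i=R_j}$, which has $\Ex{P}=\binom{n}{2}/u$. Markov's inequality then gives
\[
   \Pr{|v|\le k} \;\le\; \Pr{P\ge \tfrac{n-k}{2}} \;\le\; \frac{n(n-1)}{u(n-k)}.
\]

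This disposes of the main application $n=k+1$ immediately, where the bound becomes $k(k+1)/u \le (k+1)^2/u \le \delta/9$. More generally, writing $n=k+m$ with $m\ge 1$, the right-hand side equals $\frac{1}{u}\bigl((2k+m-1)+\tfrac{k(k-1)}{m}\bigr)$, and an elementary analysis in $m$ shows the bound is $<\delta$ for all $n$ up to roughly $9(k+1)^2$ under the hypothesis $u\ge 9(k+1)^2/\delta$.

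The only remaining regime is $n$ so large that Markov no longer suffices. Here I would use concentration. Writing $|v|=\sum_{i=1}^u Y_i$ with $Y_i=\ind{m_i \text{ odd}}$ and $B_i=m_i$, a direct Fourier computation gives $\Ex{(-1)^{B_i}}=(1-2/u)^n$ and $\Ex{(-1)^{B_i+B_j}}=(1-4/u)^n$ for $i\ne j$, whence $\Cov(Y_i,Y_j) = \tfrac{1}{4}\bigl[(1-4/u)^n-(1-2/u)^{2n}\bigr]\le 0$, using $1-4/u \le (1-2/u)^2$. So $\Var{|v|}\le u\,p(1-p)$ with $p=\Ex{Y_1}=(1-(1-2/u)^n)/2$. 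Since $\Ex{|v|}=up \ge \min(n/2,u/4)$, in this regime it is comfortably larger than $k$, and Chebyshev's inequality yields $\Pr{|v|\le k}\le \Var{|v|}/(\Ex{|v|}-k)^2 < \delta$.

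The main obstacle is to handle the two cases uniformly: the Markov bound covers small to moderate $n$ cleanly, but begins to fail near $n\approx 9(k+1)^2$, which is exactly where the Chebyshev bound takes over. Careful bookkeeping is needed at this boundary, and one must track the constant $9$ so that the Markov and concentration bounds together leave no gap in the range $n>k$.
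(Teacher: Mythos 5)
Your route is genuinely different from the paper's (which conditions on the first $n-t$ draws and runs a union bound over the last $t=\Theta(\sqrt{\delta u})$ draws, arguing that with probability $\ge 1-\delta$ each of them strictly increases the weight). Your first half is correct and clean: the identity $|v|=n-2C$, the domination $C\le P$, and Markov give $\Pr{|v|\le k}\le \frac{n(n-1)}{u(n-k)}$, which indeed settles all $n$ up to roughly $\delta u\ge 9(k+1)^2$, and in particular the case $n=k+1$.

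The gap is in the second half. Your covariance computation is right ($\Cov(Y_i,Y_j)\le 0$, so $\Var{|v|}\le up(1-p)$), but Chebyshev does not close the range. The issue is not constant bookkeeping: a second-moment lower-tail bound at a constant fraction of the mean can never beat $O(1/\Ex{|v|})$, since $\Var{|v|}\approx \Ex{|v|}$ here, and at the crossover point $n\approx\delta u$ you only have $\Ex{|v|}\approx n\approx \delta u$, so Chebyshev gives roughly $1/(\delta u)\le 1/(9(k+1)^2)$ --- a quantity with no dependence on $\delta$, hence not $<\delta$ once $\delta<1/(9(k+1)^2)$. Concretely, take $k=1$, $\delta=0.01$, $u=9(k+1)^2/\delta=3600$, $n=50$: your Markov bound is $\frac{50\cdot 49}{3600\cdot 49}=\frac{1}{72}>\delta$, and your Chebyshev bound is $\frac{up(1-p)}{(up-k)^2}\approx \frac{49}{48^2}\approx 0.02>\delta$, yet the true probability is at most $1/u$ (for $|v|=0$ one needs $e_{R_{50}}$ to equal the sum of the first $49$ vectors). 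So for every fixed $k$ and sufficiently small $\delta$ there is a nonempty window $\delta u\lesssim n\lesssim 1/\delta$ in which both of your bounds exceed $\delta$. To repair this you need a tool that exploits more than two moments in the middle regime --- e.g.\ the paper's sequential argument, which at its core is exactly a mechanism for showing that once $\Theta(\sqrt{\delta u})$ fresh random coordinates are thrown at any fixed vector, the weight exceeds $k$ except with probability $\delta$, uniformly in $n$.
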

\begin{proof}
	Set $t = \lfloor \frac{1}{3} \sqrt{\delta u} \rfloor$. We first prove the following claim.
	
	\smallskip
	\noindent
	\textbf{Claim.}
	\textit{For any vector $e\in \bF_2^u$
	and $S_1, \ldots, S_t \sim [u]$, we have
	$\Pr{|e + e_{S_1} + \dotsm + e_{S_t}| \leq k} < \delta$.}
	
	\smallskip
	\noindent
	\textit{Proof.} Observe that $t > \frac{1}{3} \sqrt{\delta u} -1 \ge k$, and $t\le \frac{1}{3} \sqrt{\delta u}$.
	Suppose first that $|e| > k + t$. Then $\Pr{|e + e_{S_1} + \dotsm + e_{S_t}| \leq k}  = 0 < \delta$,
	as each of the vectors $e_{S_i}$ can flip at most one coordinate in $e$ from 1 to 0.
	So assume now that $|e| \leq k + t$. For a fixed $i \in [t]$ we have
	$\Pr{|e + e_{S_i}| < |e|} \leq \frac{k + t}{u} \le \frac{2t}u \le \frac{2\sqrt{\delta u}/3}{u} = \frac23\frac{\sqrt{\delta}}{\sqrt{u}}$.
	Then for $A$, the event that $\exists i \in [t] : |e + e_{S_i}| < |e|$, we have $\Pr{A} \leq  
	\frac{1}{3}\sqrt{\delta u} \cdot \frac23\frac{\sqrt{\delta}}{\sqrt{u}} < \delta/2$.
	Let further $B$ be the event that $\exists i,j \in [t] : S_i=S_j$; then also 
	$\Pr{B} \le \binom t2 \frac{1}{u} \leq \frac{t^2}{2u} < \delta/2$.
	If $A$ and $B$ both do \emph{not} occur, we have 
	$|e + e_{S_1} + \dotsm + e_{S_t}| = |e| + |e_{S_1}| + \dotsm + |e_{S_t}| \ge t>k$,
	so $\Pr{|e + e_{S_1} + \dotsm + e_{S_t}| \leq k} \le \Pr{A\cup B} < \delta$,
	which proves the claim. \qed
	
	\medskip
	\noindent
	To prove the proposition we consider two cases. The first case is when $n \le t$.
	In this case, as above, the probability that there are two distinct $i,j \in [n]$ such
	that $R_i=R_j$ is at most $\binom n2\frac{1}{u} \le \binom{t}{2}\frac{1}{u} < \delta/2$.
	If all $R_i$ are pairwise different, then $|e_{R_1} + \dotsm + e_{R_n}| = n > k$. 
	Hence, $\Pr{|e_{R_1} + \dotsm + e_{R_n}| \leq k} < \delta/2 < \delta$.
	
	The second case is when $n > t$. In this case we apply the above claim with
	$e = e_{R_{1}} + \cdots + e_{R_{n-t}}$ and $S_1 = R_{n-t+1}, \ldots, S_t = R_n$.
\end{proof}

\begin{figure}[tb]
\centering

		\scalebox{1.5}{
	\begin{tikzpicture}[scale=0.7]
		
		\tikzstyle{sq-cell}=[draw, minimum width=0.3cm, minimum height=0.3cm, inner sep=0pt]
		
		\tikzstyle{cell}=[draw, minimum width=0.3cm, minimum height=1.5cm, inner sep=0pt]
		
		\definecolor{mygreen}{RGB}{0,150,0}
		\definecolor{myred}{RGB}{255, 77, 77}
		\definecolor{mygray}{RGB}{240, 240, 240}
		
		\def\bucketskip{3.5}
		\def\topY{3.4}
		\def\midY{2}
		\def\botY{-0.6}
		\def\bottomY{-3.2}
		
		\node at (-0.9, \topY-3) {\scriptsize $Z$:};
		\node at (-0.9, \midY-3) {\scriptsize $z$:};
		\node at (-1.8, \botY-3.6) {\scriptsize distance};
		\node at (-2.4, \botY-4.1) {\scriptsize sketches for $\mathbf{x}$:};
		\node at (-1.8, \bottomY-3.2) {\scriptsize distance};
		\node at (-2.4, \bottomY-3.8) {\scriptsize sketches for $\mathbf{y}$:};
		
		\foreach \i/\label in {0/{\scriptsize bucket 1}, 1/{\scriptsize bucket 2}, 2/{\scriptsize bucket 3}, 3/{\scriptsize bucket 4}} {
			\node at (\i*\bucketskip+1, \topY-1.5) {\label};
		}
		
		\def\xbase{0}
		\foreach \i/\c in {0/white,1/black,2/white,3/white,4/white,5/black} {
			\node[sq-cell, fill=\c] at (\xbase+0.5*\i, \topY-3) {};
		}
		\foreach \i/\c in {0/white,1/myred,2/white,3/white,4/white,5/mygreen} {
			\node[cell, fill=\c] at (\xbase+0.5*\i, \midY -3) {};
		}
		\foreach \i/\c in {0/mygray,1/mygray,2/mygray,3/myred,4/mygray,5/mygray,6/mygray} {
			\node[cell, fill=\c] at (\xbase+0.5*\i, \botY - 4) {};
		}
		\foreach \i/\c in {0/mygray,1/mygray,2/mygray,3/mygreen,4/mygray,5/mygray, 6/mygray} {
			\node[cell, fill=\c] at (\xbase+0.5*\i, \bottomY - 3.5) {};
		}
		\node at (\xbase+0.5*1.6, \topY-2.5) {\scriptsize $Z^{(1,2)}$};
		\node at (\xbase+0.5*5.6, \topY-2.5) {\scriptsize $Z^{(1,6)}$};
		
		\def\xbase{1*\bucketskip}
		\foreach \i/\c in {0/black,1/white,2/white,3/black,4/white,5/white} {
			\node[sq-cell, fill=\c] at (\xbase+0.5*\i, \topY - 3) {};
		}
		\foreach \i/\c in {0/mygreen,1/white,2/white,3/myred,4/white,5/white} {
			\node[cell, fill=\c] at (\xbase+0.5*\i, \midY - 3) {};
		}
		\foreach \i/\c in {0/mygray,1/mygray,2/mygray,3/mygray,4/mygray,5/mygray,6/mygray} {
			\node[cell, fill=\c] at (\xbase+0.5*\i, \botY - 4) {};
		}
		\foreach \i/\c in {0/mygray,1/mygray,2/mygray,3/mygray,4/mygray,5/mygray,6/mygray} {
			\node[cell, fill=\c] at (\xbase+0.5*\i, \bottomY - 3.5) {};
		}
		\node at (\xbase+0.5*0.6, \topY-2.5) {\scriptsize $Z^{(2,1)}$};
		\node at (\xbase+0.5*3.6, \topY-2.5) {\scriptsize $Z^{(2,4)}$};
		
		\def\xbase{2*\bucketskip}
		\foreach \i/\c in {0/white,1/white,2/white,3/white,4/white,5/white} {
			\node[sq-cell, fill=\c] at (\xbase+0.5*\i, \topY - 3) {};
			\node[cell, fill=\c] at (\xbase+0.5*\i, \midY - 3) {};
		}
		\foreach \i/\c in {0/mygray,1/mygray,2/myred,3/mygray,4/mygray,5/mygray,6/mygray} {
			\node[cell, fill=\c] at (\xbase+0.5*\i, \botY - 4) {};
		}
		\foreach \i/\c in {0/mygray,1/mygray,2/mygreen,3/mygray,4/mygray,5/mygray,6/mygray} {
			\node[cell, fill=\c] at (\xbase+0.5*\i, \bottomY - 3.5) {};
		}
		
		\def\xbase{3*\bucketskip}
		\foreach \i/\c in {0/black,1/white,2/black,3/white,4/white,5/white} {
			\node[sq-cell, fill=\c] at (\xbase+0.5*\i, \topY - 3) {};
		}
		\foreach \i/\c in {0/myred,1/white,2/mygreen,3/white,4/white,5/white} {
			\node[cell, fill=\c] at (\xbase+0.5*\i, \midY - 3) {};
		}
		\foreach \i/\c in {0/mygray,1/mygray,2/myred,3/mygray,4/mygray,5/mygray} {
			\node[cell, fill=\c] at (\xbase+0.5*\i, \botY - 4) {};
		}
		\foreach \i/\c in {0/mygray,1/mygray,2/mygreen,3/mygray,4/mygray,5/mygray} {
			\node[cell, fill=\c] at (\xbase+0.5*\i, \bottomY - 3.5) {};
		}
		\node at (\xbase+0.5*0.6, \topY-2.5)  {\scriptsize $Z^{(4,1)}$};
		\node at (\xbase+0.5*2.6, \topY-2.5)  {\scriptsize $Z^{(4,3)}$};
		
		\draw[-stealth] (0*\bucketskip + 0.5*3, \botY - 4) -- (0*\bucketskip  + 0.5*1, \midY - 4.1);
		\draw[-stealth] (0*\bucketskip + 0.5*3, \bottomY - 3.5) -- (0*\bucketskip  + 0.5*5, \midY - 4.1);
		
		\draw[-stealth] (2*\bucketskip + 0.5*2, \botY - 4) -- (1*\bucketskip  + 0.5*3, \midY - 4.1);
		\draw[-stealth] (2*\bucketskip + 0.5*2, \bottomY - 3.5) -- (1*\bucketskip  + 0.5*0, \midY - 4.1);		
		
		\draw[-stealth] (2*\bucketskip + 0.5*4, \botY - 4) -- (2*\bucketskip  + 0.5*3, \midY - 4.1);
		\draw[-stealth] (2*\bucketskip + 0.5*4, \bottomY - 3.5) -- (2*\bucketskip  + 0.5*3, \midY - 4.1);		
		
		\draw[-stealth] (3*\bucketskip + 0.5*2, \botY - 4) -- (3*\bucketskip  + 0.5*0, \midY - 4.1);
		\draw[-stealth] (3*\bucketskip + 0.5*2, \bottomY - 3.5) -- (3*\bucketskip  + 0.5*2, \midY - 4.1);		
	\end{tikzpicture}
}

\smallskip
\caption{Small-distance sketch for Cartesian products. Along the bottom are the distances sketches
for $x_i$ (top) and $y_i$ (bottom) for $i=1$ to $d$. Where $x_i=y_i$, the sketches for $x_i$ and
$y_i$ are equal and are colored grey; they cancel out in $Z$ and $z$. Where $x_i \neq y_i$, the
sketch for $x_i$ is in red and the sketch for $y_i$ is in green. For $i \neq j$ where $x_i \neq y_i$
and $x_j \neq y_j$, the sketches for $x_i,y_i$ and $x_j,y_j$ are mapped to different buckets with
high probability (\ie, $b(i) \neq b(j)$) and the sketches for $x_i$ and $y_i$ are mapped to
different locations in the same bucket with high probability (\ie, $c(i,x_i) \neq c(i,y_i)$).}
\label{fig:cartesian products}
\end{figure}

\section{Question II. Structure: Stability is Sometimes Sufficient} 
\label{section:stability}

Recall Question II: What structures of a problem explain the existence or non-existence of a
constant-cost protocol? Or equivalently, which structures of a hereditary graph class explain the
existence or non-existence of a constant-size PUG?

In this section, we approach the question by looking at hereditary graph classes $\cF$ which satisfy
the necessary condition of having factorial speed, but \emph{do not} have a constant-size PUG, and
asking: What structure of a subclass $\cH \subset \cF$ determines whether it \emph{does} have a
constant-size PUG? We hope to find statements of the form: ``A subclass $\cH \subset \cF$ admits a
constant-size PUG if and only if it has structure $X$''.

We consider examples of two types of hereditary graph classes:\\

\noindent
\textbf{\cref{section:intersection graphs}.} Geometric intersection graphs are an important type of
hereditary graph classes. These are classes graphs obtained by representing vertices as geometric
objects (like lines in $\bR^2$) and connecting vertices by an edge if their objects intersect. These
types of graphs are important for both randomized communication complexity (\eg
\cite{PS86,HHPTZ22,HHM23,HZ23}) and adjacency labeling schemes (\eg \cite{Spin03,Fit19}) and are the
subject of open problems in both fields. Keeping in mind \cref{question:igq-to-pug}, we consider two
types of geometric intersection graphs -- interval graphs and permutation graphs -- which easily
admit polynomial-size universal graphs, and ask what further structure is necessary and sufficient
to get constant-size PUGs instead. We find that \emph{stability} is the answer in both cases.\\

\noindent
\textbf{\cref{section:bipartite graphs}.}
Any hereditary subclass of bipartite graphs can be defined via a (possibly infinite) set $\cH$
of \emph{bipartite} forbidden induced subgraphs. The ones obtained by finite sets $\cH$ are called
\emph{finitely-defined}, with the simplest example being the \emph{monogenic} classes of bipartite
graphs, obtained by forbidding exactly one bipartite graph $H$. It is known exactly for which graphs
$H$ the family of $H$-free bipartite graphs has the required condition of factorial speed, but it is
not yet known whether all the $H$-free bipartite graph classes with factorial speed have
polynomial-size universal graphs. We show that, for monogenic bipartite graph classes having
factorial speed, stability is once again the necessary and sufficient condition to guarantee a
constant-size PUG.

\subsection{Interval \& Permutation Graphs}
\label{section:intersection graphs}

\begin{definition}[Interval graph]
A graph $G$ is an \emph{interval graph} if there
exists an \emph{interval realization} $\ell, r : V(G) \to \bR$ with $\ell(v) \le r(v)$ for all $v\in
V(G)$ so that $u,v\in V(G)$ are adjacent in $G$ if and only if $[\ell(u),r(u)]\cap[\ell(v),r(v)] \ne
\emptyset$.
\end{definition}

The class of interval graphs admits a simple $O(\log n)$-bit adjacency labeling scheme: Fix an
interval realization of a given $n$-vertex interval graph $G$ where all endpoints of the intervals
are distinct integers in $[2n]$ and assign to each vertex a label consisting of the two endpoints of
the corresponding interval.

Permutation graphs are another important class of geometric intersection graphs (like interval
graphs, they are a subclass of segment intersection graphs). They also admit a straightforward
$O(\log n)$-bit adjacency labeling scheme that follows from their definition.

\begin{definition}[Permutation Graph]
	A graph $G$ is a \emph{permutation graph} on $n$ vertices if each vertex can be identified with a
	number $i \in [n]$, such that there is a permutation $\pi$ of $[n]$ where $i,j$ are adjacent if and
	only if $i < j$ and $\pi(i) > \pi(j)$.
\end{definition}

In this section we prove \cref{thm:intro-intersection} from the introduction, restated more formally:
\begin{restatable}{theorem}{thmintersection}
\label{thm:intersection}\RestateRemark
Let $\cF$ be any hereditary subclass of interval or permutation graphs. Then $\cF$ admits a
constant-size PUG if and only if $\cF$ is stable.
\end{restatable}

This proof requires new structural results for interval and permutation graphs\footnote{A weaker
version of our permutation graph result (without an explicit bound on the sketch size)
also follows by our result for bounded twin-width combined with \cite{BKTW20}. We keep our
direct proof because: it gives an explicit bound on the sketch size; it is much clearer than the
twin-width result; the decomposition scheme may be of independent interest; and it was proved
independently of \cite{GPT21}, which we use for our twin-width result.}. We show that any stable
subclass of interval graphs can be reduced to a class with bounded treewidth, and therefore has a
constant-size PUG (implied by \cref{lemma:arboricity}).  A consequence of our proof is that stable
subclasses of interval graphs have bounded twin-width (which is not true for general interval
graphs \cite{BKTW20}).  For permutation graphs, we give a new decomposition scheme whose depth can
be controlled by the chain number.

We prove \cref{thm:intersection} separately for interval graphs (\cref{thm:interval graphs})
and for permutation graphs (\cref{thm:permutation graphs}) in the sections below.

\subsubsection{Interval Graphs}

The proof will rely on bounding the clique number of interval graphs with bounded chain number.

\begin{lemma}
\label{lem:bounded-clique-number-implies-RL-const}
Let $\cF$ be a class of interval graphs with bounded clique number, \ie, there is a
constant $c$ so that for any graph $G\in\cF$, the maximal clique size of $G$ is at most $c$.  Then $\cF$
admits a constant-size equality-based labeling scheme.
\end{lemma}
\begin{proof}
Any interval graph is \emph{chordal} and the treewidth of a chordal graph is one less its clique
number.  It follows that any interval graph $G$ with clique number at most $c$ has treewidth at most
$c-1$.  Graphs of treewidth $c-1$ have arboricity at most $O(c)$, and therefore, by
\cref{lemma:arboricity}, $\cF$ admits a constant-size equality-based labeling scheme and an
adjacency sketch of size $O(c)$.
\end{proof}

It is not possible in general to bound the clique number of interval graphs with bounded chain
number, because there may be an arbitrarily large set of vertices realized by identical intervals,
which forms an arbitrarily large clique. Our first step is to observe that, for the purpose of
designing an equality-based labeling scheme, we can ignore these duplicate vertices (called
\emph{true twins} in the literature).

\begin{definition}
For a graph $G = (V,E)$, two distinct vertices $x,y$ are called \emph{twins} if $N(x) \setminus
\{y\} = N(y) \setminus \{x\}$, where $N(x)$, $N(y)$ are the neighbourhoods of $x$ and $y$ in $G$.
Twins $x$ and $y$ are \emph{true twins} if they are adjacent, and \emph{false twins} if they are not
adjacent. The false-twin relation and true-twin relation are equivalence relations.

A graph is \emph{true-twin-free} if it does not contain any vertices $x$, $y$ that are true twins, and
it is \emph{false-twin-free} if it does not contain any vertices $x$, $y$ that are false twins. It
is \emph{twin-free} if it is both true-twin-free and false-twin-free.
\end{definition}

\begin{lemma}
\label{lemma:true-twin reduction}
Let $\cF$ be any hereditary graph class and let $\cF'$ be either the set of true-twin-free members
of $\cF$, or the set of false-twin-free members of $\cF$. If $\cF'$ admits an $(s,k)$-equality based
labeling scheme, then $\cF$ admits an $(s,k+1)$-equality based labeling scheme.
\end{lemma}
\begin{proof}
We prove the lemma for $\cF'$ being the true-twin-free members of $\cF$; the proof for the
false-twin-free members is similar.  Let $G \in \cF$. We construct a true-twin-free graph $G' \in
\cF'$ as follows. Let $T_1, \dotsc, T_m \subseteq V(G)$ be the equivalence classes under the
true-twin relation, so that for any $i$, any two vertices $x,y \in T_i$ are true twins. For each $i \in
[m]$, let $t_i \in T_i$ be an arbitrary element, and let $T = \{t_1, \dotsc, t_m\}$. We claim that
$G[T]$ is true-twin free.

Suppose for contradiction that $t_i$, $t_j$ are true twins in $G[T]$. Let $x \in T_i$, $y \in T_j$.
Since $t_i$, $t_j$ are adjacent in $G[T]$, they are adjacent in $G$. Then $x$ is adjacent to $t_j$
since $x,t_i$ are twins. Since $t_j,y$ are twins, $x$ is adjacent to $y$. So $G[T_i,T_j]$ is a
biclique. Now suppose that $z \in T_k$ for some $k \notin \{i,j\}$, and assume $z$ is adjacent to
$x$. Then $z$ is adjacent to $t_i$ since $x, t_i$ are twins, and $t_i$ is adjacent to $t_k$ since
$z, t_k$ are twins. Since $t_i$, $t_j$ are twins, it also holds that $t_j$ is adjacent to $t_k$ and
to $z$. So $y$ is adjacent to $z$. Then for any $z$ it holds that $x,z$ are adjacent if and only if
$y,z$ are adjacent. So $x,y$ are true twins, for any $x \in T_i, y \in T_j$. But then $T_i \cup T_j$
is an equivalence class under the true-twin relation, which is a contradiction.

Therefore, $G[T]$ is true-twin free and a member of $\cF$, so $G[T] \in \cF'$. For any $x \in V(G)$,
assign the label $( p(t_i) \eqLabelSep q(t_i), i )$ where $( p(t_i) \eqLabelSep q(t_i) )$ is the label of $t_i$ in
the equality-based labeling scheme for $\cF'$, and $i \in [m]$ is the unique index such that $x \in
T_i$. The decoder for $\cF$ performs the following on labels $( p(t_i) \eqLabelSep q(t_i), i)$ and
$( p(t_j) \eqLabelSep q(t_j), j)$. If $i = j$ output 1; otherwise simulate the decoder for $\cF'$ on
labels $(p(t_i) \eqLabelSep q(t_j))$ and $(p(t_j) \eqLabelSep q(t_j) )$. For vertices $x,y$ in $G$, if $x,y$ are
true twins then $i=j$ and the decoder outputs 1. Otherwise, the adjacency between $x$ and $y$ is
equivalent to the adjacency between $t_i,t_j$ in $G[T]$, which is computed by the decoder for
$\cF'$, as desired.
\end{proof}

The true-twin-free interval graphs with bounded chain number also have bounded clique number.
\begin{lemma}
\label{lem:interval-graphs-clique-bounds-chain}
Let $G$ be a true-twin-free interval graph and let $G$ contain a clique of $c$ vertices.
Then $G$ has chain number at least $\lfloor \sqrt c/2\rfloor$.
\end{lemma}
\begin{proof}
Since $G$ is interval, there is an interval realization $\ell, r : V(G) \to \bR$ with $\ell(v)\le r(v)$
for all $v\in V(G)$ so that $u,v\in V(G)$ are adjacent if and only if
$[\ell(u),r(u)]\cap[\ell(v),r(v)] \ne \emptyset$.  We can assume without loss of generality
that no two endpoints are the same. We abbreviate $i(v)= [\ell(v),r(v)]$.
	Let $X = \{x_1,\ldots,x_c\}$ be the $c$ vertices that form a $c$-clique in $G$,
	arranged so that $\ell(x_1)< \ell(x_2)<\cdots< \ell(x_c)$.
	Consider the sequence $r = (r(x_1),\ldots,r(x_c))$ of right endpoints.
	By the Erd{\H
o}s-Szekeres theorem \cite{ES35}, any sequence of at least $R(k) = (k-1)^2 + 1$ distinct numbers contains 
	either an increasing or a decreasing subsequence of length at least $k$.
	Setting $k=\lfloor \sqrt c\rfloor$, we have $R(k) \le c$,
	so there is a clique over $k$ vertices $y_1,\ldots,y_k$ with $\ell(y_1)<\cdots<\ell(y_k)$
	and either $r(y_1)<\cdots<r(y_k)$ or $r(y_1)>\cdots>r(y_k)$.
	Graphically speaking, the intervals for $y_1,\ldots,y_k$ either form a staircase or a (step) pyramid.
	In either case, $\ell(y_k) < \min\{r(y_1),r(y_k)\}$, so
	$m = (\min\{r(y_1),r(y_k)\} + \ell(y_k))/2$ is contained in all $i(y_j)$.
	We will assume the staircase case, $r(y_1)<\cdots<r(y_k)$; 
	see \cref{fig:interval-graphs-clique-bounds-chain}. 
	The pyramid case is similar.
	
	\begin{figure}[tbhp]
		\centering
		\begin{tikzpicture}[
			xscale=.95,yscale=1.6,
			noty/.style={blue!80!black,dashed,opacity=.4,|-|},
			yline/.style={blue!80!black,|-|},
			aline/.style={blue!80!black,very thick,|-|},
			zline/.style={red!60!black,|-|},
			bline/.style={red!60!black,very thick,|-|},
			]
			\foreach [count=\i] \r/\s in {5/yline,6/aline,4/noty,7/yline,1/noty,2/noty,8/aline,3/noty,9/aline} {
				\draw[\s] (\i-10,-\i/2) coordinate[label={left,scale=.55:$\ell(x_{\i})$}] (l\i) 
				-- (\r,-\i/2) coordinate[label={right,scale=.55:$r(x_{\i})$}] (r\i) ;
			}
			\draw[ultra thick,dotted,gray] (0,0) -- ++(0,-5) node[below] {$m$};
			\foreach [count=\j] \r/\i/\a in {5/1/,6/2/{=a_1},7/4/,8/7/{=a_2},9/9/{=a_3}} {
				\node[scale=.75,yline,anchor=south west] at (0.25,-\i/2) {$y_\j\a$} ;
			}
			\foreach [count=\j] \zl/\zr/\y/\s/\b in 
			{5.5/8.5/1/bline/=b_1,%
				-10+2.75/-10+2.25/2/zline/,%
				7.5/11/6/bline/=b_2,%
				8.5/10/7/bline/=b_3%
			} {
				\draw[\s] (\zl,-\y/2-.25) -- node[above,scale=.75] {$z_{\j}\b$} (\zr,-\y/2-.25) ;
			}
		\end{tikzpicture}
		\caption{%
			Example illustrating the notation from the proof of \cref{lem:interval-graphs-clique-bounds-chain}.
			The fat blue and red intervals, $a_1,\ldots,a_3$ resp.\ $b_1,\ldots,b_3$,
			form an induced subgraph with \chainNum 3.
		}
		\label{fig:interval-graphs-clique-bounds-chain}
	\end{figure}

	Now, since $G$ is true-twin free, for every $v,v'\in X$, there must be $u\in V(G)$ with
	$i(v)\cap i(u)=\emptyset$ and $i(v')\cap i(u)\ne \emptyset$ or vice versa, so
	$i(u)$ must lie entirely to the left or entirely to the right of $i(v)$ or $i(v')$.
	In particular, for pair $y_j,y_{j+1}$ with $j\in[k-1]$, there must be $z_j\in V(G)$
	adjacent to exactly one of these vertices. 
	So the endpoints of $i(z_j)$ are on the same side of $m$ (``left'' or ``right'' of $m$)
	and the endpoint closer to $m$ must be
	either between $\ell(y_j)$ and $\ell(y_{j+1})$
	or between $r(y_j)$ and $r(y_{j+1})$.

	Among the $k-1$ intervals $i(z_1),\ldots,i(z_{k-1})$, at least $h=\lceil(k-1)/2\rceil$
	are on the same side of $m$.
	Assume the majority is on the right; the other case is similar.
	So for $1\le j_1<\cdots<j_h \le k-1$ intervals $i(z_{j_1}),\ldots,i(z_{j_h})$ are all to the right of $m$.
	Define $B=(b_1,\ldots,b_h) = (z_{j_1},\ldots,z_{j_h})$
	and $A = (a_1,\ldots,a_h) = (y_{j_1+1},\ldots,y_{j_h+1})$.
	By definition, 
	$\ell(b_1)<r(a_1)<\ell(b_2)<r(a_2)<\cdots<\ell(b_h)<r(a_h)$,
	so $a_i$ is adjacent to $b_j$ if and only if $j \le i$.
	Hence $G[A,B]$ is isomorphic to $H^{\circ\circ}_h$.
	It is easy to check that 
	$h = \lceil (\lfloor\sqrt c\rfloor-1)/2\rceil = \lfloor \sqrt c/2\rfloor$.
\end{proof}

With these preparations, the proof of the main result of this section becomes easy.

\begin{theorem}
\label{thm:interval graphs}
Let $\cF$  be a stable class of interval graphs.
Then $\cF$ admits a constant-size equality-based adjacency labeling scheme, and hence $\RL(\cF_n) = O(1)$.
\end{theorem}
\begin{proof}
Since $\cF$ is stable, we have $\ch(\cF) = k$ for some constant $k$. Let $\cF'$ be the set of
true-twin-free members of $\cF$, and let $G' \in \cF'$. Then $\ch(G')\leq k$, and hence the clique
number of $G'$ is at most $4(k+1)^2$ by (contraposition of)
\cref{lem:interval-graphs-clique-bounds-chain}. So $\cF'$ is a class of interval graphs with clique
number bounded by $4(k+1)^2$, and hence by \cref{lem:bounded-clique-number-implies-RL-const}, it
admits a constant-size equality-based labeling scheme (and a size $O(k^2)$ adjacency sketch). By
\cref{lemma:true-twin reduction}, so does $\cF$.
\end{proof}

\begin{remark}
We obtain an adjacency sketch of size $O(k^2)$ for interval graphs with chain number $k$.  There is
another, less direct proof of the above theorem that uses \emph{twin-width} instead of
\cref{lemma:true-twin reduction}, but does not recover this explicit bound on the sketch size. We
prove in \cref{section:twin-width} that any stable class $\cF$ with bounded twin-width admits a
constant-size equality-based labeling scheme. Although interval graphs do not have bounded
twin-width, some subclasses of interval graphs (\eg unit interval graphs) are known to have
bounded twin-width~\cite{BKTW20}. Below we show that any stable class of interval graphs has bounded
twin-width. 
\end{remark}

\begin{proposition}
\label{prop:stable interval twin-width}
Let $\cF$ be a stable class of interval graphs. Then $\cF$ has bounded twin-width.
\end{proposition}
\begin{proof}
It is known and can easily be deduced from the definition of twin-width, that if $x$ and $y$ are twins (true or false) in a graph $G$, then $\tw(G) = \tw(G-x)$, where $G-x$ is the graph obtained by removing $x$ from $G$.

Let $G$ be an arbitrary graph in $\cF$ and $\ch(\cF) = k$. Let $G'$ be a twin-free induced subgraph of $G$ that is obtained from $G$ by iteratively removing twins. By the above, $\tw(G) = \tw(G')$, and, from \cref{lem:interval-graphs-clique-bounds-chain}, the clique number of $G'$ is at most $4(k+1)^2$. 
Then, by the argument in the proof of
\cref{lem:bounded-clique-number-implies-RL-const}, $G'$ has treewidth at most $4(k+1)^2-1$, and therefore twin-width at most $2^{O(k^2)}$ \cite{BD23}. Since $G$ was chosen arbitrarily, the twin-width of any graph in $\cF$ is bounded by~$2^{O(k^2)}$.
\end{proof}

\subsubsection{Permutation Graphs}
\label{section:permutation graphs}

We will denote by $\prec$ the standard partial order on $\bR^2$, where $(x_1,x_2) \prec (y_1,y_2)$ if $x_1 \leq y_1$ and $x_2 \leq y_2$ and $(x_1,x_2) \neq (y_1,y_2)$. 

The following alternative representation of permutation graphs is well-known (although one should
note that adjacency is sometimes defined as between \emph{incomparable} pairs, instead of comparable
ones -- this is equivalent since the complement of a permutation graph is again a permutation
graph).
\begin{proposition}
	For any permutation graph $G$ there is an injective mapping $\phi : V(G) \to \bR^2$ such that $x,y
	\in V(G)$ are adjacent if and only if $\phi(x),\phi(y)$ are comparable in the partial order $\prec$.
	This mapping also satisfies the property that no two vertices $x,y$ have $\phi(x)_i=\phi(y)_i$ for
	either $i \in [2]$.
\end{proposition}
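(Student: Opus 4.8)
The plan is to extract from the definition of permutation graphs the standard realization $v \mapsto (v,\pi(v))$ and then compose it with a reflection of one coordinate. By definition we may identify $V(G)$ with $[n]$ so that there is a permutation $\pi$ of $[n]$ with $i$ adjacent to $j$ in $G$ if and only if $i<j$ and $\pi(i)>\pi(j)$ (equivalently, $(i-j)(\pi(i)-\pi(j))<0$). I would then set
\[
  \phi(i) \define \bigl(i,\; n+1-\pi(i)\bigr) \in \bR^2 .
\]
(Any strictly decreasing function of $\pi(i)$ in the second coordinate would serve; I use $n+1-\pi(i)$ only to keep the range finite.) The claim is that $\phi$ has all three required properties, and the verification is short.

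Injectivity of $\phi$ is immediate, since $i \mapsto i$ is injective. The property that no two vertices agree in a coordinate holds because the first coordinates of $\phi(1),\dots,\phi(n)$ are exactly $1,\dots,n$, hence pairwise distinct, and the second coordinates are $n+1-\pi(1),\dots,n+1-\pi(n)$, which are pairwise distinct as $\pi$ is a bijection. For the adjacency characterization, fix distinct vertices $i\neq j$. If $i,j$ are adjacent, then without loss of generality $i<j$ and $\pi(i)>\pi(j)$, so $n+1-\pi(i)<n+1-\pi(j)$, whence $\phi(i)\prec\phi(j)$ and the two points are comparable. Conversely, if $\phi(i),\phi(j)$ are comparable, say $\phi(i)\prec\phi(j)$, then $i\le j$ and $n+1-\pi(i)\le n+1-\pi(j)$; since $i\neq j$ this forces $i<j$, and since $\pi$ is injective it forces $\pi(i)>\pi(j)$, so $i,j$ are adjacent in $G$. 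Hence $\phi(i),\phi(j)$ are comparable in $\prec$ precisely when $i,j$ are adjacent in $G$.

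The only step requiring any care — and the reason for the parenthetical warning in the statement — is the direction of the order: the ``textbook'' realization $i\mapsto(i,\pi(i))$ makes \emph{incomparable} pairs adjacent, so one must reflect a coordinate (equivalently, pass to the complement, which is again a permutation graph) to land on the ``comparable $=$ adjacent'' convention used here. Everything else is routine bookkeeping with strict versus non-strict inequalities, using injectivity of $i\mapsto i$ and of $\pi$ to upgrade the non-strict inequalities coming from $\prec$ to strict ones; I do not expect any genuine obstacle.
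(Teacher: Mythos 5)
Your proof is correct, and it is exactly the standard realization the paper implicitly relies on: the paper states this proposition without proof as "well-known," and your map $\phi(i)=(i,\,n+1-\pi(i))$ together with the reflection to convert the usual "incomparable $=$ adjacent" convention into the paper's "comparable $=$ adjacent" convention is precisely the intended argument. All three properties (injectivity, distinct coordinates, and the adjacency--comparability equivalence) are verified correctly, including the upgrade from the non-strict inequalities of $\prec$ to strict ones via injectivity of $i\mapsto i$ and of $\pi$.
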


We will call this the $\bR^2$-representation of $G$. From now on we will identify vertices of $G$
with their $\bR^2$-representation, so that a vertex $x$ of $G$ is a pair $(x_1,x_2) \in \bR^2$.  For
a permutation graph $G$ with fixed $\bR^2$-representation, any $i \in [2]$, and any $t_1 < t_2$, we
define
\[
V_i(t_1,t_2) \define \{ x \in V(G) : t_1 < x_i < t_2 \} \,.
\]

We need the following lemma, which gives a condition that allows us to
increment the chain number.

\begin{lemma}
	\label{lemma:increment chain number}
	For a graph $G$ and a set $A \subset V(G)$, suppose $u,v \in V(G) \setminus A$ are vertices such
	that $u$ has no neighbors in $A$, while $v$ is adjacent to $u$ and to every vertex in $A$. Then
	\[\ch(G[A \cup \{u,v\}]) > \ch(G[A]).\]
\end{lemma}
\begin{proof}
	Suppose $\ch(G[A])=k$ and
	let $\{ a_1, \dotsc, a_k, b_1, \dotsc, b_k\}$ be the vertices such that $a_i,b_j$ are adjacent if
	and only if $i \leq j$. Then set $a_{k+1} = u$ and $b_{k+1} = v$, and verify that
	vertices $\{a_1, \dotsc, a_{k+1}, b_1, \dotsc, b_{k+1}\}$ satisfy \cref{def:chain number}, so
	$\ch(G[A \cup \{u,v\}]) \geq k+1 > \ch(G[A])$.
\end{proof}

\newcommand{\twH}{H^{\mathrm{tw}}}
A bipartite graph $G = (X,Y,E)$ is called a \emph{chain graph} if it is an induced subgraph of a half-graph.  Chain graphs are exactly the $2K_2$-free bipartite graphs,
where $2K_2$ is the disjoint union of two edges.

\begin{proposition}
	\label{prop:twinned half-graph}
	For any $t \in \bR$, any $\bR^2$-representation of a permutation graph $G$, and for each $i \in [2]$, $G[V_i(-\infty,t),V_i(t,\infty)]$ is a chain graph.
\end{proposition}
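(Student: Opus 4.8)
The plan is to reduce to the fact recorded just above the statement that the chain graphs are precisely the $2K_2$-free bipartite graphs; thus it suffices to show that $H \define G[V_i(-\infty,t),V_i(t,\infty)]$ contains no induced $2K_2$. Fix the $\bR^2$-representation of $G$ and fix $i \in [2]$; by relabelling the two coordinates we may assume $i = 1$, so write $X = V_1(-\infty,t)$ and $Y = V_1(t,\infty)$. The first step is a clean description of adjacency inside $H$: for $x \in X$ and $y \in Y$ we have $x_1 < t < y_1$, and since no two vertices share a coordinate, $x$ and $y$ are adjacent in $G$ (equivalently, comparable in $\prec$) if and only if $x \prec y$, which, given that $x_1 < y_1$ already, holds if and only if $x_2 < y_2$.

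Next I would argue by contradiction. Suppose $H$ contains an induced $2K_2$ on vertices $x, x' \in X$ and $y, y' \in Y$, with edges $xy$ and $x'y'$ and non-edges $xy'$ and $x'y$. Applying the adjacency description above to each of the four pairs yields $x_2 < y_2$ and $x'_2 < y'_2$ from the edges, and $x_2 > y'_2$ and $x'_2 > y_2$ from the non-edges (using distinctness of second coordinates to make these inequalities strict). Chaining them gives $x_2 < y_2 < x'_2 < y'_2 < x_2$, a contradiction. Hence $H$ is a $2K_2$-free bipartite graph with parts $X$ and $Y$, and therefore a chain graph. The case $i = 2$ is identical after swapping the roles of the two coordinates.

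There is essentially no difficult step here. The only points that need care are that injectivity of the representation in each coordinate is what lets us convert ``$x$ and $y$ are not comparable'' into the strict inequality $x_2 > y'_2$ (rather than $\geq$), and that the forbidden $2K_2$ must be read as a colored bipartite induced subgraph whose two edges cross between $X$ and $Y$ — which is automatic since $H$ is bipartite with exactly those two parts.
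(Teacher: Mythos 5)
Your proof is correct and follows essentially the same route as the paper: reduce to $2K_2$-freeness, observe that adjacency across the cut $x_1 < t < y_1$ is determined by the second coordinate alone, and derive a contradiction from the resulting inequalities (the paper uses a WLOG ordering of the two left vertices where you chain all four inequalities into a cycle, but this is a cosmetic difference). No gaps.
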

\begin{proof}
	Let $V_1(-\infty,t) = \{a^{(1)}, \dotsc, a^{(s)} \}$ and $V_1(t, \infty) = \{b^{(1)}, \dotsc,
	b^{(t)}\}$ where the vertices $\{a^{(i)}\}$ and $\{b^{(i)}\}$ are sorted in increasing order in the
	second coordinate. Since $a^{(i)}_1 < t < b^{(j)}_1$ for every $i,j$, it holds that
	$a^{(i)},b^{(j)}$ are adjacent if and only if $a^{(i)}_2 \leq b^{(j)}_2$. 
	To prove the statement we will show that $G[V_i(-\infty,t),V_i(t,\infty)]$ is $2K_2$-free.
	Suppose, towards a contradiction, that $a^{(i_1)}, a^{(i_2)}, b^{(j_1)}, b^{(j_2)}$ induce a $2K_2$ in the graph,
	where $a^{(i_1)}$ is adjacent to $b^{(j_1)}$ and $a^{(i_2)}$ is adjacent to $b^{(j_2)}$.
	Assume, without loss of generality, that $a_2^{(i_1)} < a_2^{(i_2)}$. Since $a^{(i_2)}$ is adjacent to $b^{(j_2)}$,
	we have that $a_2^{(i_2)} \leq b_2^{(j_2)}$, which together with the previous inequality imply that 
	$a_2^{(i_1)} < b_2^{(j_2)}$, and hence $a^{(i_1)}$ is adjacent to $b^{(j_2)}$, a contradiction.
\end{proof}

Any \emph{proper} subclass of the class of chain graphs has a constant-size adjacency labeling scheme, because
the class of chain graphs is a minimal factorial class (see \cref{section:graph theory}). We give an explicit bound on the size of the labeling
scheme, so that we can get an explicit bound on the size of the labels for permutation graphs.

\begin{figure}
	\centering
	\begin{tikzpicture}[
		scale=1.5,
		empty/.style={fill=black!30},
		]
		\def\mypoints{%
			{}/right/b0/10/1,%
			{a^{(1)}}/below/a1/6/0,%
			{b^{(1)}}/above/b1/8/2,%
			{a^{(2)}}/left/a2/4.25/1,%
			{b^{(2)}}/above/b2/5/3,%
			{a^{(3)}}/left/a3/2/2.3,%
			{b^{(3)}}/above/b3/3/4.8,%
			{a^{(4)}}/left/a4/0/4.3,%
			{b^{(4)}}/above/b4/1/6,%
			{a^{(5)}}/left/a5/-1/5.5
		}
		\foreach \label/\dir/\name/\x/\y in \mypoints {
			\coordinate (\name) at (\x,\y) ;
		}
		
		\foreach \lower/\upper/\left/\right/\label/\filling in {%
			a1/b4/a5/b0//,%
			a1/a2/a1/b1/A_0/,%
			a1/a2/b1/b0/B_0/,%
			a2/b1/a1/b0/B_1/,%
			a1/b1/a2/a1/A_1/,%
			b1/b2/a2/a1/B_2/,%
			b1/b2/a3/a2/A_2/,%
			b2/b3/a3/a2/B_3/,%
			b2/b3/a4/a3/A_3/,%
			b3/b4/a4/a3/B_4/,%
			b3/b4/a5/a4/A_4/,%
			b1/b4/a1/b0//empty,%
			a1/b1/a5/a2//empty,%
			b2/b4/a2/a1//empty,%
			b1/b2/a5/a3//empty,%
			b3/b4/a3/a2//empty,%
			b2/b3/a5/a4//empty%
		} {
			\draw[thick,\filling] let \p1 = (\lower), \p2 = (\upper), \p3 = (\left), \p4 = (\right) 
			in (\x3,\y1) rectangle node {$\label$} (\x4,\y2);
		}
		
		\foreach \label/\dir/\name/\x/\y in \mypoints {
			 \ifnum\pdfstrcmp{\name}{b0}=0
			\else
				\node[vertex,label={[rounded corners=3pt,fill opacity=0,text opacity=1,fill=white,inner sep=2pt,\dir=3pt]$\label$}] at (\x,\y) {};
			\fi
		}
		
	\end{tikzpicture}
	\caption{The permutation graph decomposition.}
	\label{fig:permutation graph}
\end{figure}

\begin{proposition}
\label{prop:chain-labeling-scheme}
Let $\cF$ be a hereditary class of chain graphs of chain number at most $k$.
Then $\cF$ admits an adjacency labeling scheme of size $O(\log k)$.
\end{proposition}
\begin{proof}
Let $G \in \cF$, so that $G \sqsubset H_r^{\dcirc}$ for some $r \in \bN$. Then we can partition
$V(G)$ into independent sets $A$ and $B$, such that
the following holds. There is a total order $\prec$ defined on $V(G) = A \cup B$ such that for $a
\in A$ and $b \in B$, $a,b$ are adjacent if and only if $a \prec b$. Then we may identify each $a
\in A$ and each $b \in B$ with a number in $[n]$, such that the ordering $\prec$ is the natural
ordering on $[n]$.

Let $A_1, \dotsc, A_p \subset [n]$ be the set of (non-empty) maximal intervals such that each $A_i
\subseteq A$, and let $B_1, \dotsc, B_q \subseteq [n]$ be the set of (non-empty) maximal intervals
such that each $B_i \subseteq B$. We claim that $p,q \leq k+1$. Suppose for contradiction that $p
\geq k+2$. Since $A_1, \dotsc, A_p$ are maximal, there exist $b_1, \dotsc, b_{p-1} \in B$ such that
$a_1 < b_1 < a_2 < b_2 < \dotsm < b_{p-1} < a_p$, where we choose $a_i \in A_i$ arbitrarily. But
then $\{a_1, \dotsc, a_{p-1}\}, \{b_1, \dotsc, b_{p-1}\}$ is a witness that $\ch(G) \geq p-1 \geq
k+1$, a contradiction. A similar proof shows that $q \leq k+1$.

We construct adjacency labels for $G$ as follows. To each $x \in A$, assign 1 bit to indicate that
$x \in A$, and append the unique number $i \in [k+1]$ such that $x$ belongs to interval $A_i$. To
each $y \in B$, assign 1 bit to indicate that $y \in B$, and append the unique number $j \in [k+1]$
such that $y \in B_j$. It holds that for $x \in A, y \in B$, $x,y$ are adjacent if and only if
$i \leq j$. Therefore, on seeing the labels for $x$ and $y$, the decoder simply checks that $x \in
A$ and $y \in B$ (or vice versa) and outputs 1 if $i \leq j$.
\end{proof}

\begin{definition}[Permutation Graph Decomposition; see \cref{fig:permutation graph}]
	\label{def:permutation decomposition}
	For a permutation graph $G$ with a fixed $\bR^2$-representation, where $G,\overline G$ are both
	connected, we define the following partition.  Let $a^{(1)}$ be the vertex with minimum $a^{(1)}_2$
	coordinate, and let $b$ be the vertex with maximum $b_2$ coordinate. If $b_1 < a^{(1)}_1$, perform
	the following. Starting at $i=1$, construct the following sequence:
	\begin{enumerate}[label=\textup{(\arabic*)}]
		\item Let $b^{(i)}$ be the vertex with maximum $b^{(i)}_2$ coordinate among vertices with $b^{(i)}_1
		> a^{(i)}_1$.
		\item For $i > 1$, let $a^{(i)}$ be the vertex with minimum $a^{(i)}_1$ coordinate among
		vertices with $a^{(i)}_2 < b^{(i-1)}_2$.
	\end{enumerate}
	Let $\beta$ be the smallest number such that $b^{(\beta+1)}=b^{(\beta)}$ and $\alpha$ the smallest number
	such that $a^{(\alpha+1)}=a^{(\alpha)}$. Define these sets:
	\begin{align*}
		\text{For $2 \leq i \leq \alpha$, define } A_i &\define \{a^{(i+1)}\} \cup
		\left( V_1(a^{(i+1)}_1, a^{(i)}_1) \cap V_2(b^{(i-1)}_2, b^{(i)}_2) \right) \\
		\text{For $2 \leq i \leq \beta$, define } B_i &\define \{b^{(i)}\} \cup
		\left( V_1(a^{(i)}_1, a^{(i-1)}_1) \cap V_2(b^{(i-1)}_2, b^{(i)}_2) \right) \\
		A_1 &\define \{a^{(2)}\} \cup
		\left( V_1(a^{(2)}_1, a^{(1)}_1) \cap V_2(a^{(1)}_2, b^{(1)}_2)\right) \\
		A_0 &\define \{a^{(1)}\} \cup
		\left( V_1(a^{(1)}_1, b^{(1)}_1) \cap V_2(a^{(1)}_2, a^{(2)}_2) \right) \\
		B_1 &\define \{b^{(1)}\} \cup
		\left( V_1(a^{(1)}_1, \infty) \cap V_2(a^{(2)}_2, b^{(1)}_2) \right) \\
		B_0 &\define 
		\left( V_1(b^{(1)}_1, \infty) \cap V_2(a^{(1)}_2, a^{(2)}_2) \right) \,.
	\end{align*}
	If $b_1 > a^{(1)}_1$, define the map $\phi : \bR^2 \to \bR^2$ as $\phi(x) = (-x_1,x_2)$ and apply
	$\phi$ to each vertex in the $\bR^2$-representation of $G$; it is easily seen that the result is an
	$\bR^2$-representation of $\overline G$. Then apply the above process to $\overline G$.
\end{definition}

It is necessary to ensure that $b^{(1)}$ is well-defined, \ie, that the set of points $x$ with $x_1
> a^{(1)}_1$ is non-empty, so that the maximum is taken over a non-empty set.
\begin{proposition}
	If $G$ is connected then there exists $x \in V(G)$ such that $x_1 > a^{(1)}_1$.
\end{proposition}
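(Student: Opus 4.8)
The plan is to argue by contradiction, using two features of the $\bR^2$-representation recalled just above: adjacency in $G$ coincides with $\prec$-comparability, and no two vertices share a coordinate. Suppose that \emph{no} vertex $x$ of $G$ satisfies $x_1 > a^{(1)}_1$. Since the first coordinates are pairwise distinct, this forces $x_1 < a^{(1)}_1$ for every vertex $x \neq a^{(1)}$. On the other hand, $a^{(1)}$ was chosen to have the \emph{minimum} second coordinate, and the second coordinates are also pairwise distinct, so every $x \neq a^{(1)}$ has $x_2 > a^{(1)}_2$. Hence each $x \neq a^{(1)}$ satisfies $x_1 < a^{(1)}_1$ together with $x_2 > a^{(1)}_2$, so $x$ and $a^{(1)}$ are $\prec$-incomparable and therefore non-adjacent in $G$.

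This shows that $a^{(1)}$ is an isolated vertex of $G$. Since $G$ has at least two vertices, this contradicts connectedness of $G$, which proves the proposition. The only point requiring a word of care is the degenerate single-vertex case, where the claim is literally false; but this case does not occur in the setting of \cref{def:permutation decomposition}, where $\overline{G}$ is also assumed connected — and the smallest graph for which both $G$ and $\overline{G}$ are connected is $P_4$, so in fact $|V(G)| \geq 4$. I would therefore carry this standing assumption $|V(G)| \geq 2$ through the statement.

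There is essentially no obstacle here beyond bookkeeping: the proof is a one-liner once one observes that the minimality of $a^{(1)}_2$, combined with distinctness of coordinates, completely determines the position of every other vertex relative to $a^{(1)}$ (strictly below-left). I would dispatch the companion well-definedness remarks needed elsewhere in \cref{def:permutation decomposition} — e.g.\ that $a^{(i)}$ in step (2) is chosen over a non-empty set, or that the symmetric object exists after applying $\phi$ to pass to $\overline{G}$ — by the same kind of argument, swapping the roles of the two coordinates or of $G$ and $\overline{G}$ as appropriate.
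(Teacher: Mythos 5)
Your proof is correct and matches the paper's argument exactly: both argue that if no vertex lies to the right of $a^{(1)}$, then every other vertex lies strictly above and to the left of $a^{(1)}$, making it isolated and contradicting connectedness. Your extra remark about the single-vertex degenerate case is a harmless refinement the paper omits.
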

\begin{proof}
	Suppose otherwise. Then every $x \in V(G)$ distinct from $a^{(1)}$ has $x_2 > a^{(1)}_2$ by
	definition, and $x_1 < a^{(1)}_1$. But then $x$ is not adjacent to $a^{(1)}$. So $a^{(1)}$ has no
	neighbors, contradicting the assumption that $G$ is connected.
\end{proof}

\begin{proposition}
	If $G$ is connected and $b_1 < a^{(1)}_1$, then $b^{(1)} \neq b^{(2)}$.
\end{proposition}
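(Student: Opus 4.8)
The plan is to find a single vertex $w$ with $w_1 > a^{(2)}_1$ and $w_2 > b^{(1)}_2$. Granting this, the conclusion is immediate: $b^{(2)}$ is by definition the vertex of largest second coordinate among those whose first coordinate exceeds $a^{(2)}_1$, so $b^{(2)}_2 \ge w_2 > b^{(1)}_2$ and hence $b^{(2)} \ne b^{(1)}$.

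I would first collect the easy consequences of the fact that, in an $\bR^2$-representation, no two vertices agree in either coordinate. Since $b^{(1)}_1 > a^{(1)}_1$ we have $b^{(1)} \ne a^{(1)}$, and since $a^{(1)}_2$ is the global minimum second coordinate this gives $a^{(1)}_2 < b^{(1)}_2$. Likewise $b_1 < a^{(1)}_1 < b^{(1)}_1$ gives $b \ne b^{(1)}$, so the global maximum satisfies $b_2 > b^{(1)}_2$. Also $a^{(1)}$ is one of the vertices over which the minimum defining $a^{(2)}$ is taken (its second coordinate lies below $b^{(1)}_2$), so $a^{(2)}_1 \le a^{(1)}_1$, and in particular $b^{(1)}_1 > a^{(2)}_1$. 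Here $b^{(1)}$ is well-defined by the preceding proposition (connectivity of $G$ forces a vertex with first coordinate above $a^{(1)}_1$), and $a^{(2)}$ is well-defined since $a^{(1)}$ qualifies.

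Now I would run a connectivity argument on the set $T \define \{v \in V(G) : v_2 > b^{(1)}_2\}$. By the facts above, $b \in T$ whereas $a^{(1)} \notin T$, so a path in $G$ from $a^{(1)}$ to $b$ must contain an edge $\{w,v\}$ with $w \notin T$ and $v \in T$; I claim this $v$ is the witness sought above. Since $w_2 \le b^{(1)}_2 < v_2$, the comparable pair $w,v$ must satisfy $w \prec v$, so $w_1 < v_1$. It remains to show $v_1 > a^{(2)}_1$. If not, then $w_1 < v_1 \le a^{(2)}_1$, so $w_1 < a^{(2)}_1$; but by minimality of $a^{(2)}_1$ among vertices with second coordinate below $b^{(1)}_2$, any vertex whose first coordinate is strictly below $a^{(2)}_1$ has second coordinate $> b^{(1)}_2$ (it is $\ge b^{(1)}_2$ by minimality, and it cannot equal $b^{(1)}_2$, since the only vertex at that level is $b^{(1)}$, which has first coordinate $> a^{(2)}_1$). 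Hence $w_2 > b^{(1)}_2$, i.e. $w \in T$, contradicting the choice of the crossing edge. Therefore $v_1 > a^{(2)}_1$ and $v_2 > b^{(1)}_2$, as needed.

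The argument is short; the only place to be careful is the systematic use of distinctness of coordinates to upgrade $\le$ to $<$ — in particular, to rule out the degenerate possibilities that $w$ or $v$ coincides with $b^{(1)}$ or with $a^{(2)}$ — and I expect this bookkeeping, rather than any genuine difficulty, to be the main thing to get right.
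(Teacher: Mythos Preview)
Your proof is correct. The paper organizes the argument differently: it proceeds by contradiction, assuming $b^{(2)}=b^{(1)}$, and then observes that (i) every vertex $x$ with $x_1>a^{(2)}_1$ has $x_2\le b^{(1)}_2$ (by the defining maximality of $b^{(2)}$), while (ii) every vertex with $x_1<a^{(2)}_1$ has $x_2>b^{(1)}_2$ (by minimality of $a^{(2)}$ --- the same fact you use). These two sides are then pairwise incomparable, so no edge crosses the vertical line $x_1=a^{(2)}_1$; since $b^{(1)}$ lies to the right and $b$ to the left (as $b_2>b^{(1)}_2$ forces $b_1<a^{(2)}_1$ under the assumption), this contradicts connectivity. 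Your approach instead cuts horizontally at $x_2=b^{(1)}_2$, takes an edge on an $a^{(1)}$--$b$ path that crosses this cut, and shows its upper endpoint already witnesses $b^{(2)}\ne b^{(1)}$. Both rest on the same key structural observation; the paper's contradiction is a line or two shorter, while your version produces an explicit witness and avoids assuming $b^{(2)}=b^{(1)}$ at all.
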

\begin{proof}
	Suppose $b^{(2)}=b^{(1)}$. Then $b^{(1)}_2=b^{(2)}_2$ is maximum among all vertices $x$ with $x_1 >
	a^{(2)}_1$, so $b_1 < a^{(2)}_1$. But all vertices $x$ with $x_1 < a^{(2)}_1$ satisfy $x_2 >
	b^{(1)}_2=b^{(2)}_2$, so they cannot have an edge to $V_1(a^{(2)}_1,\infty)$. Both
	$V_1(a^{(2)}_1,\infty)$ and $V_1(-\infty, a^{(2)}_1)$ are non-empty, so the graph is not connected.
\end{proof}

\begin{proposition}
	\label{prop:permutation partition}
	If $G$ is connected, the sets $\{A_i\}_{i=0}^{\alpha}, \{B_i\}_{i=0}^{\beta}$ form a partition of
	$V(G)$.
\end{proposition}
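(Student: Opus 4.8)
The plan is to verify the partition property by a direct geometric analysis of the decomposition; I treat the case $b_1 < a^{(1)}_1$, the case $b_1 > a^{(1)}_1$ being symmetric after replacing $G$ by $\overline{G}$ (again a connected permutation graph) in \cref{def:permutation decomposition}. First I would record the basic behaviour of the defining sequence $a^{(1)}, b^{(1)}, a^{(2)}, b^{(2)}, \ldots$. A short mutual induction shows that $a^{(i)}_1$ is weakly decreasing and $b^{(i)}_2$ is weakly increasing in $i$: in the inductive step $a^{(i)}$ is an admissible candidate in the definition of $a^{(i+1)}$ since $a^{(i)}_2 < b^{(i-1)}_2 \le b^{(i)}_2$, so $a^{(i+1)}_1 \le a^{(i)}_1$, and then $b^{(i+1)}_2 \ge b^{(i)}_2$ because a maximum over a larger set of vertices is at least that over a subset. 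As all coordinates are distinct, both sequences are strictly monotone until they stabilize, so $\alpha,\beta < |V(G)|$ are well defined and the stabilization of one sequence forces that of the other within one step, giving $\alpha-1 \le \beta \le \alpha$. Using connectedness (as in the two preceding propositions) together with $b_1 < a^{(1)}_1$, I would then show that $b^{(\beta)}$ attains the global maximum of the second coordinate: if some vertex $w$ had $w_2 > b^{(\beta)}_2$, then, since $b^{(\beta)}_2 = b^{(\beta+1)}_2$ is the maximum of $x_2$ over $\{x : x_1 > a^{(\beta+1)}_1\}$, we get $w_1 < a^{(\beta+1)}_1$; hence $W := \{x : x_2 > b^{(\beta)}_2\}$ is nonempty and proper (it misses $a^{(\beta+1)}$), and connectedness yields an edge between $W$ and its complement, say $z \notin W$ adjacent to $w \in W$; as $z_2 \le b^{(\beta)}_2 < w_2$ this forces $z \prec w$, so $z_1 < w_1 < a^{(\beta+1)}_1$ while $z_2 < b^{(\beta)}_2$, contradicting that $a^{(\beta+1)}_1$ is the minimum of $x_1$ over $\{x : x_2 < b^{(\beta)}_2\}$.

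Next I would check disjointness. Each $A_i$ (resp.\ $B_i$) lies in a horizontal strip $V_2(\cdot,\cdot)$ and a vertical strip $V_1(\cdot,\cdot)$, with a single anchor vertex adjoined on a boundary. The horizontal strips $V_2(a^{(1)}_2,a^{(2)}_2)$, $V_2(a^{(2)}_2,b^{(1)}_2)$ and $V_2(b^{(i-1)}_2,b^{(i)}_2)$ for $i\ge 2$ are pairwise disjoint open intervals, and the breakpoints $a^{(1)}_2, a^{(2)}_2, b^{(1)}_2, \ldots, b^{(\beta)}_2$ are each realized by a unique vertex; within a common horizontal strip, $A_i$ lies strictly left of the line $x_1=a^{(i)}_1$ and $B_i$ strictly right of it, with $a^{(i)}$ and $b^{(i)}$ on opposite sides. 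The only nontrivial checks are then: (i) two $A$'s or two $B$'s whose vertical strips are consecutive — disjoint open intervals whose shared anchor lies on the boundary of exactly one; and (ii) $A_i$ against $B_j$, which can meet a common horizontal strip only when $i=j$, where they are separated by $x_1=a^{(i)}_1$. The low-index sets $A_0,A_1,B_0,B_1$ are checked directly against this scheme.

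Finally, for coverage I would take $x\in V(G)$; by the first step $a^{(1)}_2 \le x_2 \le b^{(\beta)}_2$. If $x_2$ equals one of the breakpoints then $x$ is the corresponding anchor vertex and lies in the set to which that anchor was adjoined; otherwise $x_2$ lies strictly inside one open horizontal strip. In the generic strip $V_2(b^{(i-1)}_2,b^{(i)}_2)$ with $i\ge 2$, minimality of $a^{(i+1)}$ over $\{x : x_2 < b^{(i)}_2\}$ gives $x_1 \ge a^{(i+1)}_1$ with equality only if $x=a^{(i+1)}\in A_i$, and maximality of $b^{(i-1)}$ over $\{x : x_1 > a^{(i-1)}_1\}$ forces $x_1 \le a^{(i-1)}_1$ with equality only if $x=a^{(i-1)}$, which is excluded since $a^{(i-1)}_2 < b^{(i-1)}_2$; hence $a^{(i+1)}_1 < x_1 < a^{(i-1)}_1$, and $x$ falls in $A_i$ or $B_i$ according to whether $x_1 < a^{(i)}_1$ or $x_1 > a^{(i)}_1$ (the case $x_1 = a^{(i)}_1$ would give $x=a^{(i)}$, whose height is below this strip). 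The two lowest strips are handled in the same way using minimality of $a^{(2)}$ and maximality of $b^{(1)}$. I expect the main obstacle to be precisely this bookkeeping at the extremes: pinning down where each sequence is strictly versus weakly monotone, proving that $b^{(\beta)}$ (and, symmetrically, the quantities governing the bottom strips) are global extrema — where connectedness and the sign of $b_1-a^{(1)}_1$ are essential — and handling the degenerate low-index sets $A_0,A_1,B_0,B_1$ and the possibly trivial top sets $A_\alpha,B_\beta$ so that every vertex is counted exactly once.
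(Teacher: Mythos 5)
Your proposal is correct and follows essentially the same route as the paper's proof: the paper also classifies each vertex by the smallest $i$ with $x_2 < b^{(i)}_2$ and then traps $x_1$ between $a^{(i+1)}_1$ and $a^{(i-1)}_1$ using the extremality in the definitions, and it also uses connectedness to empty the region above/left of the corner $(a^{(\alpha)}_1, b^{(\beta)}_2)$ (your ``$b^{(\beta)}_2$ is the global maximum'' claim is the same fact in a slightly different guise). Your writeup is in fact more explicit than the paper's about monotonicity of the sequences and about disjointness, which the paper treats as immediate from the open-strip definitions; the only micro-gap is that in your connectedness argument the neighbour $z$ of $w$ could be $b^{(\beta)}$ itself, a case that still contradicts $z_1 < a^{(\beta+1)}_1$ since $b^{(\beta)}_1 > a^{(\beta)}_1 \ge a^{(\beta+1)}_1$.
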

\begin{proof}
	Let $C = \{ x : x_1 \geq a^{(\alpha)}_1, x_2 \leq b^{(\beta)}_2 \}$.  There are no vertices $x$ with
	$x_1 > a^{(\alpha)}_1$ and $x_2 > b^{(\beta)}_2$, since this would contradict the definition of
	$b^{(\beta)}$; likewise, there are no vertices $x$ with $x_1 < a^{(\alpha)}_1$ and $x_2 <
	b^{(\beta)}_2$, since this would contradict the definition of $a^{(\alpha)}$.  Now suppose $x_1 <
	a^{(\alpha)}_1, x_2 > b^{(\beta)}_2$. Then $x$ has no edge to any vertex $y \in C$. Then the set of
	vertices with $x_1 < a^{(\alpha)}_1, x_2 > b^{(\beta)}_2$ must be empty, otherwise $V(G)$ is
	partitioned into $C, V(G) \setminus C$ where $V(G) \setminus C \neq \emptyset$ has no edges to $C$.
	
	Then we may assume that every vertex $x$ is in $C$; we will show that it belongs to some $A_i$ or
	$B_i$.  We may assume that $x$ has distinct $x_1,x_2$ coordinates from all $a^{(i)}, b^{(i)}$,
	otherwise we would have $x = b^{(i)}$ or $x = a^{(i)}$, so $x$ is an element of some $A_i$ or $B_i$.
	
	Let $i$ be the smallest number such that $x_2 < b^{(i)}_2$. Suppose $i \geq 2$. By definition it
	must be that $x_1 > a^{(i+1)}_1$. If $x_1 > a^{(i-1)}_1$, then $x_2 < b^{(i-1)}_2$ by definition,
	which contradicts the choice of $i$. So it must be that $a^{(i+1)}_1 < x_1 < a^{(i-1)}_1$ and
	$b^{(i-1)}_2 < x_2 < b^{(i)}_2$. The set of points that satisfy this condition is contained in $A_i
	\cup B_i$. Now suppose $i=1$. Again, it must be that $x_1 > a^{(2)}_1$ by definition, and also
	$a^{(1)}_2 < x_2 < b^{(1)}_2$. The points satisfying these conditions are easily seen to be
	partitioned by $A_0, B_0, A_1, B_1$.
\end{proof}

For any subset $A \subset V(G)$ and any two vertices $u,v \in V(G) \setminus A$, we will say that
$u,v$ \emph{cover} $A$ if $u$ has no edge into $A$ and $v$ is adjacent to $u$ and every vertex in
$A$. Then by \cref{lemma:increment chain number}, if $u,v$ cover $A$, then $\ch(G) > \ch(G[A])$.

\begin{proposition}
	\label{prop:permutation main chain}
	If $G$ is connected and $b_1 < a^{(1)}_1$, then
	for each $D \in \{A_i\}^{\alpha}_{i=0} \cup \{B_i\}_{i=0}^{\beta}$, $\ch(G[D]) < \ch(G)$.
\end{proposition}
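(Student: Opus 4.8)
The plan is to prove the statement by exhibiting, for every part $D$ of the decomposition, a \emph{cover} in the sense defined just before the proposition: a pair of vertices $u,v\in V(G)\setminus D$ with $u$ non-adjacent to every vertex of $D$, while $v$ is adjacent to $u$ and to every vertex of $D$. By \cref{lemma:increment chain number} (applied with $A=D$), any such cover gives $\ch(G[D]) < \ch(G[D\cup\{u,v\}])$, and since $G[D\cup\{u,v\}]$ is an induced subgraph of $G$ this is $\le \ch(G)$, which is exactly what we want. All the covering vertices will be chosen among the $O(1)$-many special points $a^{(j)},b^{(j)}$ produced in \cref{def:permutation decomposition}, so the real content is purely a statement about the relative positions of their coordinates.

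The first step is to record the geometry of the two sequences. Since distinct vertices have distinct first and distinct second coordinates, an easy induction shows $a^{(1)}_1 > a^{(2)}_1 > \dotsm > a^{(\alpha)}_1$ and $b^{(1)}_2 < b^{(2)}_2 < \dotsm < b^{(\beta)}_2$: the point is that $a^{(i)}$ has second coordinate below $b^{(i-1)}_2 < b^{(i)}_2$, hence is always an eligible candidate when $a^{(i+1)}$ is chosen, forcing $a^{(i+1)}_1 \le a^{(i)}_1$, and symmetrically for the $b$'s (with strictness coming from distinctness of coordinates, using that $a^{(i+1)}\ne a^{(i)}$ for $i<\alpha$ and $b^{(i)}\ne b^{(i-1)}$ for $i\le\beta$). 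Combining this with the defining properties of the sequences one obtains, for $i\ge2$: $a^{(i)}_1 < b^{(i)}_1 < a^{(i-1)}_1$ (otherwise $b^{(i)}$ would have been picked as $b^{(i-1)}$), $a^{(i)}_2 < b^{(i-1)}_2$, and, for $i<\alpha$, $b^{(i-1)}_2 < a^{(i+1)}_2 < b^{(i)}_2$ (otherwise $a^{(i+1)}$ would have been picked as $a^{(i)}$); also $a^{(1)}_2$ is the global minimum second coordinate, and every $b^{(j)}_2$ exceeds it. Consequently each part of the decomposition sits inside an axis-parallel box whose corners are controlled by consecutive terms of the two sequences.

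The second step is to exhibit the cover in each of the six shapes of parts. For $A_i$ with $2\le i\le\alpha$, take $u=a^{(i)}$ and $v=b^{(i)}$: every vertex of $A_i$ has first coordinate $<a^{(i)}_1<b^{(i)}_1$ and second coordinate in $(a^{(i)}_2,b^{(i)}_2)$, so $b^{(i)}$ strictly dominates $A_i\cup\{a^{(i)}\}$ (hence is adjacent to all of them) while $a^{(i)}$ has larger first and smaller second coordinate than every vertex of $A_i$ (hence is incomparable, so non-adjacent, to all of them). For $B_i$ with $2\le i\le\beta$ the dual choice $u=b^{(i-1)}$, $v=a^{(i)}$ works: $a^{(i)}$ is strictly dominated by every vertex of $B_i$ and by $b^{(i-1)}$, while $b^{(i-1)}$ has larger first and smaller second coordinate than every vertex of $B_i$. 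The boundary parts go the same way: $A_1$ with $(a^{(1)},b^{(1)})$, $A_0$ with $(a^{(2)},b^{(1)})$, $B_1$ with $(b^{(2)},a^{(2)})$, and $B_0$ with $(b^{(1)},a^{(1)})$; in each case one checks that the first listed vertex is incomparable to every vertex of the part, the second is comparable (dominates or is dominated by) every vertex of the part, and the two are comparable to one another. For $B_1$ and $B_0$, which are unbounded to the right, the correct witness is the ``top--left'' point $b^{(2)}$ resp.\ $b^{(1)}$, and here I would use the proposition above stating $b^{(1)}\ne b^{(2)}$ (valid under the hypothesis $b_1<a^{(1)}_1$) to guarantee $b^{(2)}\notin B_1$. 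Whenever the construction degenerates so that a part is a single vertex (e.g.\ $A_\alpha$, or $B_\beta$ when $\beta=\alpha+1$) its chain number is $0<\ch(G)$ and there is nothing to do.

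The main obstacle is not a single hard idea but keeping all of these verifications simultaneously correct: one must confirm that the chosen $u,v$ genuinely lie outside $D$, that $u$ is incomparable to \emph{every} vertex of $D$ — including the distinguished vertex ($a^{(i+1)}$ or $b^{(i)}$) glued onto the box, which is precisely why the sharp inequalities from the first step such as $b^{(i-1)}_2<a^{(i+1)}_2$ are needed — and that $u$ and $v$ are adjacent to each other. The unbounded parts $B_0,B_1$ are the one place where a naive witness (some vertex lying far to the upper left) need not be adjacent to $v$, so using $b^{(1)}$ resp.\ $b^{(2)}$ and checking that it dominates $a^{(1)}$ resp.\ $a^{(2)}$ via the strict monotonicity of the $b$-sequence is the only slightly delicate point.
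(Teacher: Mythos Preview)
Your proposal is correct and follows essentially the same approach as the paper: for each part $D$ you exhibit a covering pair $(u,v)$ drawn from the special points $a^{(j)},b^{(j)}$, and your choices of witnesses --- $(a^{(i)},b^{(i)})$ for $A_i$, $(b^{(i-1)},a^{(i)})$ for $B_i$ with $i\ge2$, $(a^{(2)},b^{(1)})$ for $A_0$, $(b^{(1)},a^{(1)})$ for $B_0$, and $(b^{(2)},a^{(2)})$ for $B_1$ --- coincide exactly with the paper's. Your writeup is in fact slightly more careful than the paper's in spelling out the strict monotonicity of the two coordinate sequences and in flagging the degenerate single-vertex parts (e.g.\ $A_\alpha=\{a^{(\alpha)}\}$), which the paper's argument passes over.
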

\begin{proof}
	Each $x \in B_0$ satisfies $x_1 > b^{(1)}_1 > a^{(1)}_1$ and $a^{(1)}_2 < x_2 < a^{(2)}_2 <
	b^{(1)}_2$, so $b^{(1)}$ has no neighbors in $B_0$ while $a^{(1)}$ is adjacent to $b^{(1)}$ and all
	vertices in $B_0$, so $a^{(1)},b^{(1)}$ cover $B_0$.
	
	Each $x \in B_1$ satisfies $x_1 > a^{(1)}_1 > b^{(2)}_1 > a^{(2)}_1$ and $a^{(1)}_2 < x_2 <
	b^{(1)}_2 < b^{(2)}$, so $b^{(2)}$ has no neighbors in $B_1$ and $a^{(2)}$ is adjacent $b^{(2)}$ and
	all vertices in $B_1$, so $a^{(2)}, b^{(2)}$ cover $B_1$.
	
	Each $x \in A_0$ satisfies $b^{(1)}_1 > x_1 \geq a^{(1)}_1 > a^{(2)}_1$ and $x_2 < a^{(2)}_2 <
	b^{(1)}_2$, so $a^{(2)}$ has no neighbors in $A_0$ and $b^{(1)}$ is adjacent to $a^{(2)}$ and all
	vertices in $A_0$, so $a^{(2)}, b^{(1)}$ cover $A_0$.
	
	Next we show that for any $1 \leq i \leq \alpha$, $A_i$ is covered by $a^{(i)}, b^{(i)}$. By
	definition each $x \in A_i$ satisfies $x_1 < a^{(i)}_1 < b^{(i)}_1$ and $a^{(i)}_2 < b^{(i-1)}_2 <
	x_2 < b^{(i)}_2$, so $a^{(i)}$ has no neighbors in $A_i$ while $b^{(i)}$ is adjacent to $a^{(i)}$
	and all vertices in $A_i$.
	
	Finally, we show that for any $2 \leq i \leq \beta$, $B_i$ is covered by $a^{(i)}, b^{(i-1)}$. By
	definition each $x \in B_i$ satisfies $a^{(i)}_1 < x_1 < a^{(i-1)}_1 < b^{(i-1)}_1$ and $a^{(i)}_2 <
	b^{(i-1)}_2 < x_2$, so $b^{(i-1)}$ has no neighbors in $B_i$ while $a^{(i)}$ is adjacent
	to $b^{(i-1)}$ and all vertices in $B_i$.
\end{proof}

\begin{lemma}
	\label{lemma:permutation decomposition}
	Let $G \in \cP$ be any permutation graph. Then one of the following holds:
	\begin{enumerate}[label=\textup{(\arabic*)}]
		\item $G$ is disconnected;
		\item $\overline G$ is disconnected;
		\item There is a partition $V(G) = V_1 \cup \dotsm \cup V_m$ such that:
		\begin{itemize}
			\item $\ch(G[V_i]) < \ch(G)$ for each $i \in [m]$, or $\ch(\overline{G[V_i]}) < \ch(\overline G)$ for each $i \in [m]$;
			\item For each $i \in [m]$, there is a set $J(i) \subset \{ V_t \}_{t \in [m]}$ of at most 4 parts such that
			for each $W \in J(i)$, $G[V_i,W]$ is a chain graph; and
			\item One of the following holds:
			\begin{itemize}
				\item For all $i \in [m]$ and $W \in \{V_t\}_{t \in [m]} \setminus J(i)$, $G[V_i,W]$ is a
				co-biclique; or,
				\item For all $i \in [m]$ and $W \in \{V_t\}_{t \in [m]} \setminus J(i)$, $G[V_i,W]$ is a
				biclique.
			\end{itemize}
		\end{itemize}
	\end{enumerate}
\end{lemma}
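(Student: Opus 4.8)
The plan is to feed each permutation graph into the decomposition of \cref{def:permutation decomposition} and then read off the required structure from the geometry of the pieces. First I would dispose of the easy alternatives: if $G$ is disconnected we are in case (1), and if $\overline G$ is disconnected we are in case (2), so from now on both $G$ and $\overline G$ are connected and we fix an $\bR^2$-representation of $G$, with $a^{(1)}$ the vertex of minimum second coordinate and $b$ the vertex of maximum second coordinate. In the subcase $b_1 < a^{(1)}_1$, applying \cref{def:permutation decomposition} produces the sets $\{A_i\}_{i=0}^{\alpha}$ and $\{B_i\}_{i=0}^{\beta}$; by \cref{prop:permutation partition} the nonempty ones form a partition of $V(G)$, and by \cref{prop:permutation main chain} each part $D$ satisfies $\ch(G[D]) < \ch(G)$, which is exactly the first option in the first bullet of case (3). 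It remains to verify the chain-graph/co-biclique structure.

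For this I would work with the open axis-aligned rectangles $R(D)\subseteq\bR^2$ in which the parts $D$ live, as given verbatim by the defining formulas in \cref{def:permutation decomposition}. Two facts do all the work. (i) If $R(D)$ and $R(D')$ are separated in exactly one coordinate — say there is a $t$ with $R(D)\subseteq V_1(-\infty,t)$ and $R(D')\subseteq V_1(t,\infty)$ — then $G[D,D']$ is an induced (colored bipartite) subgraph of $G[V_1(-\infty,t),V_1(t,\infty)]$, which is a chain graph by \cref{prop:twinned half-graph}; since chain graphs are hereditary, $G[D,D']\in\cC^{\dcirc}$. The same holds with the second coordinate. (ii) If $R(D)$ lies (weakly) to the upper-left of $R(D')$, i.e.\ every point of $R(D)$ has strictly smaller first coordinate and strictly larger second coordinate than every point of $R(D')$, then every vertex of $D$ is $\prec$-incomparable to every vertex of $D'$, so $G[D,D']$ is a co-biclique (and symmetrically for lower-right).

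Next I would check, by directly reading off the inequalities among the $a^{(i)}_1,b^{(i)}_1,a^{(i)}_2,b^{(i)}_2$ forced by the construction (the $a^{(i)}_1$ are strictly decreasing, the $b^{(i)}_2$ strictly increasing, $a^{(i)}_1<b^{(i)}_1$, $a^{(i)}_2<b^{(i-1)}_2$, etc.), that any two distinct rectangles fall into exactly one of the two situations of (i)/(ii), that no pair is in the lower-left/upper-right relation (so no pair induces a biclique), and that each part has at most four "staircase-neighbours" separated from it in exactly one coordinate — for a generic interior set $A_i$ these are just $B_i$ and $B_{i+1}$, with the bottom sets $A_0,A_1,B_0,B_1$ (whose formulas differ) accounting for the cases where the count reaches four. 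Taking $J(i)$ to be this set of $\le 4$ parts, (ii) shows all remaining pairs are co-bicliques, giving case (3) with the co-biclique alternative. For the subcase $b_1 > a^{(1)}_1$, I would apply $\phi(x)=(-x_1,x_2)$; as noted after \cref{def:permutation decomposition} this yields an $\bR^2$-representation of $\overline G$, and the sign flip turns the hypothesis into $b_1<a^{(1)}_1$ for that representation. Running the above on $\overline G$ gives a partition with $\ch(\overline{G}[D])<\ch(\overline G)$ for all parts, at most four parts $D'$ per part with $\overline G[D,D']\in\cC^{\dcirc}$, and all other pairs co-bicliques in $\overline G$. Since for disjoint $D,D'$ the bipartite graph $G[D,D']$ is the bipartite complement of $\overline G[D,D']$, and $\cC^{\dcirc}$ is closed under bipartite complementation (nested neighbourhoods stay nested, in reverse order), these become: chain graphs in $G$, and bicliques in $G$, respectively — exactly case (3) with the biclique alternative.

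The main obstacle is entirely the bookkeeping in the third paragraph: confirming that the finitely many types of pairs of rectangles $(A_i,A_j),(A_i,B_j),(B_i,B_j)$ each sit in one of the two admissible relative positions, and pinning down the neighbour count, all while treating the four boundary sets $A_0,A_1,B_0,B_1$ separately because their defining formulas are special. None of this is conceptually hard given \cref{prop:twinned half-graph} and \cref{prop:permutation main chain}, but it is the part that needs care to state cleanly rather than as a long list of coordinate inequalities.
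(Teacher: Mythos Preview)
Your proposal is correct and follows essentially the same approach as the paper: reduce to the case where both $G$ and $\overline G$ are connected, apply the decomposition of \cref{def:permutation decomposition}, invoke \cref{prop:permutation partition} and \cref{prop:permutation main chain} for the partition and chain-number drop, use \cref{prop:twinned half-graph} for the chain-graph pairs, argue co-bicliques from the upper-left/lower-right geometry of the rectangles, and handle the $b_1>a^{(1)}_1$ case by passing to $\overline G$ via $\phi$. The paper carries out the bookkeeping you flag as the main obstacle by explicitly listing $J(i)$ case by case (e.g.\ $J(A_1)=\{A_0,B_0,B_1,B_2\}$, $J(A_j)=\{B_j,B_{j+1}\}$ for $j>1$), which matches your ``at most four staircase-neighbours'' count; your observation that $\cC^{\dcirc}$ is closed under bipartite complementation is a point the paper leaves implicit in Case~2.
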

\begin{proof}
	Assume $G,\overline G$ are connected. Perform the decomposition of \cref{def:permutation
		decomposition}. We will let $m = \alpha+\beta+2$ and let $V_1, \dotsc, V_m$ be the sets
	$\{A_i\}_{i=0}^\alpha \cup \{B_i\}_{i=0}^\beta$.
	
	\textbf{Case 1:} $b_1 < a^{(1)}_1$. Then $V_1, \dotsc, V_m$ is a partition due to
	\cref{prop:permutation partition}, and $\ch(G[V_i]) < \ch(G), i \in [m]$ holds by \cref{prop:permutation main
		chain}. For $V_i = A_1$ we define the corresponding set $J(i) = \splitaftercomma{\{ A_0, B_0, B_1, B_2\}}$. Since all
	sets $V_i,V_j$ with $i \neq j$ are separated by a horizontal line or a vertical line, it holds by
	\cref{prop:twinned half-graph} that $G[V_i,V_j]$ is a chain graph. Now let $W \notin J(i)$.
	Observe that all $x \in W$ must satisfy $x_1 < a^{(2)}_1$ and $x_2 > b^{(1)}_2$, so $x$ is not
	adjacent to any vertex in $A_1$. So $G[A_1,W]$ is a co-biclique.
	
	Now for $V_i \in \{A_0, B_0, B_1\}$, we let $J(i) = \{ A_0, B_0, A_1, B_1\} \setminus \{V_i\}$. Similar
	arguments as above hold in this case to show that $G[V_i,W]$ is a co-biclique for each
	$W \notin J(i)$.
	
	For $V_i = A_j$ for some $j > 1$, we define $J(i) = \{ B_j, B_{j+1} \}$. For any $W \notin J(i)$
	with $W \neq A_j$, it holds either that all $x \in W$ satisfy $x_1 < a^{(i+1)}_1$ and $x_2 >
	b^{(j)}_2$, or that all $x \in W$ satisfy $x_1 \geq a^{(j)}_1$ and $x_2 \leq b^{(j-1)}_2$; in either
	case $x$ is not adjacent to any vertex in $A_j$, so $G[A_j,W]$ is a co-biclique.
	
	For $V_i = B_j$ for some $j > 1$, we define $J(i) = \{ A_j, A_{j-1} \}$. Similar arguments to the
	previous case show that $G[B_j, W]$ is a co-biclique for each $W \notin J(i), W \neq B_j$. This
	concludes the proof for Case 1.
	
	\textbf{Case 2:} $b_1 > a^{(1)}_1$. In this case we transform the $\bR^2$-representation of $G$
	using $\phi$ to obtain an $\bR^2$-representation of $\overline G$ and apply the arguments above to
	obtain $V_1, \dotsc, V_m$ such that $\ch(\overline G[V_i]) < \ch(\overline G)$ for each $i \in [m]$,
	and each $V_j \in \{V_t\}_{t \in [m]} \setminus J(i)$ satisfies that $\overline{G}[V_i,V_j]$ is a
	co-biclique; then $G[V_i,V_j]$ is a biclique as desired.
\end{proof}

\begin{theorem}
\label{thm:permutation graphs}
Let $\cF$ be a stable subclass of permutation graphs. Then $\cF$ admits a constant-size
equality-based labeling scheme, and hence $\RL(\cF) = O(1)$.
\end{theorem}
\begin{proof}
Since $\cF$ is stable, we have $\ch(\cF) = k$ for some constant $k$.

We apply an argument similar to \cref{lemma:bipartite decomposition}. For any $G \in \cF$, we
construct a decomposition tree where each node is associated with either an induced subgraph of
$G$, or a bipartite induced subgraph of $G$, with the root node being $G$ itself. For each node
$G'$, we decompose $G'$ into children as follows,
\begin{enumerate}
  \item If $G'$ is a chain graph, the node is a leaf node.
  \item If $G'$ is disconnected, call the current node a $D$-node, and let the children $G_1,
    \dotsc, G_t$ be the connected components of $G'$.
  \item If $\overline G'$ is disconnected, call the current node a $\overline D$-node, and let
    $C_1, \dotsc, C_t \subseteq V(G')$ be such that $\overline{G}'[C_i], i \in [t]$ are the connected components of 
    $\overline G'$. Define the children to be $G_i = G[C_i], i \in [t]$.
  \item Otherwise construct $V_1, \dotsc, V_m$ as in \cref{lemma:permutation decomposition} and
    let the children be $G[V_i]$ for each $i \in [m]$ and $G[V_i,V_j]$ for each $i,j$ such that $i \in [m]$ and
    $V_j \in J(i)$. Call this node a $P$-node.
\end{enumerate}

We will show that this decomposition tree has bounded depth.  As in the decomposition for bipartite
graphs, on any leaf-to-root path there cannot be two adjacent $D$-nodes or $\overline D$-nodes. As
in the proof of \cref{cl: Qk-properties}, if $G''$ is associated with a $D$-node and its parent $G'$
is associated with a $\overline D$-node, and $G'''$ is any child of $G''$, then $\ch(G') >
\ch(G''')$. On the other hand, if $G''$ is associated with a $\overline D$-node and its parent is
associated with a $D$-node, then $\ch(\overline{G'}) > \ch(\overline{G'''})$.

Now consider any $P$-node associated with $G'$, with child $G''$. By \cref{lemma:permutation
decomposition}, it holds that either $G''$ is a bipartite induced subgraph of $G'$ that is a chain
graph, or $G''$ has $\ch(G'') < \ch(G')$ or $\ch(\overline{G''}) < \ch(\overline{G'})$.  It is easy
to verify that $\ch(\overline{G}) \leq \ch(G)+1$ for any graph $G$. Now, since every sequence
$G''',G'',G'$ of inner nodes along the leaf-to-root path in the decomposition tree must satisfy
$\ch(G''') < \ch(G')$ or $\ch(\overline{G'''}) < \ch(\overline{G'})$ and $\ch(\overline G) \leq
k+1$, it must be that the depth of the decomposition tree is at most $2(2k+1)$.

Now we construct an equality-based labeling scheme. For a vertex $x$, we construct a label at each
node $G'$ inductively as follows.
\begin{enumerate}
  \item If $G'$ is a leaf node, it is a chain graph with chain number at most $k$. We may assign a
    label of size $O(\log k)$ due to \cref{prop:chain-labeling-scheme}.
  \item If $G'$ is a $D$-node with children $G_1, \dotsc, G_t$, append the pair $(D \eqLabelSep i)$ where
    the equality code $i$ is the index of the child $G_i$ that contains $x$, and recurse on $G_i$.
  \item If $G'$ is a $\overline D$-node with children $G_1, \dotsc, G_t$, append the pair
    $(\overline D \eqLabelSep i)$ where the equality code $i$ is the index of the child $G_i$ that
    contains $x$, and recurse on $G_i$.
  \item If $G'$ is a $P$-node, let $V_1, \dotsc, V_m$ be partition of $V(G')$ as in
    \cref{lemma:permutation decomposition}, and for each $i$ let $J(i)$ be the (at most 4) indices
    such that $G'[V_i,V_j]$ is a chain graph when $j \in J(i)$. Append the tuple
    \[
      (P, b, \ell_1(x),\ell_2(x), \ell_3(x), \ell_4(x) \eqLabelSep i, j_1,  j_2,  j_3, j_4)
    \]
    where $b$ indicates whether all $G'[V_i,V_j]$, $j \notin J(i)$ are bicliques or co-bicliques; the equality code $i$ is the index such that $x \in V_i$, the equality codes $j_1,
    \dotsc, j_4$ are the elements of $J(i)$, and $\ell_s(x)$ is the $O(\log k)$-bits adjacency
    label for $x$ in the chain graph $G'[V_i, V_{j_s}]$. Then, recurse on the child
    $G'[V_i]$.
\end{enumerate}
Given labels for $x$ and $y$, which are sequences of the tuples above, the decoder iterates
through the pairs and performs the following. On pairs $(D, i), (D,j)$ the decoder outputs 0 if $i
\neq j$, otherwise it continues. On pairs $(\overline D, i), (\overline D, j)$, the decoder
outputs 1 if $i \neq j$, otherwise it continues. On tuples
\begin{align*}
      (P, b, \ell_1(x),\ell_2(x), \ell_3(x), \ell_4(x) &\eqLabelSep i, j_1,  j_2,  j_3, j_4) \\
      (P, b, \ell_1(y),\ell_2(y), \ell_3(y), \ell_4(y) &\eqLabelSep i', j'_1,  j'_2,  j'_3, j'_4) \,,
\end{align*}
the decoder continues to the next tuple if $i=i'$. Otherwise,
the decoder outputs 1 if $i \notin \{j'_1, \dotsc, j'_4\}$ and $i' \notin \{j_1, \dotsc, j_4\}$
and $b$ indicates that $G'[V_i,V_j]$ are bicliques for $j \notin J(i)$; it outputs 0 if $b$
indicates otherwise. If $i = j'_s$ and $i' = j_t$ then the decoder outputs the adjacency of $x,y$
using the labels $\ell_t(x), \ell_s(y)$. On any tuple that does not match any of the above
patterns, the decoder outputs 0.

Since the decomposition tree has depth at most $2(2k+1)$, each label consists of $O(k)$ tuples.
Each tuple contains at most $O(\log k)$ prefix bits (since adjacency labels for the chain graph with
chain number at most $k$ have size at most $O(\log k)$) and at most $5$ equality codes. So this is
an $( O(k\log k), O(k) )$-equality-based labeling scheme.

The correctness of the labeling scheme follows from the fact that at any node $G'$, if $x,y$
belong to the same child of $G'$, the decoder will continue to the next tuple. If $G'$ is the
lowest common ancestor of $x,y$ in the decomposition tree, then $x$ and $y$ are adjacent in $G$
if and only if they are adjacent in $G'$. If $G'$ is a $D$- or $\overline D$-node then adjacency
is determined by the equality of $i,j$ in the tuples $(D \eqLabelSep i), (D \eqLabelSep j)$ or $(\overline D \eqLabelSep
i), (\overline D \eqLabelSep j)$. If $G'$ is a $P$-node and $i \notin J(i')$ (equivalently, $i' \notin
J(i)$) then adjacency is determined by $b$. If $i \in J(i')$ (equivalently, $i' \in J(i)$) then $i =
j'_s$ and $i' = j_t$ for some $s,t$, and the adjacency of $x,y$ is equivalent to their adjacency in
$G[V_i, V_{i'}] = G[V_{j'_s}, V_{j_t}]$, which is a chain graph, and it is determined by the
labels $\ell_t(x), \ell_s(y)$.
\end{proof}

\begin{remark}
We get an explicit $O(k \log k)$ bound on the size of the adjacency sketch in terms of the chain
number $k$, due to \cref{prop:eq-label-to-sk};
this explicit bound would not arise from the alternate proof that goes through the twin-width
(proper subclasses of permutation graphs have bounded twin-width \cite{BKTW20}, so we could apply
\cref{thm:twin width}).
\end{remark}

\subsection{Monogenic Bipartite Graphs}
\label{section:bipartite graphs}

As explained in \cref{section:correspondence}, any constant-cost communication problem is equivalent
to the problem of deciding adjacency in a hereditary class of \emph{bipartite} graphs. In this
section we answer \cref{question:main} for the hereditary classes of bipartite graphs which have at
most factorial speed and are defined by a single forbidden induced bipartite subgraph.
The following theorem is a formal restatement of \cref{thm:intro-monogenic} from the introduction:

\begin{restatable}{theorem}{thmbipartite}
\label{thm:bipartite monogenic}\RestateRemark
Let $H$ be a bipartite graph such that the class of $H$-free bipartite graphs is factorial. Then
any hereditary subclass $\cF$ of the $H$-free bipartite graphs has a constant-size PUG if and only
if $\cF$ is stable.
\end{restatable}

To prove this theorem, we require new structural results for some classes of bipartite graphs.
Previous work \cite{All09,LZ17} has shown that a class of $H$-free bipartite graphs is factorial
only when $H$ is an induced subgraph of $P_7, S_{1,2,3}$, or one of the infinite set
$\{F^*_{p,q}\}_{p,q \in \bN}$ (defined in \cref{sec:mono-bip-graph-families}). We construct a new
decomposition scheme for the $F^*_{p,q}$-free graphs whose depth is controlled by the chain number,
and we show that the chain number controls the depth of the decomposition scheme from  \cite{LZ17} for $P_7$-free graphs.

As a result, we get a $\poly(n)$-size universal graph for any stable subclass of the $P_7$-free
bipartite graphs. The $P_7$-free bipartite graphs form a factorial class, but existence of a
$\poly(n)$-size universal graph for this class is not known.
We take this as evidence that, independent of randomized communication, the study of stable graph classes 
might allow progress on the IGQ.

We remark that, under a conjecture of \cite{LZ17}, if our theorem was proved for the classes of
bipartite graphs obtained by excluding only \emph{two} graphs $H_1,H_2$, it would establish that
stability is characteristic of the constant-PUG factorial classes of bipartite graphs that are
obtained by excluding any \emph{finite} set of induced subgraphs.

\subsubsection{Decomposition Scheme for Bipartite Graphs}

In this section we define a decomposition scheme for bipartite graphs that we will use to establish
constant-size adjacency sketches for the stable subclasses of factorial monogenic classes of bipartite graph.

\begin{definition}[$(\cQ,k)$-decomposition tree]
\label{Qkd decomposition tree}
Let $G=(X,Y,E)$ be a bipartite graph, $k \geq 2$, and let $\cQ$ be a hereditary class of bipartite
graphs.  A graph $G$ admits a \emph{$(\cQ,k)$-decomposition tree of depth $d$} if there is a tree of
depth $d$ of the following form, with $G$ as the root. Each node of the tree is a bipartite graph
$G' = G[X',Y']$ for some $X' \subseteq X, Y' \subseteq Y$, labelled with either $L, D, \overline D$,
or $P$ as follows
	
	\begin{enumerate}[label=\textup{(\arabic*)}]
		\item $L$ (\emph{leaf node}): The graph $G'$ belongs to $\cQ$.
		
		\item $D$ (\emph{$D$-node}): The graph $G'$ is disconnected. 
		There are sets $X_1', \dotsc, X_t' \subseteq X'$ and $Y_1', \dotsc, Y_t' \subseteq Y'$ 
		such that $G[X_1',Y_1'], \dotsc, G[X_t',Y_t']$ are the connected components of $G'$. 
		The children of this decomposition tree node are $G[X_1',Y_1'], \dotsc, G[X_t',Y_t']$.
		
		\item $\overline D$ (\emph{$\overline{D}$-node}): The graph $\bc{G'}$ is disconnected. 
		There are sets $X_1', \dotsc, X_t' \subseteq X'$ and $Y_1', \dotsc, Y_t' \subseteq Y'$ 
		such that $\bc{G[X_1',Y_1']}, \dotsc,  \bc{G[X_t',Y_t']}$ 
		are the connected components of $\bc{G'}$. 
		The children of this decomposition tree node are $G[X_1',Y_1'], \dotsc, G[X_t',Y_t']$.
		
		\item $P$ (\emph{$P$-node}):
		The vertex set of $G'$ is partitioned into at most $2k$ \emph{non-empty} sets $X_1', X_2', \ldots, X_p' \subseteq X'$ and $Y_1', Y_2', \ldots, Y_q' \subseteq Y'$, 
		where $p \leq k$, $q \leq k$. The children of this decomposition tree node are $G[X_i', Y_j']$, for all $i \in[p]$,  $j \in [q]$.
		We say that the $P$-node $G'$ is \emph{specified} by the partitions $X_1', X_2', \ldots, X_p'$ and 
		$Y_1', Y_2', \ldots, Y_q'$.
	\end{enumerate}
\end{definition}

\begin{lemma}
\label{lemma:bipartite decomposition}
Let $k \geq 2$ and $d \geq 1$ be natural constants, and let $\cQ$ be a class of bipartite graphs that
admits a constant-size equality-based adjacency labeling scheme.
Let $\cF$ be a class of bipartite graphs such that each $G \in \cF$ admits a $(\cQ, k)$-decomposition
tree of depth at most $d$. Then $\cF$ admits a constant-size equality-based adjacency labeling scheme. 
\end{lemma}
\begin{proof}
Let $G=(X,Y,E) \in \cF$. We fix a $(\cQ, k)$-decomposition tree of depth at most $d$ for $G$.
For each node $v$ in the decomposition tree we write $G_v$ for the induced subgraph
of $G$ associated with node $v$.  Each leaf node $v$ has $G_v \in \cQ$. 
For some constants $s$ and $r$, we fix an $(s,r)$-equality-based adjacency labeling scheme for $\cQ$, and for each leaf node $v$, we denote by $\ell'_v$ the function that assigns labels to the vertices of $G_v$ under this scheme.

For each vertex $x$ we will construct a label $\ell(x)$ that consists of a constant number of tuples (as in \cref{rm:conv-form}),
where each tuple contains one prefix of at most two bits, and at most two equality codes.
First, we add to $\ell(x)$ a tuple  $(\alpha(x) \eqLabelSep \mathnormal- )$, where $\alpha(x)=0$ if $x \in X$, and $\alpha(x)=1$
if $x \in Y$. Then we append to $\ell(x)$ tuples defined inductively.
Starting at the root of the decomposition tree, for each node $v$ of the tree where $G_v$ 
contains $x$, we add tuples $\ell_v(x)$ defined as follows.  
Write $X' \subseteq X, Y' \subseteq Y$ for the vertices of $G_v$.
\begin{itemize}
\item If $v$ is a leaf node, then $G_v \in \cQ$, and we define $\ell_v(x) = (L \eqLabelSep \mathnormal-), \ell'_v(x)$.

\item If $v$ is a $D$-node then $G_v$ is disconnected, with sets $X'_1, \dotsc, X'_t \subseteq X',
Y'_1, \dotsc, Y'_t \subseteq Y$ such that the children $v_1, \dotsc, v_t$ are the connected
components $G_v[X'_1, Y'_1], \dotsc, G_v[X'_t, Y'_t]$ of $G_v$. 
We define $\ell_v(x) = (D \eqLabelSep j), \ell_{v_j}(x)$, where $j \in [t]$ is the unique index such that $x$ belongs to the connected component $G_v[X'_j, Y'_j]$, and $\ell_{v_j}(x)$ is the inductively defined label for the child node $v_j$.

\item If $v$ is a $\overline D$-node then $\bc{G_v}$ is disconnected, with sets $X'_1, \dotsc, X'_t \subseteq X',
Y'_1, \dotsc, Y'_t \subseteq Y$ such that $\bc{G_v[X_1',Y_1']}, \dotsc,  \bc{G_v[X_t',Y_t']}$ 
are the connected components of $\bc{G_v}$, and the children $v_1, \dotsc, v_t$ of $v$ are the graphs $G_v[X'_1, Y'_1], \dotsc, G_v[X'_t, Y'_t]$. 
We define $\ell_v(x) = (\overline D \eqLabelSep j), \ell_{v_j}(x)$, where $j \in [t]$ is the unique index such that $x$ belongs to $G_v[X'_j, Y'_j]$, and $\ell_{v_j}(x)$ is the inductively defined label for the child node $v_j$.

\item If $v$ is a $P$-node then let $X'_1, \dotsc, X'_p \subseteq X', Y'_1, \dotsc,
Y'_q \subseteq Y'$ be the partitions of $X',Y'$ with $p,q\le k$. 
For each $(i,j) \in [p]\times[q]$, let $v_{i,j}$ be the child
node of $v$ corresponding to the subgraph $G_v[X'_i,Y'_j]$. 
If $x \in X$, then there is a unique $i \in [p]$ such that $x \in X'_i$, and
we define $\ell_v(x) = (P \eqLabelSep i, q), \ell_{v_{i,1}}(x), \dotsc, \ell_{v_{i,q}}(x)$,
where $\ell_{v_{i,j}}(x)$ is the label assigned to $x$ at node $v_{i,j}$. 
If $x \in Y$, then we define $\ell_v(x) = (P \eqLabelSep i, p), \ell_{v_{1,i}}(x), \dotsc, \ell_{v_{p,i}}(x)$, 
where $i \in [q]$ is the unique index such that $x \in Y'_i$.
\end{itemize}
First, we will estimate the size of the label $\ell(x)$ produced by the above procedure.
For every leaf node $v$, the label $\ell_v(x)$ of $x$ is a tuple consisting of an $s$-bit prefix and $r$ equality codes.
Let $f(i)$ be the maximum number of tuples added to $\ell(x)$ by a node $v$ at level $i$ of the decomposition tree,
where the root node belongs to level $0$.
Then, by construction, $f(i) \leq 1 + k \cdot f(i+1)$ and $f(d-1) = 1$, which implies that the total
number of tuples in $\ell(x)$ does not exceed $f(0) \leq k^d$. 
Since every tuple contains a prefix
with at most $s' = \max\{2, s\}$ bits, and at most $r' = \max\{ 2, r \}$ equality codes, we have that
the label $\ell(x)$ contains a prefix with at most $s' k^d$ bits, and at most $r' k^d$ equality codes.

We will now show how to use the labels to define an equality-based adjacency decoder. 
Let $x$ and $y$ be two arbitrary vertices of $G$. The decoder first checks the first tuples 
$(\alpha(x) \eqLabelSep -)$ and $(\alpha(y) \eqLabelSep -)$ of the labels $\ell(x)$ and $\ell(y)$ respectively, to ensure that
$x,y$ are in different parts of $G$ and outputs 0 if they are not. We may now assume
$x \in X, y \in Y$. The remainder of the labels are of the form $\ell_v(x)$ and $\ell_v(y)$, where
$v$ is the root of the decomposition tree. 
\begin{itemize}
\item If the labels $\ell_v(x),\ell_v(y)$ are of the form $(L \eqLabelSep -), \ell'_v(x)$ and $(L \eqLabelSep -), \ell'_v(y)$, then the decoder simulates the decoder for the labeling scheme for $\cQ$, on inputs
$\ell'_v(x),\ell'_v(y)$, and outputs the correct adjacency value.

\item If the labels $\ell_v(x),\ell_v(y)$ are of the form $(D \eqLabelSep i), \ell_{v_i}(x)$ and
$(D \eqLabelSep j), \ell_{v_j}(y)$, the decoder outputs 0 when $i \neq j$ (\ie, $x,y$ are in different connected
components of $G_v$), and otherwise it recurses on $\ell_{v_i}(x), \ell_{v_i}(y)$.

\item If the labels $\ell_v(x),\ell_v(y)$ are of the form $(\overline D \eqLabelSep i), \ell_{v_i}(x)$ and
$(\overline D \eqLabelSep j), \ell_{v_j}(y)$, the decoder outputs 1 when $i \neq j$ (\ie, $x,y$ are in different connected
components of $\bc{G_v}$ and therefore they are adjacent in $G_v$), and otherwise it
recurses on $\ell_{v_i}(x), \ell_{v_i}(y)$.

\item If the labels $\ell_v(x)$, $\ell_v(y)$ are of the form $(P \eqLabelSep i, q), \ell_{v_{i,1}}(x), \dotsc, \ell_{v_{i,q}}(x)$
and $\splitaftercomma{(P \eqLabelSep j, p), \ell_{v_{1,j}}(y), \dotsc, \ell_{v_{p,j}}(y)}$ the decoder recurses on 
$\ell_{v_{i,j}}(x)$ and $\ell_{v_{i,j}}(y)$.
\end{itemize}
It is routine to verify that the decoder will output the correct adjacency value for $x,y$.
\end{proof}

\begin{remark}[$(\cQ,k)$-tree for general graphs]
A similar decomposition scheme can be used for non-bipartite graph classes; we do this for
permutation graphs in \cref{section:permutation graphs}.
\end{remark}

\subsubsection{Monogenic Classes of Bipartite Graphs}
\label{sec:mono-bip-graph-families}

Let $\cH$ be a \emph{finite} set of bipartite graphs. It is known \cite{All09} that if the class of
$\cH$-free bipartite graphs is at most factorial, then $\cH$ contains a forest and a graph whose bipartite complement is a forest. The converse was conjectured in \cite{LZ17}, where it was verified for monogenic classes of bipartite graphs. More specifically, 
it was shown that, for a colored bipartite graph $H$, the class of $H$-free bipartite graphs is at most factorial if and only if both $H$ and its bipartite complement is a forest.  
It is not hard to show that a colored bipartite graph $H$ is a forest and its bipartite complement is
a forest if and only if $H$ is an induced subgraph of $S_{1,2,3}, P_7$, or one of the graphs $F_{p,q}^*$, $p,q \in \mathbb{N}$ defined below.

\begin{figure}[tbh]
\centering
	\begin{tikzpicture}[xscale=.9,yscale=1.4,semithick]
		\useasboundingbox (-3,-.5) rectangle (18,1.75);
		\begin{scope}
			\foreach \i/\x/\y in {%
					1/0/0,2/0/1,3/1/1,4/1.5/0,5/-1/1,6/-1.5/0,7/-2/1%
				}{
				\node[vertex] (S\i) at (\x,\y) {};
			}
			\foreach \u/\v in {1/2,1/3,1/5,3/4,5/6,6/7} {
				\draw (S\u) to (S\v) ;
			}
			\node at (0,-.5) {$S_{1,2,3}$} ;
		\end{scope}
		\begin{scope}[shift={(6,0)}]
			\foreach \i/\x/\y in {%
					1/-1.5/0,2/-1/1,3/-.5/0,4/0/1,5/.5/0,6/1/1,7/1.5/0
				}{
				\node[vertex] (P\i) at (\x,\y) {};
			}
			\foreach \u/\v in {1/2,2/3,3/4,4/5,5/6,6/7} {
				\draw (P\u) to (P\v) ;
			}
			\node at (0,-.5) {$P_7$} ;
		\end{scope}
		\begin{scope}[shift={(13,0)}]
			\def\p{3}\def\q{5}
			\node[vertex] (c) at (0,1) {} ;
			\begin{scope}[shift={(-.5,0)}]
				\node[vertex] (a) at (0,0) {} ;
				\foreach \i in {1,...,\p}{
					\node[vertex] (a\i) at (-0.5*\i,1) {};
				}
				\foreach \i in {1,...,\p} {
					\draw (a) to (a\i) ;
				}
				\draw (a) to (c);
			\end{scope}
			\begin{scope}[shift={(.5,0)}]
				\node[vertex] (b) at (0,0) {} ;
				\foreach \i in {1,...,\q}{
					\node[vertex] (b\i) at (0.5*\i,1) {};
				}
				\foreach \i in {1,...,\q} {
					\draw (b) to (b\i) ;
				}
				\draw (b) to (c);
			\end{scope}
			\node[vertex] at (4,1) {} ;
			\node at (0,-.5) {$F_{\p,\q}^*$} ;
		\end{scope}
	\end{tikzpicture}
	\caption{The bipartite graphs from \cref{def:S123-P7-Fpq}}
  \label{fig:maximal monogenic bipartite graphs}
\end{figure}

\begin{definition}[$S_{1,2,3}$, $P_7$, $F_{p,q}^*$]
\label{def:S123-P7-Fpq}
See \cref{fig:maximal monogenic bipartite graphs} for an illustration.
\begin{enumerate}[label=\textup{(\arabic*)}]
\item $S_{1,2,3}$ is the (colored) bipartite graph obtained from a star with three leaves by
subdividing one of its edges once and subdividing another edge twice.
\item $P_7$ is the (colored) path on 7 vertices.
\item $F^*_{p,q}$ is the colored bipartite graph with vertex color classes $\{a,b\}$ and
$\{a_{1}, \dotsc, a_{p}\splitaftercomma{, c,} b_{1}, \dotsc, b_{q}, d\}$. The edges are $\{(a, a_{i}) \eqLabelSep i \in[p]\}$, $\{(b,b_{j}) \eqLabelSep j \in [q]\}$, and $(a,c),(b,c)$.
\end{enumerate}
\end{definition}

Combining results due to Allen \cite{All09} (for the $S_{1,2,3}$ and $F^*_{p,q}$ cases) and a result of Lozin \& Zamaraev \cite{LZ17} (for the $P_7$ case), we formally state
\begin{theorem}[\cite{All09,LZ17}]
\label{lem:mongenic-factorial-excludes-P7-S123-Fpq}
Let $H$ be a colored bipartite graph, and let 
$\cF$ be the class of $H$-free bipartite graphs.
If $\cF$ has at most factorial speed, then $\cF$ is a subclass of either the $S_{1,2,3}$-free
bipartite graphs, the $P_7$-free bipartite graphs, or the $F^*_{p,q}$-free bipartite graphs, for some $p,q \in \mathbb{N}$.
\end{theorem}

By the above result, in order to establish \cref{thm:bipartite monogenic}, it suffices to consider
the \emph{maximal} monogenic factorial classes of bipartite graphs defined by the forbidden induced
subgraphs $S_{1,2,3}$, $P_7$, $F_{p,q}^*$.

\subsubsection{\texorpdfstring{$S_{1,2,3}$}{S 1,2,3}-Free Bipartite Graphs}

In this section, we derive \cref{thm:bipartite monogenic} for the class of $S_{1,2,3}$-free bipartite graphs.
It is known that the class of $S_{1,2,3}$-free bipartite graphs has bounded clique-width \cite{LV08},
and hence it has also bounded twin-width \cite{BKTW20}. Therefore, the following theorem
follows immediately from our result for graph classes of bounded twin-width (\cref{thm:twin width}). 

\begin{theorem}
	\label{th:S123 labeling}
	Let $\cF$ be a stable class of $S_{1,2,3}$-free bipartite graphs.
	Then $\cF$ admits a constant-size equality-based adjacency labeling scheme, and hence $\RL(\cF_n) = O(1)$.
\end{theorem}

\subsubsection{\texorpdfstring{$F^*_{p,q}$}{F p,q}-Free Bipartite Graphs}
\label{sec:Fpq}

\begin{figure}[!h]
\centering
	\begin{tikzpicture}[xscale=.9,yscale=1.4,every label/.style={font=\scriptsize,label distance=3pt,text=black!50,anchor=base}]
	\def\p{3}\def\q{5}
		\begin{scope}[shift={(0,0)}]
			\node[vertex,label=$c$] (c) at (0,1) {} ;
			\begin{scope}[shift={(-0.5,0)}]
				\node[vertex,label={[yshift=-5pt]below:$a$}] (a) at (0,0) {} ;
				\foreach \i in {1,...,\p}{
					\node[vertex,label={$a_\i$}] (a\i) at (-0.5*\i,1) {};
				}
				\foreach \i in {1,...,\p} {
					\draw (a) to (a\i) ;
				}
				\draw (a) to (c);
			\end{scope}
			\begin{scope}[shift={(0.5,0)}]
				\node[vertex,label={[yshift=-5pt]below:$b$}] (b) at (0,0) {} ;
				\foreach \i in {1,...,\q}{
					\node[vertex,label={$b_\i$}] (b\i) at (0.5*\i,1) {};
				}
				\foreach \i in {1,...,\q} {
					\draw (b) to (b\i) ;
				}
				\draw (b) to (c);
			\end{scope}
			\node[vertex,label={$d$}] at (\p+.75,1) {} ;
			\node at (0,-1) {$F_{\p,\q}^*$} ;
		\end{scope}
		\begin{scope}[shift={(9,0)}]
			\node[vertex,label=$c$] (c) at (0,1) {} ;
			\begin{scope}[shift={(-0.5,0)}]
				\node[vertex,label={[yshift=-5pt]below:$a$}] (a) at (0,0) {} ;
				\foreach \i in {1,...,\p}{
					\node[vertex,label={$a_\i$}] (a\i) at (-0.5*\i,1) {};
				}
				\foreach \i in {1,...,\p} {
					\draw (a) to (a\i) ;
				}
				\draw (a) to (c);
			\end{scope}
			\begin{scope}[shift={(0.5,0)}]
				\node[vertex,label={[yshift=-5pt]below:$b$}] (b) at (0,0) {} ;
				\foreach \i in {1,...,\q}{
					\node[vertex,label={$b_\i$}] (b\i) at (0.5*\i,1) {};
				}
				\foreach \i in {1,...,\q} {
					\draw (b) to (b\i) ;
				}
				\draw (b) to (c);
			\end{scope}
			\node at (0,-1) {$F_{\p,\q}$} ;
		\end{scope}
		\begin{scope}[shift={(18,0)}]
			\def\q{\p}
			\begin{scope}[shift={(-0.25,0)}]
				\node[vertex,label={[yshift=-5pt]below:$a$}] (a) at (0,0) {} ;
				\foreach \i in {1,...,\p}{
					\node[vertex,label={$a_\i$}] (a\i) at (-0.5*\i,1) {};
				}
				\foreach \i in {1,...,\p} {
					\draw (a) to (a\i) ;
				}
			\end{scope}
			\begin{scope}[shift={(0.25,0)}]
				\node[vertex,label={[yshift=-5pt]below:$b$}] (b) at (0,0) {} ;
				\foreach \i in {1,...,\q}{
					\node[vertex,label={$b_\i$}] (b\i) at (0.5*\i,1) {};
				}
				\foreach \i in {1,...,\q} {
					\draw (b) to (b\i) ;
				}
			\end{scope}
			\node at (0,-1) {$T_{\p}$} ;
		\end{scope}
	\end{tikzpicture}
	\caption{The bipartite graphs considered in \cref{sec:Fpq}.}
\end{figure}

In this section, we prove \cref{thm:bipartite monogenic} for classes of $F^*_{p,q}$-free bipartite
graphs by developing a constant-size equality-based adjacency labeling scheme for stable classes of
$F^*_{p,q}$-free bipartite graphs via a sequence of labeling schemes for special subclasses each
generalizing the previous one.

We denote by $F_{p,q}$ the bipartite graph with parts $\{a,b\}$ and $\{c, a_1, \dotsc, a_p, b_1,
\dotsc, b_q\}$, and with edges $(a,c), (b,c), \{(a,a_i) \mid i \in [p] \}, \{(b,b_i) \mid j \in
[q]\}$.  We also denote by $T_p$ the bipartite graph on vertex sets $\{a,b\}, \{a_1, \dotsc, a_p,
b_1, \dotsc, b_p\}$, where $(a,a_i)$ and $(b,b_i)$ are edges for each $i \in [p]$. So $T_p$ is the
disjoint union of two stars with $p+1$ vertices.

\begin{definition}
For $q,s \in \bN$ we denote by $Z_{q,s}$ the bipartite graph $(X,Y,E)$ with 
$|X|=q, |Y|=qs$, where $X = \{x_1, \ldots, x_q \}$, $Y$ is partitioned into $q$
sets $Y = Y_1 \cup \ldots \cup Y_q$ each of size $s$, and for every $i \in [q]$:
\begin{enumerate}[label=\textup{(\arabic*)}]
\item $x_i$ is adjacent to all vertices in $Y_j$ for all $1 \leq j \leq i$, and
\item $x_i$ is adjacent to no vertices in $Y_j$ for all $i < j \leq q$.
\end{enumerate}
\end{definition}
\noindent
Note that $Z_{q,s}$ is obtained from $H^{\circ\circ}_q$ by duplicating every vertex in one of the parts $s-1$ times. In particular, $H^{\circ\circ}_q$ is and induced subgraph of $Z_{q,s}$

\medskip
\noindent
We start with structural results and an equality-based labeling scheme for \emph{one-sided} $T_p$-free bipartite graphs.
A colored bipartite graph $G = (X,Y,E)$ is one-sided $T_p$-free if it does not contain $T_p$
as an induced subgraph such that the centers of both stars belong to $X$. Note that
any $T_p$-free bipartite graph is also a one-sided $T_p$-free graph.

\begin{proposition}
\label{prop:degree order}
Let $G = (X,Y,E)$ be any one-sided $T_p$-free bipartite graph and let $u,v \in X$ satisfy $\deg(u) \leq
\deg(v)$. Then $|N(u) \cap N(v)| > |N(u)|-p$.
\end{proposition}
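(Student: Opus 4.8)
The plan is to argue by contradiction, exhibiting a copy of $T_p$ whose two star centers lie in $X$. Suppose the claimed inequality fails, so that $|N(u) \cap N(v)| \le |N(u)| - p$, which is the same as saying $|N(u) \setminus N(v)| \ge p$. First I would fix $p$ distinct vertices $a_1, \dots, a_p \in N(u) \setminus N(v) \subseteq Y$; by construction each $a_i$ is adjacent to $u$ and non-adjacent to $v$.

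Next I would produce $p$ vertices on the ``$v$ side''. Since $u, v \in X$ lie on the same side of the bipartition, $\deg(u) = |N(u)|$ and $\deg(v) = |N(v)|$, and
\[
  |N(v) \setminus N(u)| = \deg(v) - |N(u) \cap N(v)| \ge \deg(u) - |N(u) \cap N(v)| = |N(u) \setminus N(v)| \ge p,
\]
using the hypothesis $\deg(u) \le \deg(v)$. Hence we may fix $p$ distinct vertices $b_1, \dots, b_p \in N(v) \setminus N(u) \subseteq Y$, each adjacent to $v$ and non-adjacent to $u$. The sets $\{a_1, \dots, a_p\}$ and $\{b_1, \dots, b_p\}$ are disjoint, since one lies in $N(u) \setminus N(v)$ and the other in $N(v) \setminus N(u)$; together with $u \ne v$, all $2p+2$ chosen vertices are pairwise distinct.

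It then remains to check that $G[\{u, v, a_1, \dots, a_p, b_1, \dots, b_p\}]$ is isomorphic to $T_p$ with star centers $u, v \in X$. Indeed, $u \not\sim v$ because $G$ is bipartite with $u, v \in X$; there are no edges among $a_1, \dots, a_p, b_1, \dots, b_p$ because these all lie in $Y$; $u$ is adjacent to every $a_i$ and to no $b_j$ (as $b_j \notin N(u)$); and $v$ is adjacent to every $b_j$ and to no $a_i$ (as $a_i \notin N(v)$). This is exactly the disjoint union of the star centered at $u$ with leaves $a_1, \dots, a_p$ and the star centered at $v$ with leaves $b_1, \dots, b_p$, i.e.\ a copy of $T_p$ with both centers in $X$, contradicting the assumption that $G$ is one-sided $T_p$-free. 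Therefore $|N(u) \setminus N(v)| \le p - 1 < p$, which rearranges to $|N(u) \cap N(v)| > |N(u)| - p$.

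There is no real obstacle in this argument; the only point requiring a little care is ensuring that the $2p+2$ vertices are genuinely distinct and that the induced subgraph is \emph{precisely} $T_p$ (not merely contains a copy of it), and both follow from the disjointness of $N(u) \setminus N(v)$ and $N(v) \setminus N(u)$ together with bipartiteness. The hypothesis $\deg(u) \le \deg(v)$ is used exactly once, to transfer the $p$ ``private'' neighbours of $u$ into $p$ ``private'' neighbours of $v$.
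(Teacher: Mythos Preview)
Your proof is correct and follows essentially the same approach as the paper's: assume the inequality fails to get $|N(u)\setminus N(v)|\ge p$, use $\deg(u)\le\deg(v)$ to get $|N(v)\setminus N(u)|\ge p$, and exhibit an induced $T_p$ with centers $u,v\in X$. Your version is more explicit in verifying distinctness and the exact edge structure, but the argument is identical.
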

\begin{proof}
For contradiction, assume $|N(u) \cap N(v)| \leq |N(u)|-p$ so that $|N(u) \setminus N(v)| \geq p$.
Then since $\deg(v) \geq \deg(u)$ it follows that $|N(v) \setminus N(u)| \geq p$. But then $T_p$ is
induced by $\{u,v\}$ and $(N(u)\setminus N(v)) \cup (N(v) \setminus N(u))$.
\end{proof}

\begin{proposition}
\label{prop:intersections}
Suppose $S_1, \dotsc, S_t \subseteq [n]$ each have $|S_i| \geq n-p$ where $n > pt$. Then
\[\Bigg|\bigcap_{j =1}^t S_j\Bigg| \geq n-pt.\]
\end{proposition}
\begin{proof}
Let $R$ be the set of all $i \in [n]$ such that for some $S_j, i \notin S_j$. Then
\[
  |R| \leq \sum_{j=1}^t (n - |S_j|) \leq \sum_{j=1}^t p = pt \,,
\]
so $\left|\bigcap_{j=1}^t S_j\right| \geq n-|R| \geq n-pt$.
\end{proof}

\begin{lemma}
\label{lemma:tp free structure}
Fix any constants $k,q,p$ such that $k \geq qp+1$ and let $G = (X,Y,E)$ be any one-sided $T_p$-free bipartite graph. 
Then there exists $m \geq 0$ and partitions $X =A_0 \cup  A_1 \cup \ldots \cup A_m$ and
$Y = B_1 \cup \ldots \cup B_m \cup B_{m+1}$, where $A_i \neq \emptyset$, $B_i \neq \emptyset$
for every $i \in [m]$, such that the following hold
\begin{enumerate}[label=\textup{(\arabic*)}]
	\item $|B_i| \geq k$, for all $i \in [m]$.
	\item For every $j \in \{0,1, \ldots, m\}$, every $x \in A_j$ has less than $k$ neighbours in $\bigcup_{i \geq j+1} B_i$.
	\item For every $i,j, 1 \leq i \leq j \leq m$, every $x \in A_j$ has more than $|B_i| - p$ neighbours in $B_i$.
	\item If $m \geq q$, then $Z_{q,k-qp}$ is an induced subgraph of $G$.
\end{enumerate}
\end{lemma}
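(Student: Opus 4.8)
The plan is to build the decomposition greedily, peeling off one layer $(A_i,B_i)$ at a time, where $B_i$ is always the neighbourhood of a minimum-degree vertex of $X$ inside the part of the graph not yet processed.

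First I would set $A_0 := \{x \in X : \deg_G(x) < k\}$, $X_1 := X\setminus A_0$, and $Y_1 := Y$, so that every vertex of $X_1$ has at least $k$ neighbours in $Y_1$. Then I would iterate for $i = 1, 2, \dots$ as long as $X_i \neq \emptyset$: pick $u_i \in X_i$ of minimum degree in the colored subgraph $G[X_i,Y_i]$, set $B_i := N_G(u_i)\cap Y_i$ and $Y_{i+1} := Y_i\setminus B_i$, let $A_i := \{x \in X_i : |N_G(x)\cap Y_{i+1}| < k\}$ and $X_{i+1} := X_i\setminus A_i$. Since $N_G(u_i)\cap Y_{i+1} = \emptyset$ we get $u_i\in A_i$, so each $A_i$ is nonempty and the process terminates, say with $X_{m+1} = \emptyset$ (possibly $m = 0$); finally set $B_{m+1} := Y_{m+1}$. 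Telescoping the definitions gives $X = A_0\cup\dots\cup A_m$ and $Y = B_1\cup\dots\cup B_{m+1}$ with all parts pairwise disjoint.

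The three ``local'' properties follow quickly. An induction on $i$ establishes the invariant that every $x\in X_i$ has at least $k$ neighbours in $Y_i$ (base case by definition of $A_0$, inductive step by definition of $A_i$); applying it to $u_i$ gives property (1). Property (2) is immediate: for $j = 0$ we have $\bigcup_{i\geq 1}B_i = Y$ and $A_0$ consists of degree-$(<k)$ vertices, while for $j\geq 1$ we have $\bigcup_{i\geq j+1}B_i = Y_{j+1}$ and $A_j$ was defined exactly by the condition $|N_G(x)\cap Y_{j+1}| < k$. For property (3), note $G[X_i,Y_i]$ is again one-sided $T_p$-free (an induced colored subgraph of $G$) and $u_i$ has minimum degree in it, so \cref{prop:degree order} applied to $u_i$ and any $x\in X_i$ yields $|N_G(x)\cap B_i| > |B_i|-p$; since $A_j\subseteq X_j\subseteq\dots\subseteq X_i$ whenever $j\geq i$, this holds for every $x\in A_j$.

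The step I expect to require the most care is property (4). Suppose $m\geq q$ and consider the representatives $u_1,\dots,u_q$. By construction $u_i$ is complete to $B_i$ and, since $B_j\subseteq Y_{i+1}$ for $j > i$ and $N_G(u_i)\cap Y_{i+1} = \emptyset$, has no neighbour in any $B_j$ with $j > i$; and by property (3) it misses at most $p-1$ vertices of $B_j$ for each $j\leq i$. Hence for each $j\in[q]$, at most $(q-j+1)(p-1) < qp$ vertices of $B_j$ are non-neighbours of some $u_i$ with $i\geq j$ (a union bound, or \cref{prop:intersections}), and since $|B_j|\geq k\geq qp+1$ I may choose $C_j\subseteq B_j$ with $|C_j| = k-qp$ avoiding all of them. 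The subgraph of $G$ induced on $\{u_1,\dots,u_q\}\cup C_1\cup\dots\cup C_q$ is then precisely $Z_{q,k-qp}$, with $x_i = u_i$ and with $C_j$ as the $j$-th block of the opposite part. Apart from this counting, the only thing needing attention is the bookkeeping behind the partition identities and the degree invariant driving property (1).
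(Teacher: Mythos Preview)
Your proof is correct and essentially identical to the paper's: the same greedy peeling procedure (minimum-degree vertex $u_i$, set $B_i = N(u_i)\cap Y_i$, strip off $A_i$ by the degree-$<k$ condition), the same use of \cref{prop:degree order} for property~(3), and the same intersection argument via \cref{prop:intersections} to extract the $Z_{q,k-qp}$ witness for property~(4).
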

\begin{proof}
	Let $A_0$ be the set of vertices in $X$ that have less than $k$ neighbours. If $A_0 = X$, then
	$m=0$, $A_0$, and $B_1 = Y$ satisfy the conditions of the lemma.
	Otherwise, we construct the remaining parts of partitions using the following procedure.
	Initialize $X' = X \setminus A_0, Y'=Y$, and $i=1$.
	\begin{enumerate}
		\item Let $a_i$ be a vertex in $X'$ with the least number of neighbours in $Y'$.
		\item Let $B_i$ be the set of all neighbors of $a_i$ in $G[X',Y']$.
		\item Let $A_i$ be the set of vertices in $X'$ with degree less than $k$ in $G[X',Y'\setminus B_i]$.
		Note that $A_i$ contains $a_i$.
		\item $X' \gets X' \setminus A_i$, $Y' \gets Y' \setminus B_i$.
		\item If $X' = \emptyset$, then $B_{i+1} = Y'$, let $m=i$, and terminate the procedure;
		Otherwise increment $i$ and return to step 1.
	\end{enumerate}
	
	\noindent
	Conditions (1) and (2) follow by definition. Next we will prove condition (3) by showing 
	that for every $1 \leq i \leq j \leq m$, every $x \in A_j$ has more than $|B_i| - p$ neighbours in $B_i$.
	Suppose, towards a contradiction, that $|N(x) \cap B_i| \leq |B_i|-p$. Consider $X',Y'$ as in round $i$ of
	the construction procedure, so $B_i$ is the neighbourhood of $a_i$ in $G[X',Y']$. 
	Then $x$ has degree at least that of $a_i$ in $G[X',Y']$, and hence the conclusion holds by 
	Proposition \ref{prop:degree order}.

	Finally, to prove condition (4) we will show that for any $q \leq m$ there exist sets $B'_1 \subseteq B_1, \dotsc, B'_q \subseteq B_q$ so that the vertices $\{a_1, \dotsc, a_q\}$ and the sets $B'_1, \dotsc, B'_q$ induce $Z_{q,k-pq}$.
	First, observe that by construction for every $1 \leq i < j \leq m$, $a_i$ has no neighbours in $B_j$.
	Now, let $i \in [m]$, then by condition (3), for all $i \leq j \leq m$ it holds that $|N(a_j) \cap B_i| > |B_i|-p$.
	Since $|B_i| \geq k > pq$, it holds by Proposition \ref{prop:intersections} that
	\[
		\left|B_i \cap \bigcap_{j=i}^q N(a_j)\right| \geq |B_i| - pq \geq k-pq \,.
	\]
	We define $B'_i = B_i \cap \bigcap_{j=i}^q N(a_j)$.
	Then for each $i \in[m]$ it holds that $a_i$ is adjacent to all vertices in $B'_j$ for all $1 \leq j \leq i$, but $a_i$ is adjacent to no vertices in $B'_j$ for $i < j \leq m$. Hence the vertices $\{a_1, \dotsc, a_q\}$ and the sets $B'_1, \dotsc, B'_q$ induce $Z_{q,k-pq}$, which proves condition (4) and 
	concludes the proof of the lemma.
\end{proof}

\begin{lemma}
\label{lemma:tp free labeling}
Let $p \in \mathbb{N}$ and let $\cT$ be a stable class of one-sided $T_{p}$-free bipartite graphs.
Then $\cT$ admits a constant-size equality-based adjacency labeling scheme, and hence 
$\RL(\cT_n) = O(1)$.
\end{lemma}
\begin{proof}
	Since $\cT$ is stable, it does not contain $\cC^\dcirc$ as a subclass.
	Let $q$ be the minimum number such that $H^{\circ\circ}_q \not\in \cT$,
	and let $G=(X,Y,E)$ be an arbitrary graph from $\cT$.
	
	Let $k = qp+1$ and let $X =A_0 \cup  A_1 \cup \ldots \cup A_m$ and $Y = B_1 \cup \ldots \cup B_m \cup B_{m+1}$ be partitions satisfying the conditions of \cref{lemma:tp free structure}.
	Since $G$ does not contain $H^{\circ\circ}_q$ as an induced subgraph, it holds that $m < q$.
	
	We construct the labels for the vertices of $G$ as follows. 
	For a vertex $x \in X$ we define $\ell(x)$ as a label consisting of several tuples. The first tuple is
	$(0, i \eqLabelSep -)$, where $i \in \{ 0, 1, \ldots, m \}$ is the unique index such that $x \in A_i$.
	This tuple follows by $i$ tuples $(- \eqLabelSep y^j_1, y^j_2, \ldots, y^j_{p_j}), j \in [i]$, where
	$p_j < p$ and $\{y^j_1, y^j_2, \ldots, y^j_{p_j}\}$ are the non-neighbours of $x$ in $B_j$.
	The last tuple of $\ell(x)$ is $(- \eqLabelSep y^{i+1}_1, y^{i+1}_2, \ldots, y^{i+1}_{k'})$, where $k' < k$
	and $y^{i+1}_1, y^{i+1}_2, \ldots, y^{i+1}_{k'}$ are the neighbours of $x$ in $\bigcup_{i \geq j+1} B_i$.
	For a vertex $y \in Y$ we define $\ell(y) = ( 1, i \eqLabelSep y)$, where $i \in [m+1]$ is the unique index such that $y \in B_i$.
	
	Note that, in every label, the total length of prefixes is at most $1 + \ceil{\log m} \leq 1 + \ceil{\log q}$, and
	the total number of equality codes depends only on $p,q$, and $k$, which are constants.
	Therefore, it remains to show that the labels can be used to define an equality-based adjacency decoder.
		
	Given two vertices $x,y$ in $G$ the decoder operates as follows. 
	First, it checks the first prefixes in the first tuples of $\ell(x)$ and $\ell(y)$.
	If they are the same, then $x,y$ belong to the same part in $G$ and the decoder outputs 0.
	Hence, we can assume that they are different. Without loss of generality, let
	$\ell(x) = (0, i \eqLabelSep -)$ and $\ell(y) = ( 1, j \eqLabelSep y)$, so $x \in A_i \subseteq X$ and $y \in B_j \subseteq Y$.
	
	If $j \leq i$, then the decoder compares $y$ with the equality codes $y^j_1, y^j_2, \ldots, y^j_{p_j}$ of the $(j+1)$-th tuple of $\ell(x)$. If $y$ is equal to at least one of them, then $y$ is among the non-neighbours of $x$ in $B_j$ and the decoder outputs 0; otherwise, $x$ and $y$ are adjacent and the decoder outputs 1.
	If $j > i$, then the decoder compares $y$ with the equality codes $y^{i+1}_1, y^{i+1}_2, \ldots, y^{i+1}_{k'}$ of the last tuples of $\ell(x)$, and
	if $y$ is equal to at least one of them, then $y$ is among the neighbours of $x$ in 
	$\bigcup_{i \geq j+1} B_i$ and the decoder outputs 1; otherwise, $x$ and $y$ are not adjacent and 
	the decoder outputs 0.
\end{proof}

\medskip
\noindent
Next, we develop an equality-based labeling scheme for stable classes of one-sided $F_{p,p}$-free bipartite graphs.
A colored bipartite graph $G = (X,Y,E)$ is one-sided $F_{p,p}$-free if it does not contain $F_{p,p}$
as an induced subgraph such that the part of $F_{p,p}$ of size 2 is a subset of $X$.

\begin{proposition}
\label{prop:degree intersection bound}
Let $G = (X,Y,E)$ be any one-sided $F_{p,p}$-free bipartite graph and let $u,v \in X$ satisfy $\deg(u) \leq \deg(v)$. Then either $N(u) \cap N(v) = \emptyset$ or $|N(u) \cap N(v)| > |N(u)|-p$.
\end{proposition}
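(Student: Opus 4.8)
The plan is to mirror the argument for Proposition~\ref{prop:degree order}, with the one adjustment forced by the shape of $F_{p,p}$: it carries an extra vertex $c$ adjacent to both ``centers'', and this vertex is exactly what makes the conclusion a dichotomy rather than an unconditional bound. So assume $N(u)\cap N(v)\neq\emptyset$ (otherwise the first alternative holds and there is nothing to prove), and suppose for contradiction that $|N(u)\cap N(v)|\leq |N(u)|-p$, equivalently $|N(u)\setminus N(v)|\geq p$.

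First I would invoke the degree hypothesis: since $\deg(v)-\deg(u)=|N(v)\setminus N(u)|-|N(u)\setminus N(v)|\geq 0$, we obtain $|N(v)\setminus N(u)|\geq |N(u)\setminus N(v)|\geq p$. Now pick a common neighbour $c\in N(u)\cap N(v)$, pick $p$ distinct vertices $a_1,\dots,a_p\in N(u)\setminus N(v)$, and pick $p$ distinct vertices $b_1,\dots,b_p\in N(v)\setminus N(u)$. These $2p+1$ vertices all lie in $Y$, hence form an independent set, and they are pairwise distinct: $a_i\neq c\neq b_j$ because $c\in N(u)\cap N(v)$ while the $a_i,b_j$ are not, and $a_i\neq b_j$ because $a_i\in N(u)$ but $b_j\notin N(u)$. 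Checking adjacencies, $u$ is adjacent to $c$ and to each $a_i$ but to no $b_j$, while $v$ is adjacent to $c$ and to each $b_j$ but to no $a_i$. Thus $\{u,v\}$ together with $\{c,a_1,\dots,a_p,b_1,\dots,b_p\}$ induces a copy of $F_{p,p}$ whose size-$2$ part is $\{u,v\}\subseteq X$, contradicting the one-sided $F_{p,p}$-freeness of $G$.

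Hence $|N(u)\setminus N(v)|\leq p-1$, so $|N(u)\cap N(v)|=|N(u)|-|N(u)\setminus N(v)|\geq |N(u)|-(p-1)>|N(u)|-p$, as required. The argument is entirely elementary; the only points requiring any care are confirming the $2p+1$ chosen vertices are distinct and independent, and noticing that the assumption $N(u)\cap N(v)\neq\emptyset$ is precisely what supplies the vertex $c$ — without it no induced $F_{p,p}$ can be assembled, which is why the statement must allow the empty-intersection escape clause. I do not anticipate any genuine obstacle; this is the $F_{p,p}$-analogue of Proposition~\ref{prop:degree order}, and presumably it will feed into a decomposition lemma for one-sided $F_{p,p}$-free bipartite graphs in the same way Proposition~\ref{prop:degree order} feeds into Lemma~\ref{lemma:tp free structure}.
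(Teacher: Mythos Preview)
Your proof is correct and follows essentially the same argument as the paper: assume a common neighbour exists, derive $|N(v)\setminus N(u)|\geq p$ from the degree hypothesis, and exhibit an induced $F_{p,p}$ on $\{u,v\}$ together with the common neighbour and $p$ private neighbours of each. Your version is simply more explicit about distinctness and independence of the chosen $Y$-vertices.
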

\begin{proof}
Suppose that $N(u) \cap N(v) \neq \emptyset$, and for contradiction assume that $|N(u) \setminus
N(v)| \geq p$. Since $\deg(u) \leq \deg(u)$, this means $|N(v) \setminus N(u)| \geq |N(u)
\setminus N(v)| \geq p$. Let $w \in N(u) \cap N(v)$. Then $\{u,v\}$ with $\{w\} \cup (N(v) \setminus
N(u)) \cup (N(u) \setminus N(v))$ induces a graph containing $F_{p,p}$, a contradiction.
\end{proof}

\begin{proposition}
\label{prop:transitive intersections}
Let $G = (X,Y,E)$ be any one-sided $F_{p,p}$-free bipartite graph and let $x,y,z \in X$ satisfy $\deg(x) \geq \deg(y) \geq \deg(z) \geq 2p$. Suppose that $N(y) \cap N(z) \neq \emptyset$. Then
\[
  N(x) \cap N(y) = \emptyset \iff N(x) \cap N(z) = \emptyset \,.
\]
\end{proposition}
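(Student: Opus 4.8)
The plan is to prove the two implications of the biconditional separately, each by contrapositive, and in both cases to reduce everything to showing that the \emph{triple} intersection $N(x)\cap N(y)\cap N(z)$ is nonempty. The only tool I need is \cref{prop:degree intersection bound}: whenever two of the neighbourhoods among $N(x),N(y),N(z)$ meet, it says that the smaller one is contained in the larger one with fewer than $p$ exceptions. I will use the hypotheses $\deg(x)\ge\deg(y)\ge\deg(z)\ge 2p$ and $N(y)\cap N(z)\neq\emptyset$ throughout.

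For the direction ``$N(x)\cap N(z)=\emptyset\Rightarrow N(x)\cap N(y)=\emptyset$'', I would argue the contrapositive: assume $N(x)\cap N(z)\neq\emptyset$. First, apply \cref{prop:degree intersection bound} to the pair $(z,x)$ (legitimate since $\deg(z)\le\deg(x)$), which yields $|N(z)\setminus N(x)|\le p-1$. Next, apply it to the pair $(z,y)$ (legitimate since $\deg(z)\le\deg(y)$ and $N(y)\cap N(z)\neq\emptyset$ by hypothesis), which yields $|N(y)\cap N(z)|\ge|N(z)|-p+1$. Since $N(x)\cap N(y)\cap N(z)=(N(y)\cap N(z))\setminus(N(z)\setminus N(x))$, I then estimate
\[
  |N(x)\cap N(y)\cap N(z)| \;\ge\; |N(y)\cap N(z)|-|N(z)\setminus N(x)| \;\ge\; (|N(z)|-p+1)-(p-1) \;=\; |N(z)|-2p+2 \;\ge\; 2 ,
\]
the last step using $|N(z)|=\deg(z)\ge 2p$. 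In particular $N(x)\cap N(y)\neq\emptyset$.

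For the converse ``$N(x)\cap N(y)=\emptyset\Rightarrow N(x)\cap N(z)=\emptyset$'', the argument is the mirror image, with $N(y)$ now playing the role that $N(z)$ had above. Assuming $N(x)\cap N(y)\neq\emptyset$, I apply \cref{prop:degree intersection bound} to $(y,x)$ to get $|N(y)\setminus N(x)|\le p-1$ and to $(z,y)$ to get $|N(y)\cap N(z)|\ge|N(z)|-p+1$, and then use $N(x)\cap N(y)\cap N(z)=(N(y)\cap N(z))\setminus(N(y)\setminus N(x))$ to conclude $|N(x)\cap N(y)\cap N(z)|\ge|N(z)|-2p+2\ge 2$, hence $N(x)\cap N(z)\neq\emptyset$.

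I expect no real obstacle here: once \cref{prop:degree intersection bound} is in hand this is three lines of set arithmetic. The two points to watch are (i) always invoking the proposition with the lower-degree vertex in the role of $u$, so that the strong bound $>|N(u)|-p$ controls the \emph{smaller} neighbourhood, and (ii) carrying the $\deg(z)\ge 2p$ hypothesis, which is precisely what upgrades the count $|N(z)|-2p+2$ into a genuine nonemptiness statement rather than a merely nonnegative one.
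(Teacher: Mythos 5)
Your proof is correct, and it takes essentially the same route as the paper: both arguments rest entirely on \cref{prop:degree intersection bound} (always invoked with the lower-degree vertex in the role of $u$) together with $\deg(z)\ge 2p$, and both reduce the claim to showing the triple intersection $N(x)\cap N(y)\cap N(z)$ is nonempty. The only difference is presentational — you count the triple intersection directly, while the paper phrases the same estimates as short chains of inequalities ending in a contradiction.
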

\begin{proof}
Since $N(y) \cap N(z) \neq \emptyset$, it holds that $|N(y) \cap N(z)| > |N(z)|-p \geq p$ by
Proposition~\ref{prop:degree intersection bound}.

Suppose that $N(x) \cap N(y) \neq \emptyset$. For contradiction, assume that $N(x) \cap N(y) \cap
N(z) = \emptyset$. Then $|N(y) \setminus N(x)| \geq |N(y) \cap N(z)| > |N(z)|-p \geq p$, which
contradicts $|N(y) \cap N(x)| > |N(y)|-p$.

Now suppose that $N(x) \cap N(y) = \emptyset$. For contradiction, assume that $N(x) \cap N(z) \neq
\emptyset$. Then $|N(x) \cap N(z)| \leq |N(z) \setminus N(y)| < p \leq |N(z)|-p
< |N(x) \cap N(z)|$, a contradiction.
\end{proof}

We will say that a bipartite graph $G = (X,Y,E)$ is \emph{left-disconnected} if there are two
vertices $x,y \in X$ that are in different connected components of $G$. It is \emph{left-connected} otherwise.

\begin{proposition}
\label{prop:highest degree connected}
Let $G = (X,Y,E)$ be any one-sided $F_{p,p}$-free bipartite graph where every vertex in $X$ has degree at least $2p$. Let $x \in X$ have maximum degree of all vertices in $X$. If $G$ is left-connected, then for any $y \in X$ it holds that $|N(y) \cap N(x)| > |N(y)|-p$.
\end{proposition}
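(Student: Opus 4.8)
The plan is to argue by contraposition on the transitivity of the "non-empty intersection" relation among neighbourhoods, using the left-connectedness to chain together vertices. First I would observe that since $G$ is left-connected, any two vertices $x,y \in X$ are joined by a path, and since $G$ is bipartite this path alternates between $X$ and $Y$; reading off the $X$-vertices along such a path gives a sequence $x = z_0, z_1, \dots, z_\ell = y$ of vertices in $X$ in which consecutive vertices share a common neighbour, i.e.\ $N(z_{i-1}) \cap N(z_i) \neq \emptyset$ for every $i$. The key point is that $x$ has maximum degree and every vertex has degree at least $2p$, so \cref{prop:transitive intersections} applies along this chain.

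The main step is then to upgrade "$N(y) \cap N(x) \neq \emptyset$" from pairwise information along the path to a direct statement about $x$ and $y$. Concretely, I would argue that $N(x) \cap N(z_i) \neq \emptyset$ for all $i$ by induction on $i$: the base case $i = 0$ is trivial ($N(x) \cap N(x) = N(x) \neq \emptyset$ since $\deg(x) \geq 2p \geq 1$), and for the inductive step we have $N(z_{i-1}) \cap N(z_i) \neq \emptyset$ and, by the inductive hypothesis, $N(x) \cap N(z_{i-1}) \neq \emptyset$; applying \cref{prop:transitive intersections} with the triple ordered by degree — here one must be slightly careful, because \cref{prop:transitive intersections} as stated requires a specific degree ordering $\deg(x) \geq \deg(y) \geq \deg(z)$ and the hypothesis $N(y) \cap N(z) \neq \emptyset$ — I would instead invoke it in the form: since $x$ has the maximum degree, order $z_{i-1}, z_i$ by degree, and use that $N(x)$ meets the larger-degree one to conclude it meets the smaller-degree one as well (or vice versa), relying on whichever direction of the "iff" in \cref{prop:transitive intersections} is applicable. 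In the end we get $N(x) \cap N(y) \neq \emptyset$.

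Once $N(x) \cap N(y) \neq \emptyset$ is established, the conclusion $|N(y) \cap N(x)| > |N(y)| - p$ is immediate from \cref{prop:degree intersection bound}, applied to the pair $u = y$, $v = x$ (valid since $\deg(y) \leq \deg(x)$ as $x$ has maximum degree), whose dichotomy says that a nonempty intersection must be large.

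The step I expect to be the main obstacle is getting the degree orderings to line up so that \cref{prop:transitive intersections} can legitimately be applied at each step of the induction: that lemma is stated only for a triple in a fixed degree order with the intersection hypothesis on the two lower-degree vertices, so I would either need to state and prove a small symmetric corollary of it first (that among any triple, "sharing a common neighbour" is transitive provided all degrees are at least $2p$), or carefully case-split on the relative degrees of $z_{i-1}$ and $z_i$ at each induction step. Formalizing that symmetric version cleanly — essentially re-deriving \cref{prop:transitive intersections} without the ordering assumption but keeping $x$ as the max-degree anchor — is the only genuinely fiddly part; everything else is routine.
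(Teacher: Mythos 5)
Your proposal is correct and is essentially the paper's own proof: both take the path guaranteed by left-connectedness, read off its $X$-vertices, propagate $N(x)\cap N(\cdot)\neq\emptyset$ along the chain via \cref{prop:transitive intersections}, and finish with \cref{prop:degree intersection bound}. The degree-ordering worry you flag dissolves immediately because $x$ has maximum degree and the conclusion of \cref{prop:transitive intersections} is a symmetric ``iff'' in $y$ and $z$, so the two consecutive path vertices can simply be assigned those roles according to their relative degrees.
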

\begin{proof}
Let $y \in X$. Since $G$ is left-connected, there is a path from $y$ to $x$. Let $y_0,y_1,\dotsc,y_t$ be
the path vertices in $X$, where $y=y_0,x=y_t$, and $N(y_{i-1}) \cap N(y_i) \neq \emptyset$ for each
$i \in [t]$. By Propositions \ref{prop:transitive intersections} and \ref{prop:degree intersection
bound}, it holds that if $N(y_i) \cap N(x) \neq \emptyset$ then $|N(y_i) \cap N(x)| > |N(y_i)|-p$
and $|N(y_{i-1}) \cap N(x)| > |N(y_{i-1})|-p$. Therefore the conclusion holds, because $N(y_{t-1})
\cap N(x) = N(y_{t-1}) \cap N(y_t) \neq \emptyset$.
\end{proof}

\begin{lemma}
\label{lemma:Fpq structure}
Fix any constants $p,q \geq 1$, let $k = (q+1)p$, and let $G = (X,Y,E)$ be any
connected one-sided $F_{p,p}$-free bipartite graph. 
Then there exists a partition $X = X_0 \cup X_1 \cup X_2$ (where some of the sets can be empty) such that the following hold:
\begin{enumerate}[label=\textup{(\arabic*)}]
	\item $X_0$ is the set of vertices in $X$ that have degree less than $k$.
	\item The induced subgraph $G[X_1,Y]$ is one-sided $T_p$-free.
	\item The induced subgraph $G[X_2,Y]$ is left-disconnected.
	\item For any $r, s$ such that $r < q$ and $p < s \leq k$, if $X_1 \neq \emptyset$ and $Z_{r,s} \sqsubset G[X_2,Y]$, then $Z_{r+1,s-p} \sqsubset G$.
\end{enumerate}
\end{lemma}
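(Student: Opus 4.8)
The plan is to build the partition by greedily ``peeling'' maximum-degree vertices while the remaining bipartite graph stays left-connected, and then to verify conditions (1)--(4), the last being by far the most delicate.

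\textbf{The construction.}
Set $X_0 := \{v\in X : \deg(v) < k\}$ and $W := X\setminus X_0$, so every vertex of $W$ has degree at least $k\ge 2p$. Starting from $W_1 := W$, as long as $W_i\ne\emptyset$ \emph{and} $G[W_i,Y]$ is left-connected, pick a maximum-degree vertex $a_i$ of $W_i$ and put $W_{i+1} := W_i\setminus\{a_i\}$; let the process stop after producing $a_1,\dots,a_j$. Define $X_1 := \{a_1,\dots,a_j\}$ and $X_2 := W_{j+1}$. Condition (1) holds by definition of $X_0$, and condition (3) holds because the process halts with $X_2\ne\emptyset$ exactly when $G[X_2,Y]$ is left-disconnected; if instead $W$ gets exhausted then $X_2=\emptyset$ and condition (3) is vacuous (this case is harmless, since the decomposition never recurses into an empty part, and $G[X_1,Y]$ is then the whole of $G[W,Y]$).

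\textbf{Condition (2).}
The point is that for each $i\le j$ the graph $G[W_i,Y]$ is left-connected, has minimum degree $\ge 2p$, and is one-sided $F_{p,p}$-free, so \cref{prop:highest degree connected} applied with its maximum-degree vertex $a_i$ gives $|N(u)\setminus N(a_i)| < p$, hence $N(u)\cap N(a_i)\ne\emptyset$, for every $u\in W_i$; in particular $N(a_i)\cap N(a_{i+1})\ne\emptyset$ for $i<j$. Since $\deg(a_1)\ge\cdots\ge\deg(a_j)$, a short induction on $i'-i$ using \cref{prop:transitive intersections} promotes this to $N(a_i)\cap N(a_{i'})\ne\emptyset$ for all $i<i'$, whence $|N(a_{i'})\setminus N(a_i)| < p$ by \cref{prop:degree intersection bound}. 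This forbids an induced $T_p$ with centres $a_i,a_{i'}$, so $G[X_1,Y]$ is one-sided $T_p$-free.

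\textbf{Condition (4).}
Assume $X_1\ne\emptyset$, so $G[W,Y]$ is left-connected and $a_1$ exists with $\deg(a_1)\ge k$, and suppose $Z_{r,s}\sqsubset G[X_2,Y]$, witnessed by $u_1,\dots,u_r\in X_2$ and disjoint $V_1,\dots,V_r\subseteq Y$ of size $s$, where $u_i$ is adjacent to all of $V_1\cup\cdots\cup V_i$ and to none of $V_{i+1}\cup\cdots\cup V_r$ (the case $r=0$ is trivial, as any vertex of $X_1$ has degree $>s-p$). Applying \cref{prop:highest degree connected} to $G[W,Y]$ with $a_1$ gives $|N(u_r)\setminus N(a_1)| < p$; since $V_1\cup\cdots\cup V_r\subseteq N(u_r)$, the vertex $a_1$ misses fewer than $p$ vertices of $\bigcup_\ell V_\ell$, so $|V_\ell\cap N(a_1)| > s-p$ for every $\ell$. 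As $Z_{r,s}$ is connected, $u_1,\dots,u_r$ lie in a single component $C_1$ of $G[X_2,Y]$; since $G[X_2,Y]$ is left-disconnected there is a second component $C_2$ containing an $X$-vertex $w\in X_2\subseteq W$, so $\deg(w)\ge k$, and \cref{prop:highest degree connected} gives $|N(w)\setminus N(a_1)| < p$, hence $a_1$ has more than $k-p\ge s-p$ neighbours inside $N(w)$; all such vertices lie in $C_2$ and so are non-neighbours of every $u_i$. Now choose $Y'_\ell\subseteq V_\ell\cap N(a_1)$ with $|Y'_\ell|=s-p$ for $\ell\le r$, and $Y'_{r+1}\subseteq N(w)\cap N(a_1)$ with $|Y'_{r+1}|=s-p$. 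Then $u_1,\dots,u_r,a_1$ together with $Y'_1,\dots,Y'_{r+1}$ (in this order) induce $Z_{r+1,s-p}$ in $G$: for $\ell\le r$ the vertex $u_\ell$ is adjacent to $Y'_1,\dots,Y'_\ell$ (subsets of $V_1,\dots,V_\ell$) and to none of $Y'_{\ell+1},\dots,Y'_r$ (subsets of $V_{\ell+1},\dots,V_r$) nor $Y'_{r+1}$ (which lies in $C_2$), while $a_1$ is adjacent to all of $Y'_1,\dots,Y'_{r+1}$.

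\textbf{The main obstacle.}
Essentially everything reduces to repeated applications of the near-containment facts \cref{prop:degree intersection bound,prop:transitive intersections,prop:highest degree connected}; the one genuinely nonobvious step is constructing the extra block $Y'_{r+1}$ in condition (4), which must consist of $s-p$ vertices adjacent to the newly added vertex but to none of $u_1,\dots,u_r$. The resolution is to locate these vertices in a \emph{second component} of $G[X_2,Y]$ (guaranteed by condition (3)) and to observe that $a_1$ reaches $s-p$ vertices of that component because its neighbourhood almost contains the neighbourhood of the high-degree vertex $w$ living there. With this in hand, condition (4) -- the mechanism that will later cap the recursion depth of the decomposition via the forbidden half-graph $H^{\circ\circ}_q$ in any stable subfamily -- follows from the routine verification sketched above.
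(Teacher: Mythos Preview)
Your proof is correct and follows essentially the same greedy-peeling approach as the paper. Two minor differences worth noting: for condition~(4) you take the \emph{first} peeled vertex $a_1$ and apply \cref{prop:highest degree connected} in the ambient graph $G[W,Y]$, whereas the paper takes the \emph{last} peeled vertex and applies it in $G[X_2\cup\{x\},Y]$; both work, and your route through $u_r$ (using $V_1\cup\cdots\cup V_r\subseteq N(u_r)$) is arguably cleaner since a single application bounds the miss in every $V_\ell$ at once. For condition~(2), your detour via \cref{prop:transitive intersections} and \cref{prop:degree intersection bound} is unnecessary---your first sentence already gives $|N(a_{i'})\setminus N(a_i)|<p$ for all $i'>i$ directly from \cref{prop:highest degree connected} applied at step $i$---but it is not wrong.
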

\begin{proof}
	Let $X_0$ be the set of vertices in $X$ that have degree less than $k$,
	and let $X' = X \setminus X_0$. If $G[X', Y]$ is left-disconnected, then we define
	$X_1 = \emptyset$ and $X_2 = X'$.
	
	Assume now that $G[X', Y]$ is left-connected. By Proposition \ref{prop:highest degree connected}, the highest-degree vertex $x \in X'$ satisfies $|N(x) \cap N(y)| > |N(y)|-p$ for every $y \in X'$. Define $X_1$ as follows: add the highest-degree vertex $x$ to $X_1$, and repeat until $G[X' \setminus X_1,Y]$ is left-disconnected. Then set $X_2 = X' \setminus X_1$.  Condition 3 holds by definition, so it remains to prove conditions 2 and 4.

	For every $a,b \in X_1$, note that $N(a) \cap N(b) \neq \emptyset$. Suppose for contradiction that
	$T_p \sqsubset G[X_1,Y]$, then there are $a,b \in X_1$ such that $T_p$ is contained in the subgraph induced by the vertices $\{a,b\}$ and $(N(a) \setminus N(b)) \cup (N(b) \setminus N(a))$.  But then adding any $c \in N(a) \cap N(b)$ results in a forbidden copy of induced $F_{p,p}$, a contradiction. This proves condition 2.

Now for any $r, s$ such that $r < q$ and $p < s \leq k$, suppose that $X_1 \neq \emptyset$ and $Z_{r,s} \sqsubset G[X_2,Y]$. Then there are $u_1, \dotsc, u_r \in X_2$ and pairwise disjoint sets $V_1 \subseteq N(u_1), \dotsc, V_r \subseteq N(u_r)$ such that for
each $i$, $|V_i| = s$, for every $1 \leq j \leq i$, $v_i$ is adjacent to all vertices in $V_j$, and
for every $i < j \leq r$, $v_i$ is adjacent to no vertices in $V_j$.

Let $x$ be the vertex in $X_1$ with least degree, so that $x$ was the last vertex to be added to
$X_1$. Then $G[X_2 \cup \{x\},Y]$ is left-connected but $G[X_2,Y]$ is left-disconnected, and $x$ is
the highest-degree vertex of $G[X_2 \cup \{x\},Y]$ in $X_2 \cup \{x\}$. Since $u_1, \dotsc, u_r$ are
in the same connected component of $G[X_2,Y]$, but the graph $G[X_2,Y]$ is disconnected, it must be that there is $z \in X_2$ such that $N(z) \cap N(u_i) = \emptyset$ for all $u_i$. 
It is also the case that $|N(x) \cap N(z)| > |N(z)|-p \geq k-p \geq s-p$ by Proposition \ref{prop:highest degree connected}, since $x$ has the highest degree in $X_2 \cup \{x\}$.

Observe that for each $V_i \subseteq N(u_i)$ it holds that $|N(x) \cap V_i| \geq s - p$ also by
Proposition \ref{prop:highest degree connected}. Set $V'_i = V_i \cap N(x)$ for each $i \in [r]$,
and set $V'_{r+1} = N(x) \cap N(z)$.
Clearly, the graph induced by
$\{u_1, \dotsc, u_r, x\} \cup V'_1 \cup V'_2 \cup \ldots \cup V'_r \cup V'_{r+1}$ contains 
$Z_{r+1,s-p}$ as an induced subgraph.
\end{proof}

We will now use the above structural result to construct a suitable decomposition
scheme for stable one-sided $F_{p,p}$-free bipartite graphs.
Let $p,q \geq 1$ be fixed constants, let $k = (q+1)p$, and let $\cF_{p,q}$ 
be the class of one-sided $F_{p,p}$-free bipartite graphs that do not contain $H^{\circ\circ}_q$
as an induced subgraph. 
Let $G=(X,Y,E) \in \cF_{p,q}$.
Using \cref{lemma:Fpq structure}, we define a decomposition tree $\cT$ for $G$ inductively as follows. 
Let $G_v$ be the induced subgraph of $G$ associated with node $v$ of the decomposition tree and write $X' \subseteq X$, $Y' \subseteq Y$ for its sets of vertices, so $G_v = G[X',Y']$.
Graph $G$ is associated with the root node of $\cT$.
\begin{itemize}
	\item If $G_v$ is one-sided $T_k$-free, terminate the decomposition, so $v$ is a leaf node ($L$-node) of the decomposition tree.
	\item If $G_v$ is disconnected (in particular, if it is left-disconnected), then $v$ is a $D$-node such
	that the children are the connected components of $G_v$.
	\item If $G_v$ is connected and not one-sided $T_k$-free, then $X'$ admits a partition 
	$X' = X_0' \cup X_1' \cup X_2'$ satisfying the condition of \cref{lemma:Fpq structure}.
	Since $G_v$ is connected, $X_0' \cup X_1' \neq \emptyset$. Furthermore, since 
	$G_v$ is not one-sided $T_k$-free, $X_2' \neq \emptyset$. Hence, $v$ is a $P$-node
	with exactly two children $v_1$ and $v_2$, where $G_{v_1} = G[X_0' \cup X_1', Y']$ and 
	$G_{v_2} = G[X_2', Y']$. Observe that 
	\begin{enumerate}
		\item[(1)] $G_{v_1}$ is one-sided $T_k$-free, and therefore $v_1$ is a leaf;
		\item[(2)] $G_{v_2}$ is left-disconnected, and therefore $v_2$ is a $D$-node; furthermore, every vertex $x \in X_2'$ has degree at least $k$ in $G_{v_2}$ (otherwise it would be included in the set $X_0'$).
	\end{enumerate}
\end{itemize}

\begin{proposition}\label{prop:Fpp-decomposition}
	Let $\cQ$ be the class of one-sided $T_k$-free bipartite graphs.
	Then the graphs in $\cF_{p,q}$ admit $(\cQ,2)$-decomposition trees of depth at most $2q$.
\end{proposition}
\begin{proof}
	By definition, the above decomposition scheme produces $(\cQ,2)$-decomposition trees. 
	In the rest of the proof we will establish the claimed bound on the depth of any such tree.
	Suppose, towards a contradiction, that there exists a graph $G=(X,Y,E) \in \cF_{p,q}$
	such that the decomposition tree $\cT$ for $G$ has depth at least $2q+1$.
	Let $\cP = (v_0,v_1, v_2, \ldots, v_s)$ be a leaf-to-root path in $\cT$ of length $s \geq 2q+1$, where
	$v_0$ is a leaf and $v_s$ is the root. Denote by $G_{v_i} = G[X^{i}, Y^{i}]$ the graph
	corresponding to a node $v_i$ in $\cP$.
	By construction, all internal nodes of $\cP$ are either $D$-nodes or $P$-nodes.
	Clearly, the path cannot contain two consecutive $D$-nodes, as any child of a $D$-node
	is a connected graph.
	Furthermore, a unique non-leaf child $v_i$ of a $P$-node is a $D$-node, 
	and every $x \in X^{i}$ has degree at least $k$ in $G_{v_i}$. 
	Consequently, $P$-nodes and $D$-nodes alternated along (the internal part of) $\cP$. 
	
	Let $v_{i-1}, v_i, v_{i+1}, v_{i+2}$ be four internal nodes of $\cP$,
	where $v_{i-1}$ and $v_{i+1}$ are $D$-nodes, and $v_i$ and $v_{i+2}$ are $P$-nodes.
	Recall that, since the parent $v_{i+2}$ of $v_{i+1}$ is a $P$-node, every vertex in $X_{i+1}$ has degree at least $k$ in $G_{v_{i+1}}$.
	Hence, since $G_{v_i}$ is a connected component of $G_{v_{i+1}}$, every vertex in $X^{i} \subseteq X^{i+1}$ also has degree at least $k$.
	Let $X^{i} = X^{i}_0 \cup X^{i}_1 \cup X^{i}_2$ be the partition of $X^{i}$ according to the decomposition rules. Since $X^{i}_0 \cup X^{i}_1 \neq \emptyset$ and $X^i_0 = \emptyset$, we conclude that $X^{i}_1 \neq \emptyset$. Therefore, by \cref{lemma:Fpq structure}, if the $G_{v_{i-1}} = G[X^{i}_2, Y^i]$ contains $Z_{r,s}$ for some $r < q$ and $p < s \leq k$, then $G_{v_i}$ contains $Z_{r+1, s-p}$.
	
	Let $v_t$ be the first $D$-node in $\cP$. Note that $t \leq 2$. Every vertex in $X^t$ has degree at least $k$
	in $G_{v_t}$, and therefore $Z_{1,k} \sqsubset G_{v_t}$. By induction, the above discussion implies that
	for $1 \leq i \leq q-1$, the graph $Z_{1+i,k-ip}$ is an induced subgraph of $G_{v_{t+2i-1}}$.
	Hence, since the length of $\cP$ is at least $2q+1$, we have $H^{\circ\circ}_q = Z_{q,1} \sqsubset Z_{q,k-(q-1)p} \sqsubset G_{v_{t+2q-3}} \sqsubset G$, a contradiction.
\end{proof}

\begin{lemma}
\label{lemma:Fpq labeling scheme}
Let $p \in \mathbb{N}$ and let $\cF$ be a stable class of one-sided $F_{p,p}$-free bipartite graphs.
Then $\cF$ admits a constant-size equality-based adjacency labeling scheme, and hence $\RL(\cF_n) = O(1)$.
\end{lemma}
\begin{proof}
	Since $\cF$ is stable, it does not contain $\cC^\dcirc$ as a subclass.
	Let $q$ be the minimum number such that $H^{\circ\circ}_q \not\in \cF$.
	Let $k = (q+1)p$ and let $\cQ$ be the class of one-sided $T_k$-free bipartite graphs.
	We have that $\cF \subseteq \cF_{p,q}$, and therefore, by \cref{prop:Fpp-decomposition}, the graphs in $\cF$ admit $(\cQ,2)$-decomposition trees of depth at most $2q$.
	Hence, by \cref{lemma:tp free labeling} and \cref{lemma:bipartite decomposition},
	$\cF$ admits a constant-size equality-based adjacency labeling scheme.
\end{proof}

We conclude this section by showing that stable classes of $F^*_{p,p'}$-free graphs admit
constant-size equality-based adjacency labeling schemes. For this we will use the above result for
one-sided $F_{p,p}$-free graphs and the following

\begin{proposition}[\cite{All09}, Corollary 9]
\label{prop:allen F*}
Let $G = (X,Y,E)$ be a $F^*_{p,p}$-free bipartite graph. Then there is a partition $X = X_1 \cup
X_2$ and $Y = Y_1 \cup Y_2$, where $|Y_2| \leq 1$, such that both $G[X_1,Y_1]$ and $\bc{G[X_2,Y_1]}$ are one-sided $F_{p,p}$-free.
\end{proposition}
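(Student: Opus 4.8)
The plan is to prove this directly from the definition of $F^*_{p,p}$-freeness by a degree-based split of $X$; in fact I expect one can take $Y_2 = \emptyset$, the one-vertex slack in the statement being a convenience rather than a necessity.

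\emph{Step 1 (a rigidity dichotomy).} First I would record the following consequence of $F^*_{p,p}$-freeness, using only copies of $F^*_{p,p}$ whose size-$2$ side embeds into $X$: for all $u,v \in X$, if $|N(u)\setminus N(v)| \ge p$ and $|N(v)\setminus N(u)| \ge p$ (neighbourhoods taken in $Y$), then $N(u)\cap N(v) = \emptyset$ or $N(u)\cup N(v) = Y$. Indeed, if both failed, choose $c \in N(u)\cap N(v)$, some $d \in Y \setminus (N(u)\cup N(v))$, and sets $\{a_1,\dots,a_p\} \subseteq N(u)\setminus N(v)$ and $\{b_1,\dots,b_p\}\subseteq N(v)\setminus N(u)$; these $2p+4$ vertices are pairwise distinct (the vertices $c,d,a_i,b_j$ realise the four distinct adjacency patterns towards $\{u,v\}$, and $u,v \in X$ while the others lie in $Y$) and induce exactly $F^*_{p,p}$, with $\{u,v\}$ the size-$2$ part, $c$ the common neighbour, and $d$ the isolated vertex --- contradicting $F^*_{p,p}$-freeness.

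\emph{Step 2 (the split and the two checks).} Then I would set $Y_1 := Y$, $Y_2 := \emptyset$, and partition $X = X_1 \cup X_2$ by degree, $X_1 := \{u \in X : 2\deg(u) \le |Y|\}$ and $X_2 := X \setminus X_1$. For $G[X_1,Y_1]$: an induced $F_{p,p}$ with size-$2$ part inside $X_1$ would supply $u,v \in X_1$ with $|N(u)\setminus N(v)|,|N(v)\setminus N(u)| \ge p$ and a common neighbour (the image of $c$); by Step 1 this forces $N(u)\cup N(v) = Y$, so $|N(u)|+|N(v)| \ge |Y|+1$ and $\max(\deg u,\deg v) > |Y|/2$, contradicting $u,v\in X_1$. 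Hence $G[X_1,Y_1]$ is one-sided $F_{p,p}$-free. For $\bc{G[X_2,Y_1]}$: unpacking the bipartite complement, this graph contains an induced $F_{p,p}$ with size-$2$ part inside $X_2$ precisely when $G[X_2,Y]$ contains an induced $\bc{F_{p,p}}$ with its size-$2$ part inside $X_2$, and $\bc{F_{p,p}}$ is $T_p$ (centres $u,v$) together with one extra vertex $c$ on the large side; concretely this means there are $u,v\in X_2$ with $|N(u)\setminus N(v)|,|N(v)\setminus N(u)| \ge p$ and some $c \in Y \setminus (N(u)\cup N(v))$, i.e.\ $N(u)\cup N(v) \ne Y$. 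By Step 1, $N(u)\cup N(v) \ne Y$ forces $N(u)\cap N(v) = \emptyset$, so $|N(u)|+|N(v)| \le |Y|-1$ and $\min(\deg u,\deg v) < |Y|/2$, contradicting $u,v \in X_2$. Hence $\bc{G[X_2,Y_1]}$ is one-sided $F_{p,p}$-free, which finishes the argument.

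\emph{Where the work is.} The two load-bearing points are (i) that the isolated vertex $d$ of $F^*_{p,p}$ upgrades ``$G$ has an induced $F_{p,p}$ on $\{u,v\}$'' to the rigid alternative ``$N(u)\cap N(v)=\emptyset$ or $N(u)\cup N(v)=Y$'', and (ii) the identification $\bc{F_{p,p}} \cong T_p + K_1$, which transports the same alternative to the bipartite-complement side; the degree split then does the rest essentially for free. Neither point is deep, and the only place that needs care is the bipartite-complement translation (matching private neighbours, the common neighbour, and the isolated vertex of $\bc{F_{p,p}}$ correctly) --- this edge-case bookkeeping is presumably why the cited statement tolerates the harmless extra vertex $|Y_2|\le 1$, though the split above shows $Y_2=\emptyset$ suffices. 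If instead one wishes to reproduce Allen's original route, the main obstacle would be replaced by whatever intermediate structural lemma he factors through; the argument above is a self-contained alternative.
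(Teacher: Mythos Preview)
The paper does not prove this proposition; it is quoted verbatim as Corollary~9 of \cite{All09} and used as a black box in the proof of \cref{th:Fpq labeling}. So there is no proof in the paper to compare against.

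Your self-contained argument is correct. The dichotomy in Step~1 is exactly the right observation: the extra isolated vertex $d$ that distinguishes $F^*_{p,p}$ from $F_{p,p}$ forces, for any $u,v\in X$ with $p$ private neighbours each, that either $N(u)\cap N(v)=\emptyset$ or $N(u)\cup N(v)=Y$. The degree threshold $X_1=\{u:2\deg(u)\le|Y|\}$ then cleanly separates the two alternatives, and your inequality bookkeeping (sum $\ge|Y|+1$ in the $X_1$ case, sum $\le|Y|-1$ in the $X_2$ case) is tight and correct in both parities of $|Y|$. The identification $\bc{F_{p,p}}\cong T_p+K_1$ is also correct and makes the $X_2$ check transparent. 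Your observation that $Y_2=\emptyset$ suffices is a genuine (if minor) sharpening of the stated form; the slack $|Y_2|\le 1$ is presumably an artefact of Allen's original argument rather than a necessity.
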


\begin{theorem}
\label{th:Fpq labeling}
For any constants $p,p' \geq 1$, a stable class $\cF$ of $F^*_{p,p'}$-free bipartite graphs 
admits a constant-size equality-based adjacency labeling scheme, and hence $\RL(\cF_n) = O(1)$.
\end{theorem}
\begin{proof}
	As before, since $\cF$ is stable, it does not contain $\cC^\dcirc$ as a subclass.
	Let $q$ be the minimum number such that $H^{\circ\circ}_q \not\in \cF$, and 
	assume without loss of generality that $p \geq p'$.
	It follows that $\cF$ is a subclass of $(F^*_{p,p}, H^{\circ\circ}_q)$-free bipartite
	graphs. Let $G = (X,Y,E)$ be a member of this class. Let $X=X_1\cup X_2, Y = Y_1 \cup Y_2$ be the partition given by \cref{prop:allen F*}. We assign labels as follows.

We start the label for each vertex with a one-bit prefix indicating whether it is in $X$ or $Y$. We
then append the following labels. For $x \in X$, we use another one-bit prefix that is equal to 1 if $x$ is adjacent to the unique vertex $Y_2$, and 0 otherwise. Then, we use one more one-bit prefix to indicate whether $x \in X_1$ or $x \in X_2$. If $x \in X_1$, complete the label by using the labeling scheme of \cref{lemma:Fpq labeling scheme} for $G[X_1,Y_1]$. If $x \in X_2$, complete the label by using the labeling scheme of \cref{lemma:Fpq labeling scheme} for $\bc{G[X_2,Y_1]}$.

For $y \in Y$, use a one-bit prefix to indicate whether $y \in Y_2$. If $y \in Y_1$ then concatenate the two labels for $y$ obtained from the labeling scheme of \cref{lemma:Fpq labeling scheme} for
$G[X_1,Y_1]$ and $\bc{G[X_2,Y_1]}$.

The decoder first checks if $x,y$ are in opposite parts. Now assume $x \in X, y \in Y$. The decoder
checks if $y \in Y_2$ and outputs the appropriate value using the appropriate prefix from the label of
$x$. Then if $x \in X_1$, it uses the labels of $x$ and $y$ in $G[X_1,Y_1]$; otherwise it uses the
labels of $x$ and $y$ in $\bc{G[X_2,Y_1]}$ and flips the output.
\end{proof}

\subsubsection{\texorpdfstring{$P_7$}{P7}-Free Bipartite Graphs}
\label{subsection:P7}

In this section, we prove \cref{thm:bipartite monogenic} for $P_7$-free bipartite graphs
by developing a constant-size equality-based adjacency labeling scheme for stable classes of $P_7$-free bipartite graphs 

In the below definition, for two disjoint sets of vertices $A$ and $B$ we say 
that $A$ is \emph{complete} to $B$ if every vertex in $A$ is adjacent to every vertex in $B$; we also say that $A$ is  \emph{anticomplete} to $B$ if there are no edges between $A$ and $B$.

\begin{definition}[Chain Decomposition]
\label{def: chain decomp}
See \cref{fig:chain decomposition} for an illustration of the chain decomposition.
Let $G=(X,Y,E)$ be a bipartite graph and $k \in \bN$.  We say that $G$ admits a $k$-\textit{\decomp
decomposition} if one of the parts, say $X$, can be partitioned into subsets $A_1, \ldots, A_k, C_1,
\ldots, C_k$ and the other part $Y$ can be partitioned into subsets $B_1, \ldots, B_k, D_1, \ldots,
D_k$ in such a way that:
	\begin{itemize}
		\item For every $i \leq k-1$, the sets $A_i, B_i,C_i,D_i$ are non-empty. For $i=k$, at least one of the sets $A_i, B_i,C_i,D_i$ must be non-empty.
		\item For each $i =1, \ldots, k$, 
		\begin{itemize}
			\item every vertex of $B_i$ has a neighbour in $A_i$; 
			\item every vertex of $D_i$ has a neighbour in $C_i$;
		\end{itemize}
		\item For each $i = 2, \ldots, k-1$, 
		\begin{itemize}
			\item every vertex of  $A_i$ has a non-neighbour in $B_{i-1}$;
			\item	every vertex of $C_i$ has a non-neighbour in $D_{i-1}$;
		\end{itemize}		
		\item For each $i = 1, \ldots, k$, 
		\begin{itemize}
			\item the set $A_i$ is anticomplete to $B_j$ for $j > i$ and
			is complete to $B_j$ for $j < i-1$; 
			\item the set $C_i$ is anticomplete to $D_j$ for $j > i$ and
			is complete to $D_j$ for $j < i-1$;
		\end{itemize}
		\item For each $i = 1, \ldots, k$, 
		\begin{itemize}
			\item the set $A_i$ is complete to $D_j$ for $j < i$, and is anticomplete to
			$D_j$ for $j \geq i$; 
			\item	the set $C_i$ is complete to $B_j$ for $j < i$, and is anticomplete to $B_j$ for $j \geq i$.
		\end{itemize}
	\end{itemize}
\end{definition}

\begin{figure}[ht]
	\begin{center}
		\tikzstyle{small_vertex}=[circle,fill=blue!100,text=white,inner sep=0.4mm,font=\scriptsize]
		\tikzstyle{med_vertex}=[circle,fill=blue!100,text=white,inner sep=0.5mm,font=\small]
		\tikzstyle{med_vertex_b}=[circle,fill=black!100,text=white,inner sep=0.5mm,font=\small]
		\tikzstyle{med_vertex_r}=[circle,fill=BrickReplacement!100,text=white,inner sep=0.5mm,font=\small]
		\tikzstyle{med_vertex_g}=[circle,fill=ForestGreen!100,text=white,inner sep=0.5mm,font=\small]
		\tikzstyle{vertex}=[circle, draw=black, text=white,inner sep=0.8mm]
		\tikzstyle{blue_vertex}=[circle,fill=blue!100,text=white,inner sep=0.8mm]
		\tikzstyle{red_vertex}=[circle,fill=red!100,text=white,inner sep=0.8mm]
		\tikzstyle{point}=[circle,fill=black,inner sep=0.3mm]
		\tikzstyle{spoint}=[circle,fill=black,inner sep=0.2mm]
		\tikzstyle{first part vertex} = [vertex, fill=red!40]
		\tikzstyle{second part vertex} = [vertex, fill=green!80]
		\tikzstyle{hidden vertex} = [vertex, fill=black!10]
		\tikzstyle{minClass}=[circle,draw=black!50,thin,inner sep=1.6mm]
		\tikzstyle{smallMinClass}=[circle,draw=black!50,thin,inner sep=1.4mm]
		\tikzstyle{tinyMinClass}=[circle,draw=black!50,thin,inner sep=1.2mm]
		\tikzstyle{veryTinyMinClass}=[circle,draw=black!50,thin,inner sep=0.4mm]
		\tikzstyle{w_vertex}=[circle,fill=white!100,text=black,inner sep=0.8mm,draw]
		
		\tikzstyle{b_set}=[circle, minimum size=1cm, draw=NotionColor]
		\tikzstyle{r_set}=[circle, minimum size=1cm, draw=BrickReplacement]
		
		\tikzstyle{sb_set}=[circle, minimum size=0.5cm, draw=NotionColor]
		\tikzstyle{sr_set}=[circle, minimum size=0.5cm, draw=BrickReplacement]
		
		\tikzstyle{pset}=[rounded rectangle, thin, draw, minimum width=11.5cm, minimum height=2.5cm]

		\tikzstyle{vset}=[ellipse, thin, draw, minimum width=4.5cm, minimum height=1.5cm]
		\tikzstyle{neigh_set}=[ellipse, thin, draw, minimum width=1.5cm, minimum height=0.5cm]
		
		\tikzstyle{defbox} = [
		draw=blue, very thick,
		rectangle, rounded corners,
		inner sep=10pt
		]
		
		\tikzstyle{selected edge} = [draw,line width=2pt,-,green!80]
		\tikzstyle{contraction edge} = [draw,line width=1.5pt]
		\tikzstyle{hidden edge} = [draw,line width=0.5pt,black!10]
		\tikzstyle{strong edge} = [draw,line width=1pt,-]
		\tikzstyle{path edge} = [draw,gray!90,dashed]
		\tikzstyle{eat edge} = [draw,BrickReplacement,very thick]
		
		\tikzstyle{text label}=[text=black]
		
		\def\lshift{-1.8}
		\def\rshift{1.2}
		\def\dshift{-3}
		\def\pshift{2.2} 
		\begin{tikzpicture}[scale=1.2]
			\node[pset, color=black, opacity=0] (U) at (3*\rshift+\pshift/2,0) {};
			\node[pset, color=black, opacity=0] (W) at (3*\rshift+\pshift/2,\dshift) {};
			
			\node[b_set, label=center:\bb{$A_1$}] (A1) at (0,0) {};
			\node[b_set, label=center:\bb{$A_2$}] (A2) at (0+\rshift,0) {};
			\node[b_set, label=center:\bb{$A_3$}] (A3) at (0+2*\rshift,0) {};
			\node[b_set, label=center:\bb{$A_4$}] (A4) at (0+3*\rshift,0) {};
			
			\node[b_set, label=center:\bb{$B_1$}] (B1) at (0,\dshift) {};
			\node[b_set, label=center:\bb{$B_2$}] (B2) at (0+\rshift,\dshift) {};
			\node[b_set, label=center:\bb{$B_3$}] (B3) at (0+2*\rshift,\dshift) {};
			\node[b_set, label=center:\bb{$B_4$}] (B4) at (0+3*\rshift,\dshift) {};
			
			\node[r_set, label=center:\rr{$C_1$}] (C1) at (3*\rshift+\pshift+3*\rshift, 0) {};
			\node[r_set, label=center:\rr{$C_2$}] (C2) at (3*\rshift+\pshift+2*\rshift, 0) {};
			\node[r_set, label=center:\rr{$C_3$}] (C3) at (3*\rshift+\pshift+\rshift, 0) {};
			\node[r_set, label=center:\rr{$C_4$}] (C4) at (3*\rshift+\pshift, 0) {};
			
			\node[r_set, label=center:\rr{$D_1$}] (D1) at (3*\rshift+\pshift+3*\rshift, \dshift) {};
			\node[r_set, label=center:\rr{$D_2$}] (D2) at (3*\rshift+\pshift+2*\rshift, \dshift) {};
			\node[r_set, label=center:\rr{$D_3$}] (D3) at (3*\rshift+\pshift+\rshift, \dshift) {};
			\node[r_set, label=center:\rr{$D_4$}] (D4) at (3*\rshift+\pshift, \dshift) {};

			\foreach \from/\to in {B1/A1,B2/A2,B3/A3,B4/A4,  D1/C1,D2/C2,D3/C3,D4/C4}
			\draw[color=violet,snake=coil, line after snake=9pt, segment aspect=0, segment length=5pt, -{Latex[length=3mm]}, opacity=0.7] 
			(\from) -- (\to);
			
			\node[sb_set, label=center:\footnotesize\bb{$V$}] (X) at (1.5*\rshift,1.6*\dshift) {};
			\node[sb_set, label=center:\footnotesize\bb{$U$}] (Y) at (0,1.6*\dshift) {};
			\foreach \from/\to in {Y/X}
			\draw[color=violet,snake=coil, line after snake=9pt, segment aspect=0, segment length=5pt, -{Latex[length=3mm]}] 
			(\from) -- (\to);
			
			\coordinate [label=right:every \bb{$u \in U$} \rr{has a neighbour in} \bb{$V$}] (def) at (\rshift + 1, 1.6*\dshift-0.02);

			\foreach \from/\to in {A2/B1,A3/B2,  C2/D1,C3/D2}
			\draw[dashed, color=ForestGreen, -{Latex[length=3mm]}, opacity=0.7] 
			(\from) -- (\to);
			
			\node[sb_set, label=center:\footnotesize\bb{$V$}] (X) at (1.5*\rshift,1.85*\dshift) {};
			\node[sb_set, label=center:\footnotesize\bb{$U$}] (Y) at (0,1.85*\dshift) {};
			\foreach \from/\to in {Y/X}
			\draw[dashed, color=ForestGreen, -{Latex[length=3mm]}] (\from) -- (\to);
			
			\coordinate [label=right:every \bb{$u \in U$} \rr{has a non-neighbour in} \bb{$V$}] (def) at (\rshift + 1, 1.85*\dshift-0.02);

			\foreach \from/\to in {A3/B1,A4/B1,A4/B2, C3/D1,C4/D1,C4/D2}
			\draw[color=Magenta, opacity=0.7] (\from) -- (\to);
			
			\node[sb_set, label=center:\footnotesize\bb{$V$}] (X) at (1.5*\rshift,2.1*\dshift) {};
			\node[sb_set, label=center:\footnotesize\bb{$U$}] (Y) at (0,2.1*\dshift) {};
			\foreach \from/\to in {Y/X}
			\draw[color=Magenta] (\from) -- (\to);
			
			\coordinate [label=right:\bb{$U$} \rr{is complete to} \bb{$V$}] (def) at (\rshift + 1, 2.1*\dshift-0.02);

			\draw[color=Magenta, opacity=0.7] ([xshift=10pt,yshift=-10pt]A2.center) -- ([xshift=-10pt,yshift=10pt]D1.center);
			
			\foreach \from/\to in {A3/D1, A3/D2}
			\draw[color=Magenta, opacity=0.7] ([xshift=10pt,yshift=-10pt]\from.center) -- ([xshift=-10pt,yshift=10pt]\to.center);
			
			\foreach \from/\to in {A4/D1, A4/D2, A4/D3}
			\draw[color=Magenta, opacity=0.7] ([xshift=10pt,yshift=-10pt]\from.center) -- ([xshift=-10pt,yshift=10pt]\to.center);
			
			\draw[color=Magenta, opacity=0.7] ([xshift=-10pt,yshift=-10pt]C2.center) -- ([xshift=10pt,yshift=10pt]B1.center);
			
			\foreach \from/\to in {C3/B1, C3/B2}
			\draw[color=Magenta, opacity=0.7] ([xshift=-10pt,yshift=-10pt]\from.center) -- ([xshift=10pt,yshift=10pt]\to.center);
			
			\foreach \from/\to in {C4/B1, C4/B2, C4/B3}
			\draw[color=Magenta, opacity=0.7] ([xshift=-10pt,yshift=-10pt]\from.center) -- ([xshift=10pt,yshift=10pt]\to.center);
			
		\end{tikzpicture}
	\end{center}
	\caption{Example of a 4-chain decomposition.}
	\label{fig:chain decomposition}
\end{figure}
\begin{remark}
\label{rem:2-chain decomp}
In the case of a $2$-chain decomposition of a connected $P_7$-free bipartite graphs, we will also
need the fact that every vertex in $A_2$ and every vertex in $A_1$ have a neighbour in common; and
every vertex in $C_2$ and every vertex in $C_1$ have a neighbour in common. This is not stated
explicitly in \cite{LZ17}, but easily follows from a proof in \cite{LZ17}.
Since the neighbourhood of every vertex in $A_1$ lies entirely in $B_1$, the above fact implies
that every vertex in $A_2$ has a neighbour in $B_1$. Similarly, the neighbourhood of every vertex in
$C_1$ lies entirely in $D_1$, and therefore every vertex in $C_2$ has a neighbour in $D_1$.
\end{remark}

\begin{theorem}[\cite{LZ17}]\label{thm:P7_chain_dec}
	Let $G=(X,Y,E)$ be a $P_7$-free bipartite graph such that both $G$ and $\bc{G}$ are connected.
	Then $G$ or $\bc{G}$ admits a $k$-chain decomposition for some $k \geq 2$. 
\end{theorem}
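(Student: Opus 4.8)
The starting observation is that a connected $P_7$-free graph has diameter at most $5$: a shortest path between two vertices at distance $d$ is induced and has $d+1$ vertices, so if $d \ge 6$ its first seven vertices induce a $P_7$; hence $d \le 5$. Consequently a breadth-first search in $G$ from any vertex produces at most six layers, and the same is true of a breadth-first search in the bipartite complement $\bc G$, which is also connected. The plan is to use one of these two layerings as the scaffolding for the chain decomposition — extracting the $A_i, B_i$ blocks from a layering of $G$ and the $C_i, D_i$ blocks from a layering of $\bc G$, or vice versa — and to show that $P_7$-freeness of $G$ forces all the complete/anticomplete relations in \cref{def: chain decomp}, with every violation producing an explicit induced $P_7$.

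Concretely, I would first attempt a chain decomposition of $G$: pick a vertex $x_1$ in the smaller colour class, set $B_1 := N_G(x_1)$, let $A_1 \ni x_1$ be a minimal subset of that colour class with $N_G(A_1) \supseteq B_1$, and argue from minimality plus $P_7$-freeness that in fact $N_G(A_1) = B_1$ (a vertex of $A_1$ reaching outside $B_1$, together with $x_1$, a private neighbour, and a short connecting path, yields an induced $P_7$). Then peel $A_1, B_1$ off and repeat inside the remaining graph to produce $A_2, B_2, \dots$; by the diameter bound this runs for only a constant number of rounds. The key structural claims — that $A_i$ is anticomplete to $B_j$ for $j>i$ and complete to $B_j$ for $j<i-1$, that each $v \in A_i$ with $2 \le i \le k-1$ has a private non-neighbour in $B_{i-1}$, and so on — are each proved by contradiction, splicing together the private neighbour of $x_1$, an alternating path through the intermediate layers, and the offending edge or non-edge to exhibit at most seven vertices inducing a path. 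Running the symmetric argument inside $\bc G$ produces the $C_i, D_i$, and the crossover relations ($A_i$ versus $D_j$, $C_i$ versus $B_j$) together with the common-neighbour property needed in the $k=2$ case (\cref{rem:2-chain decomp}) are again established by naming an induced $P_7$ in $G$ — note that $\bc G$ is used only to \emph{host} the $C$--$D$ blocks, while all forbidden-substructure arguments live in $G$ itself.

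The dichotomy between $G$ and $\bc G$ arises because the peeling in $G$ may ``stall'' before all the defining conditions are realizable — essentially once the remaining graph becomes complete bipartite on one side — and in that case the accumulated data instead fits a chain decomposition of $\bc G$; so one argues that at least one of the two processes succeeds. The bound $k \ge 2$ is then automatic: a $1$-chain decomposition would split $G$ into two bipartite pieces $(A_1,B_1)$, $(C_1,D_1)$ with no edges between them, with $B_1$ dominated by $A_1$ and $D_1$ by $C_1$; connectivity of $G$ forces one piece to be empty, and connectivity of $\bc G$ then fails unless the remaining piece is trivial — so the construction must advance past the first layer.

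I expect the main obstacle to be the bookkeeping rather than any single idea: \cref{def: chain decomp} imposes on the order of a dozen complete/anticomplete and private-(non-)neighbour conditions relating the eight families $A_i, B_i, C_i, D_i$, and each must be shown to be forced by $P_7$-freeness, which means, for every one of them, writing down explicitly the at most seven vertices forming an induced path when it is violated and checking that all chords are absent. Making these case analyses interlock — especially the crossover conditions and the boundary behaviour of the first and last layers — and verifying that the only way the construction can be blocked in $G$ is exactly the one remedied by passing to $\bc G$, is the technical heart; the layered scaffolding and the diameter bound are the easy part.
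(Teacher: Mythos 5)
First, a point of comparison: the paper does not prove this statement at all --- it is imported verbatim from \cite{LZ17} --- so there is no in-paper proof to measure your proposal against, and it must stand on its own. As written, it does not.

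The central problem is that your scaffolding --- BFS layerings of $G$ and $\bc{G}$, with the diameter bound $\le 5$ guaranteeing ``only a constant number of rounds'' --- is the wrong picture of a chain decomposition. The levels of \cref{def: chain decomp} are not distance layers: since $A_i$ is complete to $B_1$ for every $i \ge 3$ and $C_i$ is complete to $B_1$ for every $i \ge 2$, essentially the entire graph sits within distance $2$--$3$ of $A_1$ no matter how large $k$ is, so the diameter bound places no bound on $k$. Indeed, the proof of \cref{lemma: P7 decomp G} shows that a $k$-chain decomposition contains an induced $H^{\circ\circ}_{k-1}$, and $P_7$-free bipartite graphs contain arbitrarily large half-graphs, so no constant bound on $k$ is available; your peeling must be allowed to run for unboundedly many rounds, and the reason it produces the half-graph-like ``complete far below the diagonal / anticomplete above it'' pattern cannot come from a six-layer BFS tree. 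A second structural error: you propose to manufacture the $C_i,D_i$ blocks by running the peeling inside $\bc{G}$, with ``$\bc G$ used only to host the $C$--$D$ blocks.'' But every condition in \cref{def: chain decomp} --- including ``every vertex of $D_i$ has a neighbour in $C_i$'' and all the crossover completeness relations --- refers to edges of a \emph{single} graph; the ``or'' in the theorem is a global dichotomy ($G$ admits the decomposition, or $\bc G$ does), not a license to dominate the $A$--$B$ chain in $G$ and the $C$--$D$ chain in $\bc G$ within one decomposition. A peeling of $\bc G$ hands you domination in $\bc G$, i.e.\ the negation of the condition you need in $G$.

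Two further gaps. Your argument for $k \ge 2$ proves nothing: a $1$-chain decomposition exists trivially for any connected bipartite graph (take $A_1 = X$, $B_1 = Y$, $C_1 = D_1 = \emptyset$; the non-emptiness requirement applies only to levels $i \le k-1$). The entire content of the theorem is that one can force \emph{all four} families to be non-empty on every level below the last, and this is exactly where connectivity of both $G$ and $\bc G$ must enter substantively --- your proposal does not explain how the construction achieves it. Finally, all of the forbidden-substructure verifications (including the common-neighbour property of \cref{rem:2-chain decomp}) are deferred as ``bookkeeping,'' but with the scaffolding misidentified there is no reason to believe the greedily peeled sets satisfy the completeness/anticompleteness pattern in the first place; these conditions must be built into the construction, not merely checked afterwards.
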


\begin{lemma}
	\label{lemma: P7 decomp G}
	Let $G=G(X,Y,E)$ be a connected $P_7$-free bipartite graph of chain number $c$ that
	admits a $k$-chain decomposition for some $k \geq 2$. Then there exists 
	a partition of $X$ into $p \leq 2(c+1)$ sets $X_1, X_2, \ldots, X_p$, and
	a partition of $Y$ into $q \leq 2(c+1)$ sets $Y_1, Y_2, \ldots, Y_q$ such that, for any $i
  \in [p], j \in [q]$,
	\[
	\ch(G[X_i, Y_j]) <  \ch(G) \,.
	\]
\end{lemma}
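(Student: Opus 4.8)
The plan is to take the partitions straight from the $k$-chain decomposition guaranteed by the hypothesis: let $X_1,\dots,X_p$ be the nonempty sets among $A_1,\dots,A_k,C_1,\dots,C_k$ and let $Y_1,\dots,Y_q$ be the nonempty sets among $B_1,\dots,B_k,D_1,\dots,D_k$. Since $\{A_i\}\cup\{C_i\}$ partitions $X$ and $\{B_i\}\cup\{D_i\}$ partitions $Y$ by \cref{def: chain decomp}, these are genuine partitions, and it remains to show (i) $p,q\le 2(c+1)$ and (ii) $\ch(G[X_i,Y_j])<c=\ch(G)$ for all $i,j$.

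For (i) it suffices to prove $k\le c+1$. First I would note that a nonempty $B_i$ forces $A_i$ nonempty (every vertex of $B_i$ has a neighbour in $A_i$), and similarly $D_i\ne\emptyset$ forces $C_i\ne\emptyset$; hence whichever of $A_k,B_k,C_k,D_k$ is nonempty, at least one of $A_k,C_k$ is nonempty. Now the ``staircase'' clauses of \cref{def: chain decomp} give clean half-graphs: $A_i$ is complete to $D_j$ for $j<i$ and anticomplete to $D_j$ for $j\ge i$, and symmetrically $C_i$ is complete to $B_j$ for $j<i$ and anticomplete for $j\ge i$. Taking one vertex from each of $A_1,\dots,A_{k-1}$ and from $A_k$ (if nonempty), together with one vertex from each of $D_1,\dots,D_{k-1}$, the subgraph induced on these vertices is, after reversing one of the two orderings, exactly $H^{\circ\circ}_{k-1}$; if instead $C_k\ne\emptyset$, the same works with the $C_i$'s and $B_j$'s. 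Hence $c=\ch(G)\ge k-1$, so $p\le 2k\le 2(c+1)$ and likewise $q\le 2(c+1)$. I would also record that $k\ge 2$ forces $\ch(G)\ge 2$: \cref{rem:2-chain decomp} supplies a common neighbour $b\in B_1$ of $A_1$ and $A_2$, and then $a_1\in A_1$, $a_2\in A_2$, $b$ and $d_1\in D_1$ induce $H^{\circ\circ}_2$. So ``$<c$'' below is a genuine bound.

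For (ii) the off-diagonal blocks are immediate from the complete/anticomplete clauses: $G[A_a,B_b]$ is complete bipartite for $b<a-1$ and edgeless for $b>a$; $G[A_a,D_b]$ and $G[C_a,B_b]$ are complete for $b<a$ and edgeless for $b\ge a$; $G[C_a,D_b]$ is complete for $b<a-1$ and edgeless for $b>a$; and every complete or edgeless bipartite graph has chain number at most $1<c$. The substance is in the ``diagonal'' blocks $G[A_a,B_a]$, $G[A_a,B_{a-1}]$, $G[C_a,D_a]$, $G[C_a,D_{a-1}]$. For $G[A_a,B_a]$ with $a\ge 3$, I would argue by contradiction: if $H^{\circ\circ}_c\sqsubset G[A_a,B_a]$ is witnessed by $a_1,\dots,a_c\in A_a$ and $b_1,\dots,b_c\in B_a$ with $a_sb_t$ an edge iff $s\le t$, then I would adjoin $a^\star\in A_{a-1}$ and $d^\star\in D_{a-2}$ (both parts exist since $a-2\ge 1$). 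Here $a^\star$ is anticomplete to $B_a$, $d^\star$ is complete to $A_{a-1}$ and to $A_a$, and $A_{a-1}$ is complete to $D_{a-2}$, so $a^\star d^\star$ is an edge; one then checks that $\{a_1,\dots,a_c,a^\star\}$ and $\{b_1,\dots,b_c,d^\star\}$ induce $H^{\circ\circ}_{c+1}$, contradicting $\ch(G)=c$. The block $G[A_a,B_{a-1}]$ is handled in the same way, and the $C$--$D$ blocks follow by the $A\leftrightarrow C$, $B\leftrightarrow D$ symmetry of \cref{def: chain decomp}. The finitely many boundary blocks with small index, where $A_{a-1}$ or $D_{a-2}$ is unavailable, are instead handled by extending ``to the left'': a vertex of $C_{a+1}$ is complete to $B_a$, and a suitable vertex of $D_a$ (obtained from ``every vertex of $D_a$ has a neighbour in $C_a$'', together with \cref{rem:2-chain decomp} for small $k$) serves as the new extremal vertex on the $Y$-side.

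The step I expect to be the main obstacle is exactly this chain-extension for the diagonal blocks. Because $H^{\circ\circ}_c$ includes its diagonal as edges, one cannot naively prepend a full row and append a full column; the adjacencies fit together only when the two new vertices come from parts \emph{two} indices away (hence $A_{a-1}$ and $D_{a-2}$, not $A_{a-1}$ and $D_{a-1}$), and the handful of boundary blocks near index $1$ genuinely require the extra facts about short chain (sub)decompositions. Everything else is bookkeeping from \cref{def: chain decomp}.
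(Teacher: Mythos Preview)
Your plan is the paper's plan: take the partition straight from the chain decomposition, bound $k\le c+1$ via the built-in half-graph on $\{A_i\}\cup\{D_j\}$, and for each ``diagonal'' block find two outside vertices that bump the chain number by one. The bound $p,q\le 2(c+1)$ and the fact $\ch(G)\ge 2$ are fine.

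Your boundary-block argument is muddled, though. You propose $C_{a+1}$ (complete to $B_a$) together with a vertex of $D_a$ (anticomplete to $A_a$), but you justify their adjacency via ``every vertex of $D_a$ has a neighbour in $C_a$''---that gives a neighbour in $C_a$, not $C_{a+1}$, so the pair you name need not be adjacent. The paper's fix is cleaner and uniform: for $k\ge 3$, extend \emph{every} block $G[A_i,B_i]$ or $G[A_i,B_{i-1}]$ with $i\ge 2$ by a pair $(c_1,d_1)$ with $d_1\in D_1$ and $c_1\in C_1$ a neighbour of $d_1$ (so $c_1$ is anticomplete to all $B_j$ and $d_1$ is complete to all $A_i$, $i\ge 2$); and extend $G[A_1,B_1]$ by $(c_2,d_2)$ with $d_2\in D_2$ and $c_2\in C_2$ a neighbour of $d_2$. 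No ``two indices away'' bookkeeping is needed.

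There is a genuine gap you have not seen: when $k=2$ and one of $A_2,C_2$ is empty. Say $C_2=\emptyset$ (hence $D_2=\emptyset$). Then for $G[A_1,B_1]$ there is \emph{no} vertex of $X\setminus A_1$ complete to $B_1$ and no vertex of $Y\setminus B_1$ complete to $A_1$, so neither extension direction works and the partition coming from the chain decomposition is not good enough. The paper handles this by refining the partition: pick $a_2\in A_2$, split $B_1=B_1'\cup B_1''$ with $B_1'=N(a_2)\cap B_1$, and argue separately that $\ch(G[A_1,B_1'])<\ch(G)$ (extend with $a_2$ and some $d_1\in D_1$) and $\ch(G[A_1,B_1''])<\ch(G)$ (here \cref{rem:2-chain decomp} and $P_7$-freeness are used in an essential way to find, for any maximal half-graph in $G[A_1,B_1'']$, a vertex of $B_1'$ complete to its $A_1$-side). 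Without this refinement the lemma is false for the block $G[A_1,B_1]$ in this case.
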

\begin{proof}
	Assume, without loss of generality, that $X$ is partitioned into sets $A_1, \ldots, A_k, C_1, \ldots, C_k$ and $Y$ is partitioned into sets $B_1, \ldots, B_k, D_1, \ldots, D_k$ satisfying \cref{def: chain decomp}.
	Since at least one of the sets $A_k, B_k,C_k,D_k$ is non-empty, and
	every vertex in $B_k$ has a neighbour in $A_k$ and every vertex in $D_k$ has a neighbour in $C_k$,
	at least one of $A_k$ and $C_k$ is non-empty. Without loss of generality we assume that $A_k$ is not empty.
	It is straightforward to check by definition that for any vertices $a_2 \in A_2, a_3 \in A_3, \ldots, a_k \in A_k$,
	and $d_1 \in D_1, d_2 \in D_2, \ldots, d_{k-1} \in D_{k-1}$ the subgraph of $G$ induced by 
	$\{ a_2, a_3, \ldots, a_k, d_1, d_2, \ldots, d_{k-1} \}$ is isomorphic to $H^{\circ\circ}_{k-1}$, which implies that $k$ is at most $c+1$.
	We also observe that any path from a vertex in $A_1$ to a vertex in $D_1$
	contains at least 4 vertices, and hence $G$ contains $H^{\circ\circ}_2$ and 
	$\ch(G) \geq 2$. We split our analysis in two cases.
	
	\textbf{Case 1.} $k \geq 3$. We will show that for any 
	$X' \in \{ A_1, \ldots, A_k, C_1, \ldots, C_k \}$ and $Y' \in \{ B_1, \ldots, B_k\splitaftercomma{,} D_1, \ldots, D_k \}$,
	$\ch(G[X',Y']) < \ch(G)$. Since $\ch(G) \geq 2$, the chain number of a biclique is 1, and the chain number of a co-biclique is 0, we need only to verify pairs of sets that can induce a graph which is neither a biclique nor a co-biclique. By \cref{def: chain decomp}, these are the pairs 
	$(A_i, B_i)$, $(C_i, D_i)$  for $i \in [k]$ and $(A_i, B_{i-1})$, $(C_i, D_{i-1})$ for $i \in \{2, \ldots, k\}$. 
	
	We start with the pair $(A_1, B_1)$. Since $D_2$ is anticomplete to $A_1$, and $C_2$ is complete to $B_1$, for any vertex $d_2 \in D_2$ and its neighbour $c_2 \in C_2$ we have that $\ch(G[A_1,B_1]) < \ch(G[A_1 \cup \{ c_2 \}, B_1 \cup \{d_2\}]) \leq \ch(G)$. Similarly, since $D_1$ is complete to all $A_2, A_3, \ldots, A_k$, and $C_1$ is anticomplete to all $B_1, B_2, \ldots, B_k$, addition of a vertex $d_1 \in D_1$ and its neighbour $c_1 \in C_1$ to
	any of the graphs $G[A_i, B_i]$ or $G[A_i, B_{i-1}]$ for $i \in \{2, \ldots, k\}$ strictly increases the chain number of that graph. Symmetric arguments establish the desired conclusion for the pairs of sets
	$(C_i, D_i)$, $i \in [k]$, and $(C_i, D_{i-1})$, $i \in \{2, \ldots, k\}$.
	
	In this case, $A_1, \ldots, A_k, C_1, \ldots, C_k$ and $B_1, \ldots, B_k, D_1, \ldots, D_k$ are the desired partitions of $X$ and $Y$ respectively.

	\textbf{Case 2.} $k = 2$. Assume first that both $A_2$ and $C_2$ are non-empty.
	Let $c_2$ be a vertex in $C_2$, $d_1$ be a neighbour of $c_2$ in $D_1$ (which exists by \cref{rem:2-chain decomp}), and $c_1$
	be a neighbour of $d_1$ in $C_1$. Since $C_2$ is complete to $B_1$ and $D_1$ is
	anticomplete to $A_1$, $\ch(G[A_1,B_1]) < \ch(G[A_1 \cup \{ c_2 \}, B_1 \cup \{d_1\}]) \leq \ch(G)$.
	Similarly, because $C_1$ is anticomplete to both $B_1$ and $B_2$ and $D_1$ is complete to $A_2$,
	we have that $\ch(G[A_2,B_1]) < \ch(G[A_2 \cup \{ c_1 \}, B_1 \cup \{d_1\}]) \leq \ch(G)$
	and $\ch(G[A_2,B_2]) < \ch(G[A_2 \cup \{ c_1 \}, B_2 \cup \{d_1\}]) \leq \ch(G)$.
	Using symmetric arguments we can show that the chain number of each of 
	$G[C_1,D_1]$, $G[C_2,D_1]$, and $G[C_2,D_2]$ is strictly less than the chain number of
	$G$. All other pairs of sets $(X',Y')$, where $X' \in \{ A_1, A_2, C_1, C_2 \}$ and $Y' \in \{ B_1, B_2, D_1, D_2 \}$
	induce either a biclique or a co-biclique, and therefore $\ch(G[X',Y']) < \ch(G)$. In this case, $A_1, A_2, C_1, C_2$ and $B_1, B_2, D_1, D_2$ are the desired partitions of $X$ and $Y$ respectively.
	
	The case when one of $A_2$ and $C_2$ is empty requires a separate analysis. Assume that $A_2 \neq \emptyset$ 
	and $C_2 = \emptyset$. The  case when $A_2 = \emptyset$ and $C_2 \neq \emptyset$ is symmetric 
	and we omit the details.
	Since $C_2$ is empty, $D_2$ is also empty and therefore $A_1, A_2, C_1$ is a partition of $X$,
	and $B_1, B_2, D_1$ is a partition of $Y$. Let $a_2$ be a vertex in $A_2$, $d_1$ be a vertex in $D_1$,
	and $c_1$ be a neighbour of $d_1$ in $C_1$. Let $B_1'$ be the neighbourhood of $a_2$ in $B_1$ and
	let $B_1'' = B_1 \setminus B_1'$. We claim that  $A_1, A_2, C_1$ and $B_1', B_1'', B_2, D_1$
	are the desired partitions of $X$ and $Y$ respectively. 
	All the pairs of sets,
	except $(A_1, B_1')$ and $(A_1, B_1'')$, can be treated as before and we skip the details.
	For $(A_1, B_1')$, we observe that $a_2$ is complete to $B_1'$ and $D_1$ is anticomplete to $A_1$,
	and hence $\ch(G[A_1,B_1']) < \ch(G[A_1 \cup \{ a_2 \}, B_1' \cup \{d_1\}]) \leq \ch(G)$.
	
	To establish the desired property for $(A_1, B_1'')$, we first observe that by \cref{rem:2-chain decomp}
	every vertex in $A_1$ has a neighbour in common with $a_2$, and therefore every vertex in $A_1$ has
	a neighbour in $B_1'$. If $G[A_1, B_1'']$ is edgeless the property holds trivially. 
	Otherwise, let $P \subseteq A_1$ and $Q \subseteq B_1''$ be such that $P \cup Q$ induces a 
	$H^{\circ\circ}_s$ in $G[A_1, B_1'']$, where $s \geq 1$ is the chain number of the latter graph.
	Let $x$ be the vertex in $P$ that has degree 1 in $G[P,Q]$, and let $y$ be a neighbour of $x$ in $B_1'$.
	We claim that $y$ is complete to $P$. Indeed, if $y$ is not adjacent to some $x' \in P$, then
	$x', z, x, y, a_2, d_1, c_1$ would induce a forbidden $P_7$, where $z$ is the vertex in $Q$ that is adjacent
	to every vertex in $P$.
	Consequently, $G[P \cup \{ a_2 \}, Q \cup \{ y \}]$ is isomorphic to $H^{\circ\circ}_{s+1}$, and therefore $\ch(G[A_1, B_1'']) < \ch(G)$.
\end{proof}

For two pairs of numbers $(a,b)$ and $(c,d)$ we write $(a,b) \preceq (c,d)$
if $a \leq c$ and $b \leq d$, and we write $(a,b) \prec (c,d)$ if at least one of the
inequalities is strict.
  
\begin{lemma}
	\label{lemma: P7 decomp GcoG}
	Let $G=G(X,Y,E)$ be a $P_7$-free bipartite graph such that both $G$ and $\bc{G}$ are connected, and
	let $c$ be the chain number of $G$. Then there exists 
	a partition of $X$ into $p \leq 2(c+2)$ sets $X_1, X_2, \ldots, X_p$, and
	a partition of $Y$ into $q \leq 2(c+2)$ sets $Y_1, Y_2, \ldots, Y_q$ such that
	for any $i \in[p]$, $j \in [q]$
	$$
		\left( \ch(G_{i,j}), \ch(\bc{G_{i,j}}) \right) \prec  \left( \ch(G), \ch(\bc{G}) \right),
	$$
	where $G_{i,j} = G[X_i, Y_j]$.
\end{lemma}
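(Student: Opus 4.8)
The plan is to deduce this lemma from \cref{thm:P7_chain_dec} together with \cref{lemma: P7 decomp G}, treating separately the two cases in which $G$, respectively $\bc{G}$, is the graph furnished with a chain decomposition. The bridge between the two cases is that the class of $P_7$-free bipartite graphs is closed under bipartite complementation: one checks directly that $\bc{P_7} \cong P_7$ as colored bipartite graphs (with color classes $\{v_1,v_3,v_5,v_7\}$ and $\{v_2,v_4,v_6\}$, the bipartite complement of the path $v_1v_2\dotsc v_7$ is the path $v_3v_6v_1v_4v_7v_2v_5$). Hence $\bc{G}$ is again $P_7$-free, and by hypothesis it is connected.

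Before the main argument I would record three elementary facts. (a) For $X_i \subseteq X$ and $Y_j \subseteq Y$ we have $\bc{G[X_i,Y_j]} = \bc{G}[X_i,Y_j]$, and the chain number is monotone under induced colored bipartite subgraphs; consequently $\ch(G_{i,j}) \le \ch(G)$ and $\ch(\bc{G_{i,j}}) \le \ch(\bc{G})$ for every partition. (b) $\ch(\bc{G}) \le \ch(G)+1$: a chain $a_1,\dotsc,a_k$, $b_1,\dotsc,b_k$ in $\bc{G}$ has (as is automatic for bipartite graphs) all $a_i$ on one side and all $b_j$ on the other, so it satisfies $(a_i,b_j)\in E(G) \iff i>j$; reversing both index orders turns this into a strict half-graph $(a_i,b_j)\in E(G)\iff i<j$, and deleting the now-isolated extremal vertices and reindexing exhibits a half-graph of order $k-1$ inside $G$, whence $\ch(G) \ge k-1$. (c) The numbers $\ch(G)$ and $\ch(\bc{G})$ are precisely the coordinates of the right-hand side of the target inequality, and $\prec$ requires $\le$ in both coordinates and $<$ in at least one.

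The main argument is then short. By \cref{thm:P7_chain_dec}, at least one of $G$ and $\bc{G}$ admits a $k$-chain decomposition for some $k \ge 2$. If $G$ does, apply \cref{lemma: P7 decomp G} to $G$ with chain number $c = \ch(G)$: this produces partitions of $X$ and $Y$ into $p,q \le 2(c+1) \le 2(c+2)$ parts with $\ch(G_{i,j}) < \ch(G)$ for all $i,j$; combined with $\ch(\bc{G_{i,j}}) \le \ch(\bc{G})$ from (a), this is exactly $(\ch(G_{i,j}),\ch(\bc{G_{i,j}})) \prec (\ch(G),\ch(\bc{G}))$, strict in the first coordinate. If instead $\bc{G}$ admits the decomposition, apply \cref{lemma: P7 decomp G} to $\bc{G}$, whose chain number is $c' = \ch(\bc{G}) \le c+1$ by (b): this yields partitions into $p,q \le 2(c'+1) \le 2(c+2)$ parts with $\ch(\bc{G_{i,j}}) = \ch(\bc{G}[X_i,Y_j]) < \ch(\bc{G})$; with $\ch(G_{i,j}) \le \ch(G)$ from (a) this is again $(\ch(G_{i,j}),\ch(\bc{G_{i,j}})) \prec (\ch(G),\ch(\bc{G}))$, strict now in the second coordinate. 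In both cases the same partition works for all pairs $(i,j)$, completing the proof.

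I do not expect a serious obstacle here: all the structural content has already been packed into \cref{lemma: P7 decomp G}. The only points needing care are the closure of $P_7$-free bipartite graphs under bipartite complementation (which legitimizes the $\bc{G}$-case) and the estimate $\ch(\bc{G}) \le \ch(G)+1$; it is precisely to absorb this possible $+1$ that the bound in the statement is $2(c+2)$ rather than the $2(c+1)$ that comes directly out of \cref{lemma: P7 decomp G}.
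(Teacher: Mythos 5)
Your proposal is correct and follows essentially the same route as the paper's proof: invoke \cref{thm:P7_chain_dec}, apply \cref{lemma: P7 decomp G} to whichever of $G$, $\bc{G}$ carries the chain decomposition, and absorb the possible $+1$ from $\ch(\bc{G}) \leq \ch(G)+1$ into the bound $2(c+2)$. The only difference is that you spell out the auxiliary facts ($\bc{P_7}\cong P_7$, monotonicity of $\ch$ under induced subgraphs, and the $+1$ estimate) in more detail than the paper does.
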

\begin{proof}
	It is easy to verify that for any $k \geq 2$, the graph $\bc{H^{\circ\circ}_k}$ contains the half graph $H^{\circ\circ}_{k-1}$, which implies that 
	$\ch(\bc{G}) \leq \ch(G) + 1 = c + 1$. Furthermore, the bipartite complement of a $P_7$ is again $P_7$, 
	and hence the bipartite complement of any $P_7$-free bipartite graph is also $P_7$-free.
	
	By \cref{thm:P7_chain_dec}, $G$ or $\bc{G}$ admits a $k$-chain decomposition for some $k \geq 2$.
	Therefore, by \cref{lemma: P7 decomp G} applied to either $G$ or $\bc{G}$, there exist
	a partition of $X$ into at most $p \leq 2(c+2)$ sets $X_1, X_2, \ldots, X_p$, and
	a partition of $Y$ into at most $q \leq 2(c+2)$ sets $Y_1, Y_2, \ldots, Y_q$ such that
	either $\ch(G_{i,j}) <  \ch(G)$ holds for any $i \in[p]$, $j \in [q]$,
	or $\ch(\bc{G_{i,j}}) <  \ch(\bc{G})$ holds for any $i \in[p]$, $j \in [q]$.
	This together with the fact that the chain number of an induced subgraph 
	of a graph is never larger than the chain number of the graph, implies the lemma.
\end{proof}

We are now ready to specify a decomposition scheme for $P_7$-free bipartite graphs.
Let $G=(X,Y,E)$ be a $P_7$-free bipartite graph of chain number $c$.
Let $\cQ$ be the class consisting of bicliques and co-bicliques.
We define a $(\cQ,2(c+2))$-decomposition tree $\cT$ for $G$ inductively as follows. 
Let $G_v$ be the induced subgraph of $G$ associated with node $v$ of the decomposition tree and write $X' \subseteq X$, $Y' \subseteq Y$ for its sets of vertices, so $G_v = G[X',Y']$.
Graph $G$ is associated with the root node of $\cT$.
\begin{itemize}
	\item If $G_v$ belongs to $\cQ$, then terminate the decomposition, so $v$ is a leaf node ($L$-node) of the decomposition tree.
	
	\item If $G_v$ does not belong to $\cQ$ and is disconnected, then $v$ is a $D$-node such
	that the children are the connected components of $G_v$.
	
	\item If $G_v$ does not belong to $\cQ$, is connected, and $\bc{G_v}$ is disconnected,
	then $v$ is a $\overline D$-node.
	There are sets $X_1', \dotsc, X_t' \subseteq X'$ and $Y_1', \dotsc, Y_t' \subseteq Y'$ 
	such that $\bc{G[X_1',Y_1']}, \dotsc,  \bc{G[X_t',Y_t']}$ 
	are the connected components of $\bc{G_v}$. 
	The children of this node are $G[X_1',Y_1'], \dotsc, G[X_t',Y_t']$.
	
	\item If $G_v$ does not belong to $\cQ$, and neither $G_v$, nor $\bc{G_v}$ is disconnected,
	then $v$ is a $P$-node.
	Let $X_1', X_2, \ldots, X_p'$ be a partition of $X'$ into $p \leq 2(c+2)$ sets,
	and $Y_1', Y_2, \ldots, Y_q'$ be a partition of $Y'$ into $q \leq 2(c+2)$ sets,
	as in \cref{lemma: P7 decomp GcoG}.
	The children of this node are $G[X_i',Y_j']$, $i \in[p]$, $j \in [q]$.
\end{itemize}

\begin{claim}
	\label{cl: Qk-properties}
	Let $\cT$ be a decomposition tree as above, and
	let $G_i = G[X_i, Y_i]$, $i = 1,2,3$, be internal nodes in $\cT$ such that 
	$G_3$ is the parent of $G_2$ which is in turn the parent of $G_1$.
	Then 
	\begin{enumerate}[label=\textup{(\arabic*)}]
		\item one of $G_3, G_2$, and $G_1$ is a $P$-node, or $G_i$ is $\overline D$-node and $G_{i-1}$ is a $D$-node for some $i \in \{3,2\}$;
		\item if $G_3$ is a $\overline D$-node and $G_2$ is a $D$-node, then $\ch(G_1) < \ch(G_3)$.
	\end{enumerate}
\end{claim}
\begin{proof}
	We start by proving the first statement. Observe that every child a $D$-node is a connected graph, and therefore
	it is not a $D$-node. Similarly, the bipartite complement of every child of a $\overline D$-node is a connected graph,
	and therefore a $\overline D$-node cannot have a $\overline D$-node as a child.
	Hence, if none of $G_3, G_2, G_1$ is a $P$-node, either $G_3$ is a $\overline D$-node and therefore $G_2$ is a 
	$D$-node, or $G_3$ is a $D$-node, in which case $G_2$ is a $\overline D$-node and $G_1$ is a $D$-node.
	In both cases we have a pair of parent-child nodes, where the parent is a $\overline D$-node and the child is 
	a $D$-node.
	
	To prove the second statement, let now $G_3$ be a $\overline D$-node and $G_2$ be a $D$-node, \ie, 
	$G[X_2,Y_2]$ is disconnected, while $G[X_3,Y_3]$ is connected, but its bipartite complement is disconnected. 
	Then there are sets $X'_1 \subseteq X_2 \setminus X_1$ and $Y'_1 \subseteq Y_2 \setminus Y_1$ such that
	$G[X_1',Y_1']$ and $G[X_1,Y_1]$ are connected components of $G[X_2,Y_2]$ and at least one of $X'_1,Y'_1$ is non-empty.
	Also, at least one of the sets $X'_2 = X_3 \setminus X_2$ and $Y'_2 = Y_3 \setminus Y_2$ is non-empty,
	and every vertex in $X'_2$ is adjacent in $G$ to every vertex in $Y_2$, and
	every vertex in $Y'_2$ is adjacent in $G$ to every vertex in $X_2$. 
	If exactly one of the sets $X'_1$ and $Y'_1$ is non-empty, say $Y'_1$, then $X_2'$ is also non-empty, as otherwise  $G[X_3,Y_3]$ would be disconnected. Hence, any vertices $x' \in X_2'$ and $y' \in Y'_1$
	can augment any half graph $H^{\circ\circ}_k$ in $G[X_1,Y_1]$ into a half graph $H^{\circ\circ}_{k+1}$.
	Consequently, $\ch(G[X_1,Y_1]) < \ch(G[\{ x'\} \cup X_1, \{ y' \} \cup Y_1]) \leq \ch(G_3)$.
	If both sets $X'_1$ and $Y'_1$ are non-empty, the argument is similar and we omit the details.
\end{proof}
\begin{theorem}
	\label{th:P7 labeling}
	Let $\cF$ be a stable class of $P_7$-free bipartite graphs.
	Then $\cF$ admits a constant-size equality-based adjacency labeling scheme, and hence $\RL(\cF_n) = O(1)$.
\end{theorem}
\begin{proof}
	Since $\cF$ is stable, it does not contain $\cC^\dcirc$ as a subclass.
	Let $c$ be the maximum number such that $H^{\circ\circ}_c \in \cF$,
	and let $G=(X,Y,E)$ be an arbitrary graph from $\cF$.
	
	By the above discussion $G$ admits
	a $(\cQ, 2(c+2))$-decomposition tree, where $\cQ$ is the class consisting of bicliques and co-bicliques, and every $P$-node $G' = G[X',Y']$ is specified
	by the partition $X_1', \ldots, X_p'$ of $X'$ and the partition $Y_1', \ldots, Y_q'$ of $Y'$
	as in \cref{lemma: P7 decomp GcoG}, where $p$ and $q$ are bounded from above by $2(c+2)$.
	
	We claim that the depth of such a decomposition tree is at most $6c$.
	To show this we associate with every node $G'$ the pair $\left( \ch(G'), \ch(\bc{G'}) \right)$
	and we will prove that if the length of the path from  the root $G$ to a node $G'$ is at least
	$6c$, then $\ch(G') \leq 1$ or $\ch(\bc{G'}) \leq 1$, which means that $G'$ is either
	a biclique or a co-biclique, and therefore is a leaf node.
	
	Let $\cP$ be the path from the root to the node $G'$.
	By \cref{cl: Qk-properties} (1), among any three consecutive nodes of the path,
	there exists a $P$-node, or a pair of nodes labeled with $\overline D$ and $D$ respectively
	such that the $\overline D$-node is the parent of the $D$-node.
	In the former case, by \cref{lemma: P7 decomp GcoG}, 
	for the child node $H'$ of the $P$-node $H$ on the path $\cP$, we have 
	$\left( \ch(H'), \ch(\bc{H'}) \right) \prec \left( \ch(H), \ch(\bc{H}) \right)$.
	In the latter case, by \cref{cl: Qk-properties} (2), the child of the $D$-node
	on the path $\cP$ has the chain number strictly less than that of the $D$-node.
	In other words, for every node $H$ in the path $\cP$ and its ancestor $H'$ at distance
	$3$ from $H$, we have  $\left( \ch(H'), \ch(\bc{H'}) \right) \prec \left( \ch(H), \ch(\bc{H}) \right)$.
	
	Now, since for the root node $G$ we have $\left( \ch(G), \ch(\bc{G}) \right) \prec (c, c+1)$,
	if $\cP$ has length at least $6c$, then $\ch(G') \leq 1$ or $\ch(\bc{G'}) \leq 1$, as required. 
	The result now follows from \cref{lemma:bipartite decomposition} and a simple observation
	that class $\cQ$ admits a constant-size equality-based adjacency labeling scheme.
\end{proof}

\section{Question III. Equality is Not Complete}
\label{section:equality}

Recall the notion of reductions between communication problems from \cref{section:reductions}.
Many questions about constant-cost randomized communication would be answered if one could identify
a \emph{complete} problem for this class of problems under these reductions:

\begin{definition}
A communication problem $\cP$ is \emph{complete} for the class of constant-cost randomized
communication if every constant-cost randomized problem $\cA$ reduces to $\cP$.
\end{definition}

The most natural candidate for a complete problem is the \EQUALITY problem. Note that all of our
results in this paper so far, except for Cartesian products, have used reductions to \EQUALITY.  In
this section we will prove that the \EQUALITY problem is not complete for the class of constant-cost
randomized communication.  To this end we will show that \textsc{$1$-Hamming Distance} does not
reduce to \EQUALITY. 

\begin{theorem}
\label{thm:1hd-to-eq}
  \textsc{$1$-Hamming Distance} does not reduce to \textsc{Equality}.
\end{theorem}

This theorem was also proved independently and concurrently in \cite{HHH23dimfree} with a very
different Fourier-analytic argument.

In our proof, we will employ some results from the literature.  We denote by $H_d$ the
$d$-dimensional hypercube, \ie, the $d$-wise Cartesian product $P_2^{\square d}$ of the single edge. 

\begin{theorem}[\cite{ARSV06}]\label{hypercubeColoring}
	For every $k$ and $\ell \geq 6$, there exists $d_0(k,\ell)$ such that 
	for every $d \geq d_0(k,\ell)$, every edge coloring of $H_d$ with $k$
	colors contains a monochromatic \emph{induced}  cycle of length $2 \ell$.
\end{theorem}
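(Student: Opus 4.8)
The plan is to follow the line of argument of \cite{ARSV06} and build the monochromatic induced cycle directly. It is worth noting first why the naive approach fails: one cannot simply find a monochromatic subcube and take an induced $C_{2\ell}$ inside it, because there are $k$-edge-colorings of every $H_n$ with no monochromatic $4$-cycle at all --- for instance, color each edge by the parity of the smaller of the Hamming weights of its two endpoints, so that every $C_4$ of $H_n$ has two edges of each parity and no subcube of dimension $\ge 2$ is monochromatic. So the coloring has to be actively routed around, and the cycle must be permitted to spread out over many coordinates.

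\emph{Base cycle.} First I would record the elementary fact that for each $\ell$ there is a constant $r=r(\ell)$ such that $H_r$ contains an induced copy $Z_0 \sqsubset H_r$ of $C_{2\ell}$, and --- using the hypothesis $\ell \geq 6$ --- that $Z_0$ may be chosen with enough structural slack that the cyclic sequence of coordinate-directions it traverses can be partitioned into groups that are re-routed independently without introducing chords. (That some hypothesis on $\ell$ is necessary is already visible from the parity coloring above, which forbids monochromatic $4$-cycles; the delicate small cases are exactly what $\ell \ge 6$ avoids.)

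\emph{Many structured copies.} Take $d$ large, partition the $d$ coordinates of $H_d$ into $\ell$ blocks $B_1,\dotsc,B_\ell$ of size about $d/\ell$ plus a fixed background pattern on the leftover coordinates, and associate block $B_t$ with the $t$-th direction traversed by $Z_0$. Choosing one active coordinate $j_t \in B_t$ for each $t$, together with a translation vector $w$, produces an induced copy of $Z_0$ in $H_d$; there are about $(d/\ell)^{\ell}2^{d}$ such copies. I would then process the blocks one at a time: committing to $j_t$ finalizes the color of the one or two edges of the copy that flip coordinate $j_t$, and a short computation shows that, with the other coordinates of those edges' endpoints already fixed, this color is a function of $j_t$ alone. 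A pigeonhole over the $\approx d/\ell$ candidates in $B_t$ retains a $1/k$-fraction on which that color is constant; iterating over all $\ell$ blocks leaves a sub-family on which the entire color pattern of the copy is constant.

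\emph{The obstacle, and the conclusion.} The genuinely hard step --- the heart of \cite{ARSV06} --- is not stabilizing the pattern but forcing it to be \emph{monochromatic}: the two edges sharing a direction, and edges in distinct directions, must all receive the \emph{same} color, whereas plain pigeonhole never equates two separately stabilized colors. This is handled by interleaving the block-pigeonhole steps with choices of the translation vector $w$ and of the order in which directions are processed, exploiting the slack built into $Z_0$ (this is where $\ell\geq 6$ is really used). Unwinding the bounds, each of the $\ell$ rounds costs a factor of about $k$ in available coordinates, so the theorem holds with $d_0(k,\ell)$ of order $r(\ell)\cdot\ell\cdot k^{O(\ell)}$; any copy from the final sub-family is then an induced $C_{2\ell}\sqsubset H_d$ all of whose edges carry one color. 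The main obstacle is precisely this reconciliation of the $2\ell$ per-edge color constraints simultaneously --- plain pigeonhole only stabilizes colors, so making the stabilized colors coincide around the whole cycle is where the real work lies.
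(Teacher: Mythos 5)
The paper does not prove this statement: it is imported verbatim from \cite{ARSV06} and used as a black box in the proof of \cref{thm:eq labeling}, so there is no in-paper argument to compare yours against. Judged on its own terms, your proposal is a plan for a proof rather than a proof, and the gap is real. (Your preliminary point that the naive subcube approach fails is correct: coloring each edge by the parity of the Hamming weight of its lower endpoint gives every $C_4$ two edges of each of two consecutive parities, so no monochromatic $C_4$ exists.)

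First, the stabilization step is circular as described. The color of the edge of a copy that flips the active coordinate $j_t \in B_t$ depends on the entire endpoint of that edge, hence on the translation $w$ \emph{and} on the active coordinates $j_s$ chosen in every other block, not on $j_t$ alone. Processing the blocks ``one at a time'' therefore does not let you pigeonhole over the candidates in $B_t$ until all the other $j_s$ are fixed, yet those are only fixed by later rounds. Repairing this requires a genuinely product-Ramsey-type argument over tuples of coordinates (or a careful ordering in which each edge's color depends only on previously committed choices), and the sketch does not supply one.

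Second --- and you say so yourself --- the step that makes the cycle monochromatic is absent. Even if the first issue were repaired, pigeonhole only yields a sub-family of copies on which each of the $2\ell$ edges has a \emph{constant} color; nothing equates these $2\ell$ constants. You write that this is ``handled by interleaving the block-pigeonhole steps with choices of the translation vector $w$ \dots exploiting the slack built into $Z_0$,'' but no mechanism is given by which two separately stabilized colors are ever forced to coincide, and no precise property of $Z_0$ (or use of $\ell \ge 6$) is formulated that would enable it. Since you identify this reconciliation as ``the heart of \cite{ARSV06}'' and ``where the real work lies,'' the proposal stops exactly at the point where the theorem would be proved.
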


For a graph $G$, its \emph{equivalence covering number} $\eqc(G)$ is the minimum number $k$ such that there exist $k$ equivalence graphs $F_i=(V,E_i), i \in [k]$, whose union $(V, \cup_{i=1}^k E_i)$ coincides with $G$.
Similarly, for a bipartite graph $G=(U,W,E)$,
its \emph{bipartite equivalence covering number}
$\beqc(G)$ is the minimum number $k$ such that there exist $k$ bipartite equivalence graphs $F_i=(U,W,E_i), i \in [k]$, whose union $(U,W, \cup_{i=1}^k E_i)$ coincides with $G$.

\begin{theorem}[\cite{LNP80,Alon86}]\label{equivalenceCoveringOfCycles}
	For every $n \geq 3$, it holds that $\eqc(\overline{C_n}) \geq \log n - 1$.
\end{theorem}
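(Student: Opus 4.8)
The plan is to prove the equivalent quantitative bound: if $\overline{C_n}$ is the edge-union of $k$ equivalence graphs, then $n \le 2^{k+1}$. First I would unwind the definition. A cover $\overline{C_n} = F_1 \cup \dots \cup F_k$ by equivalence graphs corresponds to a $k$-tuple of partitions $P_1, \dots, P_k$ of the vertex set $\mathbb Z_n$ with two properties: (i) every class of every $P_i$ is an independent set of the cycle $C_n$ — equivalently, each $P_i$ is a proper colouring of $C_n$ — because $F_i \subseteq \overline{C_n}$ forbids monochromatic cycle-edges; and (ii) every pair of vertices at cyclic distance at least $2$ lies in a common class of some $P_i$, which is exactly the statement $\bigcup_i E(F_i) \supseteq E(\overline{C_n})$. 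Equivalently, writing $\ell(v) = (P_1(v), \dots, P_k(v))$: cyclically consecutive vertices disagree in all $k$ coordinates, while any two non-consecutive vertices agree in at least one coordinate.

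Two easy tools. Monotonicity: $\eqc$ is non-increasing under induced subgraphs, since restricting each $P_i$ to a vertex subset $S$ again yields equivalence graphs whose union is the induced subgraph; in particular an arc of $m \le n-1$ consecutive cycle-vertices induces a copy of $\overline{P_m}$ inside $\overline{C_n}$. Base cases: $\overline{C_4}$ is $2K_2$, itself an equivalence graph, so $\eqc(\overline{C_4}) = 1$; and $\overline{C_8}$ contains an induced $P_3$ (take $\{0,1,3\}$, centred at $3$: $3\sim 1$ and $3\sim 0$, but $0\not\sim 1$), hence is not an equivalence graph and $\eqc(\overline{C_8}) \ge 2$. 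Both are consistent with the target $n \le 2^{k+1}$.

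The crux is an amplification step: I would aim to establish a halving recursion $\eqc(\overline{C_n}) \ge 1 + \eqc(\overline{C_{\lceil n/2 \rceil}})$, which with $\eqc(\overline{C_4}) = 1$ gives $\eqc(\overline{C_n}) \ge \lceil \log_2 n \rceil - 1 \ge \log_2 n - 1$. To build such a step from a cover $P_1, \dots, P_k$ of $\overline{C_n}$, the idea is to "spend" the last partition $P_k$: using that each class of $P_k$ is an independent set of $C_n$ (hence sparse along the cycle), select a sub-cycle on $\lceil n/2 \rceil$ vertices — or a suitably re-indexed sub-cycle — on which the pairs for which $P_k$ was responsible are already covered by $P_1, \dots, P_{k-1}$, leaving a genuine cover of $\overline{C_{\lceil n/2 \rceil}}$ by $k-1$ equivalence graphs. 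Informally, a single equivalence relation can only resolve the adjacencies at one "dyadic distance scale", while there are $\Theta(\log n)$ scales that must be handled. This recursion is the combinatorial content of \cite{LNP80} and \cite{Alon86}, which I would cite for the sharp constant.

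The main obstacle is precisely this amplification. A one-shot volume argument is too weak: around a fixed vertex $v$, the classes of $v$ in $P_1,\dots,P_k$ are independent sets of $C_n$, each of size at most $\lfloor n/2 \rfloor$, and their union must contain all but two vertices, which only forces $k = \Omega(1)$. Moreover one cannot normalise the $P_i$ to have few classes — splitting a class deletes pairs it covers, and merging two classes can create a forbidden cycle-edge — so the logarithmic bound cannot be obtained by bounding the number of labels $\ell(v)$ by $2^k$. The argument must exploit the linear/cyclic order and the $2$-sparsity of the colour classes to rule out a single cleverly chosen many-class equivalence relation simultaneously handling several scales; that is exactly why the bound is logarithmic rather than constant.
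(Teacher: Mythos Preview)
The paper does not prove this statement: it is quoted as a cited result from \cite{LNP80,Alon86} and invoked as a black box in the proof of Theorem~\ref{thm:eq labeling}. There is therefore no in-paper argument to compare your proposal against; deferring to the references, as you ultimately do, is exactly what the paper does.

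As a standalone argument, though, your proposal is a plan whose crucial step is left open. The halving recursion $\eqc(\overline{C_n}) \ge 1 + \eqc(\overline{C_{\lceil n/2 \rceil}})$ is the entire content of the bound, and you do not establish it. ``Spending'' the partition $P_k$ to pass to a half-sized cycle requires specifying which vertices survive and---more seriously---why the resulting object is (the complement of) a \emph{cycle} on which $P_1,\dots,P_{k-1}$ already cover every non-edge. Your monotonicity tool does not help here, since $C_{\lceil n/2\rceil}$ is not an induced subgraph of $C_n$; the phrase ``a suitably re-indexed sub-cycle'' hides precisely the construction that needs to be supplied. Your own ``main obstacle'' paragraph correctly identifies this as the hard part, but the proposal stops short of overcoming it.
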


\begin{corollary}\label{bipEquivalenceCoveringOfCycles}
	For every even $n \geq 4$, it holds that $\beqc(\bc{C_n}) \geq \log n -2$.
\end{corollary}
\begin{proof}
	Let $G=(U,W,E)$ be a bipartite graph isomorphic to $\bc{C_n}$ for some $n \geq 4$.
  Suppose towards the contradiction that there exist bipartite equivalence graphs $F_i=(U,W,E_i), i
\in [k]$, whose union $(U,W, \cup_{i=1}^k E_i)$ coincides with $G$ and $k < \log n - 2$. For each $i
\in [k]$, let $S_i$ be the (non-bipartite) equivalence graph obtained from $F_i$ by turning each of its bicliques
into a complete graph (this is done by adding edges between vertices of the same biclique that are
in the same part). Also let $S=(U \cup W, E')$ be the equivalence graph with exactly two cliques,
$U$ and $W$.  Then the union $S \cup \left(\bigcup_{i \in [k]} S_i\right)$ coincides with $\overline{C_n}$,
implying $\eqc(\overline{C_n}) \leq k +1 < \log n - 1$, which is in contradiction with
\cref{equivalenceCoveringOfCycles}.
\end{proof}

For two binary vectors $x,y \in \{0,1\}^t$, we write $x \preceq y$ if $x_i \leq y_i$ for all $i \in [t]$, and we write $x \prec y$ if $x \preceq y$ and $x \neq y$.

We now prove \cref{thm:1hd-to-eq}.

\begin{proof}[Proof of Theorem~\ref{thm:1hd-to-eq}]
	Let  $\cA=(A_n)_{n \in \bN}$ 
	and $\cB=(B_n)_{n \in \bN}$ be the \textsc{$1$-Hamming Distance} problem and the \textsc{Equality} problem respectively. 
	More specifically, $A_n$ is a $2^n \times 2^n$ matrix, where $A_n(i,j) = 1$ if and only if the binary representations of $i$ and $j$ differ on exactly one bit; and $B_n$ is an $n \times n$ matrix, where $B_n(i,j) = 1$ if and only if $i = j$, \ie, $B_n$ is the identity matrix.
	
	By \cref{prop:cc-mat-red}, in order to prove the statement, we will show that 
	$\cA$ does not matrix-reduce to $\QS(\cB)$.
	To do this, we interpret the matrices of $\cA$ and $\cB$ as bipartite adjacency matrices of bipartite graphs.
	In particular, $A_n$ corresponds to the disjoint union of two hypercube graphs of dimension $n$,
	and $B_n$ corresponds to the matching graph, \ie, the graph in which every vertex has degree exactly 1.
	The set $\QS(\cB)$ then corresponds to bipartite adjacency matrices of bipartite equivalence graphs.
	
	Thus, we want to show that there does not exist a constant $t$ such that
	every matrix $A \in \cA$ can be expressed as $A = h(M_1, M_2, \ldots, M_t)$, where $M_1, M_2, \ldots, M_t \in \QS(\cB)$ are bipartite adjacency matrices of some bipartite equivalence graphs and $h: \zo^t \rightarrow \zo$ is some Boolean function.

	Suppose towards the contradiction that such a $t$ exists. 
	Let $k = 2^t$, $\ell = 2^{t+2}$, and $d \geq d_0(k, \ell)$, where $d_0(k, \ell)$ is the function from \cref{hypercubeColoring}. 
	Since $A_d$, the bipartite adjacency matrix of two disjoint $d$-dimensional hypercube graphs, can be expressed as above via $t$ matrices from $\QS(\cB)$, the same is true for the bipartite adjacency matrix of a single $d$-dimensional hypercube graph.
	Let $H_d = (U,W,E)$ be the hypercube graph of dimension $d$, and let 
	$h: \zo^t \rightarrow \zo$ and $M_1, M_2, \ldots, M_t \in \QS(\cB)$ be such that $H_d(u,w) = h(M_1(u,w), M_2(u,w), \ldots, M_k(u,w))$ for every $u \in U$, $w \in W$.
	 
	Define $\kappa: U \times W \rightarrow \{0,1\}^t$ as $\kappa(u,w)_s = M_s(u,w)$
	for every $s \in [t]$, $u \in U$, and $w \in W$.
	Color every edge $(u,w)$ of $H_d$ with $\kappa(u,w)$.
	Since the edges of $H_d$ are colored in at most $k = 2^t$ different colors,
	by \cref{hypercubeColoring}, it contains a monochromatic 
	induced cycle $C = (U', W', E')$ of length $2 \ell = 2^{t+3}$.
	Let $\kappa^* \in \{0,1\}^t$ be the color of the edges of $C$.
	
	\medskip
	\noindent	
	\textbf{Claim 1.} \emph{For all $u \in U'$ and $w \in W'$ that are \emph{not} adjacent in $C$, we have $\kappa^* \prec \kappa(u,w)$.}
		
	\smallskip
	\noindent
	\emph{Proof.} 
	Since every connected component of a bipartite equivalence graph is a biclique,
	it follows that for every $i \in [t]$, 
	$\kappa^*_i = 1$ implies $\kappa(u,w)_i = 1$ for all $u \in U'$ and $w \in W'$.
	Hence, $\kappa^* \preceq \kappa(u,w)$.
	Furthermore, if $u$ and $w$ are not adjacent in $C$,
	we have $\kappa(u,w) \neq \kappa^*$, as otherwise we would have $h(\kappa(u,w)) = h(\kappa^*) = 1$ and hence $u$ and $w$ would be adjacent. \qed
	
	\medskip
	\noindent
	Let now $I \subseteq [t]$ be the index set such that $i \in I$ if and only if 
	$\kappa^*_i = 0$ and there exist $u \in U'$ and $w \in W'$ with $\kappa(u,w)_i = 1$. 
	For every $i \in I$, let $F_i = (U',W', E_i')$, where
	$E_i' = \{ (u,w) ~|~ u \in U', w \in W', \kappa(u,w)_i = 1\}$. 
	Note that each $F_i$ is an induced subgraph of the bipartite equivalence graph defined by the bipartite adjacency matrix $M_i$, and thus it is itself a bipartite equivalence graph.
	By construction and Claim 1, we have that the union $\cup_{i \in I} F_i$
	contains none of the edges of $C$ and contains all non-edges between the sets $U'$ and $W'$, \ie,
	the union coincides with $\bc{C}$. This implies that $\beqc(\bc{C}) \leq |I| \leq t$. However, by \cref{bipEquivalenceCoveringOfCycles}, $\beqc(\bc{C}) \geq \log 2\ell - 2 \geq t+1$, a contradiction.
\end{proof}

\begin{corollary}
\textsc{Equality} is not complete for constant-cost randomized communication.
\end{corollary}

\begin{corollary}
There is no equality-based labeling scheme for the class of induced subgraphs of the hypercube.
\end{corollary}

\section*{Acknowledgments} 

We thank Eric Blais for helpful discussions about this work, and we thank Abhinav Bommireddi, Renato
Ferreira Pinto Jr., Sharat Ibrahimpur, and Cameron Seth for their comments on the presentation of
this article. We thank the anonymous reviewers for their comments.


\printbibliography

@article{Ale92,
	title={Range of values of entropy of hereditary classes of graphs},
	author={Alekseev, Vladimir Evgen'evich},
	journal={Diskretnaya Matematika},
	volume={4},
	number={2},
	pages={148--157},
	year={1992},
	publisher={Russian Academy of Sciences, Steklov Mathematical Institute of Russian~…}
}

@article{Ale97,
	title={On lower layers of a lattice of hereditary classes of graphs},
	author={Alekseev, Vladimir Evgen'evich},
	journal={Diskretnyi Analiz i Issledovanie Operatsii},
	volume={4},
	number={1},
	pages={3--12},
	year={1997},
	publisher={Sobolev Institute of Mathematics of the Siberian Branch of the Russian~…}
}

@article{All09,
  title={Forbidden induced bipartite graphs},
  author={Allen, Peter},
  journal={Journal of Graph Theory},
  volume={60},
  number={3},
  pages={219--241},
  year={2009},
  publisher={Wiley Online Library},
  doi={10.1002/jgt.20355}
}

@article{ACFL16,
  title={Deciding the {B}ell number for hereditary graph properties},
  author={Atminas, Aistis and Collins, Andrew and Foniok, Jan and Lozin, Vadim V},
  journal={SIAM Journal on Discrete Mathematics},
  volume={30},
  number={2},
  pages={1015--1031},
  year={2016},
  publisher={SIAM},
  doi={10.1137/15M1024214}
}

@inproceedings{ACHS23,
  title = {Communication Complexity and Discrepancy of Halfplanes},
  author = {Ahmed, Manasseh and Cheung, Tsun-Ming and Hatami, Hamed and Sareen, Kusha},
  booktitle={40th International Symposium on Computational Geometry (SoCG 2024)},
  publisher = {Schloss Dagstuhl – Leibniz-Zentrum f\"{u}r Informatik},
  year = {2024},
  doi = {10.4230/LIPICS.SOCG.2024.5},
  url = {https://drops.dagstuhl.de/entities/document/10.4230/LIPIcs.SoCG.2024.5}
}

@article{ACLZ15,
  title={Implicit representations and factorial properties of graphs},
  author={Atminas, Aistis and Collins, Andrew and Lozin, Vadim and Zamaraev, Victor},
  journal={Discrete Mathematics},
  volume={338},
  number={2},
  pages={164--179},
  year={2015},
  publisher={Elsevier},
  doi={10.1016/j.disc.2014.09.008}
}

@article{BdW02,
  title={Complexity measures and decision tree complexity: a survey},
  author={Buhrman, Harry and {De Wolf}, Ronald},
  journal={Theoretical Computer Science},
  volume={288},
  number={1},
  pages={21--43},
  year={2002},
  publisher={Elsevier},
  doi={10.1016/S0304-3975(01)00144-X}
}

@inproceedings{ATYY17,
  title={Exponential separation of quantum communication and classical information},
  author={Anshu, Anurag and Touchette, Dave and Yao, Penghui and Yu, Nengkun},
  booktitle={Proceedings of the 49th Annual ACM SIGACT Symposium on Theory of Computing},
  pages={277--288},
  year={2017},
  doi={10.1145/3055399.3055401}
}

@article{BBW00,
  title={The speed of hereditary properties of graphs},
  author={Balogh, J{\'o}zsef and Bollob{\'a}s, B{\'e}la and Weinreich, David},
  journal={Journal of Combinatorial Theory, Series B},
  volume={79},
  number={2},
  pages={131--156},
  year={2000},
  publisher={Elsevier},
  doi={10.1006/jctb.2000.1952}
}

@article{BBW05,
  title={A jump to the {B}ell number for hereditary graph properties},
  author={Balogh, J{\'o}zsef and Bollob{\'a}s, B{\'e}la and Weinreich, David},
  journal={Journal of Combinatorial Theory, Series B},
  volume={95},
  number={1},
  pages={29--48},
  year={2005},
  publisher={Elsevier},
  doi={10.1016/j.jctb.2005.02.004}
}

@article{BD23,
	title={Twin-width can be exponential in treewidth},
	author={Bonnet, {\'E}douard and D{\'e}pr{\'e}s, Hugues},
	journal={Journal of Combinatorial Theory, Series B},
	volume={161},
	pages={1--14},
	year={2023},
	publisher={Elsevier},
	doi={10.1016/j.jctb.2023.01.003}
}

@InProceedings{BDSZZ23,
  author =	{Bonnet, \'{E}douard and Duron, Julien and Sylvester, John and Zamaraev, Viktor and Zhukovskii, Maksim},
  title =	{{Tight Bounds on Adjacency Labels for Monotone Graph Classes}},
  booktitle =	{51st International Colloquium on Automata, Languages, and Programming (ICALP 2024)},
  pages =	{31:1--31:20},
  series =	{LIPIcs},
  ISBN =	{978-3-95977-322-5},
  ISSN =	{1868-8969},
  year =	{2024},
  volume =	{297},
  doi =		{10.4230/LIPIcs.ICALP.2024.31},
}

@article{BDSZZ24,
	title={Small but unwieldy: A lower bound on adjacency labels for small classes},
	author={Bonnet, {\'E}douard and Duron, Julien and Sylvester, John and Zamaraev, Viktor and Zhukovskii, Maksim},
	journal={SIAM Journal on Computing},
	volume={53},
	number={5},
	pages={1578--1601},
	year={2024},
	publisher={SIAM},
	doi={10.1137/23M1618661}
}

@inproceedings{BFS86,
  title={Complexity classes in communication complexity theory},
  author={Babai, L{\'a}szl{\'o} and Frankl, Peter and Simon, Janos},
  booktitle={27th Annual Symposium on Foundations of Computer Science ({FOCS} 1986)},
  pages={337--347},
  year={1986},
  organization={IEEE},
  doi={10.1109/SFCS.1986.15}
}

@article{BT95,
	title={Projections of bodies and hereditary properties of hypergraphs},
	author={Bollob{\'a}s, B{\'e}la and Thomason, Andrew},
	journal={Bulletin of the London Mathematical Society},
	volume={27},
	number={5},
	pages={417--424},
	year={1995},
	publisher={Oxford University Press},
	doi={10.1112/blms/27.5.417}
}

@misc{Chak20,
    TITLE = {One-way randomized communication complexity of {G}reater-{T}han},
    AUTHOR = {Chakrabarti, Amit},
    HOWPUBLISHED = {Theoretical Computer Science Stack Exchange},
    NOTE = {Version: 2020-12-30},
    URL = {https://cstheory.stackexchange.com/q/48110},
    year={2020}
}

@article{Chan23,
  title={Logical labeling schemes},
  author={Chandoo, Maurice},
  journal={Discrete Mathematics},
  volume={346},
  number={10},
  pages={113565},
  year={2023},
  publisher={Elsevier},
  doi={10.1016/j.disc.2023.113565}
}

@inproceedings{CHHS23,
  title={Separation of the factorization norm and randomized communication complexity},
  author={Cheung, Tsun-Ming and Hatami, Hamed and Hosseini, Kaave and Shirley, Morgan},
  booktitle={Proceedings of the Computational Complexity Conference (CCC 2023)},
  year={2023},
  note={Available at ECCC, TR22-165},
  doi={10.4230/LIPIcs.CCC.2023.1}
}

@book{KL15,
	title={Words and Graphs},
	author={Kitaev, Sergey and Lozin, Vadim},
	year={2015},
	publisher={Springer},
	doi={10.1007/978-3-319-25859-1}
}

@article{New91,
  title={Private vs. common random bits in communication complexity},
  author={Newman, Ilan},
  journal={Information processing letters},
  volume={39},
  number={2},
  pages={67--71},
  year={1991},
  publisher={Elsevier},
  doi={10.1016/0020-0190(91)90157-D}
}

@misc{Nik20,
    TITLE = {One-way randomized communication complexity of {G}reater-{T}han},
    AUTHOR = {Nikolov, Sasho},
    HOWPUBLISHED = {Theoretical Computer Science Stack Exchange},
    EPRINT = {https://cstheory.stackexchange.com/q/48108},
    URL = {https://cstheory.stackexchange.com/q/48108},
    year={2020}
}

@inproceedings{NP24,
  title={Adjacency Sketches in Adversarial Environments},
  author={Naor, Moni and Pekel, Eugene},
  booktitle={Proceedings of the Symposium on Discrete Algorithms (SODA)},
  pages={1067--1098},
  year={2024},
  organization={SIAM},
  doi={10.1137/1.9781611977912.41}
}

@article{CLR20,
  title={On density of subgraphs of {Cartesian} products},
  author={Chepoi, Victor and Labourel, Arnaud and Ratel, S{\'e}bastien},
  journal={Journal of Graph Theory},
  volume={93},
  number={1},
  pages={64--87},
  year={2020},
  publisher={Wiley Online Library},
  doi={10.1002/jgt.22469}
}

@InProceedings{CLV19,
  author =	{Chattopadhyay, Arkadev and Lovett, Shachar and Vinyals, Marc},
  title =	{{Equality Alone Does not Simulate Randomness}},
  booktitle =	{34th Computational Complexity Conference (CCC 2019)},
  pages =	{14:1--14:11},
  series =	{LIPIcs},
  ISBN =	{978-3-95977-116-0},
  ISSN =	{1868-8969},
  year =	{2019},
  volume =	{137},
  editor =	{Shpilka, Amir},
  doi =		{10.4230/LIPIcs.CCC.2019.14},
}

@article{CS18,
  title={A note on the {E}rd{\"o}s-{Hajnal} property for stable graphs},
  author={Chernikov, Artem and Starchenko, Sergei},
  journal={Proceedings of the American Mathematical Society},
  volume={146},
  number={2},
  pages={785--790},
  year={2018},
  doi={10.1090/proc/13626}
}

@article{DEG+21,
  title={Adjacency labelling for planar graphs (and beyond)},
  author={Dujmovi{\'c}, Vida and Esperet, Louis and Gavoille, Cyril and Joret, Gwena{\"e}l and Micek, Piotr and Morin, Pat},
  journal={Journal of the ACM},
  volume={68},
  number={6},
  pages={1--33},
  year={2021},
  publisher={ACM New York, NY},
  doi={10.1145/3477542}
}

@InProceedings{EHK22,
  author =	{Esperet, Louis and Harms, Nathaniel and Kupavskii, Andrey},
  title =	{{Sketching Distances in Monotone Graph Classes}},
  booktitle =	{Approximation, Randomization, and Combinatorial Optimization. Algorithms and Techniques (APPROX/RANDOM 2022)},
  pages =	{18:1--18:23},
  series =	{LIPIcs},
  year =	{2022},
  volume =	{245},
  doi =		{10.4230/LIPIcs.APPROX/RANDOM.2022.18},
}

@article{EHZ23,
  	title ={Optimal Adjacency Labels for Subgraphs of Cartesian Products},
    author ={Esperet, Louis and Harms, Nathaniel and Zamaraev, Viktor},
	journal={SIAM Journal on Discrete Mathematics},
	year={2024},
	publisher={SIAM},
	doi={10.1137/23M1587713}
}

@article{ES35,
	title={A combinatorial problem in geometry},
	author={Erd{\"o}s, Paul and Szekeres, George},
	journal={Compositio mathematica},
	volume={2},
	pages={463--470},
	year={1935},
	doi={10.1007/978-0-8176-4842-8_3}
}

@article{Fit19,
  title={Implicit representation conjecture for semi-algebraic graphs},
  author={Fitch, Matthew},
  journal={Discrete Applied Mathematics},
  volume={259},
  pages={53--62},
  year={2019},
  publisher={Elsevier},
  doi={10.1016/j.dam.2018.11.030}
}

@inproceedings{FK09,
  doi = {10.1145/1583991.1584031},
  year = {2009},
  publisher = {{ACM} Press},
  author = {Pierre Fraigniaud and Amos Korman},
  title = {On randomized representations of graphs using short labels},
  booktitle = {Proceedings of the twenty-first annual symposium on Parallelism in algorithms and architectures - {SPAA} 2009}
}

@inproceedings{FHHH24,
  title={No Complete Problem for Constant-Cost Randomized Communication},
  author={Fang, Yuting and Hambardzumyan, Lianna and Harms, Nathaniel and Hatami, Pooya},
  booktitle={Proceedings of the Symposium on Theory of Computing (STOC 2024)},
  year={2024},
  doi={10.1145/3618260.3649716}
}

@inproceedings{FGHH24,
  title={Constant-Cost Communication does not Reduce to $k$-{Hamming} Distance},
  author={Fang, Yuting and G{\"o\"o}s, Mika and Harms, Nathaniel and Hatami, Pooya},
  year={2025},
  booktitle={Proceedings of the Symposium on Theory of Computing (STOC 2025)},
  url = {https://doi.org/10.1145/3717823.3718129},
  doi = {10.1145/3717823.3718129}
}

@article{GPW18,
  title={The Landscape of Communication Complexity Classes},
  author={G{\"o}{\"o}s, Mika and Pitassi, Toniann and Watson, Thomas},
  journal={Computational Complexity},
  volume={27},
  number={2},
  pages={245--304},
  year={2018},
  publisher={Birkhauser Verlag Basel, Switzerland, Switzerland},
  doi={0.1007/s00037-018-0166-6}
}

@inproceedings{Har20,
  author       = {Nathaniel Harms}, 
  title        = {Universal Communication, Universal Graphs, and Graph Labeling},
  booktitle    = {11th Innovations in Theoretical Computer Science Conference {ITCS}
                  2020},
  series       = {LIPIcs},
  volume       = {151},
  pages        = {33:1--33:27},
  year         = {2020},
  doi          = {10.4230/LIPICS.ITCS.2020.33},
}

@article{HHH23dimfree,
  title={Dimension-free bounds and structural results in communication complexity},
  author={Hambardzumyan, Lianna and Hatami, Hamed and Hatami, Pooya},
  journal={Israel Journal of Mathematics},
  volume={253},
  number={2},
  pages={555--616},
  year={2023},
  publisher={Springer},
  doi={10.1007/s11856-022-2365-8}
}

@article{HHH22counter,
  title={A counter-example to the probabilistic universal graph conjecture via randomized communication complexity},
  author={Hambardzumyan, Lianna and Hatami, Hamed and Hatami, Pooya},
  journal={Discrete Applied Mathematics},
  volume={322},
  pages={117--122},
  year={2022},
  publisher={Elsevier},
  doi={10.1016/j.dam.2022.07.023}
}

@inproceedings{HH22,
  title={The implicit graph conjecture is false},
  author={Hatami, Hamed and Hatami, Pooya},
  booktitle={2022 IEEE 63rd Annual Symposium on Foundations of Computer Science (FOCS)},
  pages={1134--1137},
  year={2022},
  organization={IEEE},
  doi={10.1109/FOCS54457.2022.00109}
}

@inproceedings{HHM23,
  title={A {Borsuk-Ulam} lower bound for sign-rank and its applications},
  author={Hatami, Hamed and Hosseini, Kaave and Meng, Xiang},
  booktitle={Proceedings of the 55th Annual ACM SIGACT Symposium on Theory of Computing (STOC 2023)},
  year={2023},
  doi={10.1145/3564246.3585210}
}

@inproceedings{HHPTZ22,
  author       = {Hamed Hatami and
                  Pooya Hatami and
                  William Pires and
                  Ran Tao and
                  Rosie Zhao}, 
  title        = {Lower Bound Methods for Sign-Rank and Their Limitations},
  booktitle    = {Approximation, Randomization, and Combinatorial Optimization. Algorithms
                  and Techniques ({APPROX/RANDOM} 2022)},
  series       = {LIPIcs},
  volume       = {245},
  pages        = {22:1--22:24}, 
  year         = {2022}, 
  doi          = {10.4230/LIPICS.APPROX/RANDOM.2022.22}, 
}

@book{HIK11,
  title={Handbook of Product Graphs},
  author={Hammack, Richard and Imrich, Wilfried and Klav{\v{z}}ar, Sandi},
  year={2011},
  publisher={CRC press},
  doi={10.1201/b10959}
}

@article{HSZZ06,
  title={The communication complexity of the {H}amming distance problem},
  author={Huang, Wei and Shi, Yaoyun and Zhang, Shengyu and Zhu, Yufan},
  journal={Information Processing Letters},
  volume={99},
  number={4},
  pages={149--153},
  year={2006},
  publisher={Elsevier},
  doi={10.1016/j.ipl.2006.01.014}
}

@inproceedings{HZ23,
  	title={Randomized communication and implicit representations for matrices and graphs of small sign-rank},
	author={Harms, Nathaniel and Zamaraev, Viktor},
	booktitle={Proceedings of the 2024 Annual ACM-SIAM Symposium on Discrete Algorithms (SODA)},
	pages={1810--1833},
	year={2024},
	organization={SIAM},
	doi={10.1137/1.9781611977912.72}
}

@article{KM12,
  title={Sphere and dot product representations of graphs},
  author={Kang, Ross J and M{\"u}ller, Tobias},
  journal={Discrete \& Computational Geometry},
  volume={47},
  number={3},
  pages={548--568},
  year={2012},
  publisher={Springer},
  doi={10.1007/s00454-012-9394-8}
}

@article{KNR92,
	title={Implicit representation of graphs},
	author={Kannan, Sampath and Naor, Moni and Rudich, Steven},
	journal={SIAM Journal on Discrete Mathematics},
	volume={5},
	number={4},
	pages={596--603},
	year={1992},
	publisher={SIAM},
	doi={10.1137/0405049}
}

@article{KNR99,
  title={On randomized one-round communication complexity},
  author={Kremer, Ilan and Nisan, Noam and Ron, Dana},
  journal={Computational Complexity},
  volume={8},
  number={1},
  pages={21--49},
  year={1999},
  publisher={Springer},
  doi={10.1007/s000370050018}
}

@article{LV08,
	title={The clique-width of bipartite graphs in monogenic classes},
	author={Lozin, VV and Volz, J},
	journal={International Journal of Foundations of Computer Science},
	volume={19},
	number={2},
	pages={477--494},
	year={2008},
	publisher={World Scientific Publishing Co.},
	doi={10.1142/S0129054108005772}
}

@article{LZ17,
  title={The structure and the number of {$P_7$}-free bipartite graphs},
  author={Lozin, Vadim and Zamaraev, Viktor},
  journal={European Journal of Combinatorics},
  volume={65},
  pages={143--153},
  year={2017},
  publisher={Elsevier},
  doi={10.1016/j.ejc.2017.05.008}
}

@article{Mad67,
	title={Homomorphieeigenschaften und mittlere Kantendichte von Graphen},
	author={Mader, Wolfgang},
	journal={Mathematische Annalen},
	volume={174},
	number={4},
	pages={265--268},
	year={1967},
	publisher={Springer},
	doi={10.1007/BF01364272}
}

@phdthesis{Mul89,
	title={Local structure in graph classes},
	author={Muller, John H},
	school={Georgia Institute of Technology Atlanta, GA},
	year={1988},
	doi={10.5555/59829}
}

@article{MNSW98,
  title={On data structures and asymmetric communication complexity},
  author={Miltersen, Peter Bro and Nisan, Noam and Safra, Shmuel and Wigderson, Avi},
  journal={Journal of Computer and System Sciences},
  volume={57},
  number={1},
  pages={37--49},
  year={1998},
  publisher={Elsevier},
  doi={10.1006/jcss.1998.1577}
}

@article{MS14,
  title={Regularity lemmas for stable graphs},
  author={Malliaris, Maryanthe and Shelah, Saharon},
  journal={Transactions of the American Mathematical Society},
  volume={366},
  number={3},
  pages={1551--1585},
  year={2014},
  doi={10.1090/S0002-9947-2013-05820-5}
}

@article{PS86,
	title={Probabilistic communication complexity},
	author={Paturi, Ramamohan and Simon, Janos},
	journal={Journal of Computer and System Sciences},
	volume={33},
	number={1},
	pages={106--123},
	year={1986},
	publisher={Elsevier},
	doi={10.1016/0022-0000(86)90046-2}
}

@article{PSW20,
  title={Nondeterministic and randomized boolean hierarchies in communication complexity},
  author={Pitassi, Toniann and Shirley, Morgan and Watson, Thomas},
  journal={Computational Complexity},
  volume={30},
  number={2},
  pages={10},
  year={2021},
  publisher={Springer},
  doi={https://doi.org/10.1007/s00037-021-00210-5}
}

@article{Rado64,
  title={Universal graphs and universal functions},
  author={Rado, Richard},
  journal={Acta Arithmetica},
  volume={4},
  number={9},
  pages={331--340},
  year={1964}
}

@inproceedings{RS15,
  title={On the communication complexity of {G}reater-{T}han},
  author={Ramamoorthy, Sivaramakrishnan Natarajan and Sinha, Makrand},
  booktitle={2015 53rd Annual Allerton Conference on Communication, Control, and Computing (Allerton)},
  pages={442--444},
  year={2015},
  organization={IEEE}
}

@book{RY20,
  title={Communication Complexity and Applications},
  author={Rao, Anup and Yehudayoff, Amir},
  year={2020},
  publisher={Cambridge University Press},
  doi={10.1017/9781108671644}
}

@inproceedings{Sag18,
  title={Near log-convexity of measured heat in (discrete) time and consequences},
  author={Sa{\u{g}}lam, Mert},
  booktitle={2018 IEEE 59th Annual Symposium on Foundations of Computer Science (FOCS)},
  pages={967--978},
  year={2018},
  organization={IEEE},
  doi={10.1109/FOCS.2018.00095}
}

@article{Nis93,
  title={The communication complexity of threshold gates},
  author={Nisan, Noam},
  journal={Combinatorics, Paul Erdos is Eighty},
  volume={1},
  pages={301--315},
  year={1993},
  publisher={Citeseer}
}

@book{NK96,
  title={Communication Complexity},
  author={Kushilevitz, Eyal and Noam Nisan} ,
  year={1996},
  doi={10.1017/CBO9780511574948},
  publisher={Cambridge University Press}
}

@article{Sch99,
	title={Local representations using very short labels},
	author={Scheinerman, Edward R},
	journal={Discrete mathematics},
	volume={203},
	number={1-3},
	pages={287--290},
	year={1999},
	publisher={Elsevier},
	doi={10.1016/S0012-365X(99)00025-4}
}

@mastersthesis{Smir88,
  author  = {Smirnov, D. V.},
  title   = {Shannon's information methods for lower bounds for probabilistic communication
             complexity},
  school  = {Moscow University},
  year    = {1988}
}

@book{Spin03,
	title={Efficient Graph Representations},
	author={Spinrad, Jeremy P},
	year={2003},
	publisher={American Mathematical Society},
	isbn={978-1-4704-3146-4}
}

@article{SZ94,
  title={On the size of hereditary classes of graphs},
  author={Scheinerman, Edward R and Zito, Jennifer},
  journal={Journal of Combinatorial Theory, Series B},
  volume={61},
  number={1},
  pages={16--39},
  year={1994},
  publisher={Elsevier},
  doi={10.1006/jctb.1994.1027}
}

@article{ARSV06,
	title={A {R}amsey-type result for the hypercube},
	author={Alon, Noga and Radoi{\v{c}}i{\'c}, Rado{\v{s}} and Sudakov, Benny and Vondr{\'a}k, Jan},
	journal={Journal of Graph Theory},
	volume={53},
	number={3},
	pages={196--208},
	year={2006},
	publisher={Wiley Online Library},
	doi={10.1002/jgt.20181}
}

@article{BGK+21a,
  title = {Twin-width {II}: small classes},
  volume = {2},
  ISSN = {2766-1334},
  number = {2},
  journal = {Combinatorial Theory},
  publisher = {California Digital Library (CDL)},
  author = {Bonnet, {\'E}douard and Geniet, Colin and Kim, Eun Jung and Thomass{\'e}, St{\'e}phan and Watrigant, R{\'e}mi},
  year = {2022},
  url = {http://dx.doi.org/10.5070/C62257876},
  doi = {10.5070/c62257876}
}

@article{BKTW20,
	title={Twin-width {I}: tractable {FO} model checking},
	author={Bonnet, {\'E}douard and Kim, Eun Jung and Thomass{\'e}, St{\'e}phan and Watrigant, R{\'e}mi},
	journal={Journal of the ACM},
	volume={69},
	number={1},
	pages={3:1--3:46},
	year={2021},
articleno = {3},
	publisher={ACM New York, NY},
	doi={10.1145/3486655}
}

@inproceedings{HWZ22,
  author       = {Nathaniel Harms and
                  Sebastian Wild and
                  Viktor Zamaraev},
  title        = {Randomized communication and implicit graph representations},
  booktitle    = {54th Annual {ACM} {SIGACT} Symposium on Theory of Computing (STOC 2022)},
  pages        = {1220--1233},
  publisher    = {{ACM}},
  year         = {2022},
  url          = {https://doi.org/10.1145/3519935.3519978},
  doi          = {10.1145/3519935.3519978},
}

@inproceedings{GPT21,
  author       = {Jakub Gajarsk{\'{y}} and
                  Michal Pilipczuk and
                  Szymon Torunczyk},
  editor       = {Christel Baier and
                  Dana Fisman},
  title        = {Stable graphs of bounded twin-width},
  booktitle    = {37th Annual {ACM/IEEE} Symposium on Logic in Computer Science, (LICS 2022)},
  pages        = {39:1--39:12},
  publisher    = {{ACM}},
  year         = {2022},
  url          = {https://doi.org/10.1145/3531130.3533356},
  doi          = {10.1145/3531130.3533356},
  timestamp    = {Wed, 07 Dec 2022 23:12:32 +0100},
  biburl       = {https://dblp.org/rec/conf/lics/GajarskyPT22.bib},
  bibsource    = {dblp computer science bibliography, https://dblp.org}
}

@article{Alon86,
	title={Covering graphs by the minimum number of equivalence relations},
	author={Alon, Noga},
	journal={Combinatorica},
	volume={6},
	number={3},
	pages={201--206},
	year={1986},
	publisher={Springer},
	doi={10.1007/BF02579381}
}

@article{LNP80,
	title={On a product dimension of graphs},
	author={Lov{\'a}sz, L{\'a}szl{\'o} and Ne{\v{s}}et{\v{r}}il, Jaroslav and Pultr, Ales},
	journal={Journal of Combinatorial Theory, Series B},
	volume={29},
	number={1},
	pages={47--67},
	year={1980},
	publisher={Elsevier},
	doi={10.1016/0095-8956(80)90043-X}
}

@article{Viol15,
  title={The communication complexity of addition},
  author={Viola, Emanuele},
  journal={Combinatorica},
  volume={35},
  number={6},
  pages={703--747},
  year={2015},
  publisher={Springer},
  doi={10.1007/s00493-014-3078-3}
}

@article{robertson1986graph,
	title={Graph minors. {II}. {Algorithmic} aspects of tree-width},
	author={Robertson, Neil and Seymour, Paul D.},
	journal={Journal of algorithms},
	volume={7},
	number={3},
	pages={309--322},
	year={1986},
	publisher={Elsevier},
	doi={10.1016/0196-6774(86)90023-4}
}

@article{courcelle1993handle,
	title={Handle-rewriting hypergraph grammars},
	author={Courcelle, Bruno and Engelfriet, Joost and Rozenberg, Grzegorz},
	journal={Journal of Computer and System Sciences},
	volume={46},
	number={2},
	pages={218--270},
	year={1993},
	publisher={Elsevier},
	doi={10.1016/0022-0000(93)90004-G}
}

@inproceedings{schidler2021sat,
  	title={A {SAT} approach to twin-width},
	author={Schidler, Andr{\'e} and Szeider, Stefan},
	booktitle={2022 Proceedings of the Symposium on Algorithm Engineering and Experiments (ALENEX)},
	pages={67--77},
	year={2022},
	organization={SIAM},
	doi={10.1137/1.9781611977042.6}
}

@article{ahn2021bounds,
  	title={Bounds for the twin-width of graphs},
	author={Ahn, Jungho and Hendrey, Kevin and Kim, Donggyu and Oum, Sang-il},
	journal={SIAM Journal on Discrete Mathematics},
	volume={36},
	number={3},
	pages={2352--2366},
	year={2022},
	publisher={SIAM},
	doi={10.1137/21M1452834}
}

@InProceedings{balaban2021twin,
  author =	{Balab\'{a}n, Jakub and Hlin\v{e}n\'{y}, Petr},
  title =	{{Twin-Width Is Linear in the Poset Width}},
  booktitle =	{16th International Symposium on Parameterized and Exact Computation (IPEC 2021)},
  pages =	{6:1--6:13},
  series =	{LIPIcs},
  ISBN =	{978-3-95977-216-7},
  ISSN =	{1868-8969},
  year =	{2021},
  volume =	{214},
  address =	{Dagstuhl, Germany},
  doi =		{10.4230/LIPIcs.IPEC.2021.6},
}

@inproceedings{NMP+21,
	title={Rankwidth meets stability},
	author={Ne{\v{s}}et{\v{r}}il, Jaroslav and Mendez, Patrice Ossona de and Pilipczuk, Micha{\l} and Rabinovich, Roman and Siebertz, Sebastian},
	booktitle={Proceedings of the 2021 ACM-SIAM Symposium on Discrete Algorithms (SODA)},
	pages={2014--2033},
	year={2021},
	organization={SIAM},
	doi={10.1137/1.9781611976465.120}
}

@inproceedings{Yao03,
  author       = {Andrew Chi{-}Chih Yao},
  title        = {On the power of quantum fingerprinting},
  booktitle    = {Proceedings of the 35th Annual {ACM} Symposium on Theory of Computing (STOC 2003)},
  pages        = {77--81},
  publisher    = {{ACM}},
  year         = {2003},
  doi          = {10.1145/780542.780554},
}

\appendix

\section{Missing Proofs from \texorpdfstring{\cref{section:definitions}}{Section 3}}
\label{section:missing-proofs}

\subsection{Probability Boosting and Derandomization}
\propboosting*
\begin{proof}
Write $s(n) \define \RL(\cF_n)$ and let $G \in \cF_n$. Then there is a distribution $\cS$ over
functions $\sk : V(G) \to \zo^{s(n)}$ and a decoder $D : \zo^{s(n) \times s(n)} \to \zo$ satisfying
the definition of an adjacency sketch. Consider the following adjacency sketch.  Sample $\sk_1,
\dotsc, \sk_k$ independently from $\cS$. To each vertex $x \in V(G)$ assign the label $\sk'(x) =
(\sk_1(x), \dotsc, \sk_k(x))$. On input $(\sk'(x),\sk'(y))$, the decoder will output
$\mathsf{majority}(D(\sk_1(x), \sk_1(y)), \dotsc, D(\sk_k(x),\sk_k(y)))$.

For each $i \in [k]$, let $X_i = 1$ if $D(\sk_i(x),\sk_i(y)) = \ind{(x,y) \in E(G)}$ and $X_i = 0$
otherwise; and let $X = \sum_{i=1}^k X_i$.  Observe that for each $i \in [k], \Ex{X_i} \geq 2/3$,
so $\Ex{X} \geq 3k/2$. Then, by the Chernoff bound, the probability that the decoder fails is
\[
  \Pr{ X \leq k/2 }
  \leq \Pr{ X \leq \Ex{X}/3 }
  \leq e^{-\frac{k}{3}} \,.
\]
This is at most $\delta$ when $k = 3\ln(1/\delta)$.
\end{proof}

\propderandomization*
\begin{proof}
Let $G \in \cF_n$.  Using \cref{prop:boosting} for $\delta = 1/n^2$, we obtain an
adjacency sketch with error probability $\delta$ and size $c(n) = O(\RL(\cF_n) \cdot
\log(1/\delta)) = O(\RL(\cF_n) \cdot \log n)$. For a fixed sketch function $\sk : V(G) \to
\zo^{c(n)}$, write $\Delta_\sk$ for the number of pairs $x,y$ such that the decoder outputs the incorrect
value on input $\sk(x),\sk(y)$. Then, by the union bound, for a randomly chosen sketch function $\sk$,
\[
  \Ex{\Delta_\sk} \leq \delta \binom{n}{2} \leq 1/2n \,.
\]
Then there exists a fixed $\sk$ with $\Delta_\sk \leq 1/2n < 1$ so $\Delta_\sk = 0$; this $\sk$
is an adjacency labeling scheme for $G$. Furthermore, for $\mu = \Ex{\Delta_\sk}$, Markov's inequality gives
\[
  \Pr{ \Delta_\sk > 0 } = \Pr{ \Delta_\sk \geq 1 } = \Pr{ \Delta_\sk \geq 2n \mu}
  \leq \frac{\mu}{n \mu} = 1/2n \,,
\]
so a random $\sk$ function is a (deterministic) adjacency labeling scheme for $G$ with probability
at least $1/2$. Therefore, if there is a randomized algorithm producing the adjacency sketch in time
$\poly(n)$, then from the proof of \cref{prop:boosting} we see that we can produce the random $\sk$
function with a randomized algorithm in time $\poly(n)$; in expectation we  must sample 2 instances of $\sk$
to find one with $\Delta_\sk = 0$. It takes time at most $\poly(n)$ to check that all the labels are
correct, since we must check $n^2$ pairs of vertices.
\end{proof}

\subsection{Lower Bound from \textsc{Greater-Than}}
\label{section:greater-than-lb}
\subsubsection{Proof of the lower bound}

For the proof of the next statement, let us briefly describe the message passing (SMP) model of
communication.  In this model, given (private) inputs $x,y \in [n]$ to problem $f_n : [n] \times [n]
\to \zo$, Alice and Bob use shared randomness to send random messages $A(x),B(y)$ to a third-party
referee, who must output $f_n(x,y)$ with probability at least $2/3$ over the choice of messages. The
complexity of the protocol is $\max_{x,y} \max(|A(x)|, |B(y)|)$. It is known that on domain $[n]$,
the SMP complexity of \textsc{Greater-Than} is $\Theta(\log n)$, in contrast to the two-way
randomized communication complexity, which is $\Theta(\log\log n)$, where the upper bound is due to
\cite{Nis93} (see discussion in \cref{app:lower-bound-greater-than}).

\lowerbound*
\begin{proof}
This follows from the fact that an adjacency sketch for $\cF$ can be used to construct a
communication protocol for \textsc{Greater-Than} in the public-coin SMP model of communication.  The construction is as follows. Let $D$ be the decoder for
an adjacency sketch for $\cF$. Given inputs $x,y \in [n]$, Alice and Bob can compute
$\mathrm{GT}_n(x,y)$ in the SMP model by choosing a graph $G \in \cF$ with $\ch(G)=n$, so there
exist disjoint sets of vertices $\{a_1, \dotsc, a_n\}, \{b_1, \dotsc, b_n\}$ such that $(a_i,b_j)$
are adjacent if and only if $i \leq j$.  Since $\cF$ is hereditary, the induced subgraph $H
\sqsubset G$ on vertices $\{a_1, \dotsc, a_n, b_1, \dotsc, b_n\}$ is in $\cF$. Alice and Bob draw
random sketches $\sk(a_x),\sk(b_y)$ according to the adjacency sketch for $H$, and send them to the
referee, who outputs $D(\sk(a_x),\sk(b_y))$. This communication protocol has complexity at most
$\RL(\cF)$, so by the lower bound on the SMP complexity of \textsc{Greater-Than}, we must have
$\RL(\cF) = \Omega(\log n)$.
\end{proof}

\subsubsection{Bibliographic remark on Greater-Than}
\label{app:lower-bound-greater-than}

Recall the lower bound for Greater-Than:
\begin{theorem}
\label{thm:greater-than-lower-bound}
Any public-coin randomized SMP communication protocol for \textsc{Greater-Than} on domain $[n]$
requires $\Omega(\log n)$ bits of communication.
\end{theorem}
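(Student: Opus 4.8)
The plan is to prove the $\Omega(\log n)$ lower bound on the public-coin SMP complexity of \textsc{Greater-Than} via a reduction from \textsc{Equality} on a domain of size $\Theta(\log n)$, combined with the known $\Omega(\log m)$ lower bound on the SMP complexity of \textsc{Equality} on domain $[m]$. The key observation is that an efficient SMP protocol for \textsc{Greater-Than} yields an efficient SMP protocol for \textsc{Equality}: to test whether $a = b$ for $a,b \in [m]$, the players can run the \textsc{Greater-Than} protocol to check both $a \le b$ and $b \le a$, since $a = b$ iff both hold. Each player's message for the combined protocol is just the concatenation of their messages for the two instances, so the communication cost at most doubles.

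First I would recall that the SMP complexity of \textsc{Equality} on domain $[m]$ is $\Omega(\log m)$ (or, in the one-sided/bounded-error formulation, $\Omega(\log\log m)$ in some conventions — here I want the bound that gives $\Omega(\log n)$, which for public-coin SMP \textsc{Equality} on $[m]$ is $\Theta(\log\log m)$ actually; but the precise statement we need is that public-coin SMP \textsc{Greater-Than} on $[n]$ is $\Omega(\log n)$, which does \emph{not} reduce to \textsc{Equality} in this naive way). Let me instead use the more robust route: a direct fooling/probabilistic argument. The standard proof that \textsc{Greater-Than} on $[n]$ has public-coin SMP cost $\Omega(\log n)$ proceeds by a covering argument. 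Fix an optimal protocol with message length $c$. For each fixing of the public randomness, Alice's message is a function $A_\rho : [n] \to \zo^c$ and Bob's is $B_\rho : [n] \to \zo^c$, and the referee's acceptance predicate is some $R_\rho : \zo^c \times \zo^c \to \zo$. The plan is to show that if $c = o(\log n)$, then with positive probability over $\rho$ the protocol makes an error on some pair, contradicting correctness; more carefully, one uses the minimax / averaging principle to fix $\rho$ and argue against a hard distribution on inputs.

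The cleanest version I would write: consider the hard distribution where $(x,y)$ is chosen with $x,y$ adjacent (i.e. $y = x$ or $y = x+1$) uniformly among such pairs, so that \textsc{Greater-Than} must distinguish $y=x$ (answer $1$) from $y = x-1$, equivalently must detect a unit change. By Yao's principle there is a deterministic SMP protocol with message length $c$ correct on, say, a $2/3$ fraction of this distribution. Alice's message partitions $[n]$ into at most $2^c$ classes; by pigeonhole two consecutive integers $i, i+1$ land in the same Alice-class whenever $n > 2^c \cdot (\text{something})$ — more precisely, if $2^c < n/3$ then many consecutive pairs collide under $A$. For such a pair, the referee receives identical messages $A(i) = A(i+1)$ and some $B(j)$, so it gives the same answer on $(i,j)$ and $(i+1,j)$; taking $j$ between $i$ and $i+1$ appropriately forces an error on one of the two pairs $(i, i)$ vs $(i+1, i)$ (the first has $\mathrm{GT} = 1$, the second has $\mathrm{GT} = 0$ since $i+1 > i$). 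Counting these forced errors and showing they exceed the allowed error budget when $c = o(\log n)$ completes the argument.

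The main obstacle is getting the counting tight enough: a single Alice-collision $A(i) = A(i+1)$ only forces an error if we can also pin down Bob's message behaviour on the relevant queries, so a careless argument only handles \emph{one} bad pair rather than the $\Omega(n)$ bad pairs needed to contradict a constant error rate. The fix is to observe that among the $\ge 2/3 \cdot n$ well-answered input pairs there are $\Omega(n)$ indices $i$ for which \emph{both} $(i,i)$ and $(i+1,i)$ are answered correctly, yet if $2^c = o(n)$ then $\Omega(n)$ consecutive Alice-collisions exist; intersecting these two linear-sized sets yields an index that is simultaneously a collision and has both neighbouring pairs correctly answered, which is the desired contradiction. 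Hence $2^c = \Omega(n)$, i.e. $c = \Omega(\log n)$, as claimed. I would also remark that this matches the known bounds and cite the relevant literature (the bound appears in work on the SMP complexity of \textsc{Greater-Than}; see the bibliographic discussion accompanying this theorem).
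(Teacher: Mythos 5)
Your opening reduction-from-\textsc{Equality} idea is a dead end, as you yourself note: public-coin SMP \textsc{Equality} costs $O(1)$, so nothing flows back to \textsc{Greater-Than}. The fallback argument you settle on is also broken, in two places. First, the pigeonhole step is false: from $2^c = o(n)$ you cannot conclude that $\Omega(n)$ \emph{consecutive} pairs $i,i+1$ satisfy $A(i)=A(i+1)$. Alice can colour $[n]$ with $O(1)$ colours so that no two consecutive integers share a colour (e.g.\ $A(i) = i \bmod 3$), so the set of consecutive collisions can be empty even when $2^c$ is a constant. Second, and more fundamentally, the hard distribution you chose is not hard. Under the distribution supported on the pairs $(i,i)$ and $(i+1,i)$, the protocol in which Alice sends $x \bmod 3$ and Bob sends $y \bmod 3$ lets the referee decide with zero error whether $x=y$ or $x=y+1$, since $x-y \in \{0,1\}$ on the support. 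So the distributional SMP complexity of \textsc{Greater-Than} under your $\mu$ is $O(1)$, and no counting argument against $\mu$ can yield $\Omega(\log n)$. The hardness of \textsc{Greater-Than} lives at all $\log n$ scales simultaneously, not in distinguishing neighbours.

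The paper takes a different and correct route: it reduces \textsc{Augmented-Index} on $k=\log n$ bits to \textsc{Greater-Than} on $[n]$ (Bob keeps his known suffix, places a $0$ at position $i$ and $1$s below, so that comparing the two $k$-bit integers reveals $x_i$), and then invokes the $\Omega(k)$ one-way lower bound for \textsc{Augmented-Index} from \cite{MNSW98}; an SMP protocol yields a one-way protocol of the same cost, so the bound transfers. If you want a self-contained argument rather than a reduction, you need either a round-elimination or information-theoretic argument, or a hard distribution that mixes pairs at all scales (e.g.\ $y$ obtained from $x$ by flipping a uniformly random bit position and resetting the bits below it), which is essentially \textsc{Augmented-Index} in disguise.
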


Lower bounds for the \textsc{Greater-Than} problem in various models appear in \cite{KNR99, MNSW98,
Viol15, RS15, ATYY17}.  The above theorem is stated in \cite{KNR99} and \cite{MNSW98}; in the latter
it is also credited to \cite{Smir88}. In \cite{KNR99} the theorem is stated for one-way
\emph{private-coin} communication; the result for public-coin SMP communication follows from the
fact that public-coin protocols for problems with domain size $n$ can save at most $O(\log \log n)$
bits of communication over the private-coin protocol due to Newman's theorem \cite{New91}.
We remark that the tight lower bound of $\Omega(\log\log n)$ for \textsc{Greater-Than} in the two-way
communication model follows from \cref{thm:greater-than-lower-bound} since there is at most an
exponential gap between SMP and two-way communication.

However, as noted in a CSTheory StackExchange question of Nikolov \cite{Nik20}, the complete proof
is not provided in either of \cite{KNR99,MNSW98}. The same lower bound for \emph{quantum}
communication complexity is proved in \cite{ATYY17}, which implies the above result.  A direct proof
for classical communication complexity was suggested as an answer to \cite{Nik20} by Chakrabarti
\cite{Chak20}; we state this direct proof here for completeness and we thank Eric Blais for
communicating this reference to us. We require the \textsc{Augmented-Index} communication problem
and its lower bound from \cite{MNSW98}.

\begin{definition}[Augmented-Index]
In the \textsc{Augmented-Index} communication problem, Alice receives input $x \in \zo^k$ and Bob
receives an integer $i \in [k]$ along with the values $x_j$ for all $j > i$. Bob should output the
value $x_i$.
\end{definition}

\begin{theorem}[\cite{MNSW98}]
\label{thm:index}
Any public-coin randomized one-way communication protocol for the \textsc{Augmented-Index} problem requires
$\Omega(k)$ bits of communication.
\end{theorem}

\begin{proof}[{Proof of Theorem~\ref{thm:greater-than-lower-bound}, \cite{Chak20}}]
Given inputs $x \in \zo^k$ and $i \in [k]$ to the \textsc{Augmented Index} problem, Bob constructs
the string $y \in \zo^k$ where $y_j = x_j$ for all $j > i$ and $y_i = 0$, and $y_j = 1$ for all $j
< i$.  Consider the numbers $a,b \in [2^k]$ where the binary representation of $a$ is $x$, with
bit $k$ being the most significant and bit $1$ the least significant, and the binary representation
of $b$ is $y$, with the bits in the same order. If $x_i = 1$, then since $y_i = 0$ and $y_j = x_j$
for $j > i$, it holds that $b < a$. If $x_i = 0$, then since $y_j = x_j$ for $j \geq i$ and $y_j =
1$ for $j < i$ it holds that $b \geq a$. Therefore, computing \textsc{Greater-Than} on inputs $a,b$
will solve \textsc{Augmented Index}. By Theorem \ref{thm:index}, the communication cost of
\textsc{Greater-Than} for $n = 2^k$ is at least $\Omega(k) = \Omega(\log n)$.
\end{proof}

\subsection{Missing Proof for Equality-Based Labeling}
\label{section:eq-labeling-proof}

\lemmaeqlabeling*
\begin{proof}
We prove the equivalence between items (1) and (3). The equivalence between items (2) and (3) holds
essentially by definition.  Suppose $\cF$ admits an $(s,k)$-equality based labeling scheme where
$s,k$ are constants. For any matrix $M \in \Adj_\cF$, which is the adjacency matrix of $G_M \in
\cF$, two players with inputs $x,y$ can use $s$ calls to the \EQUALITY oracle to send the prefix
$p(x) \in \zo^s$ to the other player. They may then use $k^2$ calls to the \EQUALITY oracle to
compute the entries of the matrix $Q_{x,y}$, from which they can compute the output
$D_{p(x),p(y)}(Q_{x,y})$ of the decoder.

Now, suppose $\Adj_\cF$ reduces to \EQUALITY. By \cref{prop:cc-mat-red}, $\Adj_\cF$ matrix-reduces to $\QS(\EQUALITY)$. Notices that $\QS(\EQUALITY)$ is the set of bipartite adjacency matrices of bipartite equivalence graphs, and any bipartite equivalence graph $G=(X,Y,E)$ can be defined by a pair of functions $a : X \rightarrow [|X|]$ and $b : Y \rightarrow [|Y|]$ such that for every $x \in X$ and $y \in Y$, $x$ and $y$ are adjacent in $G$ if and only if $\ind{a(x)=b(y)}=1$.
This implies that for every $n_1 \times n_2$ matrix $M \in \Adj_\cF$ there is a function $h : \zo^k \to \zo$ and maps $a_i : [n_1] \to [n_1]$, $b_i :[n_2] \to [n_2]$, $i \in [k]$, such that for any $x \in [n_1]$ and $y \in [n_2]$
\begin{equation}
	\label{eq:eqlabeling}
	M(x,y) = h(\ind{a_1(x) = b_1(y)} , \ind{a_2(x) = b_2(y)}, \dotsc, \ind{a_k(x) = b_k(y)}) \,.
\end{equation}
We then obtain a constant-size equality-based labeling scheme for $\cF$ by assigning to each vertex
$x$ of $G_M \in \cF$ with adjacency matrix $M \in \Adj_\cF$ a label consisting of a prefix encoding the function $h$, and the equality codes $a_1(x), \dotsc,a_k(x), b_1(x), \dotsc, b_k(x)$, and defining the decoder as the function which computes
\cref{eq:eqlabeling}.
\end{proof}

\section{The Lattice of Hereditary Graph Classes}
\label{section:graph theory}

We briefly survey some known facts about the lattice of hereditary graph classes, which may help to
put our new questions about constant-size PUGs and constant-cost communication complexity into
context. In \cref{section:appendix-speed} we describe how the lattice is partitioned into ``layers''
based upon the speed of the class, and in \cref{section:appendix-minimal-families} we describe the
``minimal'' hereditary graph classes at different speeds, and how stability relates to these minimal
classes.

\begin{figure}[tbp]
	\centering
	\scalebox{1}{\begin{tikzpicture}[
			xscale=.52,
			yscale=.75,
			every node/.style={font=\footnotesize},
	]
		
		\foreach \n/\x/\y in {ll/-4/-1,lr/4/-1,ul/-7/10,ur/7/10} 
			{ \coordinate (\n) at (\x,\y) ; }
		\foreach \s in {l,r} { \coordinate (bell-\s) at ($ (l\s)!.45!(u\s) $) ; }
		\foreach \s in {l,r} { \coordinate (abovebell-\s) at ($ (l\s)!.53!(u\s) $) ; }
		\def\height{11}
		\draw[fill=black!5,thick] (ur) -- (lr) -- (ll) -- (ul) -- cycle ;
		
		\foreach \c/\y in {constant/0,polynomial/1,exponential/2,factorial/8.5} {
			\foreach \s in {l,r} { \coordinate (\c-\s) at ($ (l\s)!{(\y+1)/\height}!(u\s) $) ; }
		}
		\def\cpug{0.527}
		
		\coordinate (stable-b) at ($ (exponential-l)!\cpug!(exponential-r) $) ;
		\coordinate (stable-u) at ($ (factorial-l)!\cpug!(factorial-r) $) ;
		\coordinate (stable-m) at ($ (ul)!\cpug!(ur) $);

		\fill[const-det] (exponential-r) -- (lr) -- (ll) -- (exponential-l) -- cycle ;
		
		\begin{pgfonlayer}{foreground}
			\draw[ultra thick,shorten >=-2em] (stable-b) -- (stable-m);
			\foreach \c/\y/\col in {constant/0/black!30,polynomial/1/black!30,exponential/2/black!30,factorial/7.1/black!5} {
				\node at (0,\y-.5) {\c\ speed};
				\draw[ultra thick] (\c-l) -- (\c-r) ;
			}
			\node at (0,9.25) {\nodetextbeforebg{black!5}{superfactorial speed}};
			\node at (0,7.1-.5) {\nodetextbeforebg{black!5}{factorial speed}};
		\end{pgfonlayer}
		
		\node[yshift=1em] at ($ (ul)!.5!(stable-m) $) {stable} ;
		\node[yshift=1em] at ($ (ur)!.5!(stable-m) $) {not stable} ;

		
		\draw[thick,densely dotted,igc-positive] (stable-b) -- (exponential-r) 
			-- ($ (exponential-r)!.4!(factorial-r)$) 
			--plot[smooth,tension=.8] coordinates 
				{++(0,0) ++(-1.5,1) ++(-3.5,1.7)
						($ (stable-b)!.6!(stable-u) $) }
			-- ($ (stable-b)!.8!(stable-u) $) 
		;
		\draw[thick,densely dotted,const-rand] (exponential-l) -- (stable-b) 
			-- ($ (stable-b)!.55!(stable-u)$) 
			--plot[smooth,tension=.8] coordinates 
				{++(0,0) ++(-2,-.25) ++(-4,.6) ++(-5.5,.8) ($ (exponential-l)!.65!(factorial-l)$)}
			-- ($ (exponential-l)!.75!(factorial-l)$) 
		;
		
		\draw[thick,densely dotted,igc-negative] 
			($(stable-b)!0.93!(stable-u)$) --plot[smooth,tension=.5] coordinates {
				++(-2,.1) ++(-4,.25) ++(-6,-.1)
				++(-6,-.5)
				++(-4.5,-1) ++(-2,-1.1) ++(-1,-1) 
				++(0,-.9)
			}
			--cycle;
		;
		\draw[thick,densely dotted,pug-negative] 
			($(stable-b)!0.89!(stable-u)$) --plot[smooth,tension=.7] coordinates {
				++(0,0) ++(-1.5,0) ++(-1.5,-0.4)
				++(2,-.4) ++(1.6,0)
				++(0,0)
			}
			--cycle;
		;
		
		\node[align=left,anchor=west] at ($(exponential-r)!.2!(factorial-r)$) {~$\poly(n)$ universal\\graphs (IGQ positive)} ; 
		\node[align=right,anchor=east] at ($(exponential-l)!.375!(factorial-l)$) {const PUG} ; 
		
		\draw[thin] (4.1+2.5,7.55)  node[inner sep=0pt,align=left,anchor=west] 
				{~~no $\poly(n)$ \\~universal\\ graph} -- ++(-4.5,0) ;
		\draw[thin] (-4.6-2.2,7.75) node[align=right,anchor=east] {no const PUG} -- ++(+1.5,0)  ;

		\draw[thick] (ur) -- (lr) -- (ll) -- (ul) -- cycle ;

		\foreach \n/\x/\y in {
					const-PUG/-2.2/5
			} {
			\node[circle,fill=black,inner sep=2.2pt] (\n) at (\x,\y) {};
		}
		\foreach \n/\x/\y in {planar/-2/3.5} {
			\node[circle,fill=black,inner sep=2.2pt,label=-90:{\scriptsize planar graphs}] (\n) at (\x,\y) {};
		}
		\foreach \n/\x/\y in {interval/3/5} {
			\node[circle,fill=black,inner sep=2.2pt,label=-90:{\scriptsize interval graphs}] (\n) at (\x,\y) {};
		}
		
		\draw[mybrace,decorate] (ll -| -5.2,0) -- 
				node[left=5pt,align=right,font=\scriptsize,inner sep=1pt] (subfactorial) {} 
				(exponential-l -| -5.2,0) ;

		\begin{scope}[shift={(-20,4)}]
			\draw[thick,comm-problems] (0,0.5) circle [x radius=5.5,y radius=6.5];
			\draw[thick,const-rand] (0,-1.95) circle [x radius=4,y radius=4];
			\draw[thick,const-det] (0,-3.9+.25) circle [x radius=2.25,y radius=2.25];
			\foreach \l/\y in {communication\\ problems/5.5,randomized\\constant/.9,deter-\\ministic\\constant/-2.7} {
				\node[align=center] at (0,\y) {\l} ;
			}
			
			\foreach \n/\x/\y in {
						det-const-1/-.5/-5,%
						det-const-2/.25/-4.2,%
						mapped-planar/2/-1.2,%
						rand-const/-.2/-.5%
				} {
				\node[circle,fill=black,inner sep=2.2pt] (\n) at (\x,\y) {};
			}
			
		\end{scope}
		
		\draw[mapping] (det-const-1) to[out=0,in=190] node[above] {$\mathfrak F$} (subfactorial.south) ;
		\draw[mapping] (subfactorial.north) to[out=175,in=10] node[above] {$\textsc{Adj}$} (det-const-2);
		\draw[mapping] (planar) to[bend left=5] node[above] {$\textsc{Adj}$} (mapped-planar) ;
		\draw[mapping] (rand-const) to[bend left=15] node[above] {$\mathfrak F$} (const-PUG) ;
		
	
	\end{tikzpicture}}
\caption{The relationship between layers of the lattice of hereditary graph classes (on the right),
and communication complexity (on the left). The stability condition partitions the factorial layer.
The red regions, indicating the existence of graph classes with no constant-size PUG and no
$\poly(n)$-size universal graph, are due to \cite{HH22,HHH22counter}.}
\label{fig:conjecture}
\end{figure}

The hereditary graph classes form a lattice, since for any two hereditary graph classes $\cF$ and $\cH$,
it holds that $\cF \cap \cH$ and $\cF \cup \cH$ are also hereditary graph classes.  In this section we
review the structure of this lattice, and give some basic results that place the set of constant-PUG
classes within this lattice.  For an illustrated summary of this section, see \cref{fig:graph
theory}.

\subsection{The Speed of Hereditary Graph Families}
\label{section:appendix-speed}

The speed $|\cF_n|$ of a hereditary graph class cannot be arbitrary. Classic results of Alekseev
\cite{Ale92,Ale97}, Bollob{\'a}s \& Thomason~\cite{BT95}, and Scheinerman \& Zito~\cite{SZ94} have
classified some of the possible speeds of hereditary graph classes. Scheinerman \& Zito~\cite{SZ94}
and Alekseev~\cite{Ale97} showed that the four smallest \emph{layers} of hereditary graph classes
are the following:

\begin{enumerate}
	\item The \emph{constant} layer contains classes $\cF$ with $\log |\cF_n| = \Theta(1)$, and hence $|\cF_n| = \Theta(1)$,
	\item The \emph{polynomial} layer contains classes $\cF$ with $\log |\cF_n| = \Theta(\log n)$,
	\item The \emph{exponential} layer contains classes $\cF$ with $\log |\cF_n| = \Theta(n)$,
	\item The \emph{factorial} layer contains classes $\cF$ with $\log |\cF_n| = \Theta(n\log n)$.
\end{enumerate}
The graph classes with \emph{subfactorial} speed (the first three layers) have simple structure
\cite{SZ94,Ale97}. As demonstrated by earlier examples, the factorial layer is substantially richer and
includes many graph classes of theoretical or practical importance. Despite this, no general
characterization is known for them apart from the definition.

The first three slowest layers correspond exactly to constant-cost \emph{deterministic}
communication proplems, under the equivalence between communication problems and hereditary graph
classes described in \cref{section:correspondence}. Scheinerman \cite{Sch99} showed that a
hereditary class $\cF$ admits a constant-size adjacency labeling scheme (\ie a constant-cost
deterministic communication protocol for $\Adj_\cF$) if and only if it belongs to the
\emph{constant, polynomial}, or \emph{exponential} layer.  Such classes have a bounded number of
equivalence classes of vertices, where two vertices $x,y$ are equivalent if their neighborhoods
satisfy $N(x) \setminus \{y\} = N(y) \setminus \{x\}$.  

\begin{proposition}
A communication problem $f$ admits a constant-cost deterministic protocol if and only if $\,\mathfrak{F}(f)$ is in the constant, polynomial, or exponential layer. A hereditary graph class
$\cF$ is in the constant, polynomial, or exponential layer if and only if there is a constant-cost
deterministic protocol for $\Adj_\cF$.
\end{proposition}

On the other hand, adjacency labels for a factorial class must have size $\Omega(\log n)$ since
graphs in the minimal factorial classes can have $\Omega(n)$ equivalence classes of vertices, and
each equivalence class requires a unique label. So there is a jump in label size from $O(1)$ (in the
subfactorial layers) to $\Omega(\log n)$ (in the factorial layers), so that there is no hereditary
graph class with label size between $O(1)$ and $\Omega(\log n)$. The original version of this paper
conjectured that a similar gap between $O(1)$ and $\Omega(\log n)$ should occur for adjacency
\emph{sketching} too; but this is not true, see \cite{HHH22counter}. 

\subsection{Minimal Factorial Families}
\label{section:appendix-minimal-families}

The factorial layer has a set of 9 \emph{minimal} classes, which satisfy the following:
\begin{enumerate}
\item Every factorial class $\cF$ contains at least one minimal class;
\item For each minimal class $\cM$, any  hereditary subclass $\cM' \subset \cM$ has subfactorial
speed.
\end{enumerate}
These classes were identified by Alekseev \cite{Ale97}, and similar results were independently
obtained by Balogh, Bollob{\'a}s, \& Weinreich~\cite{BBW00}.  Each minimal factorial class is either
a class of bipartite graphs, or a class of \emph{co-bipartite} graphs (\ie, complements of bipartite
graphs), or a class of \emph{split} graphs (\ie, graphs whose vertex set can be partitioned into a
clique and an independent set).  Six of the minimal classes are the following:
\begin{itemize}
\item $\cM^{\dcirc}$ is the class of bipartite graphs of degree at most 1. 
\item $\cM^\bcirc$ is the class of graphs whose vertex set can be partitioned into a clique and an independent set
such that every vertex in each of the parts is adjacent to at most one vertex in the other part.
\item $\cM^\dbullet$ is the class of graphs whose vertex set can be partitioned into two cliques
such that every vertex in each of the parts is adjacent to at most one vertex in the other part.
\item $\cL^{\dcirc}$, $\cL^\bcirc, \cL^\dbullet$ are defined similarly to the classes 
$\cM^{\dcirc}$, $\cM^\bcirc$, $\cM^\dbullet$, respectively, with the difference that vertices in
each of the parts are adjacent to all but at most one vertex in the other part.
\end{itemize}
The other three minimal classes motivate our focus on the \emph{stable} factorial classes. They
are defined as follows.
For any $k \in \bN$, recall that the \emph{half-graph} (see \cref{fig:half graphs}) is the bipartite graph $H^{\circ\circ}_k$ with vertex
sets $\{a_1, \dotsc, a_k\}$ and $\{b_1, \dotsc, b_k\}$, where the edges are exactly the pairs
$(a_i,b_j)$ that satisfy $i \leq j$. The \emph{threshold graph} $H^{\bullet\circ}_k$ is the graph
defined the same way, except including all edges $(a_i,a_j)$ where $i \neq j$.  The
\emph{co-half-graph} $H^{\bullet\bullet}_k$ is the graph defined the same way as the threshold graph
but also including all edges $(b_i,b_j)$ for $i \neq j$.  We define the following hereditary
classes:
\begin{align*}
  \cC^\dcirc &\define \cl\{ H_k^\dcirc : k \in \bN \} \,,
  &\cC^\bcirc &\define \cl\{ H_k^\bcirc : k \in \bN \} \,,
  &\cC^\dbullet &\define \cl\{ H_k^\dbullet : k \in \bN \} \,.
\end{align*}

\begin{proposition}[\cite{Ale97}]
The minimal factorial classes are
\[
\cM^\dcirc, \cM^\bcirc, \cM^\dbullet,  \cL^\dcirc, \cL^\bcirc, \cL^\dbullet,
\cC^\dcirc, \cC^\bcirc, \cC^\dbullet \,.
\]
\end{proposition}

\resultsfigure
See \cref{fig:graph theory}.  It is clear from the definitions that the classes $\cC^{\dcirc},
\cC^\bcirc, \cC^\dbullet$ are not stable, while the other minimal classes are. The following
statement is easily proved from \cref{prop:lower bound}.

\begin{fact}
$\cM^\dcirc, \cM^\bcirc, \cM^\dbullet,  \cL^\dcirc, \cL^\bcirc, \cL^\dbullet$ 
admit constant-size equality-based labeling schemes (and therefore
constant-size PUGs), while $\cC^{\dcirc}, \cC^{\bcirc}, \cC^{\dbullet}$ have PUGs of size
$n^{\Theta(1)}$.
\end{fact}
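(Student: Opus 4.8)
I would prove the two halves of the statement separately, and for the six families $\cM^\dcirc, \cM^\bcirc, \cM^\dbullet, \cL^\dcirc, \cL^\bcirc, \cL^\dbullet$ the first step is to record a common structural shape. Every graph $G$ in any one of these families comes equipped with a partition $V(G) = V_1 \disjointunion V_2$ such that (i) each induced subgraph $G[V_i]$ is either complete or edgeless, and (ii) the colored bipartite graph $G[V_1,V_2]$ is either a \emph{matching} (maximum degree at most $1$) or the bipartite complement of one (i.e.\ a complete bipartite graph with a partial matching removed). Which of the four resulting sub-cases occurs depends only on the family, not on $G$: for the three $\cM$-families $G[V_1,V_2]$ is a matching, for the three $\cL$-families it is the bipartite complement of one, and whether each $G[V_i]$ is complete or edgeless is read directly from the definitions (both parts edgeless for $\cM^\dcirc$ and $\cL^\dcirc$; both complete for $\cM^\dbullet$ and $\cL^\dbullet$; one of each for $\cM^\bcirc$ and $\cL^\bcirc$). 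Establishing (i) and (ii) — in particular correctly unwinding the ``adjacent to all but at most one'' condition that distinguishes the $\cL$-families from the $\cM$-families — is the only step requiring attention, and it is routine.

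Given this, I would exhibit a $(1,1)$-equality-based labeling scheme in the sense of \cref{def:equality labeling}. Fix $G$ together with its partition, and let $M$ be the set of edges (respectively, non-edges) of $G[V_1,V_2]$ in the matching (respectively, bipartite-complement) case; in either case $M$ is a partial matching between $V_1$ and $V_2$. The prefix $p(x)\in\zo$ records whether $x\in V_1$ or $x\in V_2$. For the single equality code $q$, give the two endpoints of each pair in $M$ a common fresh value and give every vertex not covered by $M$ its own private value (the values may be taken in $[2n]\subseteq\bN$). Then $\Eq(q(u),q(v)) = 1$ if and only if $\{u,v\}\in M$ or $u=v$; in particular, when $u,v$ lie in the same part $V_i$ this is equivalent to $u=v$, since no pair of $M$ lies inside a single part. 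The decoder $D_{p(u),p(v)}$ is a fixed Boolean function of the single bit $\Eq(q(u),q(v))$, chosen according to the family: when $p(u)=p(v)$ it outputs $0$ if that part is edgeless and $\neg\Eq(q(u),q(v))$ (that is, ``$u\ne v$'') if that part is complete; when $p(u)\ne p(v)$ it outputs $\Eq(q(u),q(v))$ in the matching case and $\neg\Eq(q(u),q(v))$ in the co-matching case. One checks directly that this decodes adjacency in $G$, so by \cref{lemma:eq labeling} each of the six families admits a constant-size adjacency sketch and hence a constant-size PUG.

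For the families $\cC^\dcirc, \cC^\bcirc, \cC^\dbullet$, the lower bound follows from \cref{prop:lower bound} once we observe that these families are not stable: the graphs $H_k^\dcirc$, $H_k^\bcirc$, $H_k^\dbullet$ each have chain number at least $k$, witnessed by the vertices $a_1,\dots,a_k$ and $b_1,\dots,b_k$ themselves, because the condition in \cref{def:chain number} constrains only the cross-pairs $(a_i,b_j)$ and in all three graphs these are edges exactly when $i\le j$. Hence each of the three families has unbounded chain number, \cref{prop:lower bound} gives sketch size $\Omega(\log n)$, and by the equivalence between PUGs and sketches (\cref{prop:pug labeling equivalence}, under which a PUG of size $m(n)$ yields a sketch of size $\lceil\log m(n)\rceil$) every PUG for the family has size $2^{\Omega(\log n)} = n^{\Omega(1)}$. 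For the matching $n^{O(1)}$ upper bound I would use the obvious $O(\log n)$-size adjacency labeling scheme: for $G$ embedded as an induced subgraph of $H_k^\dcirc$ (and identically for the threshold and co-half patterns), label every $a$-vertex by $(A,r)$ and every $b$-vertex by $(B,r)$, where $r\in[n]$ is the rank of the vertex's index among all indices used by the embedding, breaking a tie between an $a$-index and an equal $b$-index in favour of the $a$-vertex (reflecting $a_i\sim b_i$); from the two side-symbols and the two ranks one recovers adjacency. Such a scheme is equivalent to a universal graph, i.e.\ an error-$0$ PUG, of size $n^{O(1)}$, so each of $\cC^\dcirc, \cC^\bcirc, \cC^\dbullet$ has PUGs of size $n^{\Theta(1)}$.

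There is no deep obstacle here — the statement is a ``Fact'' for good reason. The only place that calls for care is item (ii) of the structural observation for the six families; everything else is bookkeeping with \cref{def:equality labeling}, \cref{lemma:eq labeling}, and \cref{prop:lower bound}.
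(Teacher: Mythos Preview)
Your proposal is correct and matches the paper's treatment: the paper gives no detailed proof, stating only that the fact ``is easily proved from \cref{prop:lower bound},'' and your fleshing-out---an explicit $(1,1)$-equality-based scheme for the six $\cM/\cL$ families and the combination of \cref{prop:lower bound} with an obvious $O(\log n)$ labeling for the three chain-like families---is exactly the kind of routine verification the paper is gesturing at.
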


A consequence of Ramsey's theorem is that a hereditary
graph class $\cF$ is stable if and only if it does not include any of $\cC^\dcirc, \cC^\bcirc,
\cC^\dbullet$:
\begin{proposition}
	\label{prop:ch-finite-iff-no-chainlike}
	Let $\cF$ be a hereditary class of graphs.  Then $\cF$ has bounded \chainNum if and only if
	$\cC^{\circ\circ},\cC^{\bullet\circ},\cC^{\bullet\bullet} \not\subseteq \cF$.
\end{proposition}
\ignore{
\begin{proof}
	Let $** \in \{\bullet\bullet,\bullet\circ,\circ\circ\}$ and suppose $\cC^{**} \subseteq \cF$. By
	definition, $\cC^{**}$ contains $H_k^{**}$ for any $k \in \bN$, so $\ch(\cC^{**}) = \infty$ and if
	$\cC^{**} \subseteq \cF$ then $\ch(\cF) \ge \ch(\cC^{**}) = \infty$.
	
	Now suppose $\cC^{**} \not\subseteq \cF$ for every $** \in \{\bullet\bullet,\bullet\circ,\circ\circ\}$.
	Then for every $** \in \{\bullet\bullet,\bullet\circ,\circ\circ\}$ there  is some $m^{**}$ such that all graphs $G \in \cF$ are $H^{**}_{m^{**}}$-free.
	Hence, for $m = \max(m^{\bullet\bullet}, m^{\bullet\circ}, m^{\circ\circ})$, 
	all graphs $G \in \cF$ are $\{H^{\bullet\bullet}_m, H^{\bullet\circ}_m, H^{\circ\circ}_m\}$-free.
	
	It was proved in~\cite{CS18} that, due to Ramsey's theorem, for every $m \in \bN$ there exists
	a sufficiently large $k = k(m)$ such that any $\{H^{\bullet\bullet}_m, H^{\bullet\circ}_m,
	H^{\circ\circ}_m\}$-free graph $G$ has $\ch(G) < k$. Hence $\ch(\cF) < k$.
\end{proof}
}

Unlike standard universal graphs, PUGs exhibit a large quantitative gap between the \chainyGraphs
and the other minimal factorial classes, suggesting that stable factorial classes behave much differently than other factorial classes and may be worth studying separately, which has not yet been done in the context of understanding the factorial layer of graph classes.

\paragraph*{The Bell numbers threshold.}
There is another speed threshold within the factorial layer: the Bell numbers threshold.
The Bell number $B_n$ is the number of different set partitions of $[n]$, or equivalently the number
of $n$-vertex equivalence graphs; asymptotically it is $B_n \sim (n / \log n)^n$. Similarly to the
factorial layer itself, there is a set of \emph{minimal} classes above the Bell numbers.  However,
unlike the factorial layer, the set of minimal classes above the Bell numbers is \emph{infinite},
and it has been characterized explicitly~\cite{BBW05,ACFL16}. Once again, the classes $\cC^\dcirc,
\cC^\bcirc, \cC^\dbullet$ are minimal. This means that \emph{all} hereditary classes below the Bell
numbers are stable. Structural properties of these classes were given in \cite{BBW00}, which can be
used to prove the following.

\begin{restatable}{proposition}{bellnumbers}
\label{thm:bell numbers}\RestateRemark
Let $\cF$ be a hereditary graph class. Then:
\begin{enumerate}
\item If $\cF$ is a minimal class above the Bell numbers, then $\cF$ admits a constant-size
equality-based labeling scheme (and therefore a constant-size PUG), unless
$\cF \in \{\cC^\dcirc, \cC^\bcirc, \cC^\dbullet\}$.
\item If $\cF$ has speed below the Bell numbers, then $\cF$ admits a constant-size equality-based
labeling scheme (and therefore a constant-size PUG).
\end{enumerate}
\end{restatable}

The proof of this statement is routine, given the structural characterization of these graph classes
in \cite{BBW00,BBW05,ACFL16}, and we omit it.

\end{document}